\newtheorem{theorem}{Theorem}[section]
\newtheorem{lemma}{Lemma}[section]
\newtheorem{definition}{Definition}[section]
\newtheorem{corollary}{Corollary}[section]
\newtheorem{claim}{Claim}[section]
\newtheorem{question}{Question}[section]
\newtheorem{invariant}{Invariant}[section]
\newtheorem{observation}{Observation}[section]
\newcommand{\eps}{\epsilon}
\newcommand{\ceil}[1]{\lceil #1 \rceil}
\newcommand{\brac}[1]{\left(#1\right)}
\newcommand{\opt}{\mathsf{opt}}
\newcommand{\mst}{\mathsf{mst}}
\newcommand{\MST}{\mathrm{MST}}
\newcommand{\diam}{\mathrm{diam}}
\newcommand{\ball}{B} 
\newcommand{\proj}{\mathsf{proj}}
\newcommand{\dist}{\mathsf{dist}}
\newcommand{\real}{\mathbb{R}}
\newcommand{\cost}{\mathbf{C}}
\newcommand{\sparse}{\mathrm{spa}}
\newcommand{\light}{\mathrm{light}}
\newcommand{\weight}{\mathbf{w}}
\definecolor{BrickRed}{rgb}{.72,0,0}
\def\EMPH#1{\emph{\textcolor{BrickRed} {#1}}}
\begin{document}

\begin{titlepage}
	\title{Towards Instance-Optimal Euclidean Spanners}
	\author{
		Hung Le\thanks{University of Massachusetts Amherst, \href{}{hungle@cs.umass.edu}}\and
		Shay Solomon\thanks{Tel Aviv University, \href{}{solo.shay@gmail.com}}\and	
		Cuong Than\thanks{University of Massachusetts Amherst, \href{}{cthan@umass.edu}}\and
		Csaba D. T\'oth\thanks{California State University Northridge and Tufts University, \href{}{csaba.toth@csun.edu}}\and
		Tianyi Zhang\thanks{Tel Aviv University, \href{}{tianyiz21@tauex.tau.ac.il}}}
	
	\date{}

	\maketitle
	\thispagestyle{empty}

	\begin{abstract}
		Euclidean spanners are important geometric objects that have been extensively studied since the 1980s.
		The two most basic ``compactness'' measures of a Euclidean spanner $E$ \footnote{We shall identify a graph $H = (X,E)$ with its edge set $E$. All edge weights are given by the Euclidean distances.}
        are the size (number of edges) $|E|$ and the weight (sum of edge weights) $\|E\|$. The state-of-the-art constructions of Euclidean $(1+\eps)$-spanners in $\mathbb{R}^d$ have $O_d\brac{n\cdot \epsilon^{-d+1}}$ edges (or \emph{sparsity}
        $O_d(\epsilon^{-d+1})$)
        and weight $O_d\brac{\epsilon^{-d}\log\epsilon^{-1}} \cdot \|E_\mst\|$  (or \emph{lightness} $O_d(\epsilon^{-d}\log\epsilon^{-1})$); here $O_d$ suppresses a factor of $d^{O(d)}$ and  $\|E_\mst\|$ denotes the weight of a minimum spanning tree of the input point set. 
        
        Importantly, these two upper bounds are (near-)optimal (up to the $d^{O(d)}$ factor and disregarding the factor of $\log(\epsilon^{-1})$ in the lightness bound) for some {\em extremal
        instances} [Le and Solomon, 2019], and therefore they are (near-)\EMPH{optimal in an existential sense}.
        Moreover, both these upper bounds are attained by the same construction---the classic \EMPH{greedy spanner}, whose sparsity and lightness are not only existentially optimal, but they also significantly outperform those of any other Euclidean spanner construction studied in an experimental study by [Farshi-Gudmundsson, 2009]
        for various practical point sets in the plane.
        This raises the natural question of whether the greedy spanner is (near-) optimal for \EMPH{any point set instance}? 
        
  Motivated by this question, we initiate the study of \EMPH{instance optimal Euclidean spanners}. Our results are two-fold.
		\begin{itemize}[leftmargin=*]
        \item Rather surprisingly (given the aforementioned experimental study), we demonstrate that the \EMPH{greedy spanner is  far from being instance optimal}, even when allowing its stretch to grow. More concretely, we design two hard instances of point sets in the plane, where the greedy $(1+x \epsilon)$-spanner (for basically any parameter $x \ge 1$) has $\Omega_x(\epsilon^{-1/2}) \cdot |E_\sparse|$ edges and weight $\Omega_x(\epsilon^{-1}) \cdot \|E_\light\|$, where 
        $E_\sparse$ and $E_\light$ denote the per-instance sparsest and lightest $(1+\eps)$-spanners, respectively,
        and the $\Omega_x$ notation suppresses a polynomial dependence on $1/x$.

		\item  As our main contribution, we design a new construction 
        of Euclidean spanners, which is inherently different from known constructions, achieving the following bounds:  a stretch of $1+\epsilon\cdot 2^{O(\log^*(d/\epsilon))}$
        with $O(1) \cdot |E_\sparse|$ edges and weight  
        $O(1) \cdot \|E_\light\|$.
        In other words, we show that a slight increase to the stretch suffices for obtaining \EMPH{instance optimality up to an absolute constant for both sparsity and lightness}. Remarkably, there is only a log-star dependence on the dimension in the stretch,
        and there is no dependence on it whatsoever in the number of edges and weight. In general, for any integer $k \ge 1$, we can construct a Euclidean spanner in $\mathbb{R}^d$ of stretch $1+\epsilon\cdot 2^{O(k)}$ with $O\brac{\log^{(k)}(\eps^{-1}) + \log^{(k-1)}(d)} \cdot |E_\sparse|$ edges and weight $O\brac{\log^{(k)}(\eps^{-1}) + \log^{(k-1)}(d)} \cdot \|E_\light\|$, where $\log^{(k)}$ denotes the $k$-iterated logarithm.	
		\end{itemize}		
	\end{abstract}

\end{titlepage}

\tableofcontents

\thispagestyle{empty}
\clearpage
\pagenumbering{arabic}
\setcounter{page}{1}

\newpage

\section{Introduction}

A \EMPH{Euclidean $t$-spanner} of a set of point $X\subset \real^d$ is an edge-weighted graph $H = (X,E,\weight)$ having $X$ as the vertex set, such that the weight $\weight(x,y)$ of any edge $xy \in E$ is given by the Euclidean distance $\|xy\|$ between the endpoints $x$ and $y$, and for any $x,y\in X$,
there is a path in $H$ 
of {\em weight}
at most $t\cdot \|xy\|$; such a path is called a {\em $t$-spanner path}. The parameter $t$ is called the \EMPH{stretch} of the spanner.  The most important ``compactness'' measure of a Euclidean spanner is its number of edges or its \EMPH{sparsity}, which is the ratio of the number of edges of the spanner to the number of points. Another basic compactness measure of a Euclidean spanner is its weight (i.e., total edge weight) or its \EMPH{lightness}, which is the ratio of its weight to the weight of a Euclidean minimum spanning tree (MST).

The pioneering work by Chew~\cite{Chew86} showed that one could construct a spanner with \emph{constant} stretch and \emph{constant} sparsity for any point set in the plane.  Since then, Euclidean spanners have been extensively and intensively studied. Over more than three decades, Chew's result has been strengthened and generalized, culminating with the following result: For any parameter $\eps\in (0,1)$ and for any finite point set in $\real^d$, one can construct a $(1+\eps)$-spanner with sparsity $O_d(\eps^{-d+1})$ and/or lightness $O_d(\eps^{-d} \log(\eps^{-1}))$; here $O_d$ suppresses a factor of $d^{O(d)}$. The sparsity upper bound of $O_d(\eps^{-d+1})$ is realized by various classic constructions, such as the Theta-graph~\cite{Clarkson87,Keil88,RS91}, Yao graph~\cite{Yao82}, and the greedy spanner~\cite{althofer1993sparse}; the upper bound arguments, which were given already in the 90s, are short and simple. (We refer the readers to the book by Narasimhan and Smid~\cite{NS07C} for a  comprehensive coverage of spanner constructions that achieve a similar sparsity bound.) On the other hand, the lightness upper bound of $O_d(\eps^{-d} \log(\eps^{-1}))$
is much more complex, and it was proven rather recently by Le and Solomon~\cite{le2022truly}; in contrast to the sparsity upper bound, the lightness upper bound was proved only for the greedy spanner. In the same paper \cite{le2022truly}, it was shown that, for any  $d \geq 1$ and for any $n = \Omega_d(\eps^{-d} \log(\eps^{-1}))$, there is a set $X$ of $n$ points in $\real^d$ such that any $(1+\eps)$-spanner for $X$ must have sparsity $\Omega_d(\eps^{-d+1})$ and lightness $\Omega_d(\eps^{-d})$. Their results imply that the greedy spanner is \EMPH{existentially optimal} for \emph{both} sparsity and lightness
and the Theta- and Yao-graphs are {existentially optimal} just for sparsity; by existentially optimal we mean (near-)optimal in an existential sense to be formally defined below. 

\paragraph{Existential optimality.~} Let $A$ be a polynomial-time algorithm that takes a point set $X \in \real^d$ and $\eps\in (0,1)$ as input, and outputs a Euclidean $(1+\eps)$-spanner for $X$, denoted by $A(X,\eps)$.  Let $\cost(G)$ be a cost function imposed on a graph $G$. For our purposes, function $\cost$ either counts the number of edges (corresponding to sparsity) or the total edge weight (corresponding to lightness). 
We say that algorithm $A$ is \EMPH{existentially optimal for $\cost$ with optimality ratio $\kappa$} (for the Euclidean space $\real^d$) if, for every positive integer $n$,  \underline{\textbf{there exists}}
an $n$-point set $P_n$ in $\real^d$ such that 
\begin{center}
for any $n$-point set $X_n\in \real^d$, we have $\cost(A(X_n,\eps)) \leq \kappa \cdot \opt_\cost(P_n,\eps)$,    
\end{center}
 where $\opt_\cost(P_n,\eps)$ is the cost of the optimal $(1+\eps)$-spanner for $P_n$ under $\cost$. 
 Here the point set $P = P_n$ serves as a ``hard'' or ``extreme'' instance; that is, for existential optimality of algorithm $A$, it suffices to show the existence of a single hard instance (for any $n$), where the cost
of $A$ on any $n$-point set is no worse (by more than a factor of $\kappa$) than the {\em optimal} cost on the hard instance $P$.
 The notion of existential optimality in the context of graph spanners was explicitly formulated by Filtser and Solomon~\cite{FiltserS20}, though the general idea of existential optimality was implicitly used long before.  Here, we tailor their definition to the Euclidean space $\real^d$.  
 
 Ideally, we would like to design an existentially optimal algorithm $A$ with $\kappa = 1$, but this is too much to ask for: No known (polynomial time) algorithm in the spanner literature is existentially optimal with $\kappa = 1$ (or even close to that). The known constructions are existentially optimal with optimality ratio $\kappa = d^{O(d)}$ 
 or $\kappa = 2^{O(d)} \cdot O_\eps(1)$, where $O_\eps(1)$ is a constant that  depends only on $\eps$; thus the optimality ratio of all known constructions is a constant (and typically a large one) that depends at least exponentially on the dimension $d$.
 We shall henceforth say that algorithm $A$ is \EMPH{existentially optimal for $\cost$} if the optimality ratio $\kappa$ is a constant that  depends only on the dimension $d$; importantly, $\kappa$ \emph{must not} depend on $\eps$ and on the size $n$ of the input point sets.

The hard instance used by Le and Solomon~\cite{LeS23}, denoted here by $S_n$, is basically a set of $n$ evenly spaced points on the boundary of a $d$-dimensional unit sphere.\footnote{For sparsity, one has to use multiple vertex-disjoint copies of the sphere that are well-separated from each other.} Let us start with the sparsity
(so that the function $\cost$ counts the number of edges).  Le and Solomon~\cite{LeS23} showed that $\opt_{\cost}(S_n,\eps) \geq \alpha_d\cdot \eps^{-d+1} |S_n|$  for some $\alpha_d = 2^{-O(d)}$. For the Theta-graph construction, denoted here by $\Theta$, it is known~\cite{Clarkson87,Keil88,RS91} that for every $n$-point set $X \in \real^d$, we have $\cost(\Theta(X_n,\eps)) \leq \beta_{d} \cdot \eps^{-d+1} |X_n|$ for some  $\beta_d = 2^{O(d)}$. Thus, for every $n$-point set $X \in \real^d$:
\begin{equation*}
 \cost(\Theta(X_n,\eps))   \leq  \beta_{d}\cdot (1/\alpha_d) \cdot \opt_{\cost}(P_n,\eps) = 2^{O(d)} \opt_{\cost} \cdot (P_n,\eps)~,
\end{equation*}
implying that Theta-graphs are existentially optimal for sparsity. By the same argument, both Yao-graphs and greedy spanners are existentially optimal for sparsity~\cite{althofer1993sparse}. 
Next, for lightness (now the function $\cost$ counts the weight), 
 Le and Solomon~\cite{LeS23} showed that $\opt_{\cost}(S_n,\eps) \geq \alpha_d\cdot \eps^{-d}\,\|\MST(S_n)\|$  for some $\alpha_d = 2^{-O(d)}$, where $\MST(S_n)$ is a Euclidean minimum spanning tree of $S_n$. Combining this lower bound with the lightness upper bound of the greedy spanner by \cite{LeS23}, it follows that the greedy $(1+\eps)$-spanner algorithm is existentially optimal for lightness with \emph{optimality ratio $2^{O(d)}\log(\eps^{-1})$}. It is worth noting that removing the $\log(\eps^{-1})$ factor in the lightness optimality ratio of the greedy algorithm remains an open problem. 

Thus, the hard instance  by Le and Solomon~\cite{LeS23} allows one to ``declare'' that their spanner construction, if achieving $O_d(\eps^{-d+1})$ sparsity and/or $O_d(\eps^{-d})$ lightness, is (existentially) optimal and hence they could redirect their effort on optimizing other properties.  
However, existential optimality, while interesting in its own right, is a rather weak notion of optimality: it only requires that the algorithm has to perform as well as the optimal spanners (up to a factor of $\kappa$) for \EMPH{the hard(est) instance}. The hard instance might be impractical; indeed, this is the case with the aforementioned hard instance by Le and Solomon~\cite{LeS23} (basically a set of evenly spaced points on the boundary of a sphere),
which is very unlikely to appear in practice.
On the other hand, 
on {\em practical point sets},
an existentially optimal algorithm $A$ might perform  poorly compared to the optimal spanner. In fact, it is conceivable that for \EMPH{a wide range of point sets}, including various random distributions of points, the spanners produced by an existentially optimal algorithm $A$ have more edges than the \EMPH{instance optimal spanners}
by a factor of $\Omega_d(\eps^{-d+1})$. In extreme cases where $\eps$ is chosen so that  $\eps^{-d+1} = \Theta_d(n)$, $A$ could end up having $\Theta_d(n^2)$ edges while an optimal spanner has only $O_d(n)$ edges. Indeed, the experimental work by Farshi and Gudmundsson~\cite{FG09} showed that several existentially optimal algorithms (with respect to sparsity), including the Theta- and Yao-graphs, produce spanners with a much larger number of edges than the optimal spanners\footnote{In~\cite{FG09}, the optimal spanners (in terms of sparsity and other cost functions) were not computed explicitly; however for every instance,  one can take the best spanners over the collection of algorithms studied by Farshi and Gudmundsson as an upper bound for the cost of the optimal spanners.}. This naturally calls for a focus on a stronger notion of optimality, namely instance optimal spanners.

\paragraph{Instance optimality.~}  The notion of instance optimality was introduced by Fargin, Lotem, and Naorc \cite{FLN03} in the problem of choosing top $k$ items in sorted lists. We adapt this notion in our context as follows. We say that a polynomial-time algorithm $A$ is \EMPH{instance optimal for $\cost$ with optimality ratio $\kappa$} if \underline{\textbf{for every}} finite point set $X\in \real^d$, it holds that
\begin{center} 
$\cost(A(X,\eps)) \leq \kappa \cdot \opt_\cost(X,\eps)$,    
\end{center}
 where $\opt_\cost(X,\eps)$ is the cost of the optimal $(1+\eps)$-spanner for $X$ under  $\cost$.  
 
 An instance optimal algorithm with ratio $\kappa$ is, in fact, an {\em approximation algorithm} with an approximation factor $\kappa$. 
 The reason we chose to use the terminology of instance optimality rather than that of approximation is that the existential bounds in low-dimensional Euclidean spaces are already ``good'' in the sense that they are independent of the metric size, namely $\eps^{-O(d)}$ for stretch $1+\eps$ and dimension $d$. This stands in contrast to the existential bounds in general graphs that depend on the graph size, and for which there is a long line of influential work on approximation algorithms from both the upper and lower bounds sides; see \Cref{subsec:related} for more detail. We stress that improving bounds that are independent of the metric size is a completely different challenge, requiring new tools and techniques. By using the ``instance-optimality'' terminology, we wish to deviate from the line of work on approximation algorithms in general graphs and put the main focus on (1) the side of the upper bound (i.e., the ``optimality'' aspect) and (2) how the instance-optimal bounds compare to the existential-optimal bounds. Moreover, we hope that our work will initiate a systematic study on instance-optimal spanners in other graph families for which the existential bounds are independent of the graph size, such as unit-disk graphs, planar graphs, and bounded treewidth graphs.
 
 By definition, an instance optimal algorithm is also existentially optimal, but the converse is of course not true; indeed, designing an instance optimal algorithm appears to be significantly  harder. 
 First, designing an instance optimal algorithm with ratio $\kappa = 1$ is NP-hard for both sparsity~\cite{KK06,GKKKM10} and lightness~\cite{CC13}. Thus, to get a polynomial time algorithm,
 we shall allow the ratio $\kappa$ to grow beyond 1, to any constant that depends only on the dimension $d$; we refer to such an algorithm as \EMPH{instance optimal}. 
Even under this relaxation, no such algorithm is known to date, even in the plane!
This leads naturally to a fundamental
question in Euclidean spanners, summarized below.

\begin{tcolorbox}  \begin{question}\label{ques:approx} 
Can one design a polynomial-time instance optimal $(1+\eps)$-spanner algorithm with a ratio $\kappa = O_d(1)$ for sparsity and/or lightness? 
\end{question}
\end{tcolorbox}

\paragraph{The greedy spanner.}
A natural candidate of an instance optimal algorithm is the (path) greedy spanner~\cite{althofer1993sparse}.
As mentioned above, in the Euclidean space $\real^d$, it was shown to be \EMPH{existentially optimal} for sparsity, and also for lightness up to a factor of $\log(\eps^{-1})$  \cite{le2022truly}. In addition, Filtser and Solomon~\cite{FiltserS20} showed that greedy spanners are existentially optimal for very broad classes of graphs: those that are closed under edge deletions, which include general graphs and minor-closed families. They also showed that greedy spanners (and also an approximate version of the greedy spanner \cite{DN97,GLN02}) are existentially (near-)optimal for both sparsity and lightness in the family of {\em doubling metrics}, which is wider than that of Euclidean spaces. Experimental results~\cite{SZ04,FG09,ChimaniS22} also showed that the greedy spanners achieve the best quality in multiple aspects.

For general graphs, it was known that the approximation factor of the greedy spanners is $\Theta(n)$ for stretch $t < 3$~\cite{FiltserS20,ABSHLKS20} while an algorithm with a better approximation was known~\cite{KP94}. However, for higher stretch ($t\geq 5$), the greedy algorithm achieves the best-known approximation ratio.  Furthermore, there are works showing that the greedy spanner provides rather good {\em bicriteria approximation} algorithms in general graphs \cite{gudmundsson2022improving,wong23,BBGW24} for a related problem, called \emph{minimum dilation graph augmentation}: augmenting 
a graph as few edges as possible to reduce the dilation (a.k.a.\ stretch). 

Consequently, it appears that a preponderance of work on the greedy spanners all point in the same direction, i.e., to a
\EMPH{conjecture that the greedy algorithm gives a good bicriteria instance optimal spanner}. In this work, we demonstrate that the \EMPH{greedy spanner is far from being bicriteria instance optimal} for points in $\mathbb{R}^d$. 
More concretely, we design two hard instances of point sets in the plane, where the greedy $(1+x \epsilon)$-spanner (for basically any parameter $x \ge 1$) has $\Omega_x(\epsilon^{-1/2}) \cdot |E_\sparse|$ edges and weight $\Omega_x(\epsilon^{-1}) \cdot \|E_\light\|$, where 
        $E_\sparse$ and $E_\light$ denote the per-instance sparsest and lightest $(1+\eps)$-spanners, respectively,
        and the $\Omega_x$ notation suppresses a polynomial dependence on $1/x$. 

\paragraph{Bicriteria instance optimality.~}
Given that the greedy spanner is far from being instance optimal, even in the plane, and \EMPH{even when allowing its stretch to grow from $1+\eps$ to $1+x\eps$} for $x \ge 1$, and given that the greedy spanner appears to outperform any other known spanner construction in terms of both sparsity and lightness, the natural conclusion is that a \EMPH{new spanner construction is in order}. In light of our hardness result for the greedy spanner, it seems acceptable to allow the stretch to increase from $1+\eps$ to $1+x \eps$, for some reasonably small $x$.
This leads to the following question: Could standard techniques in geometric optimization, such as Arora's technique~\cite{Arora98}, be applied to construct instance optimal spanners? Arora's technique has been instrumental in solving problems such as the Euclidean TSP and Steiner tree. 
In our problem, however, the major difficulty is that we aim at optimizing the cost of the spanner {\em while  guaranteeing a stretch bound of $1+\eps$}. Alas, Arora's technique, as well as other known techniques, are not suitable for achieving \emph{both criteria}. 
Thus, it appears that \EMPH{a new technique for achieving both criteria is in order}.

The above discussion motivates us to consider \EMPH{bicriteria instance optimal spanners}: We say that an algorithm $A$ is \EMPH{$(c,\kappa)$-instance optimal} if for every point set $X\in \real^d$, $A(X,\eps)$ is a $(1+c\cdot \eps)$-spanner for $X$, and  $\cost(A(X,\eps)) \leq \kappa \cdot \opt_\cost(X,\eps)$.   

\begin{tcolorbox}  
\begin{question}\label{ques:bi-approx} 
Can one design a polynomial-time $(c,\kappa)$-instance optimal spanner algorithm with $c$ and $\kappa$ both bounded by  $O_d(1)$ (independent of $\eps$) for sparsity and/or lightness?
\end{question}
\end{tcolorbox}

While \Cref{ques:bi-approx} asks for constants $c$ and $\kappa$ (depending only on the dimension $d$), what was previously known is embarrassingly little. Even in the basic setting of the Euclidean plane,
the only positive result is a direct corollary of existentially optimal spanners, which gives a bicriteria $(c,\kappa)$-instance optimal spanner algorithm with $c = O(1)$ and $\kappa = O(\eps^{-1})$ for sparsity and  $\kappa = O(\eps^{-2})$ for lightness\footnote{Note that in the regime of parameters derived from existentially optimal spanners,  having $c = 1$ is the same as having $c = O(1)$ since one can apply the standard scaling trick: $\eps \leftarrow \eps/c$. Scaling reduces the stretch from $1+c\eps$ to $1+\eps$ while adding a factor of $c$ for sparsity and $c^2$ for lightness to $\kappa$.  The same trick, however, does not work for instance optimal spanners: if one scales $\eps$ to $\eps/c$, then one now essentially compares against the optimal $(1+\eps/c)$-spanner instead of the optimal $(1+\eps)$-spanner.}. To the best of our knowledge, no prior result achieves sublinear (respectively, subquadratic) dependence of $\kappa$ on $\eps^{-1}$ for sparsity (resp., lightness) in the Euclidean plane.  On the other hand, there is a lot of work on approximating spanners in general graphs, which we will review in more detail in \Cref{subsec:related}. A short takeaway is that the approximation factors in this setting depend on $n$ (the number of vertices), while in our setting, the approximation factor is independent of $n$ (the number of points). Therefore, the techniques for general graphs do not seem applicable to $\mathbb{R}^d$.

\subsection{Our Contribution}

 \paragraph{Hard instances.~}
We first construct point sets in $\real^2$ for which the \EMPH{greedy} $(1+\eps)$-spanner is far from being instance optimal in terms of sparsity or lightness, even if we relax the stretch from $1+\eps$ to a larger value $1+x\eps$.
\begin{restatable}[Sparsity lower bound for greedy]{theorem}{lbsparse} 
    \label{thm:sparsityLB+}
    For every sufficiently small $\eps>0$ and $1\leq x\leq o(\eps^{-1/3})$, there exists a finite set $S\subset \mathbb{R}^2$ such that 
\[
|E_{{\rm gr}(x)}| \geq 
\Omega\left(\frac{\eps^{-1/2}}{x^{3/2}}\right)\cdot |E_{\sparse}|,
\]
where $E_{{\rm gr}(x)}$ is the edge set of the greedy $(1+x\eps)$-spanner, and $E_{\sparse}$ is the edge set of a sparsest $(1+\eps)$-spanner for $S$.
\end{restatable}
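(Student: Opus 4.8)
The plan is to build a planar point set $S = M \cup A$ with a ``main'' part $M$ and a small ``auxiliary'' part $A$. The main part $M$ will be $n = \Theta(\epsilon^{-1/2} x^{-1/2})$ points placed at equal angular spacing $\delta = \Theta(\sqrt{x\epsilon})$ on a circular arc of constant angular diameter. The spacing $\delta$ is the critical threshold: at angular spacing below $c\sqrt{x\epsilon}$ a pair of arc points admits a short detour through its neighbors on the arc, while at spacing $\ge c\sqrt{x\epsilon}$ no such detour exists. The auxiliary points $A$ are then placed so as to witness a genuinely sparse $(1+\epsilon)$-spanner, yet remain ``invisible'' to the greedy algorithm until it is too late for greedy to profit from them.

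The first step, which I expect to be the technical heart, is the greedy analysis: I would show that the greedy $(1+x\epsilon)$-spanner contains the edge $p_i p_j$ for \emph{every} pair of main points, so that greedy restricted to $M$ is the complete graph and hence has $\Omega(n^2)$ edges. This is an ellipse (``lens'') argument. Any $(1+x\epsilon)$-spanner path between $p_i$ and $p_j$ must keep all of its vertices inside the ellipse with foci $p_i, p_j$ and major axis $(1+x\epsilon)\|p_i p_j\|$, whose semi-minor axis is $\Theta(\sqrt{x\epsilon}) \cdot \|p_i p_j\|$; with the chosen $\delta$, no other point of $M$ lies inside this ellipse because the arc curves away from the chord $p_i p_j$ faster than the ellipse widens. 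The only remaining danger is a path through points of $A$, and here the construction of $A$ comes in: I will place the auxiliary points relevant to the pair $(p_i, p_j)$ slightly \emph{behind} $p_i$ and $p_j$ (on the far side of the segment $p_i p_j$), so that every $(1+x\epsilon)$-path they enable must traverse an ``overshoot'' edge strictly longer than $\|p_i p_j\|$. Since greedy processes pairs in nondecreasing order of length, that longer edge is not yet present when greedy considers $(p_i, p_j)$; hence at that moment the current graph has no $(1+x\epsilon)$-path and greedy must insert $p_i p_j$. Carrying this over all $\binom{n}{2}$ main pairs gives $|E_{{\rm gr}(x)}| \ge \binom{n}{2} = \Omega(n^2)$.

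The second step is to exhibit a $(1+\epsilon)$-spanner of $S$ with only $O(n x)$ edges; this then yields the ratio $\Omega\!\big(n^2/(n x)\big) = \Omega(n/x) = \Omega(\epsilon^{-1/2} x^{-3/2})$. The auxiliary points serve as the backbone of a ``ladder'': a short polygonal curve that approximates straight-line distances between main points far better than the arc does, together with $O(x)$ connector edges per main point, so that every main pair is $(1+\epsilon)$-spanned by an overshoot detour of the type used in Step~1, while pairs with an auxiliary endpoint are covered cheaply as well. The hard part — and the main obstacle — will be getting this ladder down to $O(nx)$ edges: a convex arc of constant angular diameter genuinely forces a hierarchical shortcutting structure, so I would partition the arc into $\Theta(\epsilon^{-1/2})$ blocks of $\Theta(x^{-1/2})$ consecutive points (within a block the arc-subpath already has stretch $1+\epsilon$, contributing only $O(n)$ edges in total), handle cross-block pairs by a small family of recursively nested auxiliary polylines, and argue that the recursion terminates quickly enough and can be pruned so that the total auxiliary edge count stays $O(nx)$ and each main point sees only $O(x)$ connectors. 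The remaining work is routine bookkeeping: choose the constants so that the within-block subpaths, the connector hops, and the ladder hops together keep every pair within factor $1+\epsilon$; re-check the constant in $\delta$ so that the greedy ellipse argument still excludes all of $M$; and verify that $1 \le x \le o(\epsilon^{-1/3})$ is exactly what is needed to keep the blocks of size $\ge 1$ and the final ratio above $1$.
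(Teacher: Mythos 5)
Your Step~1 and Step~2 are mutually inconsistent, and this is fatal to the plan rather than a technical obstacle. Your greedy argument rests on the requirement that, for every main pair $(p_i,p_j)$, every $(1+x\eps)$-path other than the direct edge must traverse an ``overshoot'' edge of length greater than $\|p_ip_j\|$. But the path that the sparsest $(1+\eps)$-spanner uses for $(p_i,p_j)$ is itself a $(1+x\eps)$-path (as $x\geq 1$), so it contains such an edge $e=uv$; since the remaining edges of that path have total length less than $\eps\|p_ip_j\|$, the endpoints of $e$ are pinned to within $\eps\|p_ip_j\|=O(\eps)$ of $p_i$ and $p_j$. Your main points are $\delta=\Theta(\sqrt{x\eps})\gg\eps$ apart, so a single overshoot edge (or a single direct edge) can serve at most one main pair, and hence \emph{any} $(1+\eps)$-spanner of your set $S$ must have $\Omega(n^2)$ edges. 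The ratio then collapses to $O(1)$: the $O(nx)$ target of Step~2 is not merely unproven, it is impossible under the very hypothesis that makes Step~1 work. If instead you weaken the overshoot requirement so that a sparse spanner can route main pairs through chains of shorter auxiliary edges, then the justification for greedy adding $p_ip_j$ disappears, since such chains are compatible with the greedy order. (Independently, the $O(nx)$-edge spanner for spacing-$\sqrt{x\eps}$ points on a constant-angle arc is only asserted; for points on or within $O(\eps)$ of the circle, a convexity computation shows any $(1+\eps)$-path for a pair at angular distance $\Theta(1)$ must contain one edge of nearly full chord length with endpoint tolerance $O(\eps)$, so near-circular auxiliary points cannot give $o(n^2)$ edges, while pushing auxiliary points well into the interior is exactly the regime in which they also become usable by greedy.)

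The paper's construction resolves this tension by a different mechanism, which is worth internalizing. There the optimal spanner routes each cross pair $(a_i,b_j)$ through a center point $c$ using two edges of roughly \emph{half} the pair length (no overshoot), so one hub edge set of linear size serves all $\Omega(k^2)$ pairs. Greedy is kept from ever acquiring the edges $a_ic$ and $cb_j$ not by an order-of-lengths argument but by decoy points $p,q$ calibrated so that $\|a_ip\|+\|pc\|\leq(1+x\eps)\|a_ic\|$ (so greedy skips $a_ic$), while the relayed path $a_ip\circ pc\circ cq\circ qb_j$ exceeds $(1+x\eps)\|a_ib_j\|$ by a carefully controlled margin; moreover the sets $A,B$ are confined to segments of diameter $\Theta(\tan\alpha/\sqrt{x})$ with spacing $2x\eps$ so that the margin survives over all pairs, which is where the $x^{-3/2}$ in the bound comes from. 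Some such two-level ``decoy'' calibration, in which the optimum's crucial edges are short edges that greedy declines rather than long edges it has not yet seen, appears to be necessary; your circular-arc main set could conceivably be combined with it, but that would be a different proof from the one you propose.
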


\begin{restatable}[Lightness lower bound for greedy]{theorem}{lblight} 
    \label{thm:weightLB+}
For every sufficiently small $\eps>0$ and $x\in [2,\eps^{-1/2}/48]$, there exists a finite set $S\subset \mathbb{R}^2$ such that 
\[
\|E_{{\rm gr}(x)}\| \geq 
\Omega\left(\frac{\eps^{-1}}{x^{2}\cdot \log x}\right)\cdot \|E_{\light}\|,
\]
where $E_{{\rm gr}(x)}$ is the edge set of the greedy $(1+x\eps)$-spanner, and $E_{\light}$ is the edge set of a minimum-weight $(1+\eps)$-spanner for $S$.
\end{restatable}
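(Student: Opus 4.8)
The plan is to exhibit a single point set $S\subset\mathbb R^2$ on which we can simultaneously lower bound $\|E_{{\rm gr}(x)}\|$ and upper bound $\|E_\light\|$; the ratio then follows by division. The design of $S$ is the heart of the matter. We want an instance on which greedy's \emph{myopic}, distance-ordered processing is wasteful, whereas a non-myopic construction --- one free to choose \emph{where} to place its edges --- is frugal. Concretely, $S$ will be a near-collinear ``backbone'': a polygonal path whose total length is within a $(1+O(\eps))$ factor of $\|E_\mst\|$ (and whose MST is essentially the path itself), carrying a carefully tuned amount of curvature distributed over $\Theta(\log(\eps^{-1}))$ geometric distance scales, with the point spacing and the per-scale curvature calibrated to the regime $x\in[2,\eps^{-1/2}/48]$. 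The key structural feature is a \emph{bend-cost} phenomenon: on a curved backbone, a path that concatenates two chords accumulates at their junction an extra length of order $(\text{shorter incident chord})\times(\text{turning angle})^2$, and this already exceeds the $x\eps$ relative budget unless \emph{both} chords sit at the very finest scale. Consequently, strings of shorter edges are useless for routing at medium and long range, so any $(1+\eps)$-spanner --- and, with a weaker budget, greedy's $(1+x\eps)$-spanner --- must, at each scale, insert fresh almost-straight chords that cut across the backbone.

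For the lower bound on $\|E_{{\rm gr}(x)}\|$ I would induct over the distance scales, using two facts about greedy: it never deletes an edge, and it adds $uv$ precisely when no $(1+x\eps)$-path between $u$ and $v$ exists among the already-added (hence no-longer) edges. The inductive invariant states that, once greedy has handled all pairs at distance below the $i$-th scale $\lambda_i$, the graph it has built is ``no better than the backbone'' for routing at scale $\lambda_i$: every $(1+x\eps)$-path between backbone points at distance $\approx\lambda_i$ must use an almost-straight chord of length $\approx\lambda_i$. The bend-cost lemma rules out routing via the shorter chords greedy has inserted at earlier scales; a companion \emph{net-density} estimate then forces greedy's scale-$\lambda_i$ chords to be anchored at a net along the backbone whose spacing is dictated (again through the bend cost) by the $x\eps$ budget. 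Accounting for the total length of the chords greedy is compelled to insert, summed over the $\Theta(\log(\eps^{-1}))$ scales, I expect to obtain $\|E_{{\rm gr}(x)}\|=\Omega\brac{\eps^{-1}x^{-2}}\cdot\|E_\mst\|$.

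For the upper bound on $\|E_\light\|$ I would construct a $(1+\eps)$-spanner explicitly: take the backbone, and at each of the $\Theta(\log x)$ scales on which the backbone fails to be a $(1+\eps)$-spanner, add almost-straight chords placed \emph{only} at the pairs that genuinely require them. Because this construction is non-myopic, it can target those pairs exactly instead of covering the whole backbone, so $O(1)\cdot\|E_\mst\|$ of extra weight suffices per scale; verifying the overall $(1+\eps)$-stretch is a routine triangle-inequality and bend-cost calculation. Summing over scales gives $\|E_\light\|=O(\log x)\cdot\|E_\mst\|$, whence
\[
\frac{\|E_{{\rm gr}(x)}\|}{\|E_\light\|}\ \ge\ \frac{\Omega\brac{\eps^{-1}x^{-2}}\cdot\|E_\mst\|}{O(\log x)\cdot\|E_\mst\|}\ =\ \Omega\brac{\frac{\eps^{-1}}{x^2\log x}}.
\]

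The main obstacle is the greedy lower bound, and within it the design of the curvature profile of $S$. The danger is that the $x\eps$-versus-$\eps$ slack \emph{helps} greedy: for a fixed curved backbone, greedy with budget $x\eps$ typically needs \emph{fewer} and lighter cross-chords than the optimal $(1+\eps)$-spanner, so a naive instance makes greedy lighter, not heavier. The gap must therefore be engineered to come from greedy's processing \emph{order} --- from the fact that the edges greedy commits to at the finer scales, while locally ``good enough'', are the wrong ones for serving the coarser scales, so the waste compounds across the $\Theta(\log(\eps^{-1}))$ scales. Making this compounding precise, while simultaneously ensuring the optimum escapes with only $O(\log x)$ lightness, will require a tight and robust form of the bend-cost lemma valid uniformly over $x\in[2,\eps^{-1/2}/48]$, together with a delicate choice of spacings and curvatures so that greedy is forced into a dense net at every scale but a frugal algorithm is not.
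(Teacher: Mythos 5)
Your proposal is not a proof: the instance is never constructed, and the two steps that carry all the difficulty --- the ``bend-cost plus net-density'' induction forcing greedy to pay at every scale, and the simultaneous $O(\log x)\cdot\|E_\mst\|$ bound for the optimum --- are exactly the parts you defer (``will require \dots a delicate choice of spacings and curvatures''). Worse, as stated the two halves of the plan work against each other. On any near-collinear backbone whose consecutive-point edges are the shortest pairs, the backbone is a subgraph of the greedy graph from the earliest stages onward, so greedy is forced to add an edge for a pair $\{s,t\}$ only when the backbone path itself has stretch more than $1+x\eps\ge 1+\eps$ for that pair. Hence the scales at which greedy can be forced to spend are a \emph{subset} of the scales at which the backbone fails even the $(1+\eps)$ test, i.e.\ of the scales your lightest spanner must also serve. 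You cannot have greedy's waste ``compound across $\Theta(\log(\eps^{-1}))$ scales'' while the optimum only pays at $\Theta(\log x)$ failure scales. In addition, your claim that the optimum pays $O(1)\cdot\|E_\mst\|$ per failing scale ignores a granularity phenomenon that is central here: if the backbone's detour at a scale is calibrated to $\Theta(x\eps)$ (which is precisely what is needed to make the greedy $(1+x\eps)$-spanner fail there), then a shortcut chord helps a pair within the $(1+\eps)$ budget only if its length is within a factor $1+O(1/x)$ of the pair's length, so covering even a constant-factor range of lengths costs $\Theta(x)$ chord sub-levels. This is why a cheap optimum and a heavily penalized greedy cannot both be obtained by the bookkeeping you describe, and why the unresolved danger you yourself flag (greedy's earlier chords serving later scales) is fatal to the multi-scale compounding as planned.

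For contrast, the paper's proof avoids multi-scale compounding for greedy altogether. The point set is simply a uniform discretization of a circular arc of radius $1$ and angle $\approx 1.1\sqrt{48x\eps}$ (the construction of \Cref{thm:weightLB} with $x\eps$ in place of $\eps$). Greedy is forced to pay at essentially \emph{one} scale: just beyond the threshold where the arc ceases to be a $(1+x\eps)$-spanner, any two near-diametric chords whose endpoints differ by more than $\approx x\eps$ times the arc length cannot serve each other, so greedy must insert $\Omega\brac{(x\eps)^{-1}}$ chords of weight $\Theta(\sqrt{x\eps})$ each, giving $\|E_{{\rm gr}(x)}\|=\Omega\brac{(x\eps)^{-1/2}}$. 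The lightest $(1+\eps)$-spanner is the arc path plus a hierarchy of chords whose lengths increase in multiplicative steps of $(1+\tfrac{1}{10x})^{1/2}$ over the range $[\sqrt{48\eps},\sqrt{48x\eps}]$ --- i.e.\ $\Theta(x\log x)$ levels of weight $O(\sqrt{x\eps})$ each, reflecting exactly the $1+O(1/x)$ granularity above --- so $\|E_\light\|=O(\eps^{1/2}x^{3/2}\log x)$, and the ratio $\Omega\brac{\eps^{-1}/(x^2\log x)}$ follows. Note that the paper's split is greedy lightness $\Omega\brac{(x\eps)^{-1}}$ versus optimal lightness $O(x\log x)$, not your proposed $\Omega(\eps^{-1}x^{-2})$ versus $O(\log x)$; the latter split is what the structural obstructions above make implausible.
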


 \paragraph{Main results.~}
Our main contribution is on the algorithmic front: we design a bicriteria $(c,\kappa)$-instance optimal spanner algorithm with $c = 2^{O(\log^*(d/\eps))}$ and $\kappa = O(1)$, thus resolving \Cref{ques:bi-approx} up to the exponential log-star term $2^{O(\log^*(d/\eps))}$, which is bounded by $O(\log^{(k)})(d/\eps)$ for \emph{any constant $k$};  here and throughout $\log^{(k)}$ denotes the $k$-iterated logarithm function (i.e.,  $\log^{(1)}(x) = \log(x)$ and  $\log^{(k)}(x) = \log(\log^{(k-1)}(x))$ for any integer $k\geq 2$). This result is obtained as a direct corollary of the following general theorem. 

\begin{theorem}[General tradeoff upper bound]\label{thm:tech}   Let $X\subset \mathbb{R}^d$ be any set of $n$ points, and $k\geq 1$ an integer. For any $\eps\in (0,1)$, there is an algorithm that returns an Euclidean $(1+2^{O(k)}\eps)$-spanner $H=(X,E)$ for $X$, such that
	\begin{equation*}
		\begin{split}
			|E| &= O\left(\log^{(k)}(\eps^{-1})+\log^{(k-1)}(d)\right) |E_\sparse| \qquad \text{and}\\
			\|E\| &= O\left(\log^{(k)}(\eps^{-1})+\log^{(k-1)}(d)\right) \|E_\light\|,
		\end{split}
	\end{equation*}
	where $E_\sparse, E_\light\in \binom{X}{2}$ are the edge sets of the optimal $(1+\eps)$-spanners of $X$ for sparsity and lightness, respectively.  In other words, our algorithm is bicriteria $(2^{O(k)}, \log^{(k)}(\eps^{-1}) +\log^{(k-1)}(d))$-instance optimal for both sparsity \emph{and} lightness.\\
	Furthermore, our algorithm can be implemented in $\eps^{-O(d)}n\log^2(n)$ time.
\end{theorem}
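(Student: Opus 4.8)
The plan is to reduce the problem to a collection of bounded–aspect–ratio ``single–scale'' subproblems, solve each with a $k$-level recursive construction that trades a constant factor of stretch for an iterated–logarithm saving in the number of edges, and finally charge the edges we produce against the optimal spanner scale by scale, so that the same competitive ratio comes out for sparsity and for lightness simultaneously.

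First I would set up a scale decomposition. After rescaling so that the closest pair is at distance $1$, build a net tree: a hierarchy of $2^i$-nets $N_0\supseteq N_1\supseteq\cdots$ with $O(n)$ nodes in total. The key geometric observation is that a $(1+O(\eps))$-path between $p,q$ with $2^i\le\|pq\|<2^{i+1}$ can deviate from the direct edge only by routing through points within distance $O(\eps 2^i)$ of the segment $pq$; hence it suffices to solve, for each level $i$ and each net point $u\in N_i$, an ``$\eps$-spanner problem at scale $2^i$'' on the points of $X$ lying in a ball of radius $O(2^i)$ around $u$, and take the union over all $(i,u)$. A packing/split-tree argument bounds the number of non-trivial level–region pairs by $O(n)$, so passing to single scales costs only a constant factor in stretch and in the ratio, and from here on all pairwise distances within a subproblem are within a constant factor of one another.

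Next I would prove a structural lower bound that pins down what $\opt_\cost$ must pay at a given scale: if a pair $(p,q)$ at scale $2^i$ is ``isolated'', meaning no point of $X$ lies within $c\,\eps 2^i$ of the segment $pq$, then every $(1+\eps)$-spanner must contain an edge incident to the vicinity of $p$ (or of $q$) that is nearly parallel to $pq$; and within a region of diameter $O(2^i)$ the number of pairwise angularly–separated isolated directions that must be ``served'' is a quantity $\chi(u,i)$ with $\sum_{u,i}\chi(u,i)=O(1)\cdot|E_\sparse|$ (and, weighting by $2^i$, $\sum_{u,i}2^i\chi(u,i)=O(1)\cdot\|E_\light\|$). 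The recursive construction then builds, for each single-scale subproblem, a spanner whose added edges at every level are injectively charged to these served directions. The innermost level ($k=1$) is a from-scratch $(1+O(\eps))$-spanner construction with overhead $O(\log\eps^{-1}+d)$ per served direction, the additive $d$ coming from the need to resolve directions in $\real^d$ even at the crudest resolution. Each of the $k-1$ outer levels takes the spanner the inner level builds on a set of cluster centers, refines it by resplitting each cluster at a finer granularity while adding only $O(1)$ repair edges per cluster, and thereby multiplies the stretch by an absolute constant while replacing a per-level overhead $\phi$ by $O(\log\phi)$ — because a coarser cluster/cone scheme needs only logarithmically many ``coarse cones'' to pre-sort the directions that the finer level cleans up. Unwinding $k$ levels yields stretch $1+2^{O(k)}\eps$ and overhead $O(\log^{(k)}(\eps^{-1})+\log^{(k-1)}(d))$, which together with the bound on $\sum_{u,i}\chi(u,i)$ gives the claimed bounds on $|E|$ and $\|E\|$; the running time follows since the net tree, all clusterings, and all cone covers can be computed with standard data structures in $\eps^{-O(d)}n\log^2 n$ time.

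The main obstacle I anticipate is making the structural lower bound of the third step and the ``$\phi\to O(\log\phi)$ per level'' mechanism fit together: one needs a coarse-granularity clustering/cone scheme that is cheap enough that its cost is only a logarithm of the finer level's, yet accurate enough that the finer level can repair it with a constant number of edges per cluster, and one needs every repair edge — across all $k$ levels and all scales — to be charged injectively to a single, scale-localized lower bound on the optimal spanner, so that the identical ratio is obtained for sparsity and for lightness. I would also expect the dimension dependence to be delicate, since the usual cone arguments in $\real^d$ lose factors exponential in $d$, and getting the overhead down to an additive $\log^{(k-1)}(d)$ seems to require handling the coarsest directional resolution by a genuinely different, dimension-frugal argument rather than by a straightforward cone cover.
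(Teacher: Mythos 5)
Your plan diverges from the paper's (which starts from a greedy spanner and repeatedly \emph{prunes} it, each round cutting the competitive ratio from $\alpha$ to $O(\log\alpha)$ via two pruning phases with ``substitute'' and ``helper'' edges, analyzed by a fractional charging scheme to a subdivision of $E_\sparse$ and, for lightness, by an adaptive charging plus a path-stitching argument), and as written it has two genuine gaps. The first is the engine of the iterated-logarithm saving: you assert that each outer level can replace a per-level overhead $\phi$ by $O(\log\phi)$ because ``a coarser cluster/cone scheme needs only logarithmically many coarse cones to pre-sort the directions that the finer level cleans up,'' with $O(1)$ repair edges per cluster and only a constant-factor stretch loss. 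No mechanism is given for why $O(\log\phi)$ coarse cones suffice; cone/Yao-type covers need $\epsilon^{-\Theta(d-1)}$ cones to control stretch, and more fundamentally any cone-based count cannot be charged against the \emph{instance} optimum, which is exactly the regime where cone-like constructions (and greedy) are shown in this paper to be polynomially far from optimal. In the paper the $\log$ arises for a completely different reason --- $O(\log\alpha)$ halving sub-iterations of a greedy pruning step, each adding $O(|E_\sparse|)$ new edges --- and proving that the surviving edges are few requires the helper-edge/ellipsoid geometry (Claim~\ref{angle}, Lemma~\ref{one-edge-per-level}), none of which has an analogue in your sketch. Without a concrete argument, the ``$\phi\to O(\log\phi)$ per level with $O(1)$ stretch blow-up'' step is an unproven claim, and it is the heart of the theorem.

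The second gap is the charging against the optimum, especially for lightness. You posit a scale-localized quantity $\chi(u,i)$ of ``served directions'' with $\sum_{u,i}\chi(u,i)=O(1)\cdot|E_\sparse|$ and $\sum_{u,i}2^i\chi(u,i)=O(1)\cdot\|E_\light\|$, and you also claim the net-tree decomposition into single-scale subproblems ``costs only a constant factor in the ratio.'' Both claims hide the real difficulty: a single edge of the optimal spanner can serve pairs at $\Omega(\log\Phi)$ distinct scales and in many regions, so a per-scale comparison naturally loses a $\log\Phi$ (or $\log n$) factor. For sparsity the paper escapes this because its fractional charge $\|e\|/\|st\|$ decays geometrically across scales, but for lightness the per-scale charges do \emph{not} decay, which is precisely why the paper needs the second charging scheme $\Psi_2$, the $\chi(e)$ ``witness'' edges, and the stitched path $\gamma$ of Lemma~\ref{lightness} to show that an edge that would over-charge could never have been added. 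Your proposal offers no substitute for this step, so even granting your structural lower bound at a single scale, the claimed $O(1)$ (rather than $O(\log\Phi)$) aggregation over scales --- and hence the lightness half of the theorem --- is unsupported.
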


We now highlight the two extreme points on the tradeoff curve. First, by setting $k = O(1)$ (for any constant) in \Cref{thm:tech}, we obtain an $(O(1), \log^{(k)}(\eps^{-1}) + \log^{(k-1)}(d))$-instance optimal spanner algorithm. 
This result is also interesting from the perspective of \EMPH{existential optimality}. 
By scaling $\eps \leftarrow \eps/c$, we get an existentially optimal algorithm for both sparsity and lightness with ratio $\kappa = O(\log^{(k)}(\eps^{-1}) + \log^{(k-1)}(d))$ for any constant $k$. 
Recall that all known existentially optimal spanner algorithms have an \EMPH{exponential dependence on $d$} in the optimality ratio or the stretch blow-up, while we achieve a sublogarithmic dependence on $d$.
In particular, recall that the state-of-the-art existential optimality ratio for \EMPH{lightness} is $O_d(\log(\eps^{-1})) = d^{O(d)} \cdot \log(\eps^{-1})$; our result \EMPH{improves exponentially} both the dependence on $\eps^{-1}$ and the dependence on $d$. 

By setting $k = \log^{*}(d/\eps)$ in \Cref{thm:tech}, we obtain the following main corollary of the general tradeoff. 
Remarkably, our optimality ratio $\kappa$ in \Cref{thm:main} does not depend on the dimension $d$, while our stretch blow-up $c$ only depends \EMPH{sublogarithmically} on $d$.

\begin{corollary} [Almost instance optimality] \label{thm:main} 
There is an algorithm for constructing spanners that are $(c,\kappa)$-instance optimal for both sparsity \emph{and} lightness, where $c = 2^{O(\log^*(d/\eps))}$ and $\kappa = O(1)$.  Furthermore, our algorithm can be implemented in $\eps^{-O(d)}n\log^2(n)$ time.
\end{corollary}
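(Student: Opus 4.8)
The plan is to obtain \Cref{thm:main} as a one-line instantiation of the general tradeoff in \Cref{thm:tech}: the only thing to do is pick the parameter $k$ so that the stretch blow-up and the optimality ratio land in the desired regime, and then everything else — in particular the running time — is inherited verbatim.

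First I would set $k := \log^*(d/\eps)$ in \Cref{thm:tech} (when $d/\eps = O(1)$ the claimed bounds are trivial, so we may assume $k \ge 1$). The stretch of the resulting spanner is $1 + 2^{O(k)}\eps = 1 + 2^{O(\log^*(d/\eps))}\eps$, giving exactly $c = 2^{O(\log^*(d/\eps))}$. It then remains to argue that the optimality ratio $O(\log^{(k)}(\eps^{-1}) + \log^{(k-1)}(d))$ furnished by \Cref{thm:tech} is $O(1)$ for this choice of $k$. This is just arithmetic on iterated logarithms: since $d \ge 1$ and $\eps \in (0,1)$ we have $\eps^{-1} \le d/\eps$ and $d \le d/\eps$, so by monotonicity of $\log^{(j)}$ and the definition of the iterated logarithm, $\log^{(k)}(\eps^{-1}) \le \log^{(k)}(d/\eps) \le 1$ and $\log^{(k-1)}(d) \le \log^{(k-1)}(d/\eps) = 2^{\log^{(k)}(d/\eps)} \le 2$. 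Hence $\kappa = O(1)$. Finally, the time bound $\eps^{-O(d)} n \log^2 n$ in \Cref{thm:tech} is independent of $k$, so it carries over unchanged.

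I do not expect any real obstacle in this step: all of the difficulty sits inside \Cref{thm:tech} (the recursive construction and the analysis of its size and weight against the per-instance optima), and \Cref{thm:main} is a pure parameter choice together with the iterated-log bookkeeping above. The only point that warrants a little care is fixing the convention for $\log^{(k)}(\cdot)$ when its argument would otherwise drop below $1$ (e.g.\ truncating it at $1$ from below), so that the displayed ratio in \Cref{thm:tech} — and hence the bound we are invoking — is well defined for every $\eps$ and $d$; in that borderline regime $k$ is anyway a small constant and the statement is immediate.
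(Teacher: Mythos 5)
Your derivation matches the paper's own route exactly: \Cref{thm:main} is obtained by instantiating \Cref{thm:tech} with $k=\log^*(d/\eps)$, so that $\log^{(k)}(\eps^{-1})+\log^{(k-1)}(d)=O(1)$ while the stretch becomes $1+2^{O(\log^*(d/\eps))}\eps$ and the $\eps^{-O(d)}n\log^2 n$ running time carries over. Your iterated-logarithm bookkeeping (using $\eps^{-1},d\le d/\eps$ and $\log^{(k-1)}(x)=2^{\log^{(k)}(x)}$) is correct, so the proposal is sound and essentially identical to the paper's argument.
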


\paragraph{Remark.~} An important feature of the classic greedy spanner algorithm is that it is existentially optimal with respect to both sparsity and lightness.
Our spanner algorithms (provided by
\Cref{thm:tech}
and \Cref{thm:main}) 
also have this feature, but in the stronger sense of instance optimality:
it \EMPH{simultaneously approximates both the sparsest spanner and the lightest spanner}. We note that for the same point set $X$ and stretch factor, the  sparsest spanner for $X$ and the lightest spanner could be completely different. In fact, the sparsest spanner could have a huge lightness while the lightest spanner could have a huge sparsity. Thus, a priori, it is unclear whether there exists an algorithm that is instance optimal with respect to \emph{both} sparsity and lightness.

\subsection{Related Work on Approximate Spanners in General Graphs}\label{subsec:related}

There is a long line of work on approximating minimum 
$t$-spanners of both undirected and directed graphs.
We briefly review the current best results in these regimes.
For undirected graphs, the greedy algorithm \cite{althofer1993sparse} provides a $t$-spanner with approximation ratio $n^{2/(t + 1)}$ for odd $t$ and $n^{2/t}$ for even $t$. Dinitz, Kortsarz, and Raz \cite{DKR15} showed that for any $t \geq 3$ and for any constant $\eps>0$, there is no polynomial-time algorithm approximating a $t$-spanner with ratio better than $2^{\log{n}^{1 - \epsilon}/t}$ assuming $NP \not \subseteq BPTIME(2^{\mathrm{polylog}(n)})$.  For $t = 2$, Kortsarz and Peleg \cite{KP94} (see also \cite{EP01}, \cite{DK11}) designed an algorithm with approximation ratio $O(\log n)$, matching the lower bound given by Kortsarz \cite{Kor01} (assuming $P \neq NP$). For $t = 3$, Berman, Bhattacharyya, Makarychev, Raskhodnikova and Yaroslavtsev \cite{BBM+11} achieved approximation ratio $\tilde{O}(n^{1/3})$. Dinitz and Zhang \cite{DZ16} obtained the same approximation ratio for stretch $t = 4$. For $t\geq 5$, the approximation ratio achieved by the greedy algorithm remains the state-of-the-art. 

For directed graphs, Dinitz and Krauthgamer \cite{DK11} gave an $\tilde{O}(\sqrt{n})$-approximation algorithm when $t = 3$. The approximation ratio for $t = 3$ was later improved to $O(n^{1/3}\log{n})$ for the unit-weight case~\cite{BBM+11}. For general $t$, there have been significant efforts to improve the approximation ratio. Bhattacharyya, Grigorescu, Jung, Raskhodnikova, and Woodruff \cite{BGJ+12} provided a $\tilde{O}(n^{1 - 1/t})$-approximation algorithm for directed graphs for $t > 2$. Berman, Raskhodnikova, and Ruan~\cite{BRR10} improved the approximation ratio to $\tilde{O}(t \cdot n^{1 - \frac{1}{\lceil t/2\rceil}})$. For $t > 3$, Dinitz and Krauthgamer \cite{DK11} achieved the approximation ratio of $\tilde{O}(n^{2/3})$. Berman, Bhattacharyya, Makarychev, Raskhodnikova and Yaroslavtsev \cite{BBM+11} later improved the approximation ratio to $O(\sqrt{n}\log{n})$.

\section{Technical Overview}

\paragraph{Our construction and sparsity analysis.} Let us begin with analyzing the sparsity of an arbitrary $(1+\delta)$-spanner $(X, E)$ against a sparsest $(1+\eps)$-spanner $(X, E_\sparse)$; one can think of the parameter $\delta$ as $O(\eps)$, though in practice it is a parameter used by our algorithm that starts at around $\eps$ and ultimately grows to $O(\eps)$ and even beyond $O(\eps)$. A basic approach is to {\em charge} the edges in $E$ to edges in $E_\sparse$. Our first idea is to design a \EMPH{fractional} charging scheme, rather than an integral one. 
Taking any edge $(s, t)\in E$, since $(X, E_\sparse)$ is a $(1+\epsilon)$-spanner, there exists a path $\pi$ in $(X, E_\sparse)$ between $s$ and $t$ such that $\|\pi\|\leq (1+\epsilon)\|st\|$. Thus, we can charge the edge $(s, t)\in E$ to each edge $e \in E_\sparse$ along the path $\pi$ with a \emph{fractional cost} of $\|e\|/ \|st\|$, meaning that the sum of fractional costs of these edges amounts to at least 1 (and also at most $1+\eps$), thus the total fractional costs of all edges in $E_\sparse$ is at least $|E|$. Therefore, if we could argue that the fractional costs received by any edge $e \in E_\sparse$ is  at most $\lambda$, that would directly imply that $|E|\leq \lambda\cdot |E_\sparse|$. 

However, such a charging scheme by itself is insufficient, unless it is accompanied with a ``good'' spanner. Alas, there are hard instances for which the known spanner constructions fail. 
As a (simplistic) example, consider a point set $X \in \mathbb{R}^2$
that contains two point sets $\{x_i\}_{0\leq i\leq 1/\sqrt{\epsilon}}, \{y_i\}_{0\leq i\leq 1/\sqrt{\epsilon}}$, where $x_i = (0, i\cdot \epsilon), y_i = (10, i\cdot\epsilon)$, as well as a pair of middle points $z = (3, \frac{\sqrt{\epsilon}}{2}), w=(3, \frac{\sqrt{\epsilon}}{2})$. See \Cref{bad-example} for an illustration. Our spanner could wastefully include edges $x_iy_j$ for all $i, j$, creating a bi-clique between $\{x_i\}_{0\leq i\leq 1/\sqrt{\epsilon}}$ and $\{y_i\}_{0\leq i\leq 1/\sqrt{\epsilon}}$; we will show later on that the greedy spanner exhibits this kind of wasteful behavior (on more subtle instances). On the other hand, the instance optimal $(1+\epsilon)$-spanner for this point set will only include edges $x_iz, zw, wy_i, \forall i$. 
Getting back to the aforementioned charging scheme, we see that all bi-clique edges would charge to the same middle edge $zw$, and so this charging scheme would inevitably fail. The main problem here is not the charging scheme, but rather the wasteful behavior of known spanner constructions.

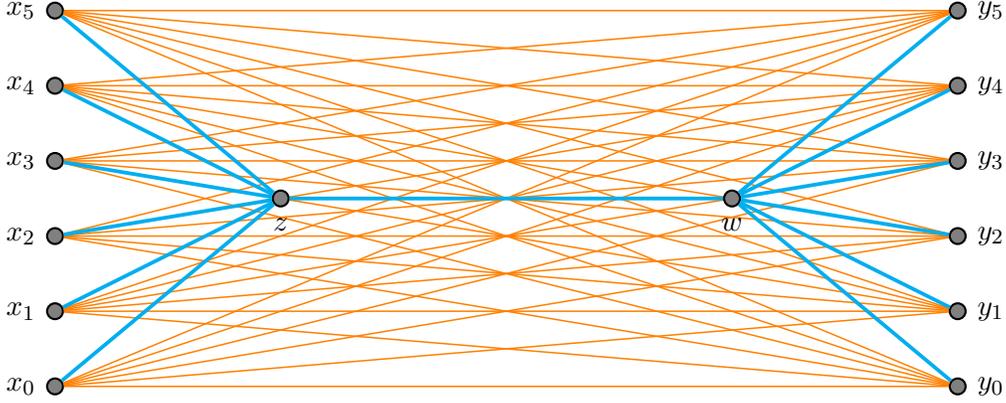
\begin{figure}
	\centering
	\begin{tikzpicture}[thick,scale=1]

	\foreach \y in {0,...,5}{
			\draw (0, \y) node(\y)[circle, draw, fill=black!50,inner sep=0pt, minimum width=6pt, label = {180 : {$x_{\y}$}}] {};
	}
	\foreach \y in {0,...,5}{
		\draw (12, \y) node(\y+6)[circle, draw, fill=black!50,inner sep=0pt, minimum width=6pt, , label = {0 : {$y_{\y}$}}] {};
	}
	
	\foreach \x in {0,...,5}{
		\foreach \y in {0,...,5}{
			\draw [line width = 0.2mm, color=orange] (\x) to (\y+6); 
		}
	}
	
	\draw (3, 2.5) node(12)[circle, draw, fill=black!50,inner sep=0pt, minimum width=6pt,label = {270 : {$z$}}] {};
	\draw (9, 2.5) node(13)[circle, draw, fill=black!50,inner sep=0pt, minimum width=6pt,label = {270 : {$w$}}] {};
	
	\draw [line width = 0.5mm, color=cyan] (12) to (13); 
	\foreach \y in {0,...,5}{
		\draw [line width = 0.5mm, color=cyan] (\y) to (12); 
		\draw [line width = 0.5mm, color=cyan] (\y+6) to (13); 
	}
\end{tikzpicture}
	\caption{In this example, a spanner might include the orange edges, while the optimal spanner takes the blue edges. Then the middle blue edge would receive a large amount of charges from the orange edges.}\label{bad-example}
\end{figure}

To outperform the known spanner constructions on hard instances, we design a new spanner algorithm, which deviates significantly from the known constructions.
Our construction is guided by a novel charging scheme;
we build on the basic idea of fractional charging as explained before, but in a much more nuanced way, 
which takes into account the angles formed by the edges, the locations of the endpoints, 
and other geometric parameters.  To keep this technical overview simple, we will not go into most details of the charging scheme; refer to \Cref{sparsityanal} for the full details.

Start with an arbitrary $(1+\delta)$-spanner. To improve sparsity (and lightness), our basic strategy is to look for \EMPH{helper edges}, such as edges $zw$  in the example above, add them to our spanner, and then {\em prune} unnecessary edges whose distances are already well-preserved by $zw$ together with other existing shorter edges in the spanner. More specifically, we will go over all edges $st\in E$ of the original spanner in a non-decreasing order of their lengths (i.e., weights), and look for helper edges $zw$ in the $(1+\epsilon)\|st\|$-ellipsoid around edge $st$ that satisfy the following two properties:
\begin{enumerate}
	\item $\|zw\|\geq \Omega(\|st\|)$; and
	\item both $z$ and $w$ are bounded away from the endpoints $s$ and $t$; that is, $\|sz\|, \|wt\|\geq \Omega(\|st\|)$.
\end{enumerate}

If such a helper edge $zw$ exists, then we add it to $E$. The key observation is that any other edge $s't'\in E$ with roughly the same length as $st$ and charging to common edges as $st$ {\em can now be pruned from $E$}.
Indeed, due to our charging scheme, if $s't'$ and $st$ charge to the same edge, it means that these two edges should be similar in a strong geometric sense, which allows us to reason that edge $zw$ could serve as a helper edge for $s't'$, as well. Consequently, by adding a single helper edge $zw$ while processing edge $st$, we are able to prune away from our spanner all 
edges that are ``similar'' to edge $st$, 
which leads to a significant saving. See \Cref{overview-prune} for an illustration.
So far, we have only discussed the pruning of edges $s't'$ that are of roughly the same length as $st$. We then generalize the above insight to show that for {\em every length scale} and for every edge $e$ charged by $st$, at most one edge from that length scale
may charge to edge $e$.
Since the fractional charge of edge $s't'$ to edge $e$, namely $\|e\| / \|s't'\|$, decays with the length of $s't'$,
we can bound the total contribution of all such edges $s't'$ to the fractional cost of edge $e$, over all length scales, by a geometric sum. 

\begin{figure}
	\centering
	\begin{tikzpicture}[thick,scale=1]
	\draw (0, 0) node(1)[circle, draw, fill=black!50,
	inner sep=0pt, minimum width=6pt, label = $s$] {};
	\draw (12, 0) node(2)[circle, draw, fill=black!50,
	inner sep=0pt, minimum width=6pt,label = $t$] {};
	
	\def\stellipse{(6, 0) ellipse (8 and 3)};
	\def\firstrec{(4.26, -4) rectangle (4.74, 4)};
	\def\secondrec{(7.26, -4) rectangle (7.74, 4)};

	\draw [black, dashed] \stellipse;
	\draw [line width = 0.5mm] (1) to (2);
	
	\draw (4.5, 2) node(3)[circle, draw, fill=black!50,
	inner sep=0pt, minimum width=6pt, label = $z$] {};
	\draw (7.5, 1.5) node(4)[circle, draw, fill=black!50,
	inner sep=0pt, minimum width=6pt,label = $w$] {};
	
	\draw [black, dashed] \stellipse;
	\draw [line width = 0.5mm, color=red] (3) to (4);
	
	\draw (5.7, 1) node(5)[circle, draw, fill=black!50,
	inner sep=0pt, minimum width=6pt] {};
	\draw (6.3, 0.9) node(6)[circle, draw, fill=black!50,
	inner sep=0pt, minimum width=6pt] {};
	
	\draw (6, 0.3) node[black, label={$e$}]{};
	
	\draw [black, dashed] \stellipse;
	\draw [line width = 0.5mm, color=cyan] (5) to (6);
	\draw [line width = 0.5mm, color=cyan, dashed] (1) to (5);
	\draw [line width = 0.5mm, color=cyan, dashed] (6) to (2);
	
	\draw (-2, 2) node(7)[circle, draw, fill=black!50,
	inner sep=0pt, minimum width=6pt, label = $s'$] {};
	\draw (10, -3.5) node(8)[circle, draw, fill=black!50,
	inner sep=0pt, minimum width=6pt,label = {0: {$t'$}}] {};
	
	\draw [black, dashed] \stellipse;
	\draw [line width = 0.5mm, dashed, color=orange] (7) to (8);
	\draw [line width = 0.5mm, dashed, color=red] (3) to (7);
	\draw [line width = 0.5mm, dashed, color=red] (4) to (8);
\end{tikzpicture}
	\caption{The blue path between $s$ and $t$ is a $(1+\eps)$-spanner path $\pi$ in $(X, E_\sparse)$.  
 The solid blue edge $e$ is an edge of $\pi$, so it receives a fractional charge of $\|e\| / \|st\|$ from $st$. When processing edge $st$, we find a helper edge $zw$ in the 
$(1+\epsilon)\|st\|$-ellipsoid around $st$. After adding the helper edge $zw$ to $E$, we are able to prune other spanner edges $s't'$ which are charging to $e$, as well, because we can now connect $s'$ and $t'$ using the red path in $(X, E)$.}\label{overview-prune} 
\end{figure}
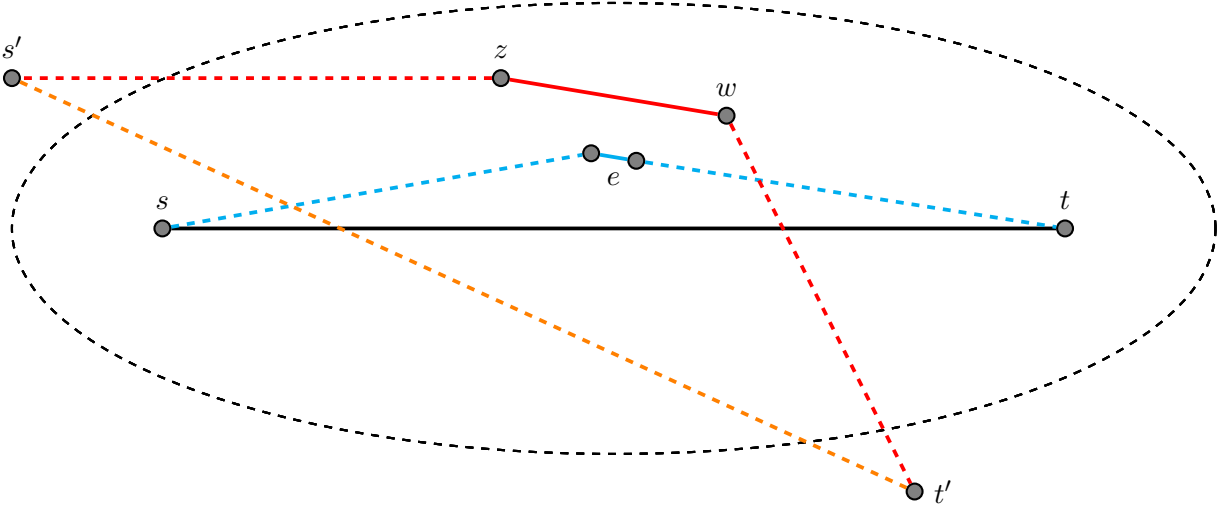

This strategy of using helper edges $zw$ as described above is effective when helper edges exist; {\em alas, they do not always exist}. 
When helper edges do not exist, we will show that there must exist an edge $e$ along the $(1+\eps)$-spanner path $\pi$ in $E_\sparse$ such that $\|e\|\geq \Omega(\|st\|)$; that is, $\pi$ is making at least one long stride at some point when connecting $s$ and $t$. In this case, we will charge the edge $st$ only to this single long edge $e$. However, since this long edge $e$ does not qualify as a helper edge (property~2 of a helper edge does not hold), we cannot apply the aforementioned argument, and it is now possible that many different edges $s't'\in E$ of length $\Theta(\|e\|)$ will charge to the same edge $e$. In this case, we will apply a more aggressive pruning, which greedily finds all edges $e$ whose addition to $E$ could prune sufficiently many edges $st$ with length $\Theta(\|e\|)$. Roughly speaking, this greedy pruning procedure works as follows: for any index $j\geq 0$, let $L_j\subseteq E$ be the set of edges whose lengths are in the range $[1.01^j, 1.01^{j+1})$. We start with an upper bound on the sparsity $\alpha \geq |E| / |E_\sparse|$, and for each length scale $[1.01^j, 1.01^{j+1})$, we will greedily add \EMPH{substitute} edges $e = zw\in L_j$ to the spanner, whose inclusion in $E$ could help in removing at least $\alpha /100$ existing edges from $L_j$ without increasing the stretch by too much. After such a pruning procedure, we are able to decrease our upper bound $\alpha$  on the ratio $|E| / |E_\sparse|$ by a constant factor, ignoring the new substitute edges, whose sparsity is constant. 
By working carefully, we demonstrate that this pruning procedure can be iterated until the sparsity upper bound has reduces from $\alpha$ to $O(\log \alpha)$, while the stretch increases only slightly.

Thus far we have presented the informal descriptions of two different {\em pruning phases}. Our spanner construction employs these two pruning phases on top of the original $(1+\delta)$-spanner $(X,E)$ that we start from, one after another, to reduce the sparsity upper bound 
from $\alpha$ to $O(\log \alpha)$ while increasing the stretch from $1+ \delta$ to $1 + O(\delta)$.
We remark that there is a delicate interplay between the two pruning phases, which requires our algorithm to pay special care to additional subtleties, and in particular to explicitly distinguish between two types of edges; aiming for brevity, we will not discuss such subtleties in this high-level overview, but the details appear in \Cref{thealg} (refer to \Cref{edgeclass} in particular for the definition of the two types of edges).  
Finally, to achieve the stretch and sparsity bounds claimed in  \Cref{thm:main}, we apply the two pruning phases iteratively until the sparsity upper bound reduces to a constant. 
The full details of the construction, the stretch analyais and the sparsity analysis appear in Sections~\ref{thealg}, \ref{sec:stretch} and~\ref{sparsityanal}, respectively.

Our spanner algorithm and the underlying charging scheme heavily rely on Euclidean geometry. However, surprisingly perhaps, both of them are completely unaffected by the \EMPH{dimension} of the input point set. (Our algorithm runs in low polynomial time in any dimension; however, the fast implementation of our algorithm described below does depend on the dimension.) The dependencies on the dimension 
in \Cref{thm:main} (ignoring the fast implementation running time)
stem only from the sparsity and lightness dependencies on the dimension in the original spanner $(X,E)$.

\paragraph{Lightness analysis.} It turns out that the same spanner construction algorithm described above also approximates the instance-optimal lightness. More specifically, relying on the fact that the sparsity and lightness bounds of the initial (e.g., greedy) spanner $(X,E)$ are both $\eps^{-\Theta(d)}$, we can prove that the instance-optimal lightness bound in the resulting spanner is (basically) the same as the sparsity bound.
While our lightness analysis builds on the sparsity analysis,
it has to drill quite a bit deeper in order to overcome another significant technical challenge, which we highlight next.
The full details are in \Cref{sec:light}.
When comparing our $(1+\delta)$-spanner $(X, E)$ against the optimal (now in terms of weight) spanner $(X, E_\light)$, for any edge $st\in E$, let $\pi$ denote a $(1+\eps)$-spanner path in $(X, E_\light)$ between $s$ and $t$. 
While in the sparsity analysis we charged the edge $st \in E$ to each edge $e$ along $\pi$ with a fractional cost of $\frac{\|e\|}{\|st\|}$, so that the total fractional costs of these edges is at least 1, such a charging is not suitable for the lightness analysis.
Instead, we shall charge edge $st$ 
to each edge  $e$ along $\pi$ with a cost of $\|e\|$, so that the total costs of these edges is at least $\|st\|$, 
thus the total costs of the edges in $E_\light$ is at least $\|E\|$. Therefore, if we could argue that the total costs received by any edge $e \in E_\light$ is  at most $\lambda$, that would directly imply that $\|E\|\leq \lambda\cdot \|E_\light\|$.

By adapting our techniques from the sparsity analysis, 
we can bound by $O(1)$ the number of times any edge $e \in E_\light$ gets charged from the same edge set $L_j$ (i.e., the edges whose lengths are in the range $[1.01^j, 1.01^{j+1})$). However, the total amount of charges over all length scales (or indices $j$) increases by a factor of $O(\log \Phi)$, where $\Phi$ is the spread of the point set. Using a standard trick, this blowup can be reduced to $O(\log n)$, but this is still insufficient for achieving instance-optimality.
This is how lightness is different from sparsity, as charges of sparsity are decreasing geometrically as $j$ increases, while charges of lightness remain the same for all scales $j$.

To bypass the logarithmic blowup, let us go over all edges in $E$ in a non-decreasing order of lengths and allocate their charges to edges in $E_\light$ in an ``adaptive'' manner. Consider any edge $st \in E$ as well as a $(1+\eps)$-spanner path $\pi$ between $s$ and $t$ in $(X, E_\light)$. If most edges of $\pi$ have only been charged a small number of times by edges in $E$ shorter than $st$, then we can safely charge $st$ proportionally only to those edges (we charge any such edge $e$ a cost of say $10 \|e\|$). Otherwise, there exists a set of edges $z_1w_1, z_2w_2, \ldots, z_lw_l$ on $\pi$ such that:
\begin{itemize}
	\item $\sum_{i=1}^l\|z_iw_i\|\geq \Omega(\|st\|)$; and
	\item each edge $z_iw_i$ has been charged multiple times already.
\end{itemize}
In this case, we can prove that for each such edge $z_iw_i$ of $\pi$, there exists an edge $a_ib_i\in E$, which charges to $z_iw_i$ and is much longer than $\|z_iw_i\|$ yet much shorter than $\|st\|$. Using this key insight carefully, we prove that existing edges in $E$ can be used to ``stitch together'' all the edges $a_1b_1, a_2b_2, \ldots, a_lb_l$  into a good spanner path (of small stretch) between $s$ and $t$ in $(X, E)$, making the direct edge $st$ unnecessary in $E$, and so we can show that $st$ could not have been added to $E$. See \Cref{overview-stitch} for an illustration.

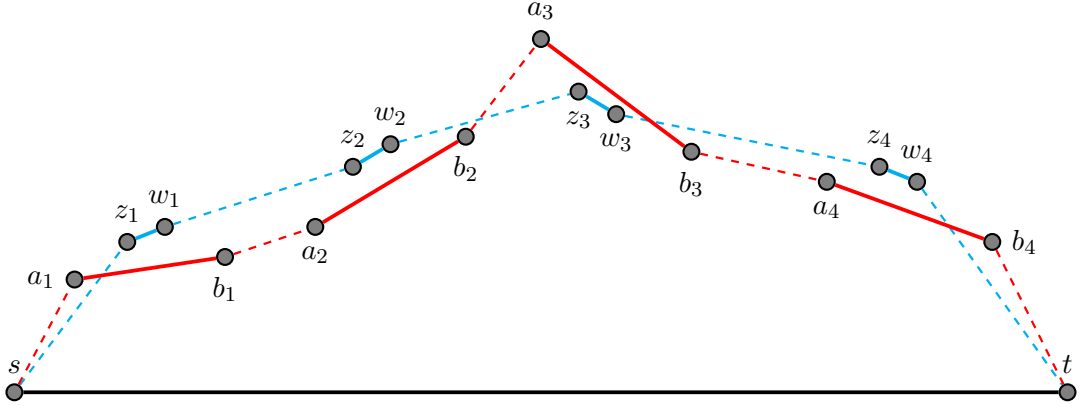
\begin{figure}
	\centering
	\begin{tikzpicture}[thick,scale=1]
	\draw (0, 0) node(1)[circle, draw, fill=black!50,
	inner sep=0pt, minimum width=6pt, label = $s$] {};
	\draw (14, 0) node(2)[circle, draw, fill=black!50,
	inner sep=0pt, minimum width=6pt,label = $t$] {};
	
	\draw [line width = 0.5mm] (1) to (2);
	
	\draw (1.5, 2) node(3)[circle, draw, fill=black!50, inner sep=0pt, minimum width=6pt, label = $z_1$] {};
	\draw (2, 2.2) node(4)[circle, draw, fill=black!50, inner sep=0pt, minimum width=6pt, label = $w_1$] {};
	
	\draw (4.5, 3) node(5)[circle, draw, fill=black!50, inner sep=0pt, minimum width=6pt, label = $z_2$] {};
	\draw (5, 3.3) node(6)[circle, draw, fill=black!50, inner sep=0pt, minimum width=6pt, label = $w_2$] {};
	
	\draw (7.5, 4) node(7)[circle, draw, fill=black!50, inner sep=0pt, minimum width=6pt, label = {-90: {$z_3$}}] {};
	\draw (8, 3.7) node(8)[circle, draw, fill=black!50, inner sep=0pt, minimum width=6pt, label = {-90: {$w_3$}}] {};
	
	\draw (11.5, 3) node(9)[circle, draw, fill=black!50, inner sep=0pt, minimum width=6pt, label = $z_4$] {};
	\draw (12, 2.8) node(10)[circle, draw, fill=black!50, inner sep=0pt, minimum width=6pt, label = $w_4$] {};
	
	\draw [line width = 0.5mm, color=cyan] (3) to (4);
	\draw [line width = 0.5mm, color=cyan] (5) to (6);
	\draw [line width = 0.5mm, color=cyan] (7) to (8);
	\draw [line width = 0.5mm, color=cyan] (9) to (10);
	\draw [line width = 0.3mm, color=cyan, dashed] (1) to (3);
	\draw [line width = 0.3mm, color=cyan, dashed] (4) to (5);
	\draw [line width = 0.3mm, color=cyan, dashed] (6) to (7);
	\draw [line width = 0.3mm, color=cyan, dashed] (8) to (9);
	\draw [line width = 0.3mm, color=cyan, dashed] (10) to (2);

	\draw (0.8, 1.5) node(11)[circle, draw, fill=black!50, inner sep=0pt, minimum width=6pt, label = {180: {$a_1$}}] {};
	\draw (2.8, 1.8) node(12)[circle, draw, fill=black!50, inner sep=0pt, minimum width=6pt, label = {-90: {$b_1$}}] {};
	
	\draw (4, 2.2) node(13)[circle, draw, fill=black!50, inner sep=0pt, minimum width=6pt, label = {-90: {$a_2$}}] {};
	\draw (6, 3.4) node(14)[circle, draw, fill=black!50, inner sep=0pt, minimum width=6pt, label = {-90: {$b_2$}}] {};
	
	\draw (7, 4.7) node(15)[circle, draw, fill=black!50, inner sep=0pt, minimum width=6pt, label = $a_3$] {};
	\draw (9, 3.2) node(16)[circle, draw, fill=black!50, inner sep=0pt, minimum width=6pt, label = {-90: {$b_3$}}] {};
	
	\draw (10.8, 2.8) node(17)[circle, draw, fill=black!50, inner sep=0pt, minimum width=6pt, label = {-90: {$a_4$}}] {};
	\draw (13, 2) node(18)[circle, draw, fill=black!50, inner sep=0pt, minimum width=6pt, label = {0:{$b_4$}}] {};
	
	\draw [line width = 0.5mm, color=red] (11) to (12);
	\draw [line width = 0.5mm, color=red] (13) to (14);
	\draw [line width = 0.5mm, color=red] (15) to (16);
	\draw [line width = 0.5mm, color=red] (17) to (18);
	\draw [line width = 0.3mm, color=red, dashed] (1) to (11);
	\draw [line width = 0.3mm, color=red, dashed] (12) to (13);
	\draw [line width = 0.3mm, color=red, dashed] (14) to (15);
	\draw [line width = 0.3mm, color=red, dashed] (16) to (17);
	\draw [line width = 0.3mm, color=red, dashed] (18) to (2);

\end{tikzpicture}
	\caption{The blue path represents a $(1+\eps)$-spanner path $\pi$ between $s$ and $t$ in $(X, E_\light)$, and the blue solid edges are the ones already receiving heavy charges from shorter edges in $E$. Then, for each blue solid edge $z_iw_i$, we can find a longer red solid edge $a_ib_i\in E$ that charged to $z_iw_i$. Thus, we can stitch together all these red solid edges with existing edges in $E$ to create a good spanner path between $s$ and $t$, and so edge $st$ does not need to stay in $E$ anymore.}\label{overview-stitch}
\end{figure}

\paragraph{Fast implementation.} Next, we highlight three main technical difficulties behind achieving a near-linear time 
implementation of our spanner construction (in low-dimensional spaces).
Recall that our implementation takes $\eps^{-O(d)}n\log^2(n)$ time; 
see \Cref{sec:fast}
for the full details.

The first difficulty is to efficiently locate all the substitute edges. (Recall that the substitute edges are needed if helper edges do not exist, and they are used for a more aggressive pruning of edges.) 
A straightforward implementation would enumerate all possible choices of a substitute edge $xy$, and count how many edges any such edge $xy$ can prune from the current spanner if it is added to the spanner. Such an implementation would require quadratic time to find a substitute edge each time. Instead, we will build a {\em hierarchy of nets} (as in \cite{CGMZ16}), and for each edge $st\in E$, its substitute edge will be restricted to {\em $\eps \|st\|$-net points} nearby. Using the standard packing bound
in $\real^d$, the total number of possible substitute edges for $st$ is bounded by $\eps^{-O(d)}$. Then, we can initialize and maintain an efficient data structure in  $\eps^{-O(d)}n\log(n)$  total time, which counts 
the number of edges assigned to each possible substitute edge; using this data structure, we can repeatedly select the best substitute edges in $\eps^{-O(d)}n\log(n)$  total time.

The second difficulty stems from the  need to find a helper edge $zw$ for any existing edge $st\in E$. A straightforward implementation of this task takes quadratic time by checking all possible helper edges $zw$. To narrow down our search space, we will again build an {$\epsilon\|st\|$-net} and only look for {net points} around the segment $st$ in search of a helper edge $zw$. Using the packing bound, we can show that the total number of candidate edges is at most $\eps^{-O(d)}$.

As the third difficulty, we need to go over all edges in $st\in E$ in a non-decreasing order of lengths and, for each one, decide if we should keep it in the new spanner. Here, we need to quickly determine if the distance between the two endpoints $s$ and $t$ in the new spanner is already well-preserved, which entails an (approximate) shortest path computation. If we directly apply Dijkstra's algorithm to compute shortest paths, then the total time for such computations over all edges $st\in E$ would be quadratic. To reduce the runtime, we will adopt the approach from \cite{DN97} used for constructing an {\em approximate} version of the greedy spanner. Roughly speaking, to compute an (approximate) shortest path between $s$ and $t$, we build a {\em cluster graph} that contracts clusters of radius $\eps\|st\|$, and then look at the cluster centers near $s$ and $t$; importantly, we only apply Dijkstra's algorithm locally on these cluster centers. Relying again on the standard packing bound in $\mathbb{R}^d$, we can prove that the number of cluster centers around $s$ and $t$ is also at most $\eps^{-O(d)}$, which allows us to reduce the running time of an (approximate) shortest path computation to $\eps^{-O(d)}$.

\paragraph{Lower bounds for greedy spanners.} 
Our hard instances for the poor performance of the greedy algorithm 
build on the dramatic difference between the sparsity of Euclidean spanners with or without Steiner points~\cite{bhore2022euclidean,le2022truly}: $O(\eps^{(-d+1)/2})$ versus $O(\eps^{-d+1})$. We modify the previously known worst-case instances for Euclidean Steiner spanners: Two large sets uniformly distributed along two parallel lines, and a small set of middle points  between the two lines (a schematic example is in \Cref{overview-lower-bnd}; see \Cref{fig:LB,fig:LB+} for more accurate illustrations). In previous work~\cite{bhore2022euclidean,le2022truly}, the middle points played the role of ``Steiner points'' (recall that a Steiner  spanner does not have to maintain the stretch factor for paths to and from Steiner points). We include the middle points in the input. Importantly, the optimal (i.e., sparsest) spanner can take advantage of the middle points as convenient ``via'' points between the two large point sets (similarly to Steiner spanners), but the greedy algorithm misses their potential and is forced to add the complete bipartite graph between the two large point sets on the two lines; see \Cref{thm:sparsityLB,thm:sparsityLB+} for details. The main challenge is the analysis of the greedy spanner. We heavily use geometry in the design and analysis of the lower bound instances to maintain a tight approximation of the stretch between point pairs during the greedy process. 

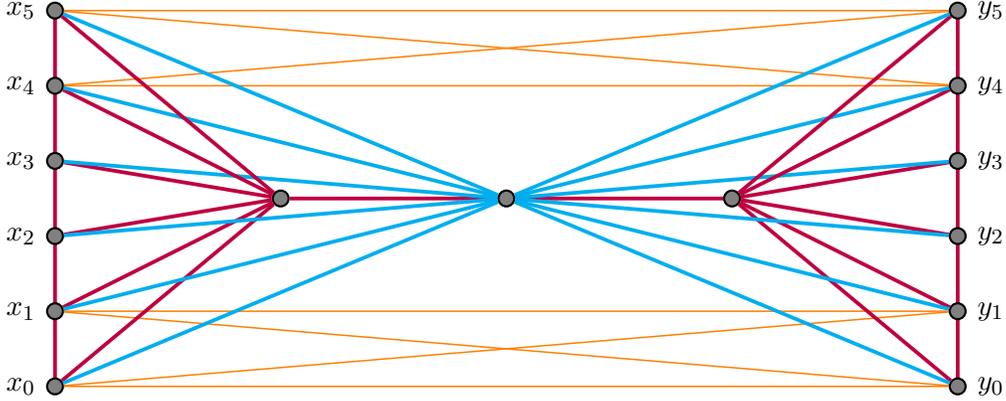
\begin{figure}
	\centering
	\begin{tikzpicture}[thick,scale=1]
         \draw [line width = 0.5mm, color=purple] (0,0) to (0,5); 
        \draw [line width = 0.5mm, color=purple] (12,0) to (12,5);

	\foreach \y in {0,...,5}{
			\draw (0, \y) node(\y)[circle, draw, fill=black!50,inner sep=0pt, minimum width=6pt, label = {180 : {$x_{\y}$}}] {};
	}
	\foreach \y in {0,...,5}{
		\draw (12, \y) node(\y+6)[circle, draw, fill=black!50,inner sep=0pt, minimum width=6pt, , label = {0 : {$y_{\y}$}}] {};
	}
	
	\foreach \x in {0,1}{
		\foreach \y in {0,1}{
			\draw [line width = 0.2mm, color=orange] (\x) to (\y+6); 
		}
	}

         \foreach \x in {4,5}{
		\foreach \y in {4,5}{
			\draw [line width = 0.2mm, color=orange] (\x) to (\y+6); 
		}
	}
	
	\draw (3, 2.5) node(12)[circle, draw, fill=black!50,inner sep=0pt, minimum width=6pt] {};
	\draw (9, 2.5) node(13)[circle, draw, fill=black!50,inner sep=0pt, minimum width=6pt] {};

        \draw (6, 2.5) node(14) [circle, draw, fill=black!50,inner sep=0pt, minimum width=6pt] {};
 
	\draw [line width = 0.5mm, color=purple] (12) to (14);
        \draw [line width = 0.5mm, color=purple] (13) to (14);   
	\foreach \y in {0,...,5}{
		\draw [line width = 0.5mm, color=purple] (\y) to (12); 
		\draw [line width = 0.5mm, color=purple] (\y+6) to (13);
            \draw [line width = 0.5mm, color=cyan] (\y) to (14);
            \draw [line width = 0.5mm, color=cyan] (\y+6) to (14);
	}
\end{tikzpicture}
	\caption{The red edges are in both the optimal and the greedy spanner; the blue edges are only in the optimal spanner, and the orange edges are only in the greedy spanner. This point set fools the greedy algorithm not to add the blue edges, so that it later has to add all orange edges,  which form two bi-cliques, incurring quadratic sparsity.}\label{overview-lower-bnd}
\end{figure}

Our lower bounds for \emph{sparsity} use edges of comparable weight, and immediately give the same lower bounds for lightness. However, we can obtain stronger lower bounds for \emph{lightness} with a surprisingly simple point set: Uniformly distributed points along a circular arc (not the entire circle!). Due to the uniform distribution, we can easily analyze the greedy algorithm with stretch $(1+x\eps)$ for any $x$, $1\leq x\leq O(\eps^{1/2})$: The greedy algorithm includes a path along the circular arc; and then at a certain threshold, it adds a large number of heavy edges of equal weight. However, an optimum (i.e., lightest) spanner can use a much smaller number of ``shortcut'' edges instead (similarly to the \emph{helper} edges in our upper bound construction). In our basic example (\Cref{thm:weightLB}), a single shortcut edge of $G_{\light}$ trades off against $O(\eps^{-1})$ almost diametric edges of $G_{\rm gr}$. For a lower bound for $(1+x\eps)$-spanners (\Cref{thm:weightLB+}), we use a hierarchy of shortcut edges.

\section{Preliminaries}
Let $\kappa = 10^4$ be a large but fixed constant independent of $n$, $d$, and $\epsilon$; and assume that $\epsilon>0$ is relatively small compared to $\kappa^{-1}$ and $d^{-1}$. More specifically, we assume the following relation: 
\begin{equation} \label{eq:relationeps}
\epsilon\cdot 2^{O(\log^*(d / \epsilon))} < \kappa^{-5}.
\end{equation}

For any vector $e\in\mathbb{R}^{d}$, let $\|e\|$ denote the the Euclidean length (i.e., $\ell_2$-norm) of $e$. For a graph $H$, let $V(H)$ and $E(H)$, resp., denote the vertex set and the edge set of $H$. For any set of edges $E\subseteq \binom{X}{2}$, let $\|E\|$ be the total length of the edges in $E$; and the weight of a graph $H=(X,E)$ is defined as $\|H\| = \|E\|$.

For any pair of points $s, t\in \mathbb{R}^{d}$, let $st$ denote the segment that connects $s$ and $t$, and let $\overrightarrow{st}$ or $t-s$ be the vector directed from $s$ to $t$. We will sometimes use the notation $s \rightsquigarrow t$ for the shortest (spanner) path between $s$ and $t$ in a graph $H$ when $H$ is clear from context.

For any pair of vectors $e_1, e_2\in \mathbb{R}^d$, their angle $\angle(e_1, e_2)$ is defined as:
$$\angle(e_1, e_2)\overset{\text{def}}{=}\arccos\brac{\frac{|e_1\cdot e_2|}{\|e_1\|\cdot\|e_2\|}}$$

For a polygonal path $\pi$ and two vertices $p, q$ of $\pi$, let $\pi[p, q]$ be the sub-path of $\pi$ between $p$ and $q$. For an edge $e$ and a line $st$, let $\proj_{st}(e)$ be the orthogonal projection of $e$ onto the line $st$. We generalize \cite[Lemma~4]{bhore2022euclidean}, originally stated for a polygonal path $\pi$, to the setting where $\pi$ is a sequence of edges; we include the proof for completeness. 

\begin{lemma}[Lemma~4 in \cite{bhore2022euclidean}]\label{angle-bound}
	Consider a sequence of edges $\pi$ (not necessarily a polygonal path) whose projection on a line is segment $ab$. Let $E(\pi, ab, \theta)$ be the set of edges $e\in \pi$ such that $\angle(e, ab)\leq \theta$. If $\|\pi\| \leq (1+\epsilon)\|ab\| $, then $\|E(\pi, ab, 2\sqrt{\epsilon})\| \geq 0.5\|ab\|$. 
\end{lemma}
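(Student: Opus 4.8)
The plan is to partition the edges of $\pi$ into the \emph{near-parallel} ones, $E_1 := E(\pi, ab, 2\sqrt{\epsilon})$, and the remaining \emph{transverse} ones, $E_2 := \pi \setminus E_1$, and to show that $E_2$ carries very little total length, which then forces $\|E_1\|$ to be large because the projections of the edges of $\pi$ must together cover the full segment $ab$. Two elementary observations drive everything: (i) for any edge $e$ one has $\|\proj_{ab}(e)\| = \|e\|\cos\angle(e,ab) \le \|e\|$, and (ii) since the projections of the edges of $\pi$ onto the line $ab$ cover the segment $ab$, subadditivity of length gives $\sum_{e\in\pi}\|\proj_{ab}(e)\| \ge \|ab\|$.

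First I would lower-bound $\|\pi\|$ in terms of $\|E_2\|$. Writing each edge length as its projected part plus its ``lost'' part, dropping the nonnegative lost parts of the edges in $E_1$, and using $\cos\angle(e,ab) \le \cos(2\sqrt{\epsilon})$ for every $e\in E_2$, we obtain
\[
\|\pi\| \;=\; \sum_{e\in\pi}\|\proj_{ab}(e)\| + \sum_{e\in\pi}\bigl(\|e\|-\|\proj_{ab}(e)\|\bigr) \;\ge\; \|ab\| + \bigl(1-\cos(2\sqrt{\epsilon})\bigr)\,\|E_2\| .
\]
Combining this with the hypothesis $\|\pi\|\le(1+\epsilon)\|ab\|$ yields $\|E_2\| \le \tfrac{\epsilon}{1-\cos(2\sqrt{\epsilon})}\,\|ab\|$, while combining it with $\|E_1\|=\|\pi\|-\|E_2\|$ yields $\|E_1\| \ge \|ab\|-\cos(2\sqrt{\epsilon})\,\|E_2\|$. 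Plugging the first bound into the second and simplifying gives
\[
\|E_1\| \;\ge\; \frac{1-(1+\epsilon)\cos(2\sqrt{\epsilon})}{1-\cos(2\sqrt{\epsilon})}\,\|ab\| .
\]

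It then remains to verify that the fraction is at least $\tfrac12$, which, after clearing the (positive) denominator, is exactly the scalar inequality $\cos(2\sqrt{\epsilon}) \le \tfrac{1}{1+2\epsilon}$. I would close the argument with the Taylor estimate $\cos(x) \le 1-\tfrac{x^2}{2}+\tfrac{x^4}{24}$, which gives $\cos(2\sqrt{\epsilon}) \le 1-2\epsilon+\tfrac23\epsilon^2$, together with $\tfrac{1}{1+2\epsilon} = 1-2\epsilon+\tfrac{4\epsilon^2}{1+2\epsilon} \ge 1-2\epsilon+2\epsilon^2$ for $\epsilon\le\tfrac12$; these imply $\cos(2\sqrt{\epsilon}) \le \tfrac{1}{1+2\epsilon}$ for all sufficiently small $\epsilon$ (for instance $\epsilon\le\tfrac12$ suffices, which is guaranteed by \eqref{eq:relationeps}). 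The one genuinely delicate point — and the place I expect to be the main obstacle — is precisely this last inequality: the constant $\tfrac12$ leaves no slack, since the naive bound $\cos(2\sqrt{\epsilon})\le 1-2\epsilon$ is in fact false, so one must retain the $\epsilon^2$-order terms on both sides and invoke the smallness of $\epsilon$ exactly here. Everything else is routine manipulation of edge lengths, projections, and angles.
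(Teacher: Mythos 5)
Your proof is correct and takes essentially the same route as the paper's: both rest on the projections of $\pi$ covering the segment $ab$ together with the fact that every edge making angle at least $2\sqrt{\epsilon}$ with $ab$ exceeds its projection by a factor governed by $\cos(2\sqrt{\epsilon})$. The only cosmetic difference is that the paper argues by contradiction via the Taylor estimate $1/\cos x \ge 1+x^2/2$, which is exactly equivalent to the scalar inequality $\cos(2\sqrt{\epsilon})\le \frac{1}{1+2\epsilon}$ that you verify at the end.
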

\begin{proof}
	Assume otherwise that $\|E(\pi, ab, 2\sqrt{\epsilon})\| < 0.5\|ab\|$. Then, since the projection of $\pi$ on line $ab$ is equal to the segment $ab$, we have:
	$$\sum_{e\in \pi}\|\proj_{ab}(e)\| \geq \|ab\|$$
	which implies
	$$\begin{aligned}
		\sum_{e\in \pi\setminus E(\pi, ab, 2\sqrt{\epsilon})}\|\proj_{ab}(e)\| &\geq \|ab\| - \sum_{e\in E(\pi, ab, 2\sqrt{\epsilon})}\|\proj_{ab}(e)\|\\
		&\geq \|ab\| - \sum_{e\in E(\pi, ab, 2\sqrt{\epsilon})}\|e\|\\
		&= \|ab\| - \|E(\pi, ab, 2\sqrt{\epsilon})\|.
	\end{aligned}$$
	Recall that for every edge $e\in \pi \setminus E(\pi, ab, 2\sqrt{\epsilon})$, we have $\angle(e, ab)\geq 2\sqrt{\epsilon}$. Using Taylor estimate $1 / \cos(x)\geq 1+x^2/2$, and thus for any edge $e\in \pi\setminus E(\pi, ab, 2\sqrt{\epsilon})$, we have
	$$\|e\|\geq \frac{\|\proj_{ab}(e)\|}{\cos(2\sqrt{\epsilon})}\geq \|\proj_{ab}(e)\|\cdot (1+2\epsilon).$$
	Combined with the previous inequality, we obtain
\begin{align*}
		\|\pi\| &= \sum_{e\in E(\pi, ab, 2\sqrt{\epsilon})}\|e\| + \sum_{e\in \pi\setminus E(\pi, ab, 2\sqrt{\epsilon})}\|e\|\\
		&\geq \|E(\pi, ab, 2\sqrt{\epsilon})\| + (1+2\epsilon)\cdot \brac{\|ab\| - \|E(\pi, ab, 2\sqrt{\epsilon})\|}\\
		&\geq (1+2\epsilon)\|ab\| - 2\epsilon \|E(\pi, ab, 2\sqrt{\epsilon})\|\\
		&> (1+\epsilon)\|ab\|,
\end{align*}
	which is a contradiction.
\end{proof}

Given a finite point set $X\subset \mathbb{R}^d$, let $G_\sparse = (X, E_\sparse)$ and $G_\light = (X, E_\light)$ be the $(1+\epsilon)$-stretch Euclidean spanners of $X$ with the minimum number of edges and the minimum weight, respectively. We will use the following statement, which bounds the sparsity and lightness  of the greedy spanner; although more precise bounds are known (which explicate the constants in the $O$-notation), as mentioned, the   bounds in the following statement will suffice for our purposes.
\begin{lemma}[\cite{ChandraDNS95, RS98, NS07C}]
\label{greedy}
	If $H = (X,E)$ is the greedy $(1+\epsilon)$-spanner for $X\subset \mathbb{R}^d$, then $|E|\leq n\cdot \epsilon^{-O(d)}$, and $\|E\|\leq \|\MST(X)\|\cdot \epsilon^{-O(d)}$, where $\MST(X)$ is a Euclidean minimum spanning tree of $X$.
\end{lemma}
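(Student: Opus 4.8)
The plan is to establish the two bounds separately, using only the defining properties of the (path-)greedy construction: pairs of points are processed in non-decreasing order of Euclidean distance, and an edge $uv$ is inserted exactly when the graph $H$ built so far has no $u$--$v$ path of weight at most $(1+\epsilon)\|uv\|$; in particular, once a pair $(x,y)$ has been processed, $d_H(x,y)\le(1+\epsilon)\|xy\|$ holds from then on.

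\emph{Sparsity.} First I would show the greedy spanner has maximum degree $\epsilon^{-O(d)}$, which gives $|E|\le n\cdot\epsilon^{-O(d)}$. Fix a vertex $u$ and let $uv,uw$ be two greedy edges with $\|uw\|\le\|uv\|$ and $\gamma=\angle(\overrightarrow{uv},\overrightarrow{uw})$. For $\gamma\le\pi/3$ the law of cosines gives $\|wv\|<\|uv\|$, so the pair $(w,v)$ was processed before $(u,v)$; hence at the moment $uv$ is considered we already have $d_H(w,v)\le(1+\epsilon)\|wv\|$, and since $uw\in H$ as well, $H$ contains a $u$--$v$ path of weight at most $\|uw\|+(1+\epsilon)\|wv\|$. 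A short law-of-cosines computation shows that if $\gamma=O(\epsilon)$ then this quantity is at most $(1+\epsilon)\|uv\|$, contradicting the insertion of $uv$. Thus any two greedy edges at $u$ subtend an angle $\Omega(\epsilon)$, and a standard packing bound on the sphere $S^{d-1}$ caps their number at $(1/\epsilon)^{O(d)}$; summing over all $u$ yields $|E|\le n\cdot\epsilon^{-O(d)}$.

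\emph{Lightness.} For the weight bound I would route through the leapfrog property. First I would record a ``gap'' consequence of greediness: if $pq,p'q'$ are greedy edges with $\|pq\|\le\|p'q'\|$, one cannot have both $\|pp'\|\le\tfrac{\epsilon}{8}\|pq\|$ and $\|qq'\|\le\tfrac{\epsilon}{8}\|pq\|$, since otherwise concatenating the spanner paths $p'\rightsquigarrow p$ and $q\rightsquigarrow q'$ (each already of stretch $1+\epsilon$, as both pairs were processed before $(p',q')$) with the edge $pq\in H$ would give a $(1+\epsilon)$-path for $p'q'$ that predates it. Iterating this reasoning along an entire chain of greedy edges shows that the greedy spanner satisfies the $(\alpha,\beta)$-leapfrog property with $\beta/\alpha=1+\Theta(\epsilon)$: no greedy edge can be cheaply ``shortcut'' by a sequence of other greedy edges together with the straight connectors between consecutive links. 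I would then invoke the classical geometric theorem (Das--Heffernan--Narasimhan; Das--Narasimhan; Rao--Smith; see also \cite{NS07C}) that a set of segments in $\mathbb{R}^d$ with the leapfrog property has total length $\epsilon^{-O(d)}\cdot\|\MST\|$ of its endpoint set; applied to the edge set of the greedy spanner this gives $\|E\|\le\epsilon^{-O(d)}\cdot\|\MST(X)\|$.

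\emph{Main obstacle.} The sparsity half is elementary — the law-of-cosines estimate plus sphere packing. The weight bound is where the content lies, and its crux is the ``leapfrog $\Rightarrow$ light'' implication, whose classical proof is a nontrivial self-similar charging argument (segments are grouped into geometric length classes; within each class the leapfrog property forces pairwise endpoint separation proportional to the class length, so a packing/$\MST$-charging argument controls each class; the classes are then combined recursively). Since we only need a bound of the form $\epsilon^{-O(d)}$, I would cite this implication from \cite{ChandraDNS95,RS98,NS07C} rather than reprove it; the one place demanding care on our side is verifying that the greedy insertion rule really yields the leapfrog inequality with the right constants, i.e., bookkeeping the $(1+\epsilon)$ factors that accumulate over the connectors along the chain.
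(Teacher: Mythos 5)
The paper does not actually prove \Cref{greedy}: it is imported verbatim from \cite{ChandraDNS95, RS98, NS07C}, so there is no in-paper argument to compare against. Your sketch correctly reconstructs the standard proofs from exactly those sources --- the law-of-cosines/ordering argument showing two greedy edges at a common vertex subtend angle $\Omega(\epsilon)$ (with $\|wv\|<\|uv\|$ guaranteeing the pair $(w,v)$ was processed first), followed by a sphere-packing bound for the degree, and greediness $\Rightarrow$ leapfrog property plus the cited leapfrog theorem for the weight bound --- so it is correct in outline and matches the route the cited references take, with only routine care needed for tie-breaking and for the constants accumulated along the leapfrog chain.
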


\section{Instance-Optimal Euclidean Spanners}

\subsection{A Greedy-Pruning Algorithm} \label{thealg}

We are given a set $X\subset \mathbb{R}^d$ of $n$ points and
a sufficiently small $\eps>0$ satisfying \Cref{eq:relationeps}.
Initially, let $H=(X,E)$ be an arbitrary $(1+\epsilon)$-spanner on $X$ (for example, the greedy spanner). According to \Cref{greedy}, we may assume that $H$ has $n \cdot \epsilon^{-O(d)}$ edges and weight at most $\|\MST(X)\|\cdot \epsilon^{-O(d)}$. Next, we will successively modify $H$ to improve its  sparsity and lightness while keeping the stretch under control. In each iteration, we will prune some of the edges in $H$ and add new edges. In each iteration, we maintain an upper bound on the current stretch of $H$ and its approximation ratio  compared to optimal $(1+\eps)$-spanners w.r.t.\ sparsity and lightness. Specifically, we assume that $H$ is a $(1+\delta)$-spanner for some $\delta \geq \epsilon$; the assumption $\delta \geq \epsilon$ (or $\delta = \Omega(\eps)$) will be crucial in the analysis of stretch and lightness later on. Furthermore, we assume that $\alpha \geq \max\left\{|E| / |E_\sparse|,  \|E\|/ \|E_\light\|\right\}$; initially when $E$ is a greedy spanner, we have $\delta = \epsilon$ and $\alpha = \epsilon^{-O(d)}$. The total number of  iterations will be $\log^*(d / \epsilon) + O(1)$.

One iteration consists of two phases which construct two spanners: First $H_1 = (X, E_1)$ and then $H_2 = (X, E_2)$. At the end of an iteration, we will reassign $E\leftarrow E_2$. During the process, edges in $E_1$ and $E_2$ that come from $E$ will be called \emph{old} edges, and all other edges in $E_1$ and $E_2$ will be called \emph{new} edges.

\paragraph{Classification of edges in $E$.} As a preliminary step, the algorithm will distinguish between two types of edges in $E$. 
For an edge $st\in E$ and a point $x\in X$, let $\proj(x)$ denote the orthogonal projection of $x$ on the straight line passing through $s$ and $t$. Let $\Gamma_{s, t}$ be the ellipsoid defined by
$$\Gamma_{s, t} = \left\{x\in \mathbb{R}^{d} : \|sx\|  + \|xt\|  \leq (1+\epsilon)\|st\|  \right\}$$
with foci $s$ and $t$.
We define the following two regions (see \Cref{ellipsoid} for an illustration): 
$$A_{s, t} \overset{\text{def}}{=} \Gamma_{s, t}\cap\left\{x\in \mathbb{R}^{d} : \frac{\|s - \proj_{st}(x)\| }{\|st\| } \in \left[\frac{3}{8} - \frac{1}{50}, \frac{3}{8} + \frac{1}{50}\right], \frac{\|t - \proj_{st}(x)\| }{\|st\| } \in \left[\frac{5}{8} - \frac{1}{50}, \frac{5}{8} + \frac{1}{50}\right]\right\},$$
$$B_{s, t} \overset{\text{def}}{=} \Gamma_{s, t}\cap\left\{x\in \mathbb{R}^{d} : \frac{\|s - \proj_{st}(x)\| }{\|st\| } \in \left[\frac{5}{8} - \frac{1}{50}, \frac{5}{8} + \frac{1}{50}\right], \frac{\|t - \proj_{st}(x)\| }{\|st\| } \in \left[\frac{3}{8} - \frac{1}{50}, \frac{3}{8} + \frac{1}{50}\right]\right\}.$$
\begin{definition} \label{edgeclass}
If $A_{s, t}\cap X$ or $B_{s, t}\cap X$ is empty for an edge $st\in E$, then $s t$ is called a \EMPH{type-(\romannumeral1)} edge; otherwise, it is called a \EMPH{type-(\romannumeral2)} edge. Let $E^{(\romannumeral1)}$ and $E^{(\romannumeral2)}$, resp., denote the set of type-(\romannumeral1) and type-(\romannumeral2) edges in $E$.
 \end{definition}

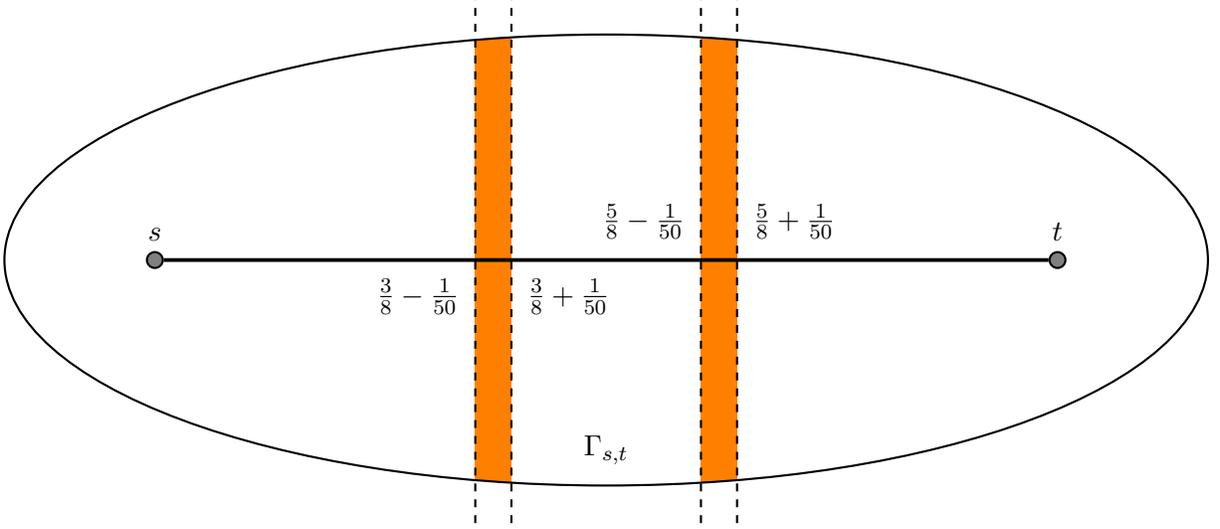
\begin{figure}
	\centering
	\begin{tikzpicture}[thick,scale=1]
	\draw (0, 0) node(1)[circle, draw, fill=black!50,
	inner sep=0pt, minimum width=6pt, label = $s$] {};
	\draw (12, 0) node(2)[circle, draw, fill=black!50,
	inner sep=0pt, minimum width=6pt,label = $t$] {};
	
	\def\stellipse{(6, 0) ellipse (8 and 3)};
	\def\firstrec{(4.26, -4) rectangle (4.74, 4)};
	\def\secondrec{(7.26, -4) rectangle (7.74, 4)};
		
	\begin{scope}
		\clip \stellipse;
		\fill[orange] \firstrec;
	\end{scope}
	
	\begin{scope}
		\clip \stellipse;
		\fill[orange] \secondrec;
	\end{scope}
	
	\draw [black] \stellipse;
	\draw (6, -3) node[black, label={$\Gamma_{s, t}$}]{};
	
	\draw [line width = 0.5mm] (1) to (2); 
	\draw [dashed] (4.26, -3.5) to (4.26, 3.5);
	\draw [dashed] (4.74, -3.5) to (4.74, 3.5);
	\draw [dashed] (7.26, -3.5) to (7.26, 3.5);
	\draw [dashed] (7.74, -3.5) to (7.74, 3.5);
	
	\draw (3.5, -1) node[black, label={$\frac{3}{8}-\frac{1}{50}$}]{};
	\draw (5.5, -1) node[black, label={$\frac{3}{8}+\frac{1}{50}$}]{};
	
	\draw (6.5, 0) node[black, label={$\frac{5}{8}-\frac{1}{50}$}]{};
	\draw (8.5, 0) node[black, label={$\frac{5}{8}+\frac{1}{50}$}]{};
	
\end{tikzpicture}
	\caption{The two sets $A_{s, t}$ and $B_{s, t}$ are drawn as two orange regions in the ellipsoid around $st$. For simplicity, in this figure we assume that $s$ and $t$ lie on the $x$-axis with coordinate 0 and 1, respectively (and all the numbers shown designate $x$-coordinates).}\label{ellipsoid}
\end{figure}

\paragraph{First pruning phase.} Initially, we set $E_1 \leftarrow E$ and $\beta \overset{\text{def}}{=} 1.01$. Without loss of generality, assume that all pairwise distances in $X$ are at least $1$. Next, we perform $O(\log\alpha)$ sub-iterations of pruning. In the $i$-th sub-iteration, go over all indices $j = 0, 1, 2, \ldots$. For each index $j$, let $L_j\subseteq E$ be the set of old edges whose lengths are in the range $[\beta^j, \beta^{j+1})$. In the $j$-th iteration, enumerate all pairs of vertices $\{x, y\}\in \binom{X}{2}$ such that $\|xy\| \geq \beta^j / 25$, and define a set of type-(\romannumeral1) edges 
$$P_{x,y} = \left\{st\in L_j\cap E_1\cap E^{(\romannumeral1)}\ : \|sx\|+ \|xy\|  + \|yt\|\leq (1+\epsilon)\cdot\|st\|  \right\}.$$
If $|P_{x, y}| \geq \frac{\alpha}{2^i\kappa}$, then add $xy$ to $E_1$ as a new edge, and remove all type-(\romannumeral1) edges in $P_{x, y}$ from $E_1$.

\paragraph{Second pruning phase.} To construct $E_2$, initially set $E_2\leftarrow E_1\setminus E^{(\romannumeral2)}$. Then, enumerate all type-(\romannumeral2) edges of $E_1$ in an increasing order of edge weights. For each such edge $st\in E_1\cap E^{(\romannumeral2)}$, first check whether the stretch between $s$ and $t$ is approximately preserved in the graph $(X,E_2)$; that is, whether 
$$\dist_{H_2}(s, t)\leq (1+\kappa^2\delta)\cdot \|st\|.$$
If so, move on to the next old edge in $E_1$. Otherwise, since $s t$ is type-(\romannumeral2), there exists a pair of vertices $a\in A_{s, t}, b\in B_{s, t}$. Then add one such edge $ab$ to $E_2$ as a new edge, and $s t$ to $E_2$ as an old edge; we call $a b$ the \emph{helper} edge associated with $s t$. After these procedures, move on to the next type-(\romannumeral2) edge in $E_1$.

\paragraph{Updating the parameters.} After the two pruning phases, reassign $E\leftarrow E_2$. Before moving on to the next iteration, we need to update the upper bound on the stretch $\delta$ and approximation ratio $\alpha$. Specifically, update $\delta \leftarrow \Delta(\kappa, \delta) \overset{\text{def}}{=} (1+\delta)\cdot(1+\kappa\delta)\cdot (1+\kappa^2\delta) - 1$, and $\alpha\leftarrow O(\log\alpha)$. The whole algorithm is summarized in Algorithm \ref{greedy-prune}.

\begin{algorithm}\label{greedy-prune}
	\caption{$\mathsf{GreedyPrune}(X, \epsilon)$}
	$\kappa\leftarrow 5000, \beta\leftarrow 1.01, \alpha\leftarrow (1/\epsilon)^{O(d)}, \delta\leftarrow \epsilon$\;
	let $H = (X, E)$ be a greedy $(1+\epsilon)$-spanner on the point set $X\subseteq \mathbb{R}^{d}$\;
	\For{$k = 1, 2, \ldots, O(\log^*(d/\epsilon))$}{
		\tcc{the first pruning phase}
		$E_1\leftarrow E$\;
		\For{$i = 1, 2, \ldots, O(\log\alpha)$\label{iter-i}}{
			\For{$j = 0, 1, 2, \ldots$}{
				define $L_j = \{e\in E : \|e\|\in [\beta^j, \beta^{j+1})\}$\;
				while there exists $xy\in \binom{X}{2}$ such that $\|xy\|\geq \beta^j / 25$, and $|P_{x, y}|\geq \frac{\alpha}{2^i\kappa}$, where $P_{x,y} = \left\{s t\in L_j\cap E_1\cap E^{(\romannumeral1)} : \|sx\|+ \|xy\|  + \|yt\|\leq (1+\epsilon)\cdot\|st\|  \right\}$ (see \Cref{edgeclass})\;
				add $xy$ to $E_1$ as a new edge, and remove $P_{x, y}$ from $E_1$\;
			}
		}
		\tcc{the second pruning phase}
		$E_2\leftarrow E_1\setminus E^{(\romannumeral2)}$\;
		\For{edge $st\in E_1\cap E^{(\romannumeral2)}$ in non-decreasing order in terms of of norm}{
			\If{$\dist_{H_2}(s, t) > (1+\kappa^2\delta)\|st\|$}{
				$E_2\leftarrow E_2\cup \{s t\}$\;
				find $a\in A_{s, t}, b\in B_{s, t}$ and add $a b$ to $E_2$ as a new helper edge\;
			}
		}
		$E\leftarrow E_2, \delta\leftarrow \Delta(\kappa, \delta), \alpha\leftarrow O(\log\alpha)$\;
	}
	\Return $E$\;
\end{algorithm}

\subsection{Stretch Analysis} \label{sec:stretch}
Before analyzing the stretch of our spanner, we first need to bound the value of $\delta$ throughout all $O(\log^*(d/ \epsilon))$ iterations of pruning.

\begin{claim}\label{small-delta}
	Throughout all $O(\log^*(d / \epsilon))$ iterations of pruning, we have $\delta < \kappa^{-5}$ and $\Delta(\kappa, \delta) < (\kappa+1)^2\delta$.
\end{claim}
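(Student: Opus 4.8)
The plan is to track the recursion $\delta \mapsto \Delta(\kappa,\delta) = (1+\delta)(1+\kappa\delta)(1+\kappa^2\delta) - 1$ through the $O(\log^*(d/\epsilon))$ iterations and show it never escapes the region $\delta < \kappa^{-5}$, given that it starts at $\delta_0 = \epsilon$ which satisfies $\epsilon \cdot 2^{O(\log^*(d/\epsilon))} < \kappa^{-5}$ by \Cref{eq:relationeps}. First I would establish the elementary pointwise bound: whenever $\delta \le \kappa^{-5}$ (so in particular $\kappa^2\delta \le \kappa^{-3} < 1$), expanding the product gives $\Delta(\kappa,\delta) = \delta + \kappa\delta + \kappa^2\delta + (\text{quadratic and cubic terms in }\delta)$, and each cross term like $\kappa^3\delta^2$ is at most $\kappa^3 \cdot \kappa^{-5} \cdot \delta = \kappa^{-2}\delta$, etc.; collecting, the dominant term is $\kappa^2\delta$ and the lower-order corrections are all $O(\kappa^{-c}\delta)$ for $c \ge 0$, so $\Delta(\kappa,\delta) \le (1 + \kappa + \kappa^2 + o(1))\delta < (\kappa+1)^2\delta$. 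This proves the second inequality assuming the first, so the two parts of the claim are not independent — the bound $\Delta(\kappa,\delta) < (\kappa+1)^2\delta$ holds precisely on the good region, and it is exactly what I need to run the induction for the first inequality.

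Next I would iterate. Let $\delta_k$ be the value of $\delta$ at the start of iteration $k$, so $\delta_0 = \epsilon$ and $\delta_{k+1} = \Delta(\kappa,\delta_k)$. By induction on $k$: if $\delta_k \le \kappa^{-5}$ then by the pointwise bound $\delta_{k+1} \le (\kappa+1)^2 \delta_k \le 2\kappa^2 \delta_k$ (say). Hence after all $K = O(\log^*(d/\epsilon))$ iterations, $\delta_K \le (2\kappa^2)^K \cdot \epsilon = 2^{O(K)} \cdot \epsilon = 2^{O(\log^*(d/\epsilon))}\cdot \epsilon$, which is $< \kappa^{-5}$ by \Cref{eq:relationeps}. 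The subtlety is that the induction hypothesis $\delta_k \le \kappa^{-5}$ is needed to apply the pointwise bound at step $k$, so I should phrase it as: for every $k \le K$, $\delta_k \le 2^{O(K)}\epsilon < \kappa^{-5}$ — the hidden constant in $2^{O(K)}$ is fixed once we fix that each iteration multiplies $\delta$ by at most $(\kappa+1)^2$, and \Cref{eq:relationeps} is stated with exactly the matching $2^{O(\log^*(d/\epsilon))}$ slack, so the constants line up. This closes the induction and gives $\delta < \kappa^{-5}$ throughout, and then $\Delta(\kappa,\delta) < (\kappa+1)^2\delta$ follows from the pointwise bound on the good region.

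The only mild obstacle is bookkeeping the constant in the exponent: one must check that the number of iterations is $\log^*(d/\epsilon) + O(1)$ (as stated in the algorithm description) and that multiplying $\epsilon$ by $(\kappa+1)^2$ that many times stays below $\kappa^{-5}$; since $(\kappa+1)^2 \le \kappa^3$ for $\kappa = 5000$, we get $\delta_K \le \kappa^{3K}\epsilon = \epsilon \cdot 2^{O(\log^* (d/\epsilon))}$, and this is $< \kappa^{-5}$ by hypothesis \Cref{eq:relationeps}. No analytic difficulty arises — it is purely a matter of verifying that the exponential-in-$\log^*$ growth is absorbed by the slack built into \Cref{eq:relationeps}. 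I would present the pointwise estimate as a one-line expansion, then the induction as two or three lines.
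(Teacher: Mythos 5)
Your proposal is correct and follows essentially the same route as the paper: an induction showing $\delta \le (\kappa+1)^{2k}\epsilon$ at the start of iteration $k$, using the pointwise expansion $\Delta(\kappa,\delta) < (\kappa+1)^2\delta$ valid on the region $\delta < \kappa^{-5}$, and closing via \Cref{eq:relationeps}, which absorbs the $2^{O(\log^*(d/\epsilon))}$ growth. The paper's proof does the same bookkeeping, just with the quadratic terms bounded explicitly rather than via your $O(\kappa^{-c}\delta)$ shorthand.
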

\begin{proof}
	We show by induction on $i$ that right before the $i$-th iteration of pruning, we have $\delta \leq (\kappa+1)^{2(i-1)}\epsilon$. For the basis step, recall that at the beginning when $E$ was initialized as a greedy $(1+\epsilon)$-spanner, we have $\delta = \epsilon$. For the induction step, suppose that $\delta \leq (\kappa+1)^{2(i-2)}\epsilon < \kappa^{-5}$ (by \Cref{eq:relationeps}) right before the $(i-1)$-st iteration. Then, at the end of the $(i-1)$-st iteration, we have updated $\delta$ as:
	$$\begin{aligned}
		\Delta(\kappa, \delta) &= (1+\delta)\cdot(1+\kappa\delta)\cdot (1+\kappa^2\delta) - 1\\
		&= \brac{1+(\kappa+1)\delta + \kappa\delta^2}\cdot(1+\kappa^2\delta) - 1\\
		&< \brac{1 + (\kappa+2)\delta}\cdot (1+\kappa^2\delta)-1\\
		&= (\kappa^2+\kappa+2)\delta + \kappa^2(\kappa+2)\delta^2\\
		&< (\kappa+1)^2\delta \\
        &\le (\kappa+1)^2 \cdot (\kappa+1)^{2(i-2)}\epsilon \\
        &= (\kappa+1)^{2(i-1)}\epsilon.
	\end{aligned}$$
	Therefore, at the end of the $(i-1)$-st iteration, and so also at the beginning of the $i$th iteration, we have $\delta < (\kappa+1)^{2(i-1)}\epsilon$, which completes the induction step.
It follows that throughout all $O(\log^*(d / \epsilon))$ iterations we have $\Delta(\kappa, \delta) \le (\kappa+1)^{O(\log^*(d / \epsilon))} \eps < \kappa^{-5}$, where the last inequality follows by  employing \Cref{eq:relationeps} again. 
\end{proof}

Next, let us show an upper bound on the stretch during the execution of the first pruning phase.
\begin{claim}\label{edge-stretch}
	At the end of the first pruning phase, for any edge $s t\in E$, we are guaranteed that $\dist_{H_1}(s, t)\leq (1+\kappa\delta)\cdot \|st\|$. Also, at the end of the second phase, for any edge $st\in E_1\cap E$, we have that $\dist_{H_2}(s, t)\leq (1+\kappa^2\delta)\cdot \|st\| $.
\end{claim}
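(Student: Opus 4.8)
The plan is to treat the two assertions separately. The second one is essentially immediate from the definition of the second pruning phase, so I would dispose of it first. Throughout that phase the edge set $E_2$ only grows: it is initialized to $E_1\setminus E^{(\romannumeral2)}$ and thereafter only receives new edges, so $\dist_{H_2}(\cdot,\cdot)$ is non-increasing during the phase. Hence a surviving type-(\romannumeral1) edge $st\in E_1\cap E^{(\romannumeral1)}$ lies in the initial $E_2$ and therefore in the final one, giving $\dist_{H_2}(s,t)=\|st\|$; and for a surviving type-(\romannumeral2) edge $st\in E_1\cap E^{(\romannumeral2)}$, at the moment the phase processes $st$ either the test $\dist_{H_2}(s,t)\le(1+\kappa^2\delta)\|st\|$ already passes -- and then it keeps passing, since $E_2$ only grows -- or it fails and the algorithm re-inserts $st$, so $\dist_{H_2}(s,t)=\|st\|$ from then on. In every case the bound holds, which settles the second assertion.

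For the first assertion I would set $\lambda:=1+\kappa\delta$. Type-(\romannumeral2) edges are never touched in the first phase, and surviving type-(\romannumeral1) edges have stretch $1$, so the entire content is to show $\dist_{H_1}(s,t)\le\lambda\|st\|$ for each type-(\romannumeral1) edge $st\in E$ that is removed, which I would prove by strong induction on the edge length $\|st\|$ (over the finitely many lengths realized by $E$). When $st$ is removed, the algorithm has added a new edge $xy$ with $\|sx\|+\|xy\|+\|yt\|\le(1+\epsilon)\|st\|$; moreover the removal occurred at the scale $j$ with $\|st\|\in[\beta^j,\beta^{j+1})$ and the algorithm requires $\|xy\|\ge\beta^j/25$, so $\|xy\|\ge\|st\|/(25\beta)\ge\|st\|/26$. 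Writing $\ell=\|st\|$, $a=\|sx\|+\|yt\|$, $b=\|xy\|$, this gives $a+b\le(1+\epsilon)\ell$ and $b\ge\ell/26$, hence $a\le(1+\epsilon-\tfrac{1}{26})\ell<\ell$ (using $\epsilon<\kappa^{-5}$).

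Now I would reconnect $s$ and $t$ in $H_1$ along $s\rightsquigarrow x$, then the edge $xy$ (which is never removed, since removals only delete members of $E^{(\romannumeral1)}\subseteq E$ and $xy\notin E$), then $y\rightsquigarrow t$. To bound $\dist_{H_1}(s,x)$, take a $(1+\delta)$-spanner path from $s$ to $x$ in $H=(X,E)$ (recall $H$ is a $(1+\delta)$-spanner): its length is at most $(1+\delta)\|sx\|\le(1+\delta)a<\ell$, so every edge on it is strictly shorter than $\ell$, and each such edge either is still in $E_1$ (contributing its length) or is a removed type-(\romannumeral1) edge strictly shorter than $\ell$, to which the induction hypothesis applies (contributing at most $\lambda$ times its length); thus $\dist_{H_1}(s,x)\le\lambda(1+\delta)\|sx\|$, and symmetrically $\dist_{H_1}(y,t)\le\lambda(1+\delta)\|yt\|$. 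The triangle inequality then yields $\dist_{H_1}(s,t)\le\lambda(1+\delta)a+b$, and substituting $a\le(1+\epsilon)\ell-b$ followed by $b\ge\ell/26$ reduces the target inequality $\dist_{H_1}(s,t)\le\lambda\ell$ to $\lambda\ge\frac{1}{1-26(\epsilon+\delta)}$; since $\delta\ge\epsilon$ and both are at most $\kappa^{-5}$ by \Cref{small-delta}, the right-hand side is at most $1+104\delta\le 1+\kappa\delta=\lambda$, closing the induction. For the base case (the smallest length in $E$), minimality together with $\|sx\|+\|yt\|<\ell$ forces $x=s$ and $y=t$, so $st$ is merely re-inserted and $\dist_{H_1}(s,t)=\|st\|$.

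The obstacle I expect to matter is getting this induction to be well-founded and quantitatively tight at the same time. Both hinge on a single observation: the replacement edge $xy$ is \emph{long} relative to $st$ ($\|xy\|\ge\|st\|/26$). This is what makes the residual segments $sx$, $yt$ -- and therefore every edge of the $H$-detour used to reconnect them -- strictly shorter than $st$, so the recursion terminates; and it is also what leaves enough multiplicative room to absorb the $(1+\delta)(1+\epsilon)$ overhead of routing through $H$ while still landing at stretch $1+\kappa\delta$. This is precisely where the large constant $\kappa$ and the standing assumption $\delta\ge\epsilon$ enter. The second assertion, by contrast, is pure bookkeeping about the second phase only ever adding edges.
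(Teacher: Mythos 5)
Your proposal is correct and takes essentially the same route as the paper's proof: the second assertion follows from the monotone growth of $E_2$ during the second phase, and the first from an induction over length scales in which a pruned type-(\romannumeral1) edge $st$ is rerouted through $s\rightsquigarrow x$, the retained new edge $xy$ (with $\|xy\|\geq \|st\|/(25\beta)$), and $y\rightsquigarrow t$, using the $(1+\delta)$-spanner property of $E$ so that every edge on the two detours is strictly shorter than $\|st\|$ and hence covered by the inductive hypothesis, with the slack absorbed by $\kappa=10^4$ and $\delta\geq\eps$. The only cosmetic differences are that you induct on edge length rather than on the level index $j$, and that your justification for $xy$ never being removed should invoke the rule that newly added edges are never pruned (the pair $\{x,y\}$ could in principle coincide with an edge of $E$), exactly as the paper does.
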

\begin{proof}  
    The first assertion holds trivially for all edges that remain in $E_1$ until the end of the first phase; this includes all type-(ii) edges and possibly some type-(i) edges. We henceforth restrict the attention only to edges that get pruned from $E_1$ during the first phase, and for each such edge $e$ we consider the index $j$ such that $e \in L_j \cap E_1 \cup E^{(i)}$. We will prove that the stretch of each edge $e = st$ that is pruned from $E_1$ is in check (i.e., $\dist_{H_1}(s, t)\leq (1+\kappa\delta)\cdot \|st\|$) by induction on $j$, $j \ge 0$. We stress that the induction is applied on the final edge set $E_1$ at the end of the first pruning phase (when the last sub-iteration ends).
    
	As the basis when $j = 0$, all edges in $E^{(\romannumeral1)}\cap L_0$ would be added to $E_1$ and never pruned (otherwise their stretch would be at least $2$), so the stretch between the endpoints of any type-(\romannumeral1) edge in $L_0$ is $1$. Now, for $j\geq 1$, let us assume $\dist_{H_1}(s, t)\leq (1+\kappa\delta)\cdot \|st\|$ for every edge $s t\in L_{k}$ for every $k<j$. Consider any type-(\romannumeral1) edge $s t\in L_j$ which was removed from $E_1$ during the first phase. Then, by the algorithm, there must exist an edge $xy\in E_1$ such that $\|xy\|\geq \beta^j / 25$ and
	$$\|sx\| + \|xy\| +    \|yt\|\leq (1+\epsilon)\|st\|.$$
    Note that this edge will not be removed later on during the first phase since we only prune old edges.
    As $\|xy\| \geq \beta^j/25$ and $\|st\|<\beta^{j+1}$, we have
	$$\max\{\|sx\|, \|yt\|\}\leq (1+\epsilon)\|st\| - \|xt\| \leq \brac{1+\epsilon - \frac{1}{25\beta}}\|st\| < \frac{\|st\|}{\beta^2}.$$
	Since we assumed $(X, E)$ is a $(1+\delta)$-spanner for $X$, there exist two paths $\gamma_1$ and $\gamma_2$ in $E$ connecting $s, x$ and $y, t$ such that
	$$\|\gamma_1\|\leq (1+\delta)\|sx\|,$$
	$$\|\gamma_2\|\leq (1+\delta)\|yt\|.$$
	Therefore, every edge $e = s' t'$ on the paths $\gamma_1$ or $\gamma_2$ has length at most
	$$\|e\|\leq (1+\delta)\max\{\|sx\|, \|yt\|\}\leq \frac{1+\delta}{\beta^2} \|st\| < \frac{\|st\|}{\beta}.$$
	In other words, $e$ belongs to some set $L_k$ for $k<j$. Using the inductive hypothesis, $\dist_{H_1}(s', t')\leq (1+\kappa\delta)\|s't'\|$, and consequently we have the following inequalities which conclude the induction
    \begin{align*}
		\dist_{H_1}(s, t)&\leq \dist_{H_1}(s, x) + \|xy\| + \dist_{H_1}(y, t)\\
		&\leq (1+\kappa\delta)\brac{(1+\epsilon)\|st\| - \|xy\|} + \|xy\|\\
		&= (1+\epsilon)(1+\kappa\delta)\|st\| - \kappa\delta\|xy\|\\
		&\leq \brac{(1+\epsilon)(1+\kappa\delta) - \frac{\kappa\delta}{25\beta}}\|st\| < (1+\kappa\delta)\|st\|, 
    \end{align*}
    where the penultimate inequality holds as $\|xy\| \ge \frac{1}{25\beta}\|st\|$ and the last inequality follows from \Cref{eq:relationeps} and \Cref{small-delta}, which yield $\eps(1+\kappa\delta) < \frac{\kappa\delta}{25\beta}$.

	During the second pruning phase, if an old edge $s t\in E_1$ was not added to $E_2$, there must exist a path $\pi$ between $s$ and $t$ consisting of edges in $E_2$ such that $\|\pi\| \leq (1+\kappa^2\delta)\cdot\|st\| $. Since the set $E_2$ grows monotonically, the path $\pi$ is preserved for the remainder of the phase.
\end{proof}

\begin{corollary}\label{spanner-stretch}
	For every $s t\in \binom{X}{2}$, we have:
	$$\dist_{H_2}(s, t)\leq \brac{1+\Delta(\kappa, \delta)}\cdot \|st\| < \brac{1 + (\kappa+1)^2\delta}\cdot \|st\|.$$
\end{corollary}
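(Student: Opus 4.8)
The goal is to prove \Cref{spanner-stretch}: for every pair $st \in \binom{X}{2}$, we have $\dist_{H_2}(s,t) \le (1+\Delta(\kappa,\delta))\cdot \|st\|$, and the second inequality $\Delta(\kappa,\delta) < (\kappa+1)^2\delta$ is exactly the bound already established in \Cref{small-delta}. So the real work is the first inequality, and the plan is to chain together three facts: (i) $(X,E)$ at the start of the iteration is a $(1+\delta)$-spanner; (ii) \Cref{edge-stretch} controls how much stretch is lost going from $E$ to $E_1$, namely each old edge $st$ satisfies $\dist_{H_1}(s,t) \le (1+\kappa\delta)\|st\|$; and (iii) \Cref{edge-stretch} again controls the loss from $E_1$ to $E_2$, namely each surviving old edge $st \in E_1 \cap E$ satisfies $\dist_{H_2}(s,t) \le (1+\kappa^2\delta)\|st\|$.

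Here is how I would carry it out. Fix an arbitrary pair $s,t \in X$. Since $(X,E)$ is a $(1+\delta)$-spanner, there is a path $\pi = (s = u_0, u_1, \ldots, u_m = t)$ with all edges $u_{\ell-1}u_\ell \in E$ and $\sum_{\ell=1}^m \|u_{\ell-1}u_\ell\| \le (1+\delta)\|st\|$. Each edge $u_{\ell-1}u_\ell$ of $\pi$ is an \emph{old} edge, so by the first assertion of \Cref{edge-stretch} we have $\dist_{H_1}(u_{\ell-1},u_\ell) \le (1+\kappa\delta)\|u_{\ell-1}u_\ell\|$. Now I want to pass from $H_1$ to $H_2$. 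The subtlety is that an old edge of $\pi$ need not survive into $E_1$ (it may have been pruned in the first phase), so the second assertion of \Cref{edge-stretch}, which is stated for $st \in E_1 \cap E$, does not apply directly to the edges of $\pi$. The clean fix is to instead take, for each edge $u_{\ell-1}u_\ell$ of $\pi$, a shortest $H_1$-path $\sigma_\ell$ between $u_{\ell-1}$ and $u_\ell$, of length $\le (1+\kappa\delta)\|u_{\ell-1}u_\ell\|$; the edges of $\sigma_\ell$ are edges of $E_1$, and in fact each such edge is either an old edge that survived into $E_1$ (hence in $E_1 \cap E$) or a \emph{new} edge added during the first phase. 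New edges $xy$ added in the first phase are, by construction, also added to $E_2$? No — they are added to $E_1$, and $E_2$ is initialized as $E_1 \setminus E^{(\romannumeral2)}$, and new edges are type-(i) by construction (the algorithm only prunes type-(i) edges $P_{x,y}$ and the helper edges in the first phase are... actually the first phase new edges $xy$ need not be classified, but $E_2 \leftarrow E_1 \setminus E^{(\romannumeral2)}$ keeps them). So new first-phase edges remain in $E_2$, and for old edges of $\sigma_\ell$ that are in $E_1 \cap E$ we invoke the second assertion of \Cref{edge-stretch} to replace them by $H_2$-paths of stretch $(1+\kappa^2\delta)$. Concatenating everything yields an $s$-$t$ walk in $H_2$ of total length at most $(1+\kappa^2\delta)(1+\kappa\delta)\sum_\ell \|u_{\ell-1}u_\ell\| \le (1+\kappa^2\delta)(1+\kappa\delta)(1+\delta)\|st\| = (1+\Delta(\kappa,\delta))\|st\|$, which is exactly the claimed bound.

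The second inequality in the corollary is then immediate: $\Delta(\kappa,\delta) = (1+\delta)(1+\kappa\delta)(1+\kappa^2\delta) - 1 < (\kappa+1)^2\delta$ is precisely the algebraic estimate carried out inside the proof of \Cref{small-delta}, which holds because $\delta < \kappa^{-5}$ throughout (also from \Cref{small-delta}). So I would simply cite \Cref{small-delta} for this part rather than redo the arithmetic.

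The main obstacle I anticipate is the bookkeeping in the paragraph above: making sure that every edge appearing on the concatenated path is of a type to which one of the two assertions of \Cref{edge-stretch} applies. Specifically, one must be careful that (a) first-phase new edges $xy$ genuinely survive into $E_2$ — this follows from $E_2 \leftarrow E_1 \setminus E^{(\romannumeral2)}$ together with the fact that first-phase new edges are not type-(ii) edges from the original $E$ (they were not in $E$ at all), and that the second phase only removes type-(ii) old edges and never removes new edges; and (b) second-phase helper edges $ab$, which are new, also survive. Once one is confident that the only edges needing a ``stretch replacement'' are old edges in $E_1 \cap E$ (for the $H_1 \to H_2$ step) and old edges in $E$ (for the $E \to H_1$ step), and both are covered by \Cref{edge-stretch}, the multiplicative composition of the three stretch factors closes the argument. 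I would present this as: take a $(1+\delta)$-spanner path in $E$; replace each edge by a $(1+\kappa\delta)$-stretch path in $H_1$; replace each edge of the latter that lies in $E_1 \cap E$ by a $(1+\kappa^2\delta)$-stretch path in $H_2$, leaving new edges untouched; multiply the factors.
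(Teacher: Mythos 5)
Your proposal is correct and is essentially the argument the paper intends (the corollary is left without an explicit proof there, being the direct composition of the standing $(1+\delta)$-spanner assumption on $E$ with the two assertions of \Cref{edge-stretch}, plus \Cref{small-delta} for the second inequality). Your bookkeeping — that new first-phase edges lie outside $E^{(\romannumeral2)}\subseteq E$ and hence survive the initialization $E_2\leftarrow E_1\setminus E^{(\romannumeral2)}$, so only old edges of $E_1\cap E$ need a stretch replacement — is exactly the right way to justify multiplying the three factors $(1+\delta)(1+\kappa\delta)(1+\kappa^2\delta)=1+\Delta(\kappa,\delta)$.
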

Note that he second inequality in the corollary above holds as $\delta < \kappa^{-5}$ by \Cref{small-delta}.

\subsection{Sparsity Analysis} \label{sparsityanal}
We first analyze how the number of edges in $E$ changes in one iteration, and then consider $k$ consecutive iterations at the end of this subsection.
To analyze sparsity, we devise a charging scheme $\Psi_0$ that maps (fractionally) edges in $E$ to edges in $E_\sparse$. 
In fact, the charging scheme charges (possibly fractionally) edges from $E$ to edges in a subdivision of $E_\sparse$. We also stress that the total number of edges in the subdivision of $E_\sparse$ to which we charge is at most $\kappa\cdot |E_\sparse| = O(|E_\sparse|)$.

\paragraph{Charging scheme from $E$ to $E_\sparse$.} For each edge $s t\in E$, find a path $\pi_{s, t}$ in $E_\sparse$ between $s$ and $t$ such that $\|\pi_{s, t}\| \leq (1+\epsilon)\cdot \|st\| $. If $s t\in E^{(\romannumeral1)}$, then by definition $A_{s, t}$ or $B_{s, t}$ is empty. Since $\pi_{s, t}$ is a path connecting $s, t$ with total weight at most $(1+\epsilon)\|st\|$, the entire path $\pi_{s, t}$ should lie within the ellipsoid $\Gamma_{s, t}$ and thus there must be a single edge $e$ on $\pi_{s, t}$ that crosses the region $A_{s, t}$ or $B_{s, t}$. In this case, $\Psi_0$ charges the edge $s t$ to edge $e\in E_\sparse$.

Next, assume $s t\in E^{(\romannumeral2)}$. To charge type-(\romannumeral2) edges to $E_\sparse$, we subdivide each edge $e\in E_\sparse$ evenly into $\kappa$ sub-segments with at most $\kappa-1$ Steiner points. Let $Y\supseteq X$ be the point set containing all original points and Steiner points, and 
let $E_\sparse^Y\subseteq \binom{Y}{2}$ denote the set of subdivided edges (clearly, $|E_\sparse| = O(|E_\sparse^Y|)$). Our charging scheme will be from type-(\romannumeral2) edges to edges in $E_\sparse^Y$. We distinguish between two cases, depending on the path $\pi_{s, t}$.

\begin{enumerate}[(a),leftmargin=*]
	\item Suppose $\pi_{s, t}\cap A_{s, t}$ or $\pi_{s, t}\cap B_{s, t}$ is empty; that is, the polygonal path $\pi_{s, t}$ does not contain vertices in $A_{s, t}$ or $B_{s, t}$. Then, since $\pi_{s, t}$ lies in $\Gamma_{s, t}$ entirely, there must be an edge $e = s' t'$ in $\pi_{s, t}$ that crosses $A_{s, t}$ or $B_{s, t}$. If $e$ only crosses one of the two regions (say $A_{s, t}$), then we have
	$$\|s - \proj_{st}(s')\| < \brac{\frac{3}{8} - \frac{1}{50}}\cdot \|st\|,$$
	$$\brac{\frac{3}{8} - \frac{1}{50}}\cdot \|st\| < \|\proj_{st}(t') - t\| < \brac{\frac{5}{8} - \frac{1}{50}}\cdot \|st\|.$$
	Let $z\in Y\cap e$ be the Steiner point in $A_{s, t}$ on the segment $e$ that is closest to $s'$; such a point $z$ must exist since each sub-segment of $e$ has length at most $\frac{\|e\|}{\kappa} < \frac{\|st\|}{25}$. Then, $\Psi_0$ charges $s t$ to segment $z t'$ which has length at least $\brac{\frac{1}{25} - \frac{1}{\kappa}}\|st\| > \frac{\|st\|}{26}$.
	
	If $e$ crosses both regions $A_{s, t}$ and $B_{s, t}$, then let $z_1\in Y\cap e$ be the Steiner point in $A_{s, t}$ that is closest to $s'$, and let $z_2\in Y\cap e$ be the Steiner point in $B_{s, t}$ that is closest to $t'$. Then, $\Psi_0$ charges $s t$ to segment $z_1 z_2$ which has length at least $\brac{\frac{1}{4} + \frac{1}{25} - \frac{2}{\kappa}}\|st\| > \frac{\|st\|}{4}$.
	
	\item Otherwise, we will charge $s t$ {\em fractionally} to a set of edges in $E_\sparse^Y$. Move along $\pi_{s, t}$ from $s$ to $t$ and let $p$ be the last vertex in $A_{s, t}$ and let $q$ be the first vertex in $B_{s, t}$. As $\|\pi_{s, t}\|\leq (1+\epsilon)\cdot \|st\|$ and $\|\proj_{st}(p)-\proj_{st}(q)\| \ge (1/4 - 1/25) \|st\|$, we know that:
	$$\|\pi_{s, t}[p, q]\| \leq \|\proj_{st}(p)-\proj_{st}(q)\|  + \epsilon\cdot \|st\| \leq (1+10\epsilon)\cdot \|\proj_{st}(p)-\proj_{st}(q)\|. $$
	Therefore, applying \Cref{angle-bound},
	we know that
	$$\|E(\pi_{s, t}[p, q], st, 2\sqrt{10\epsilon})\|> 0.5\cdot \|\proj_{st}(p)-\proj_{st}(q)\|. $$
	Then, for each edge $e\in E(\pi_{s, t}[p, q], st, 2\sqrt{10\epsilon})$, $\Psi_0$  charges a fraction of $\frac{2\cdot\|e\| }{\|\proj_{st}(p)-\proj_{st}(q)\| }$ of edge $s t$ to edge $e$.
\end{enumerate}

By design of our charging scheme $\Psi_0$, we can upper bound the angle between any type-(\romannumeral2) edge $s t$ and the edge $e$ it charges to.
\begin{claim}\label{angle}
	If a type-(\romannumeral2) edge $s t$ charges to an edge $x y\in E_\sparse^Y$, then angle $\angle(st, xy)$ is at most $15\sqrt{\epsilon}$. Furthermore, for all $z\in \{x, y\}$, the projection $\proj_{st}(z)$ of $z$ onto line $st$ lies on the segment $st$ and satisfies $\frac{\|s - \proj_{st}(z)\|}{\|st\|}\in \left[\frac{3}{8} -\frac{1}{50}, \frac{5}{8} +\frac{1}{50}\right]$.
\end{claim}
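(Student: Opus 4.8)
I would split along the three sub-cases of the type-(\romannumeral2) charging rule defined above (case (b); and case (a) with $e$ crossing one region, resp.\ both regions $A_{s,t},B_{s,t}$) and verify both assertions in each. The useful first observation is that in \emph{every} sub-case the edge $xy\in E_\sparse^Y$ receiving (part of) the charge of $st$ is a sub-segment of a single edge $e=s't'$ of the path $\pi_{s,t}\subseteq E_\sparse$: in case (b) it is such an edge (after subdivision), while in case (a) it is $zt'$ or $z_1z_2$. Consequently $xy$ and $e$ are parallel, so $\angle(st,xy)=\angle(st,e)$, and $\proj_{st}(x),\proj_{st}(y)$ lie on line $st$ between $\proj_{st}(s')$ and $\proj_{st}(t')$. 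Thus it suffices to bound $\angle(st,e)$ and to locate $\proj_{st}(s'),\proj_{st}(t')$ (together with the relevant Steiner points, which are trivial since they lie in $A_{s,t}$ or $B_{s,t}$ by construction).

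\textbf{The angle.} Case (b) is immediate: by definition of $E(\pi_{s,t}[p,q],st,2\sqrt{10\epsilon})$ we have $\angle(st,e)\le 2\sqrt{10\epsilon}<15\sqrt{\epsilon}$. For case (a) the heart of the matter is the estimate $\|e\|\le\|\proj_{st}(e)\|+\epsilon\|st\|$. I would derive it from the coordinate inequalities recorded when $e$ was picked as the edge crossing $A_{s,t}$ (or, symmetrically, $B_{s,t}$): these place $\proj_{st}(s')$ before $\proj_{st}(t')$ on line $st$, both essentially on the segment $st$. Then the two sub-paths of $\pi_{s,t}$ flanking $e$ have lengths at least their net displacements $\|s-\proj_{st}(s')\|$ and $\|\proj_{st}(t')-t\|$ along line $st$; since these two quantities together with $\|\proj_{st}(e)\|$ sum to $\|st\|$ and $\|\pi_{s,t}\|\le(1+\epsilon)\|st\|$, subtracting yields the claimed bound on $\|e\|$. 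Moreover $e$ crosses $A_{s,t}$ (and in the ``both regions'' sub-case contains points in both $A_{s,t}$ and $B_{s,t}$), which forces $\|\proj_{st}(e)\|\ge\tfrac{1}{25}\|st\|$. Hence, using $\epsilon\|st\|\le\|\proj_{st}(e)\|$,
\[
\tan^2\angle(st,e)=\frac{\|e\|^2-\|\proj_{st}(e)\|^2}{\|\proj_{st}(e)\|^2}\le\frac{2\epsilon\|st\|\cdot\|\proj_{st}(e)\|+\epsilon^2\|st\|^2}{\|\proj_{st}(e)\|^2}\le\frac{3\epsilon\|st\|}{\|\proj_{st}(e)\|}\le 75\epsilon ,
\]
so $\angle(st,e)\le\tan\angle(st,e)\le\sqrt{75\epsilon}<15\sqrt{\epsilon}$ since $\epsilon$ is small.

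\textbf{The projections.} In case (a) the endpoints of $xy$ are Steiner points in $A_{s,t}$ or $B_{s,t}$, whose projection-ratios lie in $[\tfrac{3}{8}-\tfrac{1}{50},\tfrac{3}{8}+\tfrac{1}{50}]$ resp.\ $[\tfrac{5}{8}-\tfrac{1}{50},\tfrac{5}{8}+\tfrac{1}{50}]$, or the endpoint $t'$, whose projection-ratio lies in $[\tfrac{3}{8}+\tfrac{1}{50},\tfrac{5}{8}+\tfrac{1}{50}]$ by the recorded inequalities; all of these lie inside $[\tfrac{3}{8}-\tfrac{1}{50},\tfrac{5}{8}+\tfrac{1}{50}]$ and, being inside these slabs, on the segment $st$. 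For case (b) I would invoke the inequality $\|\pi_{s,t}[p,q]\|\le(1+10\epsilon)\|\proj_{st}(p)-\proj_{st}(q)\|$ already established in the charging rule: since projection onto line $st$ is $1$-Lipschitz, every vertex $w$ of $\pi_{s,t}[p,q]$ obeys $\|\proj_{st}(p)-\proj_{st}(w)\|+\|\proj_{st}(w)-\proj_{st}(q)\|\le(1+10\epsilon)\|\proj_{st}(p)-\proj_{st}(q)\|$, which on a line forces $\proj_{st}(w)$ to lie within $5\epsilon\|st\|$ of the sub-segment $[\proj_{st}(p),\proj_{st}(q)]$, whose endpoints have ratios in $[\tfrac{3}{8}-\tfrac{1}{50},\tfrac{3}{8}+\tfrac{1}{50}]$ and $[\tfrac{5}{8}-\tfrac{1}{50},\tfrac{5}{8}+\tfrac{1}{50}]$ because $p\in A_{s,t}$ and $q\in B_{s,t}$. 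Applying this to $w\in\{s',t'\}$ and then to $x,y$ (which project into $[\proj_{st}(s'),\proj_{st}(t')]$) yields projection-ratios in $[\tfrac{3}{8}-\tfrac{1}{50}-5\epsilon,\tfrac{5}{8}+\tfrac{1}{50}+5\epsilon]$, which is contained in $[\tfrac{3}{8}-\tfrac{1}{50},\tfrac{5}{8}+\tfrac{1}{50}]$ (and hence on the segment $st$) since $\epsilon$ is vastly smaller than $\tfrac{1}{50}$ by \Cref{eq:relationeps}. (Alternatively, one may absorb this $O(\epsilon)$ error into the definitions of $A_{s,t},B_{s,t}$ from the outset.)

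\textbf{Main obstacle.} I expect the only step that is not pure bookkeeping to be the inequality $\|e\|\le\|\proj_{st}(e)\|+\epsilon\|st\|$ for the ``jumping'' edge in case (a): one must be careful that the two flanking sub-paths of $\pi_{s,t}$ genuinely contribute at least their displacements along line $st$, which rests on $\proj_{st}(s')$ and $\proj_{st}(t')$ lying on the segment $st$ in the correct order — precisely the content of the coordinate inequalities used to select $e$. Everything else reduces to elementary facts ($\tan\theta\ge\theta$ for $\theta\in[0,\pi/2)$, and $1$-Lipschitzness of orthogonal projection) together with the comfortable separation between $\epsilon$, $\tfrac{1}{50}$, and $\kappa^{-5}$.
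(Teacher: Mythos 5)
Your proof is correct in substance and follows essentially the paper's own route: the same case split, the angle in case (b) holding by the very definition of $E(\pi_{s,t}[p,q],st,2\sqrt{10\epsilon})$, and the angle in case (a) extracted from the fact that the path-length excess $\|\pi_{s,t}\|-\|st\|\le\eps\|st\|$ dominates the excess of the crossing edge over its projection. The paper runs this computation with $1-\cos\theta\ge\theta^2/8$ applied directly to the charged sub-segment (of length $\ge\|st\|/26$), whereas you use a $\tan$ estimate on the whole crossing edge (of projected length $\ge\|st\|/25$); both yield roughly $9$--$15\sqrt{\eps}$, so this is only a cosmetic difference. Your reduction of the "sum to $\|st\|$" step is also fine, since only the inequality $\|s-\proj_{st}(s')\|+\|\proj_{st}(e)\|+\|\proj_{st}(t')-t\|\ge\|st\|$ is needed, and that holds unconditionally by the triangle inequality on the line, so the ordering concern you flag is not actually load-bearing.

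One slip in the projection part, case (b): the interval $\left[\tfrac38-\tfrac1{50}-5\eps,\ \tfrac58+\tfrac1{50}+5\eps\right]$ you derive is a \emph{superset} of $\left[\tfrac38-\tfrac1{50},\ \tfrac58+\tfrac1{50}\right]$, not a subset, so the containment you assert is backwards; intermediate vertices of $\pi_{s,t}[p,q]$ really can project up to $O(\eps)\|st\|$ outside the target window, and no smallness of $\eps$ removes that. Your parenthetical fallback is the correct fix: state the conclusion with an $O(\eps)$-widened window (or widen $A_{s,t},B_{s,t}$ from the outset), which every downstream use tolerates because the numeric slack there is a constant (e.g.\ the $0.355D$ bound in \Cref{one-edge-per-level}), and which is exactly what the fast-implementation section does with its $\pm\eps$-enlarged regions. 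Note the paper's own proof dismisses the second assertion as holding "by design" and never addresses this overshoot, so apart from the reversed containment your treatment is, if anything, the more careful one.
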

\begin{proof}
	The second assertion of the statement holds by the design of our charging scheme, so let us focus on the first assertion. If $st$ charges to $x y$ as in case-(b), then the assertion holds by design. Otherwise, assume $s t$ charges to $x y$ as in case-(a). Define $\theta = \angle(st, xy)$. Since both $x$ and $y$ are on the path $\pi_{s, t}$, let us assume w.l.o.g.\ that $x$ lies between $s$ and $y$ on $\pi_{s, t}$, and then we have
\begin{align*}
		\epsilon\|st\| &\geq \|\pi_{s, t}\| - \|st\| \geq \|sx\| + \|xy\| + \|yt\| - \|st\|\\
		&\geq \|xy\| - \|\proj_{st}(xy)\| = (1 - \cos\theta)\|xy\|\\
		&\geq \frac{\theta^2}{8}\cdot \|xy\| > \frac{\theta^2}{208}\cdot\|st\|.
\end{align*}
	Here we have used the fact that $1-\cos\theta = 2\sin^2(\theta/2) > \theta^2/8$ and $\|xy\| > \frac{\|st\|}{26}$. Therefore, $\theta\leq 15\sqrt{\epsilon}$.
\end{proof}

We need to argue that $\Psi_0$ is a valid charging scheme from $E$ to $E_\sparse\cup E_\sparse^Y$.
\begin{claim}
    \label{clm:fully-charge}
	Every edge in $E$ is fully charged to edges in a subdivision of $E_\sparse$; that is, the total charges produced by any edge in $E$ is at least $1$.
\end{claim}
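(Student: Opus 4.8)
The plan is to verify the lower bound of $1$ on the total charge produced by a single edge $st\in E$ by walking through the three families into which $\Psi_0$ partitions $E$: the type-(\romannumeral1) edges, the type-(\romannumeral2) edges treated in case~(a), and the type-(\romannumeral2) edges treated in case~(b). For the first two families $\Psi_0$ places exactly one unit of charge, so the only thing to check is that the edge (resp.\ sub-segment) receiving it is well defined; for case~(b) the charges are genuinely fractional and must be summed. First I would take a type-(\romannumeral1) edge $st$ together with its path $\pi_{s,t}\subseteq E_\sparse$, $\|\pi_{s,t}\|\le(1+\eps)\|st\|$. Every vertex $x$ of $\pi_{s,t}$ satisfies $\|sx\|+\|xt\|\le\|\pi_{s,t}\|\le(1+\eps)\|st\|$, hence $\pi_{s,t}\subseteq\Gamma_{s,t}$; traversing $\pi_{s,t}$ from $s$ (whose projection on line $st$ is at distance $0$ from $s$) to $t$ (projection at distance $\|st\|$), the projection coordinate is continuous and therefore attains the value $\tfrac38\|st\|$, and the edge $e$ of $\pi_{s,t}$ on which this happens has both endpoints in $\Gamma_{s,t}$, so its sub-portion with projection in $[\tfrac38-\tfrac1{50},\tfrac38+\tfrac1{50}]\cdot\|st\|$ lies inside $\Gamma_{s,t}$ intersected with that slab, which is exactly $A_{s,t}$ (the value $\tfrac58\|st\|$ handles $B_{s,t}$). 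Thus the target $e\in E_\sparse$ exists and $st$ contributes a full unit of charge.

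Next I would take a type-(\romannumeral2) edge $st$ in case~(a). The same containment argument produces an edge $e=s't'$ of $\pi_{s,t}$ crossing $A_{s,t}$ (or $B_{s,t}$), and the projection inequalities in the definition of case~(a) say that the projections of $s'$ and $t'$ straddle the slab defining $A_{s,t}$ (and also that of $B_{s,t}$ if $e$ crosses both regions); hence the portion of $e$ lying over that slab has projection-length $\tfrac1{25}\|st\|$, and so Euclidean length at least $\tfrac1{25}\|st\|$. Since $\|e\|\le\|\pi_{s,t}\|\le(1+\eps)\|st\|$ and $\kappa$ is a large constant, each of the $\kappa$ equal pieces of $e$ in the subdivision $E_\sparse^Y$ has length $\|e\|/\kappa<\tfrac1{25}\|st\|$, so at least one subdivision point lies in $A_{s,t}$ (and one in $B_{s,t}$ as well, if $e$ crosses both). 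Hence the target segment $zt'$ (or $z_1z_2$) belongs to $E_\sparse^Y$ and exists, and again $st$ contributes exactly one unit of charge.

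Finally, for a type-(\romannumeral2) edge $st$ in case~(b), both $\pi_{s,t}\cap A_{s,t}$ and $\pi_{s,t}\cap B_{s,t}$ are nonempty, so the last vertex $p$ of $\pi_{s,t}$ in $A_{s,t}$ and the first vertex $q$ in $B_{s,t}$ exist; moreover $p$ precedes $q$ on $\pi_{s,t}$, for otherwise $\|\pi_{s,t}\|\ge\|\proj_{st}(s)-\proj_{st}(q)\|+\|\proj_{st}(q)-\proj_{st}(p)\|+\|\proj_{st}(p)-\proj_{st}(t)\|$, which is at least about $(\tfrac58+\tfrac14+\tfrac58)\|st\|>(1+\eps)\|st\|$, a contradiction. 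Hence $\pi_{s,t}[p,q]$ is a genuine subpath with $\|\proj_{st}(p)-\proj_{st}(q)\|\ge(\tfrac14-\tfrac1{25})\|st\|>0$, and the total charge $\Psi_0$ places from $st$ equals
\[
\sum_{e\in E(\pi_{s,t}[p,q],\,st,\,2\sqrt{10\eps})}\frac{2\|e\|}{\|\proj_{st}(p)-\proj_{st}(q)\|}=\frac{2\,\big\|E(\pi_{s,t}[p,q],\,st,\,2\sqrt{10\eps})\big\|}{\|\proj_{st}(p)-\proj_{st}(q)\|}.
\]
Using the inequality $\|\pi_{s,t}[p,q]\|\le(1+10\eps)\,\|\proj_{st}(p)-\proj_{st}(q)\|$ established just above the claim and applying \Cref{angle-bound} to the sequence of edges $\pi_{s,t}[p,q]$ with line $st$ (legitimate since the projections of its edges cover the segment $\proj_{st}(p)\proj_{st}(q)$), we get $\big\|E(\pi_{s,t}[p,q],st,2\sqrt{10\eps})\big\|>0.5\,\|\proj_{st}(p)-\proj_{st}(q)\|$, so the displayed total strictly exceeds $1$. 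Combining the three cases proves the claim.

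The step I expect to require the most care is not a single estimate but making the existence arguments for the two integral-charge cases fully rigorous: for type-(\romannumeral1) and case~(a) one must be certain the relevant edge of $\pi_{s,t}$ genuinely pierces the thin slab defining $A_{s,t}$ or $B_{s,t}$, which is exactly where the containment $\pi_{s,t}\subseteq\Gamma_{s,t}$ (a consequence of $\|\pi_{s,t}\|\le(1+\eps)\|st\|$) is used, and in case~(a) that the $\kappa$-fold subdivision is fine enough to drop a Steiner point into that slab, which is where $\kappa$ being a large constant together with $\|e\|\le(1+\eps)\|st\|$ is used. In the fractional case~(b) the only subtlety is bookkeeping: the denominator $\|\proj_{st}(p)-\proj_{st}(q)\|$ of every assigned fraction is precisely the quantity that \Cref{angle-bound} lower-bounds $\|E(\pi_{s,t}[p,q],st,2\sqrt{10\eps})\|$ against, up to the factor $\tfrac12$, so the two cancel and leave a total charge of more than $1$.
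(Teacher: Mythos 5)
Your proposal is correct and takes essentially the same route as the paper: the integral cases (type-(\romannumeral1) and type-(\romannumeral2) case (a)) give a full unit of charge by construction, and in case (b) the fractions sum to $2\,\|E(\pi_{s,t}[p,q],st,2\sqrt{10\eps})\|/\|\proj_{st}(p)-\proj_{st}(q)\|\geq 1$ via \Cref{angle-bound}. The extra well-definedness checks you include (the crossing edge exists, the $\kappa$-subdivision drops a Steiner point into the slab, $p$ precedes $q$) merely re-derive facts the paper already establishes in the description of the charging scheme $\Psi_0$.
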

\begin{proof}
	According to the charging scheme, every type-(\romannumeral1) edge is fully charged to an edge in $E_\sparse$. As for type-(\romannumeral2) edges, in case (a), we are also using an integral charging (i.e., charging to a single edge) to edges in $E_\sparse^Y$. In case (b), the fractions of $st$ charged to various edges in $E(\pi_{s, t}[p, q], 2\sqrt{10\epsilon})$ sum to at least $2\cdot \frac12\,\|pq\| / \|pq\| = 1$ by \Cref{angle-bound}, and so $s t$ is also fully charged fractionally to some Steiner edges in $E_\sparse$.
\end{proof}

Next, let us analyze the sparsity of the first pruning phase.

\begin{claim}\label{sparsity-phase1}
	During the first pruning phase, the number of new edges added to $E_1$ is at most $O(|E_\sparse|\log\alpha)$. After the first pruning phase, the number of type-(\romannumeral1) edges in $E_1$ is at most $O(|E_\sparse|)$.
\end{claim}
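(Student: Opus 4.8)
The plan is to bound separately the number of new edges created in the first phase and the number of type-(i) edges that survive it, using the fractional charging scheme $\Psi_0$ restricted to type-(i) edges. Throughout, recall that at the start of the phase we have $|E| \le \alpha\cdot|E_\sparse|$, and that the phase runs $O(\log\alpha)$ sub-iterations indexed by $i=1,2,\ldots$, where in sub-iteration $i$ a candidate pair $\{x,y\}$ is made a new edge only when $|P_{x,y}| \ge \alpha/(2^i\kappa)$, in which case all edges of $P_{x,y}\subseteq E^{(\mathrm{i})}$ are pruned. I will track how the number of \emph{old} type-(i) edges remaining in $E_1$ shrinks across sub-iterations.

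First, the new edges. The key quantitative point is that each new edge $xy$ created in sub-iteration $i$ "pays for itself" by removing at least $\alpha/(2^i\kappa)$ old type-(i) edges that are never re-added. So the number of new edges created in sub-iteration $i$ is at most $(\text{number of old type-(i) edges at the start of phase})\big/(\alpha/(2^i\kappa)) \le (\alpha|E_\sparse|)\cdot 2^i\kappa/\alpha = 2^i\kappa\,|E_\sparse|$ — but this is too weak summed over $i$. Instead I would argue that after sub-iteration $i$ completes, \emph{no} pair $\{x,y\}$ with $\|xy\|\ge\beta^j/25$ has $|P_{x,y}|\ge\alpha/(2^i\kappa)$ in the surviving edge set; combining this "saturation" fact with the charging scheme will show the surviving old type-(i) edges number at most $O(\alpha/2^i)\cdot|E_\sparse|$ after sub-iteration $i$. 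Granting that, the number of new edges created in sub-iteration $i$ is at most the number of old type-(i) edges removed in sub-iteration $i$, divided by the threshold $\alpha/(2^i\kappa)$; since at most $O(\alpha/2^{i-1})|E_\sparse|$ old edges are present entering sub-iteration $i$, this is $O(\alpha/2^{i-1})|E_\sparse| \cdot 2^i\kappa/\alpha = O(\kappa)\,|E_\sparse|$ new edges per sub-iteration, hence $O(\kappa\log\alpha)|E_\sparse| = O(|E_\sparse|\log\alpha)$ new edges in total, as claimed.

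Second, the surviving old type-(i) edges. Here is where $\Psi_0$ enters. Each surviving old type-(i) edge $st$ is charged (integrally) to a single edge $e\in E_\sparse$ — the edge of $\pi_{s,t}$ crossing the empty region $A_{s,t}$ or $B_{s,t}$. The crucial structural observation is that if two surviving type-(i) edges $st$ and $s't'$ in the same length class $L_j$ charge to the same $e=xy$, then the two crossing points of $e$ through their respective empty $A$/$B$ regions force $\|sx\|,\|xy\|,\|yt\|$ (and likewise for $s't'$) to satisfy $\|sx\|+\|xy\|+\|yt\|\le(1+\epsilon)\|st\|$ with $\|xy\|\ge\Omega(\|st\|)\ge\beta^j/25$ — i.e. both $st$ and $s't'$ lie in $P_{x,y}$. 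Therefore, had $|P_{x,y}|$ ever reached the threshold $\alpha/(2^i\kappa)$ in the last sub-iteration $i = O(\log\alpha)$ where the threshold is $O(1)\cdot\alpha/\alpha = O(\kappa)$... more precisely, once $\alpha/(2^i\kappa)$ drops to a constant (which happens when $i = \log_2\alpha - \log_2\kappa + O(1)$, i.e. at the last sub-iteration), the surviving $P_{x,y}$ for every candidate $\{x,y\}$ has size $< \alpha/(2^i\kappa) = O(1)$. Since every surviving old type-(i) edge $st\in L_j$ lies in $P_{x,y}$ for the pair $\{x,y\}$ endpoints of the edge $e\in E_\sparse$ it charges to (after subdividing $e$ if needed to expose a vertex $x$, $y$ at the right coordinates — but for type-(i) we charge to $e\in E_\sparse$ directly, so take $\{x,y\}$ to be its endpoints), each edge of $E_\sparse$ receives $O(1)$ surviving type-(i) edges from each length class $L_j$; and a short geometric argument (an edge $e$ can serve as the "crossing edge" for only $O(1)$ length classes, because the regions $A_{s,t},B_{s,t}$ have size $\Theta(\|st\|)$ so $\|e\| = \Theta(\|st\|)$ up to the constants $1/50,1/\kappa$) bounds the number of length classes to $O(1)$. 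Hence the total number of surviving old type-(i) edges is $O(|E_\sparse|)$.

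The main obstacle I anticipate is making the "same charged edge $\Rightarrow$ same $P_{x,y}$" implication airtight: one must verify that the geometric constraints coming from $e$ crossing the prescribed empty region of both $st$ and $s't'$ genuinely place the far endpoints $x,y$ in positions satisfying $\|sx\|+\|xy\|+\|yt\|\le(1+\epsilon)\|st\|$, and simultaneously that $\|xy\|\ge\beta^j/25$ for the right index $j$ — this requires carefully chasing the projection-coordinate bounds $[3/8-1/50,\,3/8+1/50]$ and $[5/8-1/50,\,5/8+1/50]$ defining $A_{s,t},B_{s,t}$, together with the ellipsoid containment, exactly as in \Cref{angle} but now tracking lengths rather than angles. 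A secondary subtlety is ensuring the per-sub-iteration bookkeeping (that the threshold-$\alpha/(2^i\kappa)$ saturation at the end of sub-iteration $i$ really halves the count of surviving old edges) is consistent with the fact that the charging scheme is defined with respect to the \emph{final} $E_1$; I would handle this by applying the charging argument freshly to the edge set surviving after each sub-iteration, which is legitimate since pruning only removes old type-(i) edges and the threshold decreases geometrically.
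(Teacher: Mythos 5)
Your proposal is correct and takes essentially the same route as the paper's proof: each type-(\romannumeral1) edge is charged integrally to the single edge of $E_\sparse$ that crosses its empty region, and a pigeonhole over $E_\sparse$ plus over the $O(1)$ relevant length classes (possible since $\|st\|/25\le\|xy\|\le(1+\eps)\|st\|$) shows that too many survivors would contradict saturation of the while-loop threshold $\alpha/(2^i\kappa)$, yielding the geometric decay $|E_1\cap E^{(\romannumeral1)}|\le |E_\sparse|\,\alpha/2^{i-1}$ and hence $O(|E_\sparse|)$ new edges per sub-iteration and $O(|E_\sparse|)$ surviving type-(\romannumeral1) edges at the end. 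The one ``obstacle'' you flag is in fact immediate and needs no projection-coordinate chasing: since $x$ and $y$ lie on $\pi_{s,t}$ with $\|\pi_{s,t}\|\le(1+\eps)\|st\|$, the triangle inequality gives $\|sx\|+\|xy\|+\|yt\|\le(1+\eps)\|st\|$ directly, and $\|xy\|\ge\|st\|/25\ge\beta^j/25$ because the charged edge spans the empty slab of width $\|st\|/25$.
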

\begin{proof}
	During the first pruning phase, we show by induction on $i$ that at the beginning of the $i$-th sub-iteration, $|E_1\cap E^{(\romannumeral1)}|$ is at most $|E_\sparse|\,\alpha / 2^{i-1}$. For the basis when $i = 1$, this bound holds as $\alpha \geq |E| / |E_\sparse|$.
    
	For the inductive step, to bound $\left|E_1\cap E^{(\romannumeral1)}\right|$ after the $i$-th sub-iteration, we need to utilize our charging scheme for type-(\romannumeral1) edges. Suppose, for the sake of contradiction, that there are more than $|E_\sparse|\alpha / 2^{i-1}$ type-(\romannumeral1) edges remaining in $E_1\cap E^{(\romannumeral1)}$ after the $i$-th sub-iteration. Then, by the pigeon-hole principle, there exists a set $F$ of more than $\alpha / 2^{i-1}$ type-(\romannumeral1) edges currently in $E_1$ charging to the same edge $xy\in E_\sparse$. By our charging scheme, for each such edge $e\in F$, we have 
 $$\|xy\| 
    \le (1+\eps)\|e\| 
    \le (1+\eps)25\cdot \|xy\| 
    \le 25\beta\cdot \|xy\|.$$
 Therefore, by the pigeonhole principle there exists an index $j\geq 0$ such that
 $$|F\cap L_j|
     \geq \frac{\alpha}{2^{i-1}\cdot \log_\beta{25\beta}} 
     > \frac{\alpha}{2^i\kappa}.$$
 The last inequality holds since $\kappa = 10^4 > \log_\beta(25\beta)$.
	Now, consider the sub-iteration during the first pruning phase when we were processing edges in $L_j$.

	To reach a contradiction, it suffices to show that at that time, it must be that $P_{x, y}\supseteq F\cap L_j$. In fact, for any edge $st\in F\cap L_j$, by the charging scheme, we know that $xy$ is on a $(1+\eps)$-spanning path of $E_\sparse$ between $s$ and $t$, hence the triangle inequality yields
	$$\|sx\|  + \|xy\| + \|yt\| \leq (1+\epsilon)\|st\| .$$
    Note that for each edge $e = st \in F \cap L_j$, we have $\|xy\| \ge \beta^j / 25$. Therefore, the algorithm could have added $x y$ to $E_1$ as a new edge and remove the entire set $P_{x, y}$, leading to a contradiction. This completes the proof of the induction step.
	
    It remains to bound the number of new edges added to $E_1$. In the $i$-th sub-iteration, each time we add a new edge to $E_1$, we  decrease $\left|E_1\cap E^{(\romannumeral1)}\right|$ by at least $\frac{\alpha}{2^i\kappa}$. Since $\left|E_1\cap E^{(\romannumeral1)}\right|$ was at most $|E_\sparse|\alpha / 2^{i-1}$ at the beginning of the $i$-th sub-iteration, we could add at most $O(|E_\sparse|)$ edges to $E_1$ in this sub-iteration. It follows that at most $O(|E_\sparse|\log\alpha)$ new edges are added to $E_1$ in total.
\end{proof}

Next, let us analyze the sparsity of the second pruning phase. Let us begin with a basic observation which follows directly from the algorithm description.

\begin{observation}\label{helper}
	Every type-(\romannumeral2) edge $s t\in E_2\cap E^{(\romannumeral2)}$ must have a helper edge $a b$ when it was processed and added to $E_2$ during the second pruning phase.
\end{observation}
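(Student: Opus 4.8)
The statement is a direct bookkeeping consequence of the description of the second pruning phase, so the plan is simply to trace through Algorithm~\ref{greedy-prune}. First I would record the starting point: $E_2$ is initialized to $E_1\setminus E^{(\romannumeral2)}$, so immediately after initialization $E_2$ contains no type-(\romannumeral2) edge whatsoever. Consequently, any edge of $E_2\cap E^{(\romannumeral2)}$ must have been inserted into $E_2$ at some later moment of the second phase.

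Next I would identify the only mechanism by which an old edge can be inserted into $E_2$ during the second phase. Scanning the second-phase loop, the only insertion of an old edge happens in the branch that is entered when the currently processed edge $st\in E_1\cap E^{(\romannumeral2)}$ fails the test $\dist_{H_2}(s,t)\le (1+\kappa^2\delta)\|st\|$; at that moment the algorithm adds $st$ back to $E_2$ and, because $st$ is type-(\romannumeral2), it can (and does) pick a pair $a\in A_{s,t}$, $b\in B_{s,t}$ and add $ab$ to $E_2$ as the new helper edge associated with $st$. Since the outer loop enumerates each edge of $E_1\cap E^{(\romannumeral2)}$ exactly once, the edge $st$ is processed (and hence possibly re-inserted) at most once. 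Combining this with the previous paragraph: if $st\in E_2\cap E^{(\romannumeral2)}$, then $st$ entered $E_2$ through precisely this branch, at which point its helper edge $ab$ was added --- which is the claim.

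The only point that deserves an explicit word of justification --- and the closest thing to an ``obstacle'' here --- is that the second-phase loop really does range over all of $E^{(\romannumeral2)}$, i.e.\ that $E^{(\romannumeral2)}\subseteq E_1$. This holds because the first pruning phase only ever removes type-(\romannumeral1) edges (the edges of the sets $P_{x,y}$ are drawn from $E_1\cap E^{(\romannumeral1)}$), so every type-(\romannumeral2) edge of $E$ survives the first phase and is therefore available to be processed, exactly once, in the second.
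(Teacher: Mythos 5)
Your argument is correct and is exactly the bookkeeping the paper has in mind: the paper states \Cref{helper} without proof as following directly from the algorithm description, and your trace through the second phase (initialization $E_2\leftarrow E_1\setminus E^{(\romannumeral2)}$, the single insertion branch that adds $st$ together with a helper $ab$ with $a\in A_{s,t}$, $b\in B_{s,t}$, which exist precisely because $st$ is type-(\romannumeral2)) is that reasoning made explicit. The extra remark that the first phase prunes only type-(\romannumeral1) edges is consistent with the algorithm and harmless, though not strictly needed, since the claim only concerns type-(\romannumeral2) edges that actually ended up in $E_2$.
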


The following lemma is the key behind the charging argument.
\begin{lemma}\label{one-edge-per-level}
	Fix any edge $e\in E_\sparse^Y$ and level index $j\geq 0$. Then, after the second pruning phase, there is at most one type-(\romannumeral2) edge in $E_2\cap L_j$ that is charged to $e$.
\end{lemma}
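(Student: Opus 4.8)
\textbf{Proof proposal for Lemma~\ref{one-edge-per-level}.}

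The plan is to argue by contradiction: suppose that, at the end of the second pruning phase, there are two distinct type-(ii) edges $s_1t_1, s_2t_2 \in E_2 \cap L_j$ that both charge to the same subdivided edge $e = xy \in E_\sparse^Y$. Without loss of generality assume $\|s_1t_1\| \le \|s_2t_2\|$, so that $s_1t_1$ was processed no later than $s_2t_2$ in the second pruning phase (edges are processed in non-decreasing order of length). Since both edges live in $L_j$, they have comparable lengths, differing by at most a factor of $\beta = 1.01$. By \Cref{observation}~\ref{helper}, the edge $s_1t_1$ was added to $E_2$ together with a helper edge $a_1b_1$ with $a_1 \in A_{s_1, t_1}$, $b_1 \in B_{s_1, t_1}$; in particular $a_1b_1 \in E_2$ is available when $s_2t_2$ is processed. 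The goal is to show that $a_1b_1$, together with the pieces of the $(1+\delta)$-spanner $(X,E)$ that connect $s_2$ to $a_1$ and $b_1$ to $t_2$, already gives a $(1+\kappa^2\delta)$-spanner path between $s_2$ and $t_2$ in $H_2$ at the time $s_2t_2$ is processed, contradicting the fact that $s_2t_2$ was added to $E_2$.

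The heart of the argument is a geometric ``closeness'' claim: because $s_1t_1$ and $s_2t_2$ both charge to the \emph{same} tiny subdivided edge $xy$ of length $\Theta(\|s_it_i\|)$, the two segments $s_1t_1$ and $s_2t_2$ must be geometrically very close to one another. Concretely, by \Cref{angle}, each of $s_1t_1$ and $s_2t_2$ makes an angle at most $15\sqrt{\eps}$ with $xy$, hence $\angle(s_1t_1, s_2t_2) \le 30\sqrt{\eps}$; and for both $i = 1,2$ and both endpoints $z$ of $xy$, the projection $\proj_{s_it_i}(z)$ lies on segment $s_it_i$ at a relative position in $[\tfrac38 - \tfrac1{50}, \tfrac58 + \tfrac1{50}]$. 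Since $\|xy\| \ge \tfrac1{26}\|s_it_i\|$ and the two lengths $\|s_1t_1\|, \|s_2t_2\|$ differ by at most a factor $\beta$, these constraints force $\|s_1 s_2\|$ and $\|t_1 t_2\|$ to both be at most a small constant (say $\tfrac1{50}$ or so) times $\|s_2t_2\|$ — one reads this off by triangulating through the common point $x$ (or $y$): both $s_1$ and $s_2$ are within distance roughly $\|s_it_i\|$ of $x$ along nearly-parallel directions with prescribed projection positions, pinning their mutual distance. The same reasoning applied near $y$ bounds $\|t_1 t_2\|$. Together with $a_1 \in A_{s_1,t_1}$, $b_1 \in B_{s_1,t_1}$ (so $\|s_1 a_1\|, \|b_1 t_1\| = \Theta(\|s_1t_1\|)$ and $a_1, b_1$ are far from $s_1,t_1$), this will let us certify that $a_1 \in A_{s_2, t_2}$ and $b_1 \in B_{s_2, t_2}$ up to a slight enlargement, i.e.\ $a_1$ and $b_1$ remain well inside the relevant ellipsoid $\Gamma_{s_2,t_2}$ and at the right projection positions for $s_2t_2$ — or at least close enough that $\|s_2 a_1\| + \|a_1 b_1\| + \|b_1 t_2\| \le (1 + O(\sqrt{\eps}))\|s_2 t_2\| \le (1 + \kappa\delta)\|s_2t_2\|$, using $\delta = \Omega(\eps)$.

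With this closeness in hand I would assemble the spanner path for $s_2t_2$ in $H_2$: route $s_2 \rightsquigarrow a_1$ and $b_1 \rightsquigarrow t_2$ using $(1+\delta)$-spanner paths in $E$, and then invoke \Cref{edge-stretch} — at the end of the second phase every old edge $uv \in E_1 \cap E$ satisfies $\dist_{H_2}(u,v) \le (1+\kappa^2\delta)\|uv\|$, and more importantly the edges $s_2 a_1$-type detours have length $< \|s_2t_2\|$, so by processing order their stretch in $H_2$ is already under control. Combining, $\dist_{H_2}(s_2, t_2) \le \dist_{H_2}(s_2, a_1) + \|a_1 b_1\| + \dist_{H_2}(b_1, t_2) \le (1+\kappa^2\delta)(\|s_2 a_1\| + \|b_1 t_2\|) + \|a_1b_1\| \le (1+\kappa^2\delta)\|s_2t_2\|$, where the last step uses $\|s_2a_1\| + \|a_1b_1\| + \|b_1t_2\| \le (1+\kappa\delta)\|s_2t_2\|$ and $\eps(1+\kappa\delta) < \kappa\delta/(\text{const})$ from \Cref{eq:relationeps} and \Cref{small-delta}, exactly as in the proof of \Cref{edge-stretch}. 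This contradicts the fact that $s_2t_2$ passed the test $\dist_{H_2}(s_2,t_2) > (1+\kappa^2\delta)\|s_2t_2\|$ and was therefore added to $E_2$.

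The main obstacle is the second paragraph: turning ``both edges charge to the same short segment $xy$'' into a quantitative bound on $\|s_1s_2\|$ and $\|t_1t_2\|$, and then checking that $a_1, b_1$ survive (with the right projection positions and inside the ellipsoid) as a valid helper pair for $s_2t_2$. The danger is that the accumulated slack — the $15\sqrt{\eps}$ angles, the $\pm\tfrac1{50}$ slabs, the $\beta$-factor in lengths, and the $1/\kappa$-granularity of the subdivision — could push $a_1$ or $b_1$ just outside $\Gamma_{s_2,t_2}$ or outside the prescribed projection windows. Managing these constants carefully (this is where the generous value $\kappa = 10^4$ and the slack in \Cref{eq:relationeps} are spent) so that the final estimate $\|s_2a_1\| + \|a_1b_1\| + \|b_1t_2\| \le (1+\kappa\delta)\|s_2t_2\|$ holds is the delicate part; everything else is routine triangle-inequality bookkeeping built on \Cref{angle}, \Cref{edge-stretch}, and \Cref{observation}~\ref{helper}.
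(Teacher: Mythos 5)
Your overall skeleton is the same as the paper's: argue by contradiction, take the helper edge $ab$ that \Cref{helper} guarantees for the earlier edge $s_1t_1$, and show that routing $s_2\rightsquigarrow a\circ ab\circ b\rightsquigarrow t_2$ already certifies $\dist_{H_2}(s_2,t_2)\leq (1+\kappa^2\delta)\|s_2t_2\|$ when $s_2t_2$ is processed. However, the central geometric claim in your second paragraph is false, and it is precisely the point where the real work lies. Charging to the same subdivided edge $e$ does \emph{not} force $\|s_1s_2\|,\|t_1t_2\|\lesssim \|s_2t_2\|/50$, and it does not place $a,b$ (even approximately) in $A_{s_2,t_2},B_{s_2,t_2}$. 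The constraint from \Cref{angle} is only that the endpoints of $e$ project into a window of width $\frac14+\frac1{25}=0.29$ on each of the two segments, so the two segments can be slid relative to each other by up to about $0.29\,\|s_2t_2\|$ along their (nearly common) direction --- e.g.\ $e$ may sit near relative position $\frac38$ on $s_1t_1$ and near $\frac58$ on $s_2t_2$. This is exactly why the paper's proof only establishes the much weaker facts $\|s_2f\|,\|ht_2\|>0.05D$ (the projections of $a$ and $b$ onto $s_2t_2$ are merely bounded away from the near endpoints and lie on the segment), not membership of $a,b$ in the $A/B$ regions of $s_2t_2$.

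Your quantitative fallback also does not close. The step ``$\|s_2a\|+\|ab\|+\|bt_2\|\leq (1+O(\sqrt\eps))\|s_2t_2\|\leq(1+\kappa\delta)\|s_2t_2\|$ using $\delta=\Omega(\eps)$'' is wrong for small $\eps$, since $\sqrt\eps\gg\kappa\eps$ and $\delta$ may be as small as $\Theta(\eps)$; the detour overhead must be shown to be $O(\eps)\|s_2t_2\|$, not $O(\sqrt\eps)\|s_2t_2\|$. Getting the $O(\eps)$ bound requires the second-order estimates the paper carries out: $a,b$ lie within $O(\sqrt\eps)D$ of the line $s_2t_2$ while their projections are $\Omega(D)$ from the nearer endpoints, so by the Pythagorean computation in \Cref{eq217a,eq217b} the perpendicular displacement costs only $O(\eps)D$, and the angle bound gives $\|ab\|\leq\|fh\|+O(\eps)D$. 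Two further fixes are needed in your final chain: the stretch factor you may apply to the arbitrary point pairs $(s_2,a)$ and $(b,t_2)$ is the one of \Cref{spanner-stretch}, namely $1+(\kappa+1)^2\delta$ (the bound $1+\kappa^2\delta$ of \Cref{edge-stretch} applies only to edges of $E_1\cap E$); and the conclusion $\leq(1+\kappa^2\delta)\|s_2t_2\|$ only follows because the helper edge is long, $\|ab\|\geq 0.21\|s_1t_1\|>0.2\|s_2t_2\|$, so the term $-(\kappa+1)^2\delta\|ab\|$ absorbs both the $O(\eps)D$ additive error and the gap between $(\kappa+1)^2\delta$ and $\kappa^2\delta$. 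You allude to the mechanism of \Cref{edge-stretch} but never state or use this lower bound on $\|ab\|$, and without it the last inequality does not follow.
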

\begin{proof}
	Assume for contradiction that there are two distinct type-(\romannumeral2) edges $s_1t_1, s_2 t_2\in E_2\cap L_j$ charging to the same edge $e\in E_\sparse^Y$. Let $r$ be an arbitrary endpoint of $e$. Without loss of generality, assume $\|s_1t_1\|\leq \|s_2t_2\|$, and so edge $s_1 t_1$ was processed before edge $s_2 t_2$. By \Cref{helper}, a helper edge of $s_1 t_1$, say edge $a b$, was added to $E_2$.

	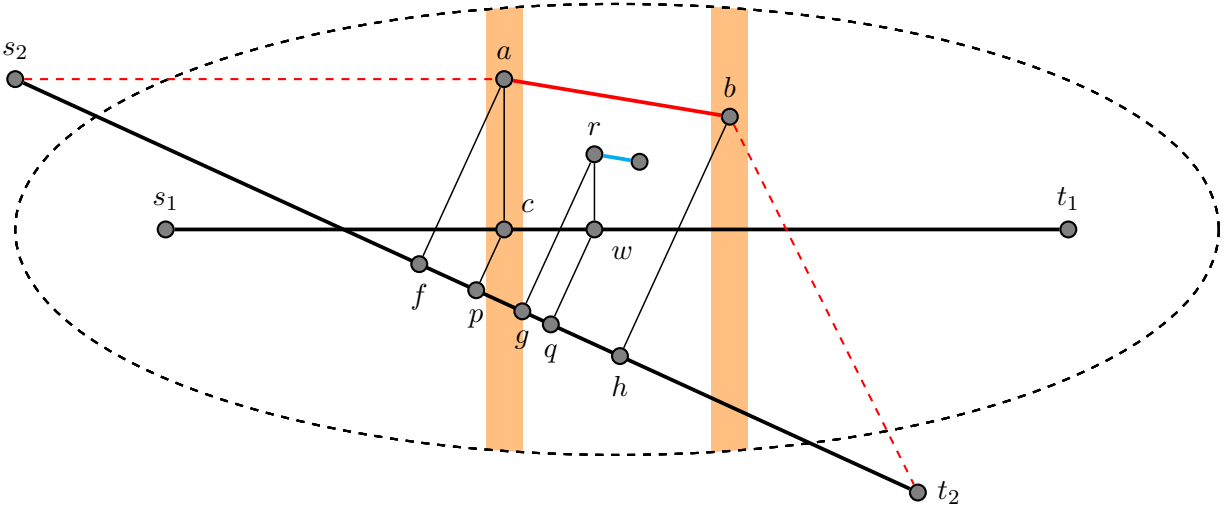
\begin{figure}
		\centering
		\begin{tikzpicture}[thick,scale=1]
	\draw (0, 0) node(1)[circle, draw, fill=black!50,
	inner sep=0pt, minimum width=6pt, label = $s_1$] {};
	\draw (12, 0) node(2)[circle, draw, fill=black!50,
	inner sep=0pt, minimum width=6pt,label = $t_1$] {};
	
	\def\stellipse{(6, 0) ellipse (8 and 3)};
	\def\firstrec{(4.26, -4) rectangle (4.74, 4)};
	\def\secondrec{(7.26, -4) rectangle (7.74, 4)};
	
	\begin{scope}
		\clip \stellipse;
		\fill[orange, opacity=0.5] \firstrec;
	\end{scope}
	
	\begin{scope}
		\clip \stellipse;
		\fill[orange, opacity=0.5] \secondrec;
	\end{scope}
	
	\draw [black, dashed] \stellipse;
	\draw [line width = 0.5mm] (1) to (2);
	
	\draw (4.5, 2) node(3)[circle, draw, fill=black!50,
	inner sep=0pt, minimum width=6pt, label = $a$] {};
	\draw (7.5, 1.5) node(4)[circle, draw, fill=black!50,
	inner sep=0pt, minimum width=6pt,label = $b$] {};
	
	\draw [black, dashed] \stellipse;
	\draw [line width = 0.5mm, color=red] (3) to (4);
	
	\draw (5.7, 1) node(5)[circle, draw, fill=black!50,
	inner sep=0pt, minimum width=6pt, label = $r$] {};
	\draw (6.3, 0.9) node(6)[circle, draw, fill=black!50,
	inner sep=0pt, minimum width=6pt] {};
	
	\draw [black, dashed] \stellipse;
	\draw [line width = 0.5mm, color=cyan] (5) to (6);
	
	\draw (-2, 2) node(7)[circle, draw, fill=black!50,
	inner sep=0pt, minimum width=6pt, label = $s_2$] {};
	\draw (10, -3.5) node(8)[circle, draw, fill=black!50,
	inner sep=0pt, minimum width=6pt,label = {0: {$t_2$}}] {};
	
	\draw [black, dashed] \stellipse;
	\draw [line width = 0.5mm] (7) to (8);

	\draw ($(7)!(3)!(8)$) node(9)[circle, draw, fill=black!50, inner sep=0pt, minimum width=6pt, label = {-90:{$f$}}] {};
	\draw [line width = 0.2mm] (9) -- (3);
	
	\draw ($(1)!(3)!(2)$) node(10)[circle, draw, fill=black!50, inner sep=0pt, minimum width=6pt, label = {45:{$c$}}] {};
	\draw [line width = 0.2mm] (3) -- (10);
	
	\draw ($(7)!(4)!(8)$) node(11)[circle, draw, fill=black!50, inner sep=0pt, minimum width=6pt, label = {-90:{$h$}}] {};
	\draw [line width = 0.2mm] (4) -- (11);
	
	\draw ($(7)!(10)!(8)$) node(12)[circle, draw, fill=black!50, inner sep=0pt, minimum width=6pt, label = {-90:{$p$}}] {};
	\draw [line width = 0.2mm] (10) -- (12);
	
	\draw ($(1)!(5)!(2)$) node(13)[circle, draw, fill=black!50, inner sep=0pt, minimum width=6pt, label = {-45:{$w$}}] {};
	\draw [line width = 0.2mm] (5) -- (13);
	
	\draw ($(7)!(5)!(8)$) node(14)[circle, draw, fill=black!50, inner sep=0pt, minimum width=6pt, label = {-90:{$g$}}] {};
	\draw [line width = 0.2mm] (5) -- (14);
	
	\draw ($(7)!(13)!(8)$) node(15)[circle, draw, fill=black!50, inner sep=0pt, minimum width=6pt, label = {-90:{$q$}}] {};
	\draw [line width = 0.2mm] (13) -- (15);
	
	\draw [dashed, color=red] (7) -- (3);
	\draw [dashed, color=red] (4) -- (8);
\end{tikzpicture}
		\caption{Both $s_1 t_1$ and $s_2 t_2$ are charging to $e$ which is drawn as the cyan edge. If $s_1 t_1$ was added to $E_2$ together with a helper edge $a b$, then $s_2 t_2$ could not be added to $E_2$ later on since $(s_2\rightsquigarrow a)\circ ab \circ (b\rightsquigarrow t_2)$ will be a good path  in $H_2$.}\label{charge}
	\end{figure}
 
	Write $D = \|s_1t_1\|$. Let $c$ and $w$ be the orthogonal projections of $a$ and $r$ on the line $s_1t_1$; let $f$, $g$, and $h$ be the projections of $a$, $r$, and $b$ on line $s_2t_2$; and let $p$ and $q$ be the projections of $c$ and $w$ on line $s_2t_2$; see \Cref{charge} for an illustration in the 2-dimensional case. By the design of our charging scheme and \Cref{angle}, we know that $g$ should land on segment $s_2t_2$ and
  $\|cw\|\leq 0.29\cdot D$, and 
  $\angle(s_1t_1, s_2t_2)
  \leq \angle(s_1t_1, e)+\angle(e, s_2t_2)
  \leq 30\sqrt{\epsilon}$.
	As $s_2 t_2$ is also charged to $e$, we know that $\|s_2g\| / \|s_2t_2\|\in \left[\frac{3}{8}-\frac{1}{50}, \frac{5}{8} + \frac{1}{50}\right]$. Thus, by the triangle inequality, we obtain
\begin{align*}
		\|s_2p \| &\geq \|s_2 g\| - \|gq\| - \|pq\|\\
            &\geq 0.355D - \sin\brac{\angle(s_1t_1, s_2t_2)}\cdot\|rw\| - \|cw\|\\
		&\geq \brac{0.355 - 2\sqrt{\epsilon}\cdot \sin(30\sqrt{\epsilon}) - 0.29}D\\
		&> 0.06D .
\end{align*}
Note that the last inequality holds for sufficiently small $\eps$. Thus, $\|s_2f\| \geq \|s_2p\| - \|pf\| > 0.06D - \sin(\angle(s_1t_1, s_2t_2))\cdot 2\sqrt{\epsilon}D > 0.06D - 60\epsilon D > 0.05D$. This also shows that both $f$ and $p$ should land on the segment $s_2t_2$.
	
	On the other hand, using the triangle inequality for the projections of segments $cp, cw, wq$ on the hyperplane orthogonal to line $s_2t_2$, we have:
\begin{align*}
		\|cp\| &\leq \|wq\| + \|cw\|\cdot \sin(30\sqrt{\epsilon})\\
		&< \|wq\| + 30\sqrt{\epsilon}\|cw\|\\
		&< \|wq\| + 8.7\sqrt{\epsilon}D\\
		&\leq \|rg\| + \|rw\| + 8.7\sqrt{\epsilon}D\\
		&\leq 12.7\sqrt{\epsilon}D.
\end{align*}
	Here, we have used the fact that $\|cw\|\leq 0.29D$ and $\|rg\|,\|rw\|\leq 2\sqrt{\epsilon}D$.
    Therefore, $\|af\| \leq \|ac\| + \|cp\| \leq 14.7\sqrt{\epsilon}D$. Symmetrically, we can show that $\|bh\|\leq 14.7\sqrt{\epsilon}D$ and $\|t_2h\| > 0.05D$.
    
	Finally, let us show that $(s_2\rightsquigarrow a)\circ ab \circ (b\rightsquigarrow t_2)$ can make a good path from $s_2$ to $t_2$ in $H_2$. By the above calculations, as $\|s_2f\|>0.05D$ and $\|af\| \leq 14.7\sqrt{\epsilon}D$, we have
\begin{equation}
\label{eq217a}
\begin{aligned}
		\|s_2a\| - \|s_2f\| &= \sqrt{\|s_2f\|^2 + \|af\|^2} - \|s_2f\| = \frac{\|af\|^2}{\sqrt{\|s_2f\|^2 + \|af\|^2} + \|s_2f\|}\\
		& < 217\epsilon D^2 / 0.1D = 2170\epsilon D.
\end{aligned}
\end{equation}
	Symmetrically, we can show that $h$ also lands on the segment $s_2t_2$ and
    \begin{equation}\label{eq217b}
	\|bt_2\| - \|ht_2\|\leq  2170\epsilon D .
    \end{equation}
	Also, since $\angle(ab, fh)\leq 15\sqrt{\epsilon}$ by \Cref{angle}, we have 
	\begin{align*}
	    \|ab\| &\leq \frac{\|fh\|}{\cos(15\sqrt{\eps})} \leq \brac{1 + 112.5\eps + O(\eps^2)}\cdot\|fh\|\\
        &<  (1+115\epsilon)\|fh\|
		< \|fh\| + 120\epsilon D .
	\end{align*}
    The last inequality is because $\|fh\| \leq \|s_2t_2\|< \beta D$.
	
	By \Cref{spanner-stretch}, we have
	$$\dist_{H_2}(s_2, a)\leq \brac{1 + (\kappa+1)^2\delta}\cdot\|s_2a\|,$$
	$$\dist_{H_2}(b, t_2)\leq \brac{1 + (\kappa+1)^2\delta}\cdot\|bt_2\|.$$
	Therefore, using \Cref{eq217a,eq217b} combined with  \Cref{small-delta} (i.e., $\delta  < k^{-5}$), at the time when edge $s_2 t_2$ was being processed by the second pruning phase, we have 
\begin{align*} 
\dist_{H_2}(s_2, t_2) 
    &\leq \dist_{H_2}(s_2, a) + \|ab\| + \dist_{H_2}(b, t_2)\\
    &\leq \brac{1 + (\kappa+1)^2\delta}\cdot\|s_2a\| + \brac{1 + (\kappa+1)^2\delta}\cdot\|bt_2\| + \|ab\|\\
    &\leq \brac{1 + (\kappa+1)^2\delta}\cdot(\|s_2t_2\| + 2860\epsilon D - \|ab\|) + \|ab\|\\
    &<(1+\kappa^2\delta)\|s_2t_2\| + \brac{(2\kappa+1)\delta\|s_2t_2\|+\brac{1+(\kappa+1)^2\delta}\cdot 2860\epsilon D - (\kappa+1)^2\delta\|ab\|}\\
    &<(1+\kappa^2\delta)\|s_2t_2\| + \brac{2\kappa+3000 - 0.2(\kappa+1)^2}\delta \|s_2t_2\|\\
    &<(1+\kappa^2\delta)\|s_2t_2\| .
\end{align*}
	The penultimate inequality holds as $\|ab\| \geq 0.21\|s_1t_1\| > 0.21\beta^{-1}\|s_2t_2\| > 0.2\|s_2t_2\|$, $\delta \geq \eps$ and $\kappa=10^4$. Therefore, $s_2 t_2$ could not have been added to $E_2$, leading to a contradiction.
\end{proof}

\begin{corollary}
    \label{cor:total-type-ii}
	The total number of type-(\romannumeral2) edges added to $E_2$ is at most $O(|E_\sparse|)$.
\end{corollary}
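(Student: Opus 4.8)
The plan is a double-counting argument that combines the ``every edge is fully charged'' property (\Cref{clm:fully-charge}) with the ``at most one edge per length scale'' property (\Cref{one-edge-per-level}). Write $N:=|E_2\cap E^{(\romannumeral2)}|$ for the quantity to be bounded. Every edge of $E_2\cap E^{(\romannumeral2)}$ belongs to $E$ (since $E^{(\romannumeral2)}\subseteq E$), so $\Psi_0$ is defined on it; by \Cref{clm:fully-charge} it emits total charge at least $1$, and by construction all of this charge from a type-(\romannumeral2) edge is distributed over $E_\sparse^Y$. Hence $N\le\sum_{f\in E_\sparse^Y}\Phi(f)$, where $\Phi(f)$ denotes the total charge $f$ receives from type-(\romannumeral2) edges of $E_2$. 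Since $|E_\sparse^Y|\le\kappa|E_\sparse|=O(|E_\sparse|)$, it suffices to prove $\Phi(f)=O(1)$ for each fixed $f$.

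Fix $f\in E_\sparse^Y$, a sub-segment of a parent edge $e_0\in E_\sparse$ with $\|f\|=\|e_0\|/\kappa$, and consider any type-(\romannumeral2) edge $st$ that charges to $f$. I would first extract two quantitative facts directly from the definition of $\Psi_0$. \textbf{Scale bound:} in every branch of the scheme ($f$ lying inside $A_{s,t}$, or $B_{s,t}$, or on an angle-restricted sub-segment of $\pi_{s,t}[p,q]$), $f$ lies on an edge of the $(1+\eps)$-spanner path $\pi_{s,t}\subseteq E_\sparse$; since $\|\pi_{s,t}\|\le(1+\eps)\|st\|$, the path $\pi_{s,t}$ is contained in the ellipsoid $\Gamma_{s,t}$, whose diameter is $(1+\eps)\|st\|$, so $\|e_0\|\le(1+\eps)\|st\|$ and therefore $\|st\|\ge\kappa\|f\|/(1+\eps)$. \textbf{Charge bound:} in case (a) the charge $st$ sends to $f$ is $\|f\|$ divided by the length of the charged segment ($zt'$, of length $\ge\|st\|/26$, or $z_1z_2$, of length $\ge\|st\|/4$); in case (b) it is $2\|f\|/\|\proj_{st}(p)-\proj_{st}(q)\|$ with $\|\proj_{st}(p)-\proj_{st}(q)\|\ge(\tfrac{1}{4}-\tfrac{1}{25})\|st\|$; in all cases the charge from $st$ to $f$ is at most $26\|f\|/\|st\|$.

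Next I would invoke \Cref{one-edge-per-level}: for every level index $j\ge0$ at most one type-(\romannumeral2) edge of $E_2\cap L_j$ charges to $f$; when it exists its length lies in $[\beta^{j},\beta^{j+1})$, the charge bound makes it contribute at most $26\|f\|/\beta^{j}$ to $\Phi(f)$, and the scale bound forces $\beta^{j+1}>\kappa\|f\|/(1+\eps)$. Summing the resulting geometric series over the relevant indices,
\[
\Phi(f)\;\le\;\sum_{j\,:\,\beta^{j+1}>\kappa\|f\|/(1+\eps)}\frac{26\|f\|}{\beta^{j}}\;<\;\frac{26(1+\eps)\beta^{2}}{\kappa(\beta-1)}\;<\;1,
\]
using $\beta=1.01$ and $\kappa=10^{4}$. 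Therefore $N\le\sum_{f\in E_\sparse^Y}\Phi(f)<|E_\sparse^Y|\le\kappa|E_\sparse|=O(|E_\sparse|)$, as claimed.

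I do not expect a genuine obstacle here: the substantive ingredient, \Cref{one-edge-per-level}, is precisely what turns the per-scale charges into a convergent geometric sum rather than one that grows with the number of length scales, and the two facts above are routine unwindings of the charging scheme. The only point that needs care is the bookkeeping in case (a): $\Psi_0$ targets the union of consecutive sub-segments lying between a Steiner point and an edge endpoint, so one must verify that the share apportioned to a single sub-segment $f$ of that union is indeed $\|f\|/\|zt'\|$ (equivalently $\|f\|/\|z_1z_2\|$), hence at most $26\|f\|/\|st\|$, rather than a full unit of charge.
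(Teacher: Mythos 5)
Your proof is correct and is essentially the paper's argument: every type-(\romannumeral2) edge of $E_2$ emits at least one unit of charge (\Cref{clm:fully-charge}), \Cref{one-edge-per-level} gives at most one charging edge per length scale per target in $E_\sparse^Y$, the per-target charge is then bounded by a geometric sum, and $|E_\sparse^Y|\le \kappa|E_\sparse| = O(|E_\sparse|)$. The only deviation is exactly the bookkeeping point you flag for case (a): you apportion the integral charge proportionally over the sub-segments of $zt'$ (or $z_1z_2$), so that every charge decays like $\|f\|/\|st\|$ and a single geometric sum covers both cases, whereas the paper keeps case-(a) charges integral and instead observes that such charging edges have length $\Theta$ of the target's parent edge and hence occur on only $O(1)$ levels; either bookkeeping is valid, since the set of charging pairs (and hence the applicability of \Cref{one-edge-per-level}) is unchanged and your apportionment still sums to at least one unit per edge.
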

\begin{proof}
	Consider any subdivided edge $e\in E_\sparse^Y$ within an edge in $E_\sparse$. By \Cref{one-edge-per-level}, for each $j\geq 0$, at most one type-(\romannumeral2) edge in $L_j\cap E_2$ charged to $e$. By the fractional charging scheme, the total amount of charges that $e$ receives across all $j\geq 0$ is a geometric sum which is bounded by a constant. Note that we analyze integral (case (a)) and fractional (case (b)) charges separately: Integral charges are incurred by edges on a constant number of levels, since by the charging scheme every edge charged to $e$ must have weight $\Theta(\|e\|)$; the fractional charges to $e$ could be incurred by edges on possibly many levels, but our charging scheme guarantees that the fractional charge to $e$ decays geometrically with the level, hence the sum of fractional charges is bounded by a geometric sum.
\end{proof}

\paragraph{Putting it all together.}
\Cref{cor:total-type-ii}, together with \Cref{sparsity-phase1}, yields $|E_2| = O(|E_\sparse|\log\alpha )$. Also, by \Cref{spanner-stretch}, the stretch of $H_2$ is at most $1+(\kappa+1)^2\delta$. Therefore, starting with $(X, E)$ being a $(1+\eps)$-spanner with $\delta=\eps$ and $\alpha = \epsilon^{-O(d)}$, if we iterate these two pruning phases $k$ times, we end up with a spanner with $\brac{1+\epsilon\cdot 2^{O(k)}}$-stretch and $O\brac{\log^{(k)}(1/\eps) + \log^{(k-1)}(d)}\cdot |E_\sparse|$ edges, which concludes the sparsity bound of \Cref{thm:tech}. Note that Algorithm \ref{greedy-prune} (Algorithm $\mathsf{GreedyPrune}$) repeats the two pruning phases for $k=O(\log^*(d/\epsilon))$ iterations, the result of which provides a spanner with the stretch and sparsity bounds of \Cref{thm:main}.

\subsection{Lightness Analysis} \label{sec:light}
Similarly to the analysis of sparsity, we devise a charging scheme from edges in $E$ to edges in $E_\light$.

\paragraph{Preparation.} Since we deal with edge lengths, we use more refined geometric properties. For each edge $st\in E$, let $\pi_{s, t}$ be a path in $E_\light$ between $s$ and $t$ such that $\|\pi_{s, t}\| \leq (1+\epsilon)\cdot \|st\|$. For technical reasons, we wish that all points on the path $\pi_{s, t}$ are monotonically increasing in terms of their projections on the directed line $\overrightarrow{st}$, which is not necessarily the case in general.  Therefore, as a preliminary step, we construct a sub-sequence of edges $\rho_{s, t}\subseteq\pi_{s, t}$ from $s$ to $t$ in $\mathbb{R}^d$ such that
\begin{enumerate}[(1)]
	\item the projection of $\rho_{s, t}$ on the line $st$ is a segment that covers the entire segment $st$;
	\item every point on segment $st$ is covered at most twice by the projection of $\rho_{s, t}$ on line $st$.
\end{enumerate}
To define this sub-sequence $\rho_{s, t}$, consider the following iterative procedure which uses a curser variable $z$ starting at $z\leftarrow s$, as well as a partially constructed sub-sequence $\rho\subseteq \pi_{s, t}$ from $s$ to $z$. In each iteration, let $x y$ be the last edge on $\pi_{s, t}$ whose projection on $st$ contains the projection of $z$ on $st$; namely $\proj_{st}(z)\in \proj_{st}(xy)$. Then, extend $\rho_{s, t}$ by $\rho_{s, t}\leftarrow \rho_{s, t}\cup \{x y\}$ and reassign $z\leftarrow y$. By the construction, it is easy to see that both requirements (1)--(2) on $\rho_{s, t}$ are met. Using $\rho_{s, t}$, we describe a charging scheme $\Psi_1$ that discharges the weight of the edges of $E$ to edges in $E_\light$.

\paragraph{Charging scheme from $E$ to $E_\light$.} 
The charging scheme $\Psi_1$ is almost the same as $\Psi_0$, except that we are using $\rho_{s, t}$ in lieu of $\pi_{s, t}$ for type-(\romannumeral2) edges.

If $s t\in E^{(\romannumeral1)}$, then by definition $A_{s, t}$ or $B_{s, t}$ is empty. Since $\pi_{s, t}$ is a path connecting $s, t$ with total weight at most $(1+\epsilon)\|st\|$, the entire path $\pi_{s, t}$ should lie within the ellipsoid and thus there must be a single edge $e$ on $\pi_{s, t}$ that crosses the region $A_{s, t}$ or $B_{s, t}$. In this case, $\Psi_1$ charges weight $\|st\|$ to such an edge $e\in E_\light$.

Next, assume that $s t\in E^{(\romannumeral2)}$. To charge type-(\romannumeral2) edges to $E_\light$, we subdivide each edge $e\in E_\light$ evenly into $\kappa$ sub-segments by adding at most $\kappa-1$ Steiner points on $e$. Let $Y\supseteq X$ be the point set containing all original points and Steiner points, and 
let $E_\light^Y\subseteq \binom{Y}{2}$ denote the set of subdivided edges (clearly, $\|E_\light\| = \|E_\light^Y\|$). Our charging scheme will be from type-(\romannumeral2) edges to edges in $E_\light^Y$. We distinguish between two cases, depending on the path $\rho_{s, t}$.
\begin{enumerate}[(a),leftmargin=*]
	\item Suppose that $\rho_{s, t}\cap A_{s, t}$ or $\rho_{s, t}\cap B_{s, t}$ is empty; that is, the edges in  $\rho_{s, t}$ do not have any endpoints in $A_{s, t}$ or $B_{s, t}$. Then, since $\rho_{s, t}$ lies in $\Gamma_{s, t}$ entirely and $st\subseteq \proj_{st}(\rho_{s, t})$, there must be an edge $e = s't'$ in $\rho_{s,t}\subseteq \pi_{s, t}$ that crosses $A_{s, t}$ or $B_{s, t}$. If $e$ only crosses one of the two regions (say $A_{s, t}$), then we have:
	$$\|s - \proj_{st}(s')\| < \brac{\frac{3}{8} - \frac{1}{50}}\cdot \|st\|,$$
	$$\brac{\frac{3}{8} -\frac{1}{50}}\cdot \|st\| < \|\proj_{st}(t') - t\| < \brac{\frac{5}{8} - \frac{1}{50}}\cdot \|st\|.$$
	Let $z\in Y\cap e$ be the Steiner points on segment $e$ which is in $A_{s, t}$ but the closest one from $s'$; such a point $z$ must exist since each sub-segment of $e$ has length at most $\frac{\|e\|}{\kappa} < \frac{\|st\|}{25}$. Then, $\Psi_1$ charges the weight $\|st\|$ to segment $z t'$ which has length at least $\brac{\frac{1}{25} - \frac{1}{\kappa}}\|st\| > \frac{\|st\|}{26}$.
	
	If $e$ crosses both regions $A_{s, t}$ and $B_{s, t}$, then let $z_1\in Y\cap e$ be the Steiner point in $A_{s, t}$ which is the closest one from $s'$, and let $z_2\in Y\cap e$ be the Steiner point in $B_{s, t}$ which is the closest one from $t'$. Then, $\Psi_1$ charges weight $\|s t\|$ of $st$ to segment $z_1 z_2$ which has length at least $\brac{\frac{1}{4} + \frac{1}{25} - \frac{2}{\kappa}}\|st\| > \frac{\|st\|}{4}$.
 
	\item Otherwise, we will distribute the weight $\|s t\|$ among multiple edges in $E_\light$. Move along $\rho_{s, t}$ from $s$ to $t$ and let $p$ be the last vertex in $A_{s, t}$ and let $q$ be the first vertex in $B_{s, t}$. As $\|\pi_{s, t}\|\leq (1+\epsilon)\cdot \|st\|$, we know that
	$$\|\rho_{s, t}[p, q]\| \leq \|\proj_{st}(p)-\proj_{st}(q)\|  + \epsilon\cdot \|st\| \leq (1+10\epsilon)\cdot \|\proj_{st}(p)-\proj_{st}(q)\|,$$
  where $\rho_{s, t}[p, q]\overset{\mathrm{def}}{=} \rho_{s, t}\cap \pi_{s, t}[p, q]$ refers to the sub-sequence of $\rho_{s, t}$ between $p$ and $q$. Therefore, \Cref{angle-bound} yields
	$$\|E(\rho_{s, t}[p, q], st, 2\sqrt{10\epsilon})\|> 0.5\cdot \|\proj_{st}(p)-\proj_{st}(q)\|.$$
	Then, for each edge $e\in E(\rho_{s, t}[p, q], st, 2\sqrt{10\epsilon})$, let $\Psi_1$ charge an amount of $\frac{\|e\|}{\|\rho_{s, t}[p, q]\|}\cdot \|st\|$ from $st$ to edge $e$.
\end{enumerate}

For some technical reason, we need to be more formal about the properties of charging schemes.
\begin{definition}[weight charging]\label{replete}
    A \EMPH{replete} weight charging scheme $\Psi$ from an edge set $A$ to edge set $B$ is a mapping $\Psi: A\times B\rightarrow \mathbb{R}^+\cup \{0\}$ such that:
    \begin{itemize}
        \item for any $f\in B$, if $\Psi(e, f)>0$, then $\Psi(e, f)\geq \min\{\|f\|, \|e\|\}$;
        \item $\sum_{f\in B}\Psi(e, f) = \|e\|$ for any $e\in A$.
    \end{itemize}
    We say that $e$ charges to $f$ under $\Psi$ if $\Psi(e, f)>0$.
\end{definition}

We can show that $\Psi_1: E\times \brac{E_\light\cup E_\light^Y}\rightarrow \mathbb{R}^+\cup \{0\}$ is replete under \Cref{replete} for $A = E$ and $B = E_\light\cup E_\light^Y$.

\begin{claim}\label{const-charge}
    The charging scheme $\Psi_1$ is replete, and every edge $e$ receives $O( \|e\|)$ amount of charges from any edge $s t\in E$; furthermore, both endpoints of $e$ are in $X$ if $s t$ is type-(\romannumeral2) and case-(b).
\end{claim}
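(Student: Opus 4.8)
I would verify the two bullets of \Cref{replete} together with the two size estimates by a direct case analysis along the three cases defining $\Psi_1$: (i)~$st\in E^{(\romannumeral1)}$; (ii)~$st\in E^{(\romannumeral2)}$ in sub-case~(a) (the sequence $\rho_{s,t}$ has no endpoint in $A_{s,t}$ or in $B_{s,t}$); and (iii)~$st\in E^{(\romannumeral2)}$ in sub-case~(b). In cases (i) and (a) the whole weight $\|st\|$ is placed on a single target --- an edge of $E_\light$ in case (i), and the sub-segment $zt'$ or $z_1z_2$ of a single $E_\light$-edge in case (a) --- so the two bullets of \Cref{replete} are straightforward (the defining fractions along $\pi_{s,t}$ have lengths within a $(1+O(\epsilon))$-factor, which absorbs the lower bound), and the only thing that needs an argument is the estimate ``$e$ receives $O(\|e\|)$'', namely a lower bound on the length of the target.

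\textbf{Cases (i) and (a).} In case (i), since $st$ is type-(\romannumeral1) one of $A_{s,t},B_{s,t}$ --- say $A_{s,t}$ --- contains no point of $X$; as $\|\pi_{s,t}\|\le(1+\epsilon)\|st\|$, every vertex of $\pi_{s,t}$ lies in $\Gamma_{s,t}$, so no vertex of $\pi_{s,t}$ projects into the slab $\{x:\|s-\proj_{st}(x)\|/\|st\|\in[\tfrac38-\tfrac1{50},\tfrac38+\tfrac1{50}]\}$. Since $\pi_{s,t}$ runs from $s$ to $t$, some edge $e$ of $\pi_{s,t}$ jumps across the whole slab, so $\|e\|\ge\|\proj_{st}(e)\|\ge\tfrac1{25}\|st\|$ and the charge $\|st\|$ is at most $25\|e\|$. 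Case (a) is the same argument applied to $\rho_{s,t}$ (whose projection covers all of $st$): the crossing edge $s't'$ has $\|s't'\|\ge\tfrac1{25}\|st\|$, hence --- using that the $\kappa$ Steiner points cut $s't'$ into pieces of length $\|s't'\|/\kappa<\|st\|/25$ --- the target sub-segment has length at least $(\tfrac1{25}-\tfrac1\kappa)\|st\|>\tfrac1{26}\|st\|$, and spreading the weight $\|st\|$ over the subdivided edges of that sub-segment proportionally to their lengths charges each such edge $e$ at most $\|st\|\cdot\|e\|/(\tfrac1{26}\|st\|)=26\|e\|$.

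\textbf{Case (b).} The choice of $p\in A_{s,t}$ and $q\in B_{s,t}$ forces $\|\proj_{st}(p)-\proj_{st}(q)\|\in[\tfrac14-\tfrac1{25},\,\tfrac14+\tfrac1{25}]\|st\|$, which together with the inequality $\|\rho_{s,t}[p,q]\|\le\|\proj_{st}(p)-\proj_{st}(q)\|+\epsilon\|st\|$ established in the construction gives $\tfrac1{10}\|st\|<\|\rho_{s,t}[p,q]\|<\tfrac13\|st\|$; in particular every edge $e$ of $\rho_{s,t}[p,q]$ has $\|e\|\le\|\rho_{s,t}[p,q]\|<\|st\|$. \Cref{angle-bound}, applied to $\rho_{s,t}[p,q]$ (its projection onto $st$ is the segment $\proj_{st}(p)\proj_{st}(q)$, its total length is $\le(1+10\epsilon)\|\proj_{st}(p)-\proj_{st}(q)\|$), gives $\|E(\rho_{s,t}[p,q],st,2\sqrt{10\epsilon})\|>\tfrac1{10}\|st\|$. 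Thus the fractional charges $\Psi_1(st,e)$ --- each proportional to $\|e\|$, normalized so that they sum to $\|st\|$ --- are well defined, and since the charged set has total weight $\Omega(\|st\|)$ while $\rho_{s,t}[p,q]$ has length $O(\|st\|)$, every positive charge is $\Theta(\|e\|)$; in particular it is at least $\min\{\|e\|,\|st\|\}=\|e\|$ and it is $O(\|e\|)$. The last assertion is immediate: in case (b) the charged edges lie on $\rho_{s,t}\subseteq\pi_{s,t}$, a path in the spanner $(X,E_\light)$, so their endpoints are in $X$.

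\textbf{Main obstacle.} The arithmetic is routine everywhere; the one point that needs care is case (b), where the two-sided estimate $\tfrac1{10}\|st\|<\|\rho_{s,t}[p,q]\|<\tfrac13\|st\|$ must be combined with \Cref{angle-bound} to make the fractional charges simultaneously \emph{large enough} (each at least $\min\{\|e\|,\|st\|\}$, so $\Psi_1$ is replete) and \emph{small enough} (each $O(\|e\|)$). That both bounds come out of the same narrow interval is precisely what dictates placing $A_{s,t},B_{s,t}$ at relative offsets $3/8$ and $5/8$ with slack $1/50$, and replacing $\pi_{s,t}$ by the monotone detour sub-sequence $\rho_{s,t}$.
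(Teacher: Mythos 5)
Your proof is correct and follows essentially the same route as the paper: a case analysis over type-(\romannumeral1), case-(a) and case-(b) charges, lower-bounding the length of each charge target (the slab width $\|st\|/25$, the sub-segments of length at least $\|st\|/26$ resp.\ $\|st\|/4$, and the projection gap $(\tfrac14-\tfrac1{25})\|st\|$) to obtain the $O(\|e\|)$ bound, and noting that case-(b) never touches Steiner points, so the charged endpoints lie in $X$. The only (immaterial) deviation is in case (b), where you normalize the fractional charges by $\|E(\rho_{s,t}[p,q],st,2\sqrt{10\eps})\|$ --- hence your appeal to \Cref{angle-bound} --- whereas the paper's $\Psi_1$ divides by $\|\rho_{s,t}[p,q]\|$ and only needs $\|\rho_{s,t}[p,q]\|\geq(\tfrac14-\tfrac1{25})\|st\|$; under either reading every positive charge lies between $\|e\|$ and $O(\|e\|)$, as required.
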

\begin{proof}
	If $s t$ is a type-(\romannumeral1) edge, then by the design of $\Psi_1$ we know that $\|st\|>\|e\|\geq \frac{\|st\|}{25\beta}$. So the amount of charges $e$ receives from $s t$ is between $\|e\|$ and $25\beta\cdot \|e\|$. If $s t$ is a type-(\romannumeral2) edge in case-(a), then we have $\|st\|>\|e\|\geq \frac{\|st\|}{30}$. Finally, if $s t$ is a type-(\romannumeral2) edge in case-(b), the on one hand   notice that $\|\rho_{s, t}[p, q]\|\leq \|\pi_{s, t}\| - \|sp\|\leq \brac{1+\eps - \brac{\frac{3}{8}-\frac{1}{50}}}\|st\| < \|st\|$. Consequently, $e$ is receives at least $\|e\|$ charges from $st$ under $\Psi_1$. On the other hand, 
   by the design of $\Psi_1$, the amount of charges $e$ has receives from $s t$ is at most
    $$\frac{\|e\|}{\|\rho_{s, t}[p, q]\|}\cdot \|st\|\leq \frac{2\|e\|}{\frac{1}{4} - \frac{1}{25}} < 10\|e\|.$$
    
    The second half of the statement holds since the charging scheme in case-(b) does not involve any Steiner points in the super-set $Y$.
\end{proof}

Similar to the sparsity analysis, we show that each edge $e$ in $E_\light$ receives $O\brac{\|e\|}$ amount of charges from type-(\romannumeral1) edges after the first pruning phase, and so this phase increases the weight by $O(\|E_\light\|\log\alpha)$.
\begin{claim}
    \label{clm:E1-log-size}
	After the first pruning phase, the total weight of new edges added to $E_1$ is  $O\brac{\|E_\light\|\log\alpha}$. Furthermore, the total weight of type-(\romannumeral1) edges in $E_1$ is $O\brac{\|E_\light\|\log\alpha}$.
\end{claim}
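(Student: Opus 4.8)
The plan is to adapt the sparsity argument of \Cref{sparsity-phase1} to the weighted setting, applied to the charging scheme $\Psi_1$ restricted to type-(\romannumeral1) edges. The key structural fact is that $\Psi_1$ sends every type-(\romannumeral1) edge $st$ to a \emph{single} edge $e\in E_\light$ whose length is within a constant factor of $\|st\|$: since $e$ must cross one of the width-$\tfrac{1}{25}\|st\|$ slabs defining $A_{s,t}$ or $B_{s,t}$ and yet lies on a path of length at most $(1+\epsilon)\|st\|$, we have $\|st\|/25\le\|e\|\le(1+\epsilon)\|st\|$, and $\Psi_1$ charges the whole weight $\|st\|$ to $e$. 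Consequently, bounding the \emph{number} of type-(\romannumeral1) edges charged to each fixed $e\in E_\light$ immediately yields a bound on their total weight, up to the factor $25$.

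Step one is a ``one-pair-per-level'' estimate: at the start of sub-iteration $i$ (with $i\ge 2$), for every edge $e=xy\in E_\light$ and every index $j\ge 0$, the number of type-(\romannumeral1) edges in $L_j\cap E_1\cap E^{(\romannumeral1)}$ that $\Psi_1$ charges to $e$ is less than $\tfrac{\alpha}{2^{i-1}\kappa}$. Indeed, any such edge $st$ lies on a $(1+\epsilon)$-path through $xy$ in $E_\light$, so $\|sx\|+\|xy\|+\|yt\|\le(1+\epsilon)\|st\|$, and $\|xy\|=\|e\|\ge\|st\|/25\ge\beta^j/25$; hence all of these edges lie in $P_{x,y}$ at the instant the while-loop for level $j$ in sub-iteration $i-1$ terminated, and that loop's exit condition forced $|P_{x,y}|<\tfrac{\alpha}{2^{i-1}\kappa}$. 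Since level-$j$ edges are neither created nor removed between the processing of level $j$ in sub-iteration $i-1$ and its processing in sub-iteration $i$, the bound persists to the start of sub-iteration $i$. Because a fixed $e$ is the charge target only of type-(\romannumeral1) edges whose lengths lie in $[\|e\|/(1+\epsilon),25\|e\|]$, i.e. in $O(1)$ consecutive levels with $\log_\beta(25\beta^2)<\kappa$, at most $\tfrac{\alpha}{2^{i-1}}\cdot\tfrac{\log_\beta(25\beta^2)}{\kappa}<\tfrac{\alpha}{2^{i-1}}$ type-(\romannumeral1) edges of $E_1$ charge to $e$, each of weight at most $25\|e\|$. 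As every type-(\romannumeral1) edge charges to exactly one edge of $E_\light$, summing over $e\in E_\light$ shows that the total weight $W_i$ of type-(\romannumeral1) edges in $E_1$ at the start of sub-iteration $i$ is at most $\tfrac{\alpha}{2^{i-1}}\|E_\light\|$; for $i=1$ the same bound holds trivially since $W_1\le\|E\|\le\alpha\|E_\light\|$. Applying this at the last sub-iteration index $i^\star=O(\log\alpha)$, after which $\alpha/2^{i^\star}=O(1)$, gives that the type-(\romannumeral1) edges surviving the first phase have total weight $O(\|E_\light\|)$, which is within the claimed bound.

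For the new edges I charge each new edge's weight against the weight of the old edges it prunes. When a new edge $xy$ is added while processing $L_j$ in sub-iteration $i$, the algorithm removes all of $P_{x,y}\subseteq L_j$ with $|P_{x,y}|\ge\tfrac{\alpha}{2^i\kappa}$, deleting at least $\tfrac{\alpha}{2^i\kappa}\beta^j$ units of type-(\romannumeral1) weight, whereas $\|xy\|\le(1+\epsilon)\max_{st\in P_{x,y}}\|st\|<\beta^{j+2}$. Hence the weight of new edges produced at level $j$ in sub-iteration $i$ is at most $\tfrac{2^i\kappa\beta^2}{\alpha}$ times the weight of type-(\romannumeral1) edges of $L_j$ present at the start of that sub-iteration; summing over $j$ and using $W_i\le\tfrac{\alpha}{2^{i-1}}\|E_\light\|$ shows that the new edges added in sub-iteration $i$ have total weight $O(\kappa\beta^2)\|E_\light\|=O(\|E_\light\|)$. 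Over the $O(\log\alpha)$ sub-iterations this sums to $O(\|E_\light\|\log\alpha)$, completing the proof of both assertions. I expect the only real difficulty to be the bookkeeping, as in \Cref{sparsity-phase1}: applying the one-pair-per-level estimate at the correct moment (when the while-loop for level $j$ in sub-iteration $i-1$ terminates) and verifying that level-$j$ edges are untouched in between, together with checking that $\kappa=10^4$ dominates $\log_\beta(25\beta^2)$ so the factor absorbed when converting the per-target count into an aggregate weight is genuinely $O(1)$.
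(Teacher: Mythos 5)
Your proof is correct and follows essentially the same route as the paper's: both arguments rest on the facts that every type-(\romannumeral1) edge charges its full weight to a single edge of $E_\light$ of comparable length, that the triangle inequality places all such charging edges of a given level $L_j$ into $P_{x,y}$ so the while-loop threshold bounds their number per level (with only $O(1)$ relevant levels per target edge), and that each new edge's weight is paid for by the at least $\frac{\alpha}{2^i\kappa}$ comparable-weight old edges it prunes. The only difference is presentational: you read off the per-target, per-level count directly as a post-condition of the previous sub-iteration's while-loop, whereas the paper packages the same facts as an induction-plus-contradiction argument.
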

\begin{proof}
	During the first pruning phase, we show by induction on $i$ that at the beginning of the $i$-th sub-iteration, $\left\|E_1\cap E^{(\romannumeral1)}\right\|$ is at most $\left\|E_\light\right\|\alpha / 2^{i-1}$. For the basis when $i = 1$, this bound holds as $\alpha \geq \|E\| / \|E_\light\|$.
	
	For the inductive step, to bound $\left\|E_1\cap E^{(\romannumeral1)}\right\|$ after the $i$-th sub-iteration, we need to utilize our charging scheme $\Psi_1$ of type-(\romannumeral1) edges. Suppose otherwise that the total weight of  all type-(\romannumeral1) edges remaining in $E_1\cap E^{(\romannumeral1)}$ is at least $\left\|E_\light\right\|\alpha / 2^{i-1}$ after the $i$-th sub-iteration. Then, by the pigeon-hole principle, there exists a set $F$ of more than $25\beta\cdot\alpha / 2^{i-1}$ type-(\romannumeral1) edges currently in $E_1$ charging to the same edge $x y\in E_\light$. By our charging scheme, for each such edge $e\in F$, we have
    $$\|xy\| \leq \|e\| \leq 25\beta\cdot \|xy\| .$$ 
    Therefore, there exists an index $j\geq 0$ such that
	$$|F\cap L_j|\geq \frac{\alpha}{2^{i-1}\cdot \log_\beta{25\beta}} > \frac{\alpha}{2^i\kappa}.$$
	Now, consider the time during the first pruning phase when we were processing all edges in $L_j$.
	
	To reach a contradiction, it suffices to claim that at the moment, it must be that $P_{x, y}\supseteq F\cap L_j$. In fact, for any edge $s t\in F\cap L_j$, by the charging scheme, we know that
	$$\|sx\|  + \|xy\| + \|yt\| \leq (1+\epsilon)\|st\|. $$
	Therefore, the algorithm could have added $xy$ to $E_1$ as a new edge and remove the entire set $P_{x, y}$ leading to a contradiction. This completes the proof of the induction step.
	
 It remains to bound the total weight of new edges added to $E_1$. In the $i$-th sub-iteration, each time we add a new edge $e$ to $E_1$, we have decreased the weight $\left\|E_1\cap E^{(\romannumeral1)}\right\|$ by at least $\frac{\alpha}{2^i\kappa}\cdot\|e\|$. Since $\left\|E_1\cap E^{(\romannumeral1)}\right\|$ was at most $\|E_\light\|\alpha / 2^{i-1}$ at the beginning, we could increase the weight of $E$ by at most $O(\|E_\light\|)$ in this sub-iteration, and thus by at most $O(\|E_\light\|\log\alpha)$ overall.
\end{proof}

Now, let us analyze the second pruning phase. Using the same calculation as in \Cref{one-edge-per-level}, we can prove the following claim for type-(b) edges.
\begin{claim}\label{one-edge-per-level-lightness}
    Fix an edge $e\in E_\light^Y$ and a level $j\geq 0$. After the second pruning phase, $\Psi_1$ charged the weight of at most one type-(\romannumeral2) edge in $E_2\cap L_j$ to $e$.
\end{claim}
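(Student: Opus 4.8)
The plan is to mirror the proof of \Cref{one-edge-per-level} almost verbatim, replacing $E_\sparse^Y$ by $E_\light^Y$, the path $\pi_{s,t}$ by the monotone sub-sequence $\rho_{s,t}$, and the charging scheme $\Psi_0$ by $\Psi_1$. Suppose for contradiction that two distinct type-(\romannumeral2) edges $s_1t_1, s_2t_2 \in E_2 \cap L_j$ both have positive $\Psi_1$-charge to $e$, and assume without loss of generality $\|s_1t_1\| \le \|s_2t_2\|$, so $s_1t_1$ is processed first in the second pruning phase. By \Cref{helper}, when $s_1t_1$ was added to $E_2$ a helper edge $ab$ with $a \in A_{s_1,t_1}$, $b \in B_{s_1,t_1}$ was also added. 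Writing $D = \|s_1t_1\|$, the goal is to show that when $s_2t_2$ is later processed, the concatenation $(s_2 \rightsquigarrow a)\circ ab\circ (b\rightsquigarrow t_2)$ already has length at most $(1+\kappa^2\delta)\|s_2t_2\|$, contradicting the fact that $s_2t_2$ passed the test $\dist_{H_2}(s_2,t_2) > (1+\kappa^2\delta)\|s_2t_2\|$ when it was added.

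The main (indeed the only) new ingredient is the $\Psi_1$-analog of \Cref{angle}: if a type-(\romannumeral2) edge $st$ charges to $xy \in E_\light^Y$ under $\Psi_1$, then $\angle(st,xy) \le 15\sqrt{\epsilon}$ and the projection of each endpoint of $xy$ onto line $st$ lands on the segment $st$, within $[\tfrac38 - \tfrac1{50}, \tfrac58 + \tfrac1{50}]\cdot\|st\|$ of $s$. For case-(b) charges this is immediate from the construction of $\Psi_1$: we only charge edges of $E(\rho_{s,t}[p,q], st, 2\sqrt{10\epsilon})$, and $p, q$ lie in $A_{s,t}, B_{s,t}$ whose projection windows are exactly as claimed, using that $\rho_{s,t}$ covers each point of $st$ at most twice so that the endpoints of intermediate edges project into $\proj_{st}([p,q])$. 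For case-(a) charges it follows from the identical computation to \Cref{angle}, since the charged sub-segment again has length $\Omega(\|st\|)$. This step is where one must be careful, because $\Psi_1$ is defined through the non-monotone sub-sequence $\rho_{s,t}$ rather than through $\pi_{s,t}$; everything after it is length-insensitive geometry, which is exactly why the remaining calculation transfers unchanged.

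Applying this to both $s_1t_1$ and $s_2t_2$ with the common edge $e$ yields $\angle(s_1t_1, s_2t_2) \le 30\sqrt{\epsilon}$, which is precisely the input used in \Cref{one-edge-per-level}. From here the argument is the calculation already performed there: projecting $a, b$ onto line $s_2t_2$ to get $f, h$, one shows $\|s_2f\|, \|t_2h\| > 0.05D$ and $\|af\|, \|bh\| = O(\sqrt{\epsilon})D$ using $a \in A_{s_1,t_1}$, $b \in B_{s_1,t_1}$, the projection-window guarantee for $e$ with respect to both edges, and $\|s_2t_2\| < \beta D$ (both edges lie in $L_j$); hence $\|s_2a\| \le \|s_2f\| + O(\epsilon)D$, $\|bt_2\| \le \|ht_2\| + O(\epsilon)D$, and $\|ab\| \le \|fh\| + O(\epsilon)D$. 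Combining these with $\dist_{H_2}(s_2,a) \le (1+(\kappa+1)^2\delta)\|s_2a\|$ and $\dist_{H_2}(b,t_2) \le (1+(\kappa+1)^2\delta)\|bt_2\|$ from \Cref{spanner-stretch}, with $\|ab\| \ge 0.21D > 0.2\|s_2t_2\|$, and with $\delta \ge \epsilon$, $\kappa = 10^4$ and $\delta < \kappa^{-5}$ (\Cref{small-delta}), gives $\dist_{H_2}(s_2,t_2) < (1+\kappa^2\delta)\|s_2t_2\|$ exactly as in the displayed chain of inequalities at the end of \Cref{one-edge-per-level} — the desired contradiction.
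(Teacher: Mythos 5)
Your proposal is correct and takes essentially the same route as the paper, whose proof of \Cref{one-edge-per-level-lightness} consists precisely of invoking ``the same calculation as in \Cref{one-edge-per-level}''; your only addition is to make explicit the $\Psi_1$-analogue of \Cref{angle} (angle at most $15\sqrt{\eps}$ and the projection window), which indeed transfers because the charged edges still lie on $\pi_{s,t}$ (as $\rho_{s,t}\subseteq\pi_{s,t}$) and case-(b) charges only go to edges of angle at most $2\sqrt{10\eps}$. Everything downstream (the helper edge $ab$ from \Cref{helper}, the $0.06D$ and $14.7\sqrt{\eps}D$ estimates, and the final contradiction with the addition test for $s_2t_2$) is identical to the paper's argument.
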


\paragraph{Charging scheme from $E_2\cap E$ to $E_\light^Y$.} However, \Cref{one-edge-per-level-lightness} is not enough to establish a lightness bound because an edge in $E_\light^Y$ could be charged multiple times across all different levels $j\geq 0$, and the charges need not form a geometric sum as in the case of sparsity. To bound the lightness of $E_2$, we describe another charging scheme $\Psi_2$, from $E_2\cap E^{(\romannumeral2)}$ to $E_\light^Y$, that works simultaneously with the second pruning phase. During the construction of the charging scheme $\Psi_2$, we will maintain the following invariant.
\begin{invariant}\label{charge-count}
 The charging scheme $\Psi_2$ is replete, and it charges every edge $e\in E_\light^Y$ at most $\kappa$ times, and each time $e$ receives at most $20\,\|e\|$ amount of charges.
\end{invariant}

At the beginning when $E_2\cap E^{(\romannumeral2)} = \emptyset$, no edges have been charged to $E_\light$ under the new charging scheme. The algorithm goes over all edges $s t\in E_1\cap E$ and decides whether $s t$ should stay in $E_2$ or not. For an edge $st$ of type-(b), let $F\subseteq E_\light^Y$ be the set of edges to which the charging scheme $\Psi_1$ distributed a positive portion of the weight $\|s t\|$, that is, $F=\{e\in E_\light^Y: \Psi_1(st,e)>0\}$.

\begin{claim}\label{proj-len}
	$\|F\|> \frac{1}{10}\|st\|$.
\end{claim}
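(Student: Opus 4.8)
The plan is to unwind the definition of the charging scheme $\Psi_1$ in case (b) and combine the two geometric estimates already recorded in its description with the numerical separation between the regions $A_{s,t}$ and $B_{s,t}$. First I would observe that, by the construction of $\Psi_1$ in case (b), the edge $st$ distributes a positive portion of its weight precisely to the edges of $E(\rho_{s,t}[p,q], st, 2\sqrt{10\epsilon})$, where $p$ is the last vertex of $\rho_{s,t}$ in $A_{s,t}$ and $q$ is the first vertex of $\rho_{s,t}$ in $B_{s,t}$ (every such edge has positive length, hence receives a strictly positive charge, and no other edge is charged). Therefore $\|F\| = \|E(\rho_{s,t}[p,q], st, 2\sqrt{10\epsilon})\|$, and the bound $\|F\| > \tfrac12\,\|\proj_{st}(p)-\proj_{st}(q)\|$ is exactly the consequence of \Cref{angle-bound} spelled out in the description of case (b), which in turn relied on $\|\rho_{s,t}[p,q]\| \le (1+10\epsilon)\,\|\proj_{st}(p)-\proj_{st}(q)\|$.

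It then remains only to lower-bound $\|\proj_{st}(p)-\proj_{st}(q)\|$. Since $p\in A_{s,t}$ and $q\in B_{s,t}$, the definitions of these regions give $\|s-\proj_{st}(p)\| \le \big(\tfrac38+\tfrac1{50}\big)\|st\|$ and $\|s-\proj_{st}(q)\| \ge \big(\tfrac58-\tfrac1{50}\big)\|st\|$, with both projections lying on the segment $st$; hence $\|\proj_{st}(p)-\proj_{st}(q)\| \ge \big(\tfrac14-\tfrac1{25}\big)\|st\| = 0.21\,\|st\|$. Plugging this in yields $\|F\| > \tfrac12\cdot 0.21\,\|st\| = 0.105\,\|st\| > \tfrac1{10}\|st\|$, as claimed.

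I do not expect any real obstacle: the claim is a short bookkeeping step that merely packages two inequalities already established in the definition of $\Psi_1$ together with the fact that $A_{s,t}$ and $B_{s,t}$ project onto two well-separated subintervals of $st$. The only point requiring a little care is that $p$ and $q$ must be the specific boundary vertices picked by the scheme (the last vertex of $\rho_{s,t}$ in $A_{s,t}$ and the first in $B_{s,t}$), so that $F$ coincides exactly with $E(\rho_{s,t}[p,q], st, 2\sqrt{10\epsilon})$ and the $\tfrac12\,\|\proj_{st}(p)-\proj_{st}(q)\|$ lower bound applies verbatim.
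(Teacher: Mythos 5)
Your proof is correct and follows essentially the same route as the paper: both arguments reduce the claim to the fact that $\proj_{st}(p)$ and $\proj_{st}(q)$ are separated by at least $\brac{\frac14-\frac1{25}}\|st\|=0.21\,\|st\|$ (from the definitions of $A_{s,t}$ and $B_{s,t}$), and that $\|F\|$ is at least half of this gap, giving $0.105\,\|st\|>\frac1{10}\|st\|$. The only difference is how the factor $\frac12$ is obtained. You identify $F$ with $E(\rho_{s,t}[p,q],st,2\sqrt{10\epsilon})$ and invoke the inequality $\|E(\rho_{s,t}[p,q],st,2\sqrt{10\epsilon})\|>\frac12\|\proj_{st}(p)-\proj_{st}(q)\|$ already derived via \Cref{angle-bound} in case (b) of $\Psi_1$, whereas the paper argues via the covering properties of $\rho_{s,t}$: its edges project onto the segment between $\proj_{st}(p)$ and $\proj_{st}(q)$ with multiplicity at most two, so their total projected (hence Euclidean) length is at least half the gap. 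Your variant is in fact slightly more faithful to the definition $F=\{e:\Psi_1(st,e)>0\}$, since under $\Psi_1$ only the small-angle edges of $\rho_{s,t}[p,q]$ receive positive charge—exactly the set your cited inequality bounds—while the paper's phrasing that $F$ contains all edges of $\rho_{s,t}[p,q]$ glosses over this point; either way the stated bound holds.
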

\begin{proof}
	Since $s t$ is type-(b), by definition, $F$ contains all edges in the sub-sequence $\rho_{s, t}$ from the last vertex $p\in A_{s, t}$ to the first vertex $q\in B_{s, t}$. Therefore, the orthogonal projection of $F$ on $st$ has length 
 $$\|\proj_{st}(F)\|
 =\frac{1}{2}\,\|\proj_{st}(p) - \proj_{st}(q)\|
 \geq \frac{1}{2}\cdot\brac{\frac{1}{4} - \frac{1}{25}}\cdot \|st\| > \frac{1}{10}\cdot \|st\|.$$
    Therefore, $\|F\|\geq \|\proj_{st}(F)\| > \frac{1}{10}\|st\|$.
\end{proof}

Let $S\subseteq F$ be the subset of edges that $\Psi_2$ has already charged $\kappa$ times by edges prior to processing edge $s t$. 
If $\|S\|\leq \frac12\,\|F\|$, then the charging scheme $\Psi_2$ distributes the weight $\|st\|$ among $F\setminus S$ proportionally to their weight. Formally, for each edge $e\in F\setminus S$, if $\Psi_1$ charged $\lambda$ of the weight $\|st\|$ to $e$, then let $\Psi_2$ charge an amount of $\|F\| / \|F\setminus S\|\cdot \lambda \leq 2\lambda$ to $e$; that is, 
$$\Psi_2(st, e)\leftarrow \begin{cases}
    \frac{|F\|}{\|F\setminus S\|} \cdot \Psi_1(st, e)   &   e\in F\setminus S\\
    0   &   e\notin F\setminus S .
\end{cases}$$
Note that $\frac{|F\|}{\|F\setminus S\|} \cdot \Psi_1(st, e)\ \leq 2\Psi_1(st, e))$. By \Cref{const-charge}, this maintains \Cref{charge-count}.

Next, let us consider the harder case where $\|S\| > \frac12\,\|F\|$. In fact, we will show that $s t$ could not have been added to $E_1$; that is, this case never occurs.
\begin{lemma}\label{lightness}
    If $\|S\| > \frac12\, \|F\|$, then edge $s t$ would not be added to $E_2$; that is, we already have $\dist_{H_2}(s, t) \leq (1+\kappa^2\delta)\cdot\|st\|$.
\end{lemma}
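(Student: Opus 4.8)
\medskip\noindent\textbf{Proof plan.}
I would argue by contradiction: assuming $\|S\|>\tfrac12\|F\|$, I will exhibit an $s$–$t$ path in the graph $H_2$ \emph{as it stands at the moment $st$ is processed in the second pruning phase} of total weight at most $(1+\kappa^2\delta)\|st\|$; then the ``\textbf{if}'' test fails for $st$, so $st$ is never added to $E_2$, which is the assertion of the lemma. By \Cref{proj-len}, $\|F\|>\tfrac1{10}\|st\|$, hence $\|S\|>\tfrac1{20}\|st\|$; since every edge of $S$ makes angle at most $2\sqrt{10\epsilon}$ with $st$ and lies on $\rho_{s,t}$, whose projection onto $st$ covers each point at most twice (and with total excess $O(\epsilon)\|st\|$), the set $U:=\bigcup_{zw\in S}\proj_{st}(zw)$ is contained in the middle band $\{x:\|s-\proj_{st}(x)\|/\|st\|\in[\tfrac38-\tfrac1{50},\tfrac58+\tfrac1{50}]\}$ of $st$ and has length $\Omega(\|st\|)$. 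Thus a constant fraction of $st$ is, in projection, reinforced by heavily charged edges.

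The key step is to turn each heavily charged edge of $S$ into a usable edge of $H_2$. Fix $zw\in S$. By the definition of $S$ it has been charged $\kappa$ times under $\Psi_2$, by distinct edges $a^{(1)}b^{(1)},\dots,a^{(\kappa)}b^{(\kappa)}\in E_2\cap E^{(\romannumeral2)}$, each processed before $st$ (so of length at most $\|st\|$). Since $\Psi_2(e,zw)>0$ forces $\Psi_1(e,zw)>0$, \Cref{one-edge-per-level-lightness} shows these $\kappa$ edges occupy $\kappa$ distinct length scales $L_j$. The geometry used in \Cref{angle} shows that an edge charging $zw$ in case-(b) makes angle $O(\sqrt\epsilon)$ with $zw$ and has $z,w$ projecting into its middle band, which forces its length to be at least $3\|zw\|$; hence all $\kappa$ occupied scales lie in $[\,3\|zw\|,\ \|st\|\,)$, so $\kappa\le\log_\beta(\|st\|/3\|zw\|)+1$. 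Choosing $ab$ to be a charging edge whose scale is roughly in the middle of the $\kappa$ occupied ones yields an edge $ab\in E_2\subseteq H_2$ with
$$\beta^{\Omega(\kappa)}\,\|zw\|\ \le\ \|ab\|\ \le\ \beta^{-\Omega(\kappa)}\,\|st\|,\qquad \angle(ab,st)\le 15\sqrt\epsilon,$$
and with $z,w$ projecting into the middle band of $ab$ (so $\proj_{st}(ab)$ strictly contains the interior of $\proj_{st}(zw)$, and both endpoints of $ab$ project into the interior of $st$). The point of the two-sided length bound is that $ab$ is much shorter than $st$ — hence already ``paid for'', carrying only a $(1+O(\epsilon))$ tilt factor rather than the $(1+(\kappa+1)^2\delta)$ multiplier of a spanner \emph{path} — yet much longer than $zw$.

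With the representatives in hand I would stitch. Ordering the edges of $S$ along $\rho_{s,t}$ and running a greedy interval-selection on the intervals $\proj_{st}(ab)$, each of which covers the corresponding $\proj_{st}(zw)$ with $\bigcup_{zw\in S}\proj_{st}(zw)=U$ of length $\Omega(\|st\|)$, I would extract a subsequence $a_1b_1,\dots,a_mb_m$, in increasing order of $\proj_{st}$, whose projections onto $st$ are pairwise essentially disjoint, still cover $\Omega(\|st\|)$ in total, and — the delicate point — such that consecutive chosen edges are \emph{aligned}: comparable in length, with their facing endpoints at nearly the same perpendicular offset from the line $st$. The candidate path is
$$(s\rightsquigarrow a_1)\circ a_1b_1\circ(b_1\rightsquigarrow a_2)\circ a_2b_2\circ\cdots\circ a_mb_m\circ(b_m\rightsquigarrow t),$$
where $\rightsquigarrow$ denotes a shortest path in the current $H_2$. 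Each connector joins two points lying near $st$ inside $\Gamma_{s,t}$ at Euclidean distance less than $\|st\|$, so (exactly as in the analogous step of \Cref{one-edge-per-level}, via \Cref{spanner-stretch} and \Cref{edge-stretch}) it contributes at most $(1+(\kappa+1)^2\delta)$ times its own length, whereas each jump $a_kb_k$, being an edge of $H_2$, contributes exactly $\|a_kb_k\|\le(1+O(\epsilon))\,\|\proj_{st}(a_kb_k)\|$. Since every piece is nearly parallel to $st$ and their $\proj_{st}$-images tile $st$ monotonically, the projection lengths sum to exactly $\|st\|$, and the jumps own an $\Omega(\|st\|)$ share of this on which the multiplier is only $1+O(\epsilon)$ instead of $1+(\kappa+1)^2\delta$. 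Using $\delta\ge\epsilon$ and $\kappa=10^4$ (together with \Cref{small-delta}), the deficit $(2\kappa+1)\delta\,\|st\|$ from replacing $\kappa^2\delta$ by $(\kappa+1)^2\delta$ on the connector part is dominated by the saving $\Omega\!\big((\kappa+1)^2\delta\big)\cdot\Omega(\|st\|)$ on the jump part, yielding $\dist_{H_2}(s,t)\le(1+\kappa^2\delta)\|st\|$, as needed.

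The main obstacle is this stitching step, and within it the control of the connectors. Because each representative edge has length only $\beta^{-\Omega(\kappa)}\|st\|$, covering an $\Omega(\|st\|)$ stretch of $st$ may require an astronomically large number $m$ of jumps, and a crude bound — the endpoints of the representatives can be $\Theta(\sqrt\epsilon\,\|st\|)$ off the line $st$ — would make the connectors total $\Theta(m\,\epsilon\,\|st\|)\gg\|st\|$. The representatives must therefore be selected so that consecutive jumps are comparable in scale and nearly collinear along $st$; here one exploits that the portion of $\rho_{s,t}$ between two consecutive heavily charged edges has excess over its $st$-projection that is tiny, so it cannot drift far perpendicular to $st$, which makes each connector's overhead a genuine lower-order term summing to $O(\epsilon)\|st\|$ over all $k$. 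Everything else reduces to the triangle inequality together with estimates already established in \Cref{angle}, \Cref{one-edge-per-level}, \Cref{spanner-stretch}, and \Cref{small-delta}.
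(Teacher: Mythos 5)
Your plan follows the same skeleton as the paper's proof: extract, for each heavily charged edge $z_iw_i\in S$, a representative charging edge of $E_2$ that is much longer than $z_iw_i$ yet much shorter than $st$ via pigeonhole over the distinct length scales guaranteed by \Cref{one-edge-per-level-lightness} (this is exactly \Cref{clm:mid-charge-edge}); select a sub-collection with essentially disjoint projections of total length $\Omega(\|st\|)$ (cf.\ \Cref{shortcuts}); stitch them with shortest paths in the current $H_2$; and win because the jumps are traversed with multiplier $1+O(\eps)$ while only the connectors pay $1+\Delta(\kappa,\delta)$, with $\delta\ge\eps$ and $\kappa=10^4$ closing the arithmetic. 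So the architecture is right, and you correctly identify where the difficulty lies.

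However, the step you yourself flag as the main obstacle is not actually carried out, and the mechanism you sketch for it does not work as stated. First, the claim that a small excess of $\rho_{s,t}$ between two consecutive heavy edges forces it ``not to drift far perpendicular to $st$'' is false if read literally: a sub-path with excess $\eta_i$ over a projected gap $g_i$ can drift by $\Theta(\sqrt{\eta_i g_i})$, i.e.\ up to $\Theta(\sqrt{\eps}\,\|st\|)$; what is true (and what the paper proves in \Cref{stitch1}) is that the \emph{excess} of the straight connector over its projection is at most a constant times the excess of the corresponding light-path chord plus $O(\eps)g_i$ --- but this comparison only goes through if the offset mismatch between the representative's endpoint $b_i$ and the anchor $w_i$, which can be as large as $\Theta(\sqrt{\eps})\|a_ib_i\|$, is small relative to the gap $g_i$. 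The paper enforces this not by your criterion of ``comparable lengths and matching perpendicular offsets'' (which you neither show to be attainable nor show to be compatible with retaining $\Omega(\|st\|)$ total jump length --- and it is doubtful, since offsets of long representatives can differ by $\Theta(\sqrt{\eps})\|st\|/\kappa$), but by the buffer-region selection: scanning scales from large to small and blocking a window of width $2\beta^{i+1}$ around each chosen projection guarantees $\|v_iu_{i+1}\|\ge 2\max\{\|a_ib_i\|,\|a_{i+1}b_{i+1}\|\}$, which is what makes the per-connector overhead $4\big(\|w_iz_{i+1}\|-\|q_ip_{i+1}\|\big)+O(\eps)\|v_iu_{i+1}\|$ and hence the total overhead $\le 4(\|\pi_{s,t}\|-\|st\|)+O(\eps)\|st\|$, while \Cref{shortcuts} shows the blocking only loses a constant factor of coverage. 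A secondary gap: for the connectors you invoke the $(1+\Delta)$-stretch guarantee merely because their length is ``less than $\|st\|$''; since $st$ is being processed now and only shorter edges have been examined, you need each stitched piece to be bounded away from $\|st\|$ by a factor $1/(1+\Delta)$ (the paper's \Cref{H2-stretch} shows $\le 0.9\|st\|$) before you may assert that its spanner path is already present in $H_2$ at this moment.
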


\paragraph{Putting it all together.} Assuming \Cref{lightness}, we can show that every edge $e\in E_\light^Y$ receives at most $20\kappa \|e\|$ charges under $\Psi_2$ (cf.~\Cref{charge-count}). Together with \Cref{clm:E1-log-size}, we can show that $\|E_2\| = O(\|E_\light\|\log\alpha )$. Also, by \Cref{spanner-stretch}, the stretch of $H_2$ is at most $1+(\kappa+1)^2\delta$. Therefore, starting with $(X, E)$ being a $(1+\eps)$-spanner with $\delta=\eps$ and $\alpha = \epsilon^{-O(d)}$, if we iterate these two pruning phases $k$ times, we end up with a spanner with $\brac{1+\epsilon\cdot 2^{O(k)}}$-stretch and $O\brac{\log^{(k)}(1/\eps) + \log^{(k-1)}(d)}\cdot \|E_\light\|$ total weight, which concludes the lightness bound of \Cref{thm:tech}. Note that Algorithm \ref{greedy-prune} (Algorithm $\mathsf{GreedyPrune}$) repeats the two pruning phases for $k=O(\log^*(d/\epsilon))$ iterations, the result of which provides a spanner with the stretch and sparsity bounds of \Cref{thm:main}.

\paragraph{Proof of \Cref{lightness}.}
We break down the proof of \Cref{lightness} into a sequence of claims.

\begin{claim} \label{clm:mid-charge-edge}
    For each $e\in E_\light^Y$ that $\Psi_2$ has charged exactly $\kappa$ times,
    there exists an edge $\chi(e)\in E_2$ that $\Psi_1$ 
    has already charged
    to $e$ such that $\kappa\|e\|\leq \|\chi(e)\|\leq \frac{1}{\kappa}\|st\|$; furthermore, the angle between $\chi(e)$ and $e$ is bounded by $\angle\brac{\chi(e), e}\leq 2\sqrt{10\epsilon}$.
\end{claim}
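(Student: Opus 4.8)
\medskip
\noindent\textbf{Proof plan for \Cref{clm:mid-charge-edge}.}
The plan is to convert the hypothesis that $\Psi_2$ charged $e$ exactly $\kappa$ times into a family of $\kappa$ pairwise-distinct witness edges, all of which already have the correct angle to $e$, and then to pick $\chi(e)$ among them by an elementary two-sided pigeonhole over the length classes $\{L_j\}_{j\ge 0}$.

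First I would unpack the $\Psi_2$-charges. Each $\Psi_2$-charge to $e$ is produced while the second pruning phase processes some edge $a_ib_i\in E_1\cap E^{(\romannumeral2)}$ in case-(b), via a single assignment that sets $\Psi_2(a_ib_i,e)$ to a positive multiple of $\Psi_1(a_ib_i,e)$, hence is nonzero only if $\Psi_1(a_ib_i,e)>0$. Thus the $\kappa$ charges come from $\kappa$ pairwise-distinct edges $a_1b_1,\dots,a_\kappa b_\kappa$, each of type (\romannumeral2), each in case-(b), each with $\Psi_1(a_ib_i,e)>0$; moreover each $a_ib_i$ was processed and added to $E_2$ before $st$ (so $\|a_ib_i\|\le\|st\|$ by the non-decreasing processing order) and remains in $E_2$, since the second phase only adds edges. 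Two facts then hold for every witness $a_ib_i$, no matter which one we finally select: since $a_ib_i$ is type-(\romannumeral2), in case-(b), and charges to $e$ under $\Psi_1$, we have $e\in E(\rho_{a_i,b_i}[p,q],\,a_ib_i,\,2\sqrt{10\epsilon})$, and hence $\angle(a_ib_i,e)\le 2\sqrt{10\epsilon}$; and $e$ is a single (sub-)edge of $\rho_{a_i,b_i}[p,q]$ while $\|\rho_{a_i,b_i}[p,q]\|<\|a_ib_i\|$ exactly as in the proof of \Cref{const-charge}, so $\|e\|<\|a_ib_i\|$.

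Next I would apply \Cref{one-edge-per-level-lightness}: for each index $j$ at most one of $a_1b_1,\dots,a_\kappa b_\kappa$ lies in $L_j$, so these $\kappa$ edges sit in $\kappa$ pairwise distinct levels. Now I count the witnesses that violate the two-sided bound. A witness with $\|a_ib_i\|>\tfrac1\kappa\|st\|$ lies in a level $L_j$ with $\beta^{j}\le\|st\|$ and $\beta^{j+1}>\tfrac1\kappa\|st\|$; the integers $j$ meeting both conditions span an interval of length at most $\log_\beta(\kappa\beta)$, so at most $c=O(\log_\beta\kappa)$ witnesses are ``too long''. A witness with $\|a_ib_i\|<\kappa\|e\|$ has $\|e\|<\|a_ib_i\|<\kappa\|e\|$, hence lies in a level with $\beta^{j}<\kappa\|e\|$ and $\beta^{j+1}>\|e\|$, again an interval of at most $\log_\beta(\kappa\beta)$ integers, so at most $c$ witnesses are ``too short''. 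For the fixed constants $\beta=1.01$ and $\kappa=10^4$ at hand, $c\le 10^3\ll \kappa/2$, so $2c<\kappa$ and some witness $a_{i^\star}b_{i^\star}$ satisfies $\kappa\|e\|\le\|a_{i^\star}b_{i^\star}\|\le\tfrac1\kappa\|st\|$. Setting $\chi(e):=a_{i^\star}b_{i^\star}$ then gives an edge of $E_2$ that $\Psi_1$ charged to $e$ before $st$ was processed, with $\angle(\chi(e),e)\le 2\sqrt{10\epsilon}$, which is exactly what the claim asks for.

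The main obstacle is the bookkeeping in the second paragraph: pinning down that the $\kappa$ $\Psi_2$-charges genuinely correspond to $\kappa$ distinct edges that are \emph{simultaneously} type-(\romannumeral2), in case-(b), present in $E_2$, processed before $st$, and $\Psi_1$-charged to $e$. Once that is in place, the rest is a short pigeonhole over geometric length scales that uses only $\log_\beta\kappa\ll\kappa$, together with the angle and ellipsoid-containment facts already established for $\Psi_1$ and in \Cref{one-edge-per-level-lightness}; no new geometry is required.
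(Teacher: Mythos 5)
Your proof is correct and takes essentially the same route as the paper's: the $\kappa$ charges of $\Psi_2$ to $e$ come from $\kappa$ distinct edges of $E_2$ that $\Psi_1$ charged to $e$, these lie on pairwise distinct levels by \Cref{one-edge-per-level-lightness}, and since the ``too short'' range $(\|e\|,\kappa\|e\|)$ and ``too long'' range $(\frac{1}{\kappa}\|st\|,\|st\|]$ together meet at most $2\lceil\log_\beta\kappa\rceil<\kappa$ levels, the pigeonhole principle yields $\chi(e)$ in the good range. The only cosmetic difference is in the angle bound: you obtain $\angle(\chi(e),e)\leq 2\sqrt{10\epsilon}$ directly from the fact that every $\Psi_2$-charging edge is type-(\romannumeral2) in case-(b), while the paper deduces case-(b)-ness of $\chi(e)$ from $\|\chi(e)\|\geq\kappa\|e\|$ (ruling out type-(\romannumeral1) and case-(a) charges via \Cref{const-charge}); both derivations are valid.
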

\begin{proof}
 	According to  \Cref{one-edge-per-level-lightness}, all edges in $E_2$ that charge to $e$ under $\Psi_1$ are on different levels $L_i$. Therefore, by our construction of $\Psi_2$, the total number of levels that conver the range $\left[\frac{1}{\kappa} \|st\|, \kappa\|e\|\right] \subseteq \left[\frac{1}{\kappa} \|st\|, \|st\|\right]\cup \big[\|e\|, \kappa\|e\|\big]$ is at most $2\ceil{\log_\beta\kappa} < \kappa$. By the pigeonhole principle, there exists an edge $\chi(e)$, as claimed.
	
	For the second half of the statement, since $\chi(e)$ charges to $e$ under $\Psi_1$ and $\|\chi(e)\|\geq \kappa \|e\|$, it must be of type-(\romannumeral2) in case-(b). Hence, we have $\angle\brac{\chi(e), e}\leq 2\sqrt{10\epsilon}$ by design of $\Psi_1$.
\end{proof}

Let $P = \{\chi(e) : e\in S\}$. By \Cref{replete} and \Cref{charge-count}, we know that $\|P\|\geq \|S\|$. Next, we will prove that $s t$ is not added to $E_2$ in the second pruning phase. The proof consists of two steps. At a high level, in the first step, we will select a subset of edges $Q\subseteq P$ such that $\|Q\|\geq \Omega(\|st\|)$; in the second step, we will stitch these edges into an $st$-path in $E_2$ of total length at most $(1+\kappa^2\delta)\cdot\|st\|$, thus implying that $s t$ would be excluded from $E_2$.

To construct $Q$ starting with $Q = \emptyset$, iterate overall all indices $i = O(\log \alpha), \hdots, 0$ decrementally while adding edges to $Q$. In the $i$-th iteration, scan the segment $st$ from $s$ to $t$. Whenever we hit the projection $\proj_{st}(\chi(e))$ of some edge $\chi(e)\in P\cap L_i$, add $\chi(e)$ to $Q$, and remove all edges from $P$ that lie in the \EMPH{buffer region} of $\chi(e)$; that is, all edges $\chi(f)\in P$ whose projections $\proj_{st}(\chi(f))$ on $st$ are at distance at most $2\beta^{i+1}$ from $\proj_{st}\brac{\chi(e)}$.

From the construction process of $Q$, we see that the edges in $Q$ have pairwise disjoint projections on the line $st$. 
After we have constructed the edge set $Q$, we can order the edges in $Q$ according to their projections from $s$ to $t$ as $Q = \{a_1b_1, a_2b_2, \hdots, a_\ell b_\ell\}$. Next, let us define a candidate path $\gamma$ between $s$ and $t$ in the current $H_2$ (that does not yet include $st$).
\begin{definition}[path stitching]\label{stitch-path}
We define the $st$-path 
\[
    \gamma=(s\rightsquigarrow a_1)\circ a_1b_1\circ (b_1\rightsquigarrow a_2)\circ a_2b_2\circ\ldots\circ
    (b_{\ell-1}\rightsquigarrow a_\ell) \circ a_\ell b_\ell \circ(b_\ell\rightsquigarrow t),
\]
which contains the edges $a_ib_i$ for all $1\leq i\leq \ell-1$, 
and where $(x\rightsquigarrow  y)$ denotes the shortest path from $x$ to $y$ in the current graph $H_2$. 
\end{definition}
To reach a contradiction, our goal is to show that $\|\gamma\|\leq (1+\kappa^2\delta)\cdot\|st\|$.
First, we  show that the total length of the edges in $Q$ is $\Omega(\|st\|)$.
\begin{claim}\label{shortcuts}
	$\|Q\|\geq  \frac{1}{320}\|st\|$.
\end{claim}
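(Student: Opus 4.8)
The plan is to lower-bound $\|Q\|$ \emph{directly}, by billing the weight of every edge placed into $Q$ against the short edges $f\in S\subseteq\rho_{s,t}$ that its buffer region removes from $P$. The tempting alternative --- showing $\|Q\|=\Omega(\|P\|)$ by bounding, for each $\chi(e)$ added to $Q$, the total weight removed in its buffer --- is a trap: the elements of $P$ are the long companions $\chi(e)$, and their projections onto the line $st$ can overlap with essentially no packing bound (many can land on nearly the same middle window of $st$), so $\|P\|$ may far exceed $\|st\|$ and carries no useful information. The short edges $f$ \emph{do} come with a packing bound: by property~(2) of $\rho_{s,t}$, the projections $\proj_{st}(f)$, $f\in S$, cover every point of $st$ at most twice.

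First I would record the inputs. We are in the case $\|S\|>\tfrac12\|F\|$ of \Cref{lightness}, so \Cref{proj-len} gives $\|S\|>\tfrac1{20}\|st\|$. Each $f\in S\subseteq F$ is one of the edges over which $\Psi_1$ distributes $\|st\|$ in case-(b), so $\angle(f,st)\le 2\sqrt{10\eps}$ and $\|f\|\le(1+O(\eps))\|\proj_{st}(f)\|$. By \Cref{clm:mid-charge-edge}, each $\chi(f)$ satisfies $\kappa\|f\|\le\|\chi(f)\|\le\tfrac1\kappa\|st\|$, is of type-(\romannumeral2) in case-(b), and has $\angle(\chi(f),f)\le 2\sqrt{10\eps}$; since $f$ then lies in the middle portion of the thin ellipsoid $\Gamma_{\chi(f)}$ (semi-minor axis at most $\sqrt\eps\,\|\chi(f)\|$), the segment $\proj_{st}(f)$ lies in a $\sqrt\eps\,\|\chi(f)\|$-neighbourhood of $\proj_{st}(\chi(f))$.

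Then I would split $S=S_Q\sqcup S_{\mathrm{rem}}$ with $S_Q=\{f\in S:\chi(f)\in Q\}$ and carry out two separate charges. For $S_Q$: since $\Psi_1$ is replete (\Cref{replete}) and $\|\chi(f)\|\ge\kappa\|f\|$, we have $\Psi_1(\chi(f),f)\ge\|f\|$, and summing $\sum_g\Psi_1(c,g)=\|c\|$ over the edges $c\in Q$ yields $\sum_{f\in S_Q}\|f\|\le\|Q\|$. For $S_{\mathrm{rem}}$: each such $\chi(f)$ is removed from $P$ inside the buffer region of a unique $\chi(e^*)\in Q$ of some level $i$ (so $\|\chi(e^*)\|\ge\beta^i$ and $\beta^{i+1}\le\beta\|\chi(e^*)\|$); because levels are processed in decreasing order, $\chi(f)$ still lies in $P$ at that moment, hence $\|\chi(f)\|<\beta^{i+1}$, and the buffer condition together with the neighbourhood fact above places $\proj_{st}(f)$ inside an interval $I_{e^*}$ around $\proj_{st}(\chi(e^*))$ of length $O(\beta^{i+1})=O(\|\chi(e^*)\|)$. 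The edges $f$ billed to a fixed $\chi(e^*)$ are distinct edges of $\rho_{s,t}$, so $\sum\|\proj_{st}(f)\|\le 2|I_{e^*}|$ by property~(2) of $\rho_{s,t}$, hence $\sum\|f\|=O(\|\chi(e^*)\|)$; summing over $\chi(e^*)\in Q$ gives $\sum_{f\in S_{\mathrm{rem}}}\|f\|=O(\|Q\|)$.

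Combining, $\|S\|=\sum_{f\in S_Q}\|f\|+\sum_{f\in S_{\mathrm{rem}}}\|f\|\le\big(1+O(1)\big)\|Q\|$; keeping the constants honest --- buffer radius $2\beta^{i+1}\le 2\beta\|\chi(e^*)\|$, the ellipsoid/angle slack of order $\sqrt\eps$, and the factor $2$ from double covering --- I expect this to come out below $16\|Q\|$, so that $\|Q\|>\tfrac1{16}\|S\|>\tfrac1{320}\|st\|$. The step I expect to be the crux is the very first design decision: one must \emph{not} compare $\|Q\|$ with $\|P\|$, where no packing bound is available, but instead route every charge through the short edges $f$, where the at-most-twice covering of $\rho_{s,t}$ is exactly the packing bound that $P$ lacks; once that is settled, the remaining work is routine Euclidean bookkeeping to confine each relevant $\proj_{st}(f)$ to an $O(\|\chi(e^*)\|)$-window around $\proj_{st}(\chi(e^*))$.
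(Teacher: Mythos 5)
Your proof is correct, but it takes a genuinely different route from the paper at the crux step. The paper compares $\|Q\|$ with $\|P\|$: it uses repleteness once, globally, to get $\|P\|\geq\|S\|$, and then argues that each $\chi(f)\in Q\cap L_i$ can only preclude companions of total length at most $15\|\chi(f)\|$ (projections confined to a window of length about $7\beta^{i+1}$, doubled to $14\beta^{i+1}$, then $\beta^{i+2}<15$-scaled), concluding $\|P\|\leq 16\|Q\|$ and hence $\|Q\|\geq\frac1{16}\|S\|\geq\frac1{320}\|st\|$. The doubling step there is justified by the at-most-twice covering property of $\rho_{s,t}$, even though the quantities being summed are projections of the long companions $\chi(e)\in P$, which are not edges of $\rho_{s,t}$; this is exactly the point you flag as a ``trap.'' Your version reroutes the charge through the short edges: you split $S$ into $S_Q$ and $S_{\mathrm{rem}}$, bound $\sum_{f\in S_Q}\|f\|\leq\|Q\|$ by repleteness of $\Psi_1$ (as in \Cref{replete} and \Cref{clm:mid-charge-edge}), and for $S_{\mathrm{rem}}$ you confine $\proj_{st}(f)$ (rather than $\proj_{st}(\chi(f))$) to an $O(\|\chi(e^*)\|)$-window around $\proj_{st}(\chi(e^*))$ via the buffer condition, $\|\chi(f)\|<\beta^{i+1}$, and the ellipsoid containment, and only then invoke the double-cover property of $\rho_{s,t}$ — where it applies verbatim, since the $f$'s are distinct edges of $\rho_{s,t}$. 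The constants work out ($\approx 15.3\|c\|$ per $c\in Q$, below $16$), so you still land at $\|Q\|\geq\frac1{16}\|S\|\geq\frac1{16}\cdot\frac1{20}\|st\|=\frac1{320}\|st\|$, using \Cref{proj-len} and the hypothesis of \Cref{lightness} exactly as the paper does. What your variant buys is that the packing/overlap bound is only ever applied to edges of $\rho_{s,t}$, for which it is literally stated; the paper's per-buffer bound on the total companion mass is stronger than what you prove and, as written, leans on an overlap property of the companions' projections that is not separately established (note that \Cref{one-edge-per-level-lightness} limits same-level companions only per target edge, not per window). The price you pay is the extra (routine) transfer step through the ellipsoid $\Gamma_{\chi(f)}$ to relate the position of $f$ to that of $\chi(f)$.
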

\begin{proof}
    By \Cref{proj-len} and our assumption in \Cref{lightness}, we have $\|S\|\geq \frac12\, \|F\|\geq \frac{1}{20}\|st\|$. So it suffices to show that $\|Q\|\geq \frac{1}{16}\|S\|$. According to the construction of $Q$, for every edge $e\in S$, either $\chi(e)\in P$ was added to $Q$, or $\chi(e)$ was removed from $P$ because its projection $\proj_{st}(\chi(e))$ on line $st$ is at most $2\beta^{i+1}$ far away from the projection $\proj_{st}(\chi(f))$ of another edge $\chi(f)\in Q\cap \brac{L_i\cup L_{i+1}\cup\cdots}$ for some level $i\in \{0,\ldots ,O(\log \alpha)\}$. In the latter case, let us associate this edge $\chi(e)$ with $\chi(f)$, and let $P_f$ be the set of all edges associated with $\chi(f)$ under this definition. Since the construction procedure of $Q$ enumerates all indices $i$ from large to small, then $\|\chi(e)\| < \beta\cdot \|\chi(f)\|$ for all $\chi(e)\in P_f$. Since each point on segment $st$ is covered by projections of at most two  edges in $F$, the total length $\sum_{\chi(e)\in P_f}\|\proj_{st}(\chi(e))\|$ is bounded by $14\beta^{i+1}$.

    Noticing that $\angle\brac{e, \chi(e)}, \angle\brac{e, st}\leq 2\sqrt{10\eps}$, we have $\angle(\chi(e), st)\leq 4\sqrt{10\eps}$. Therefore, we have
    $$\|\proj_{st}(\chi(e))\|\geq \cos(4\sqrt{10\eps})\cdot \|\chi(e)\| = \brac{1-80\eps + O(\eps^2)}\cdot \|\chi(e)\| \geq \beta^{-1} \|\chi(e)\|.$$
    The last inequality holds when $\eps$ is below a certain constant threshold. Therefore, we have $14\beta^{i+1}\geq \sum_{\chi(e)\in P_f}\|\proj_{st}(\chi(e))\|\geq \beta^{-1}\sum_{\chi(e)\in P_f}\|\chi(e)\|$, or $\sum_{\chi(e)\in P_f}\|\chi(e)\|\leq 14\beta^{i+2} < 15\|\chi(f)\|$. In other words, each edge $\chi(f)$ added to $Q$ precludes a set of edges in $P$ from joining $Q$ which have total weight at most $15\|\chi(f)\|$, that is,
\begin{align*}
    P&=Q\cup \bigcup_{\chi(e)\in Q} P_{\chi(e)} 
        = \bigcup_{\chi(e)\in Q} 
        \left(\{\chi(e)\}\cup P_{\chi(e)}\right),\\
    \|P\| &= \sum_{\chi(e)\in Q} 
        \big( \|\chi(e)\| + \|P_{\chi(e)}\|\big) 
          \leq \sum_{\chi(e)\in Q} 16\, \|\chi(e)\| 
          =16\, \|Q\|.
\end{align*}
We can conclude that
    $$\|Q\|\geq \frac{1}{16}\|P\|
           \geq \frac{1}{16}\|S\|
           \geq \frac{1}{320}\|st\|.
           \qedhere$$
\end{proof}

To analyze the total length of $\gamma$, let us subdivide the path into multiple parts and analyze each part separately. For each edge $a_i b_i\in Q$, assume that $e_i = z_i w_i$ is the edge in $S$ such that $\chi(e_i) = a_i b_i$. Draw two $(d-1)$-dimensional hyper-planes perpendicular to line $st$ through $z_i$ and $w_i$ which intersects lines $st$ and $a_ib_i$ at $c_i, d_i$ and $p_i, q_i$, respectively. See \Cref{proj} for an illustration.
\begin{figure}
	\centering
	\begin{tikzpicture}[thick,scale=1.2]
	\draw (0, 0) node(1)[circle, draw, fill=black!50,
	inner sep=0pt, minimum width=6pt, label = $s$] {};
	\draw (12, 0) node(2)[circle, draw, fill=black!50,
	inner sep=0pt, minimum width=6pt,label = $t$] {};
	
	\draw (4, 3) node(3)[circle, draw, fill=black!50,
	inner sep=0pt, minimum width=6pt, label = $a_i$] {};
	\draw (8, 4) node(4)[circle, draw, fill=black!50,
	inner sep=0pt, minimum width=6pt,label = $b_i$] {};
	
	\draw (5.5, 2) node(5)[circle, draw, fill=black!50,
	inner sep=0pt, minimum width=6pt, label = {180: {$z_i$}}] {};
	\draw (6.5, 1.7) node(6)[circle, draw, fill=black!50,
	inner sep=0pt, minimum width=6pt,label = {0: {$w_i$}}] {};
	
	\draw (4, 0) node(7)[circle, draw, fill=black,
	inner sep=0pt, minimum width=4pt, label = {-90: {$u_i$}}] {};
	\draw (8, 0) node(8)[circle, draw, fill=black,
	inner sep=0pt, minimum width=4pt,label = {-90: {$v_i$}}] {};
	
	\draw (5.5, 0) node(9)[circle, draw, fill=black,
	inner sep=0pt, minimum width=4pt, label = {-90: {$p_i$}}] {};
	\draw (6.5, 0) node(10)[circle, draw, fill=black,
	inner sep=0pt, minimum width=4pt,label = {-90: {$q_i$}}] {};
		
	\draw [line width = 0.5mm] (1) to (2);
	\draw [line width = 0.5mm] (3) to (4);
	\draw [line width = 0.5mm] (5) to (6);
	
	\draw [line width = 0.2mm] (3) to (7);
	\draw [line width = 0.2mm] (4) to (8);
	
	\node(11)[circle, draw, fill=black,
	inner sep=0pt, minimum width=4pt, label = $c_i$] at (intersection of 5--9 and 3--4){};
	\node(12)[circle, draw, fill=black,
	inner sep=0pt, minimum width=4pt, label = $d_i$] at (intersection of 6--10 and 3--4){};
	
	\draw [line width = 0.2mm] (5) to (11);
	\draw [line width = 0.2mm] (6) to (12);
	\draw [line width = 0.2mm] (5) to (9);
	\draw [line width = 0.2mm] (6) to (10);
\end{tikzpicture}
	\caption{Auxiliary lines and points $c_i$, $d_i$, $p_i$, $q_i$, $u_i$, and $v_i$ to assist our analysis.}\label{proj}
\end{figure}
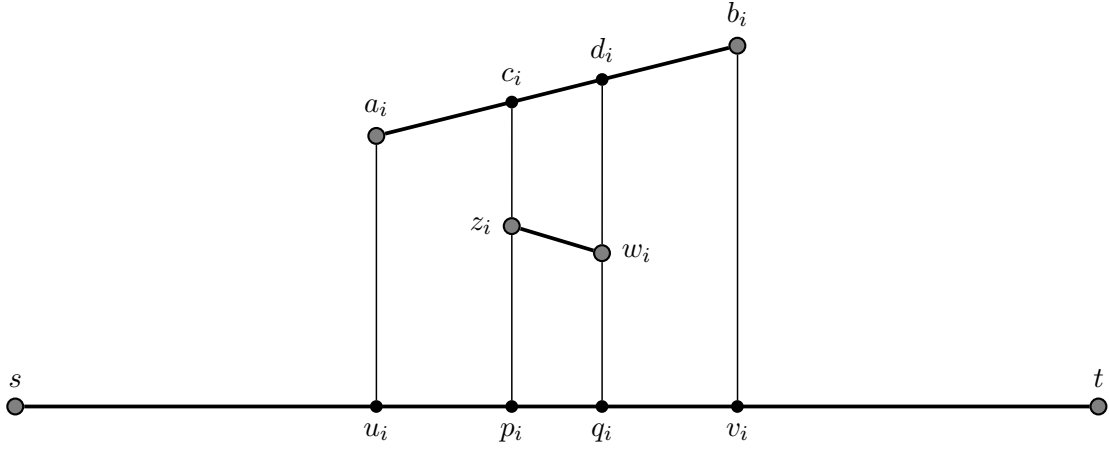

\begin{claim}\label{projection1}
    Both $c_i$ and $d_i$ lie on the segment $a_ib_i$; furthermore, $\min\{\|a_ic_i\|, \|d_ib_i\|\}\geq 0.25\,\|a_ib_i\|$.
\end{claim}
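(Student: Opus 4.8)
The plan is to first pin down the fractional positions along the segment $a_ib_i$ of the orthogonal projections $\proj_{a_ib_i}(z_i)$ and $\proj_{a_ib_i}(w_i)$, and then to correct for the tiny angle between the lines $a_ib_i$ and $st$ to locate $c_i$ and $d_i$. Write $D'=\|a_ib_i\|$. I would begin by assembling the structure furnished by \Cref{clm:mid-charge-edge}: the edge $a_ib_i=\chi(e_i)$ charges to $e_i=z_iw_i$ under $\Psi_1$, with $\|a_ib_i\|\ge\kappa\|z_iw_i\|$ and $\angle(a_ib_i,z_iw_i)\le 2\sqrt{10\epsilon}$, and — as noted in the proof of that claim — $a_ib_i$ is a type-(\romannumeral2) edge treated in case-(b) of $\Psi_1$. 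Consequently $z_iw_i$ is an edge of $\rho_{a_i,b_i}[p',q']$, where $p'$ and $q'$ are the last vertex of $\rho_{a_i,b_i}$ in $A_{a_i,b_i}$ and the first vertex in $B_{a_i,b_i}$, respectively; in particular $z_i$ and $w_i$ lie on the sub-path of the $(1+\epsilon)$-spanner path $\pi_{a_i,b_i}$ (the $E_\light$-path associated with $a_ib_i$) between $p'$ and $q'$, and since $\|\pi_{a_i,b_i}\|\le(1+\epsilon)D'$ every point of $\pi_{a_i,b_i}$ — in particular $z_i,w_i,p',q'$ — lies in the ellipsoid $\Gamma_{a_i,b_i}$. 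Finally, $z_iw_i\in F$ gives $\angle(z_iw_i,st)\le 2\sqrt{10\epsilon}$, so the triangle inequality for angles yields $\angle(a_ib_i,st)\le 4\sqrt{10\epsilon}$.

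The heart of the proof — and the step I expect to be the main obstacle — is to confine the projection onto line $a_ib_i$ of \emph{every} point $x$ of $\pi_{a_i,b_i}$ lying between $p'$ and $q'$ to the ``middle band'': its fractional position $m\in[0,1]$ from $a_i$ satisfies $m\in[\tfrac38-\tfrac1{50}-\tfrac\epsilon2,\ \tfrac58+\tfrac1{50}+\tfrac\epsilon2]$. Because $p'\in A_{a_i,b_i}$ and $q'\in B_{a_i,b_i}$ (and both lie in $\Gamma_{a_i,b_i}$, so their feet of perpendicular fall on the segment), their fractional positions satisfy $\phi_p\in[\tfrac38-\tfrac1{50},\tfrac38+\tfrac1{50}]$ (for $p'$) and $\phi_q\in[\tfrac58-\tfrac1{50},\tfrac58+\tfrac1{50}]$ (for $q'$). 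I would then split $\pi_{a_i,b_i}$ at the three consecutive points $p',x,q'$ into four sub-paths, and lower-bound the length of each by the distance between the $a_ib_i$-projections of its endpoints, to get
\[
(1+\epsilon)D'\ \ge\ \|\pi_{a_i,b_i}\|\ \ge\ \bigl(\phi_p+|\phi_p-m|+|m-\phi_q|+(1-\phi_q)\bigr)D',
\]
and a three-way case split on whether $m\le\phi_p$, $\phi_p\le m\le\phi_q$, or $m\ge\phi_q$ then forces $m\in[\phi_p-\tfrac\epsilon2,\phi_q+\tfrac\epsilon2]$. This is precisely where the tight length budget $(1+\epsilon)D'$ of $\pi_{a_i,b_i}$ is played against the placement of $A_{a_i,b_i}$ and $B_{a_i,b_i}$ at fractions $\approx\tfrac38$ and $\approx\tfrac58$; everything else is soft. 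Taking $x=z_i$ and $x=w_i$ gives the band bounds for $\proj_{a_ib_i}(z_i)$ and $\proj_{a_ib_i}(w_i)$.

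The remaining step is routine small-angle geometry. Since $z_i\in\Gamma_{a_i,b_i}$ and, by the previous step, $\proj_{a_ib_i}(z_i)$ sits at a fractional position bounded away from $0$ and $1$, the inequality $\|a_iz_i\|+\|z_ib_i\|-D'\le\epsilon D'$ — whose left side is quadratic in the distance from $z_i$ to line $a_ib_i$ once the foot of the perpendicular is interior — forces $\dist(z_i,\text{line }a_ib_i)\le 2\sqrt{\epsilon}\,D'$, and likewise for $w_i$. Now I would set up coordinates with the $x_1$-axis along $st$; since $\cos\angle(a_ib_i,st)\ge\cos(4\sqrt{10\epsilon})\ge\tfrac12$, the $x_1$-coordinate is a near-isometric parameter along line $a_ib_i$, so the point $c_i$ of line $a_ib_i$ sharing the $x_1$-coordinate of $z_i$ lies at a fractional position within $O(\sqrt\epsilon)$ of that of $\proj_{a_ib_i}(z_i)$. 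Hence $\|a_ic_i\|\in[\tfrac38-\tfrac1{50}-O(\sqrt\epsilon),\ \tfrac58+\tfrac1{50}+O(\sqrt\epsilon)]\cdot D'\subseteq(0,D')$, so $c_i$ lies strictly between $a_i$ and $b_i$ and $\|a_ic_i\|\ge(\tfrac38-\tfrac1{50}-O(\sqrt\epsilon))D'\ge 0.25\,D'$ because $\epsilon$ is small by \Cref{eq:relationeps}; the same argument for $w_i$ places $d_i$ strictly between $a_i$ and $b_i$ with $\|a_id_i\|\le(\tfrac58+\tfrac1{50}+O(\sqrt\epsilon))D'$, whence $\|d_ib_i\|=D'-\|a_id_i\|\ge 0.25\,D'$. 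Combining the two bounds gives $\min\{\|a_ic_i\|,\|d_ib_i\|\}\ge 0.25\|a_ib_i\|$, as required.
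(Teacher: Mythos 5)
Your proof is correct and follows essentially the same route as the paper's: locate $\proj_{a_ib_i}(z_i)$ in the middle band guaranteed by the charging scheme, bound $\dist(z_i,\mathrm{line}\;a_ib_i)\leq 2\sqrt{\eps}\,\|a_ib_i\|$ via the ellipsoid with foci $a_i,b_i$, and then transfer from the orthogonal projection to $c_i$ using the small angle between $a_ib_i$ and $st$. The only differences are minor: you re-derive the band containment by an explicit length-budget/path-splitting argument (the paper simply invokes the design of $\Psi_1$, as in \Cref{angle}), and your near-isometry argument gives $\|c_ic_i'\| = O(\sqrt{\eps})\,\|a_ib_i\|$ rather than the paper's $O(\eps)\,\|a_ib_i\|$, which is coarser but still leaves ample slack against the $0.25$ threshold.
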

\begin{proof}
	It suffices to focus only on the inequality $\|a_ic_i\|\geq 0.25\,\|a_ib_i\|$. Let $c_i' = \proj_{a_ib_i}(z_i)$ be the projection of $z_i$ on line $a_ib_i$. As $a_ib_i$ is charging to $z_iw_i$, point $c_i'$ should land on the segment $a_ib_i$.
 
    Since $\angle(a_ib_i, st)\leq 4\sqrt{10\epsilon}$ and $z_ic_i$ lies in the hyperplane perpendicular to $st$, we have
	$$\|c_ic_i'\| = \left|\overrightarrow{c_iz_i} \cdot \frac{\overrightarrow{a_ib_i}}{\|a_ib_i\|} \right|\leq \sin(4\sqrt{10\epsilon})\cdot \|c_iz_i\| < 4\sqrt{10\epsilon} \cdot\|c_iz_i\|,$$
	which yields  
    $$\|c_ic_i'\|\leq \frac{4\sqrt{10\epsilon}}{\sqrt{1 - 160\epsilon}}\cdot \|a_ib_i\| .$$
	As $z_i$ lies in the ellipsoid $\left\{z\in \mathbb{R}^d : \|a_iz\| + \|b_iz\|\leq (1+\epsilon)\|a_ib_i\| \right\}$, we have $\|z_ic_i'\| < 2\sqrt{\epsilon}\cdot \|a_ib_i\|$. Therefore, $\|c_ic_i'\| < \frac{4\sqrt{10\epsilon}}{\sqrt{1 - 160\epsilon}}\|a_ib_i\| < 20\epsilon \|a_ib_i\|$. As $\|a_ic_i'\| / \|a_ib_i\| \in [0.355, 0.665]$, we have 
    $$\|a_ic_i\| \geq \|a_ic_i'\| - \|c_ic_i'\| > 0.355\|a_ib_i\| - 20\epsilon \|a_ib_i\| > 0.25\|a_ib_i\| .\qedhere$$
\end{proof}

\begin{claim}\label{projection2}
    Let $u_i = \proj_{st}(a_i)$ and $v_i = \proj_{st}(b_i)$; see \Cref{proj}. Then, both $u_i$ and $v_i$ lie on segment $st$. Furthermore, we have $\|su_i\|, \|v_it\| > \frac{1}{3}\|st\|$ and $\|a_iu_i\|, \|b_iv_i\|\leq 5\sqrt{\epsilon}\cdot\|st\|$.
\end{claim}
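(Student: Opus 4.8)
The plan is to trace the edge $a_ib_i=\chi(e_i)$ back to its correspondent $e_i=z_iw_i\in S$ — which by \Cref{clm:mid-charge-edge} is much shorter, $\kappa\|z_iw_i\|\le\|a_ib_i\|\le\tfrac1\kappa\|st\|$, and almost parallel both to $st$ and to $a_ib_i$ — to locate $z_i$ and $w_i$ precisely with respect to the segment $st$, and then to transfer the conclusion to $a_i$ and $b_i$ at the cost of an additive $O(\|a_ib_i\|)=O(\|st\|/\kappa)$ error, negligible against every threshold in the statement. I would place the line through $s,t$ on the first coordinate axis with $s$ at the origin, set $D=\|st\|$, and write $(x)_1$ for the first coordinate of a point $x$; thus $\proj_{st}(x)$ lies on segment $st$ iff $(x)_1\in[0,D]$, in which case $\|s\,\proj_{st}(x)\|=(x)_1$, while $\|x-\proj_{st}(x)\|$ is the distance from $x$ to the line.

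First I would record four facts. (i) $z_i,w_i$ lie on the polygonal path $\pi_{s,t}\supseteq\rho_{s,t}$, so $\|sz_i\|+\|z_it\|\le\|\pi_{s,t}\|\le(1+\epsilon)D$ and likewise for $w_i$; hence $z_i,w_i\in\Gamma_{s,t}$, so their distance to the line is at most the semi-minor axis $\tfrac D2\sqrt{(1+\epsilon)^2-1}<\sqrt\epsilon\,D$. (ii) Since $st$ is type-(b), $z_iw_i$ is an edge of $\pi_{s,t}[p,q]$, where (as in case-(b) of $\Psi_1$, and not the $p_i,q_i$ of \Cref{proj}) $p$ is the last vertex of $\rho_{s,t}$ in $A_{s,t}$ and $q$ the first in $B_{s,t}$; then $(p)_1\in[\tfrac38-\tfrac1{50},\tfrac38+\tfrac1{50}]D$, $(q)_1\in[\tfrac58-\tfrac1{50},\tfrac58+\tfrac1{50}]D$, and as $\|\pi_{s,t}[p,q]\|\le\|\pi_{s,t}\|-\|sp\|-\|qt\|\le(1+\epsilon)D-(p)_1-(D-(q)_1)=\epsilon D+((q)_1-(p)_1)$, every vertex $v$ of $\pi_{s,t}[p,q]$ satisfies $|(p)_1-(v)_1|+|(v)_1-(q)_1|\le\|\pi_{s,t}[p,v]\|+\|\pi_{s,t}[v,q]\|=\|\pi_{s,t}[p,q]\|$, which forces $(v)_1\in[(p)_1-\tfrac\epsilon2 D,\ (q)_1+\tfrac\epsilon2 D]\subseteq[0.354D,0.646D]$; in particular $(z_i)_1,(w_i)_1\in[0.354D,0.646D]$. (iii) By \Cref{clm:mid-charge-edge}, $\|a_ib_i\|\le D/\kappa$ and $\angle(a_ib_i,z_iw_i)\le2\sqrt{10\epsilon}$, which with $\angle(z_iw_i,st)\le2\sqrt{10\epsilon}$ (as $z_iw_i\in E(\rho_{s,t}[p,q],st,2\sqrt{10\epsilon})$) gives $\angle(a_ib_i,st)\le4\sqrt{10\epsilon}$. (iv) As $a_ib_i$ charges to $z_iw_i$ under $\Psi_1$, necessarily via case-(b) since $\|a_ib_i\|\ge\kappa\|z_iw_i\|$, the reasoning in the proof of \Cref{projection1} applies to the pair $a_ib_i,z_iw_i$: the perpendicular feet $\proj_{a_ib_i}(z_i),\proj_{a_ib_i}(w_i)$ lie on the segment $a_ib_i$, and $z_i,w_i$ are within distance $\sqrt\epsilon\|a_ib_i\|$ of line $a_ib_i$, whence $\|a_iz_i\|,\|b_iw_i\|\le(1+\sqrt\epsilon)\|a_ib_i\|<2D/\kappa$.

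With these in hand, the first-coordinate claims are immediate: the first coordinate is $1$-Lipschitz, so $|(a_i)_1-(z_i)_1|,|(b_i)_1-(w_i)_1|<2D/\kappa$, and combining with (ii) and $\kappa=10^4$ puts $(a_i)_1,(b_i)_1\in[0.354D-2D/\kappa,\ 0.646D+2D/\kappa]\subseteq(\tfrac13D,\tfrac23D)\subseteq(0,D)$; that is exactly that $u_i,v_i$ lie on segment $st$ with $\|su_i\|=(a_i)_1>\tfrac13D$ and $\|v_it\|=D-(b_i)_1>\tfrac13D$. For the distance bound, $\|a_iu_i\|=\|a_i-\proj_{st}(a_i)\|$ equals the norm of the component orthogonal to line $st$ of $a_i-\proj_{st}(z_i)=(a_i-z_i)+(z_i-\proj_{st}(z_i))$; splitting $a_i-z_i$ into a vector along $a_ib_i$ (of norm $\le\|a_ib_i\|$, hence orthogonal component $\le\|a_ib_i\|\sin\angle(a_ib_i,st)\le(D/\kappa)\,4\sqrt{10\epsilon}$) plus the offset of $z_i$ from line $a_ib_i$ (of norm $\le\sqrt\epsilon\|a_ib_i\|\le\sqrt\epsilon D/\kappa$), and using $\|z_i-\proj_{st}(z_i)\|<\sqrt\epsilon D$ from (i), gives $\|a_iu_i\|<\sqrt\epsilon D\bigl(1+4\sqrt{10}/\kappa+1/\kappa\bigr)<5\sqrt\epsilon D$, and $\|b_iv_i\|<5\sqrt\epsilon D$ symmetrically. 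The one step calling for real care is (ii): a vertex of $\pi_{s,t}[p,q]$ whose first coordinate drops below $(p)_1$ (or rises above $(q)_1$) forces the subpath into a round trip in projection costing an extra $2\,|(p)_1-(v)_1|$ in length, which the $\epsilon D$ slack in the budget caps at $\epsilon D/2$; since $A_{s,t},B_{s,t}$ are anchored at $\tfrac38 D,\tfrac58 D$ and $\|a_ib_i\|\le D/\kappa$ is tiny, the resulting windows for $(a_i)_1,(b_i)_1$ sit comfortably inside $(\tfrac13 D,\tfrac23 D)$.
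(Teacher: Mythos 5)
Your proof is correct and follows essentially the same route as the paper: you use the charging scheme to pin down where $z_i,w_i$ project on $st$, transfer this to $a_i,b_i$ at an additive cost of $O(\|a_ib_i\|)=O(\|st\|/\kappa)$, and bound $\|a_iu_i\|$ by combining the angle bound $\angle(a_ib_i,st)\leq 4\sqrt{10\eps}$ with the two ellipsoid containments (of $z_i$ in $\Gamma_{a_i,b_i}$ and in $\Gamma_{s,t}$). The only differences are bookkeeping ones: you re-derive the projection-window property of vertices of $\pi_{s,t}[p,q]$ (which the paper cites as a property of $\Psi_1$, cf.\ \Cref{angle}) and work directly with $\|a_iz_i\|$ and the foot of the perpendicular on line $a_ib_i$ instead of the paper's auxiliary points $c_i,p_i$ and \Cref{projection1}.
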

\begin{proof}
	By the charging scheme $\Psi_1$, we know that $p_i$ lies on $st$ and $\|sp_i\| \geq \brac{\frac{3}{8} - \frac{1}{50}}\|st\|$. Therefore, we have
	$$\|su_i\| \geq \|sp_i\| - \|a_ic_i\| >  \brac{\frac{3}{8} - \frac{1}{50}}\|st\| - \frac{1}{\kappa}\|st\| > \frac13\,\|st\|.$$
	Similarly we can prove that $\|v_ib\| >\frac13\,\|st\|$, and so both $u_i$ and $v_i$ lie on segment $st$.
	
	As for the length of $a_iu_i$, let $M$ be the projection matrix onto the hyperplane perpendicular to $st$, then the triangle inequality combined with $\|a_ib_i\| \leq \frac{1}{\kappa}\|st\|$ yields
\begin{align*}
		\|a_iu_i\| &= \|M(\overrightarrow{a_ic_i} + \overrightarrow{c_iz_i} + \overrightarrow{z_ip_i} + \overrightarrow{p_iu_i})\| \leq \|M\cdot \overrightarrow{a_ic_i}\| + \|c_iz_i\| + \|z_ip_i\|\\
		&\leq 4\sqrt{10\epsilon}\cdot \|a_ib_i\| + \sqrt{\epsilon}\cdot \|a_ib_i\| + \sqrt{\epsilon}\cdot \|st\| < 5\sqrt{\epsilon}\cdot \|st\|. \qedhere
\end{align*}
\end{proof}

Next, we are going to show that stitching edges $a_ib_i$ and $a_{i+1}b_{i+1}$ via a shortest path $b_i\rightsquigarrow a_{i+1}$ in $H_2$ does not incur too much error compared to $\pi_{s, t}[b_i,a_{i+1}]$ in $E_{\light}$. Refer to \Cref{stitch}.

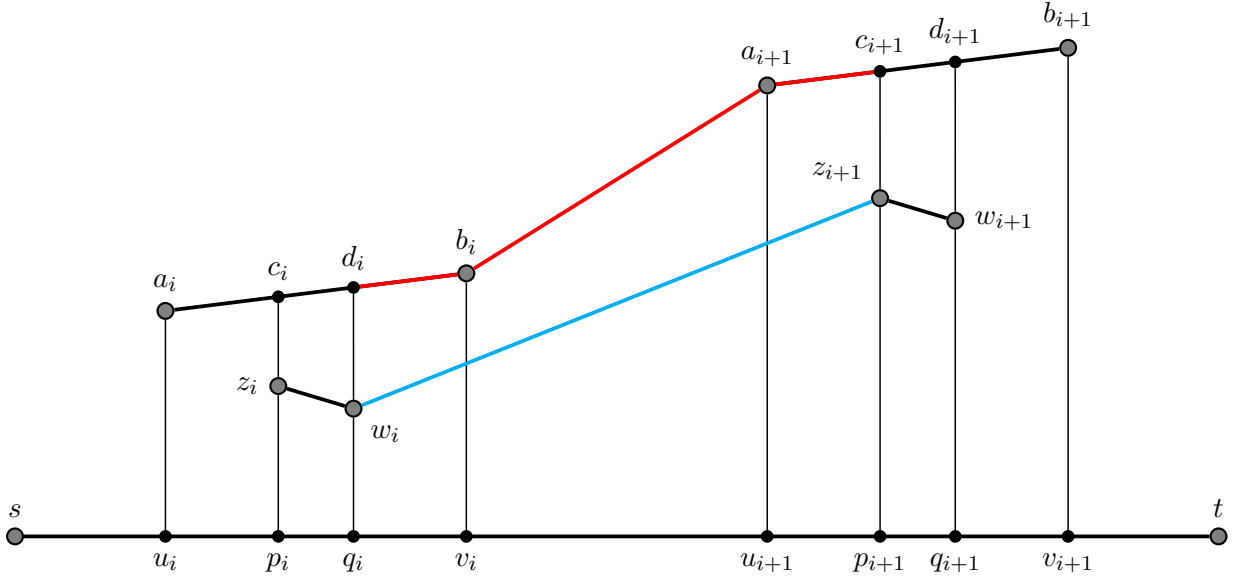
\begin{figure}
	\centering
	\begin{tikzpicture}[thick,scale=1]
	\draw (0, 0) node(1)[circle, draw, fill=black!50,
	inner sep=0pt, minimum width=6pt, label = $s$] {};
	\draw (16, 0) node(2)[circle, draw, fill=black!50,
	inner sep=0pt, minimum width=6pt,label = $t$] {};
	
	\draw (2, 3) node(3)[circle, draw, fill=black!50,
	inner sep=0pt, minimum width=6pt, label = $a_i$] {};
	\draw (6, 3.5) node(4)[circle, draw, fill=black!50,
	inner sep=0pt, minimum width=6pt,label = $b_i$] {};
	
	\draw (3.5, 2) node(5)[circle, draw, fill=black!50,
	inner sep=0pt, minimum width=6pt, label = {180: {$z_i$}}] {};
	\draw (4.5, 1.7) node(6)[circle, draw, fill=black!50,
	inner sep=0pt, minimum width=6pt,label = {-45: {$w_i$}}] {};
	
	\draw (2, 0) node(7)[circle, draw, fill=black,
	inner sep=0pt, minimum width=4pt, label = {-90: {$u_i$}}] {};
	\draw (6, 0) node(8)[circle, draw, fill=black,
	inner sep=0pt, minimum width=4pt,label = {-90: {$v_i$}}] {};
	
	\draw (3.5, 0) node(9)[circle, draw, fill=black,
	inner sep=0pt, minimum width=4pt, label = {-90: {$p_i$}}] {};
	\draw (4.5, 0) node(10)[circle, draw, fill=black,
	inner sep=0pt, minimum width=4pt,label = {-90: {$q_i$}}] {};
	
	\draw [line width = 0.5mm] (1) to (2);
	\draw [line width = 0.5mm] (3) to (4);
	\draw [line width = 0.5mm] (5) to (6);
	
	\draw [line width = 0.2mm] (3) to (7);
	\draw [line width = 0.2mm] (4) to (8);
	
	\node(11)[circle, draw, fill=black,
	inner sep=0pt, minimum width=4pt, label = $c_i$] at (intersection of 5--9 and 3--4){};
	\node(12)[circle, draw, fill=black,
	inner sep=0pt, minimum width=4pt, label = $d_i$] at (intersection of 6--10 and 3--4){};
	
	\draw [line width = 0.2mm] (5) to (11);
	\draw [line width = 0.2mm] (6) to (12);
	\draw [line width = 0.2mm] (5) to (9);
	\draw [line width = 0.2mm] (6) to (10);
	
	\draw (10, 6) node(13)[circle, draw, fill=black!50,
	inner sep=0pt, minimum width=6pt, label = $a_{i+1}$] {};
	\draw (14, 6.5) node(14)[circle, draw, fill=black!50,
	inner sep=0pt, minimum width=6pt,label = $b_{i+1}$] {};
	
	\draw (11.5, 4.5) node(15)[circle, draw, fill=black!50,
	inner sep=0pt, minimum width=6pt, label = {135: {$z_{i+1}$}}] {};
	\draw (12.5, 4.2) node(16)[circle, draw, fill=black!50,
	inner sep=0pt, minimum width=6pt,label = {0: {$w_{i+1}$}}] {};
	
	\draw (10, 0) node(17)[circle, draw, fill=black,
	inner sep=0pt, minimum width=4pt, label = {-90: {$u_{i+1}$}}] {};
	\draw (14, 0) node(18)[circle, draw, fill=black,
	inner sep=0pt, minimum width=4pt,label = {-90: {$v_{i+1}$}}] {};
	
	\draw (11.5, 0) node(19)[circle, draw, fill=black,
	inner sep=0pt, minimum width=4pt, label = {-90: {$p_{i+1}$}}] {};
	\draw (12.5, 0) node(20)[circle, draw, fill=black,
	inner sep=0pt, minimum width=4pt,label = {-90: {$q_{i+1}$}}] {};

	\draw [line width = 0.5mm] (13) to (14);
	\draw [line width = 0.5mm] (15) to (16);
	
	\draw [line width = 0.2mm] (13) to (17);
	\draw [line width = 0.2mm] (14) to (18);
	
	\node(21)[circle, draw, fill=black,
	inner sep=0pt, minimum width=4pt, label = $c_{i+1}$] at (intersection of 15--19 and 13--14){};
	\node(22)[circle, draw, fill=black,
	inner sep=0pt, minimum width=4pt, label = $d_{i+1}$] at (intersection of 16--20 and 13--14){};
	
	\draw [line width = 0.2mm] (15) to (21);
	\draw [line width = 0.2mm] (16) to (22);
	\draw [line width = 0.2mm] (15) to (19);
	\draw [line width = 0.2mm] (16) to (20);
	
	\draw [line width = 0.5mm, color = red] (4) to (13);
	\draw [line width = 0.5mm, color = red] (4) to (12);
	\draw [line width = 0.5mm, color = red] (13) to (21);
	\draw [line width = 0.5mm, color = cyan] (6) to (15);
\end{tikzpicture}
	\caption{Comparing the error of the stitched path with the error of $\rho_{s, t}$.}\label{stitch}
\end{figure}

\begin{claim}\label{stitch0}
	For any index $i$, we have
	$$\|c_id_i\| - \|p_iq_i\|\leq 100\eps\|p_iq_i\| \leq \brac{\|z_iw_i\| - \|p_iq_i\|} + 100\eps \|p_iq_i\|.$$
\end{claim}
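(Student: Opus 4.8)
The plan is to route both inequalities through the single segment $p_iq_i$ on the line $st$, exploiting that it admits two descriptions: it is simultaneously the orthogonal projection of the light‑path edge $z_iw_i$ onto $st$, and the orthogonal projection of the chord $c_id_i$ of the substitute edge $a_ib_i=\chi(z_iw_i)$ onto $st$. The middle quantity $100\eps\|p_iq_i\|$ is just a slack term, so the whole claim reduces to two short facts about orthogonal projections together with the angle bounds already established.

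For the right‑hand inequality I would argue that it is equivalent to $\|p_iq_i\|\le\|z_iw_i\|$, which is immediate: by construction $p_i$ and $q_i$ are the feet of the perpendiculars from $z_i$ and $w_i$ onto line $st$, so $p_iq_i=\proj_{st}(z_iw_i)$, and orthogonal projection onto a line is non‑expanding. Adding $100\eps\|p_iq_i\|$ to both sides of $\|p_iq_i\|\le\|z_iw_i\|$ then yields $(\|z_iw_i\|-\|p_iq_i\|)+100\eps\|p_iq_i\|\ge 100\eps\|p_iq_i\|$, as claimed.

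For the left‑hand inequality I would use that $c_i,d_i$ lie on line $a_ib_i$ (this is how they were introduced, cf.\ \Cref{projection1}) and that their feet of perpendicular onto $st$ are exactly $p_i,q_i$, so $\|p_iq_i\|=\cos\theta\cdot\|c_id_i\|$ with $\theta=\angle(a_ib_i,st)$ the acute angle between the two lines. To bound $\theta$ I would chain two angle estimates already available: $\angle(a_ib_i,z_iw_i)\le 2\sqrt{10\eps}$ from \Cref{clm:mid-charge-edge} (since $a_ib_i=\chi(z_iw_i)$), and $\angle(z_iw_i,st)\le 2\sqrt{10\eps}$ because $z_iw_i=e_i\in S\subseteq F$ and, $st$ being type‑($\romannumeral2$) in case‑(b), every edge of $F$ lies in $E(\rho_{s,t}[p,q],st,2\sqrt{10\eps})$ by the definition of $\Psi_1$; the triangle inequality for angles gives $\theta\le 4\sqrt{10\eps}$, hence $1-\cos\theta\le\theta^2/2\le 80\eps$. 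Then $\|c_id_i\|-\|p_iq_i\|=\|p_iq_i\|\cdot\frac{1-\cos\theta}{\cos\theta}\le\|p_iq_i\|\cdot\frac{80\eps}{1-80\eps}<100\eps\|p_iq_i\|$ for $\eps$ below the standing threshold, which is the left‑hand inequality.

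The claim is essentially a "two interpretations of one projected segment'' observation, so I do not anticipate a real obstacle; the only point that needs care is making the angle chain invoke precisely the hypotheses under which $\Psi_1$ charged $st$ to $z_iw_i$ (type‑($\romannumeral2$), case‑(b), hence angle $\le 2\sqrt{10\eps}$ to $st$) and the defining property of $\chi$, after which the constant $100$ comes out with room to spare.
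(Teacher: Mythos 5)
Your proposal is correct and follows essentially the same route as the paper's proof: the left inequality comes from $\|p_iq_i\|=\cos\theta\cdot\|c_id_i\|$ with $\theta=\angle(a_ib_i,st)\le 4\sqrt{10\eps}$, obtained by chaining $\angle(a_ib_i,z_iw_i)\le 2\sqrt{10\eps}$ and $\angle(z_iw_i,st)\le 2\sqrt{10\eps}$ exactly as the paper does, and the right inequality reduces to $\|p_iq_i\|\le\|z_iw_i\|$ by non-expansiveness of orthogonal projection, a step the paper leaves implicit in its ``hence''. No gap; your version merely makes these two elementary facts explicit.
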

\begin{proof}
    As $\angle(c_id_i, p_iq_i) \leq 4\sqrt{10\eps}$, we have
    $$\|c_id_i\| \leq \frac{\|p_iq_i\|}{\cos(4\sqrt{10\eps})}\leq \brac{1 + 80\eps + O(\eps^2)}\|p_iq_i\| < (1+100\eps)\|p_iq_i\|.$$
    Hence, we have
    $$\|c_id_i\| - \|p_iq_i\|\leq 100\eps\|p_iq_i\| \leq \brac{\|z_iw_i\| - \|p_iq_i\|} + 100\eps \|p_iq_i\|.
    \qedhere$$
\end{proof}

\begin{claim}\label{stitch1}
	For any index $i$, we have
	$$\|d_ib_i\circ b_ia_{i+1}\circ a_{i+1}c_{i+1}\| - \|q_ip_{i+1}\|\leq 4\cdot \brac{\|w_iz_{i+1}\| - \|q_ip_{i+1}\|} + 195\epsilon\cdot \|q_ip_{i+1}\|.$$
\end{claim}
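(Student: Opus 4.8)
The plan is to push every length in the inequality down to one–dimensional measurements along the line $st$. Write $\proj$ for orthogonal projection onto line $st$ and $\proj_{\perp}$ for projection onto the orthogonal hyperplane. Since $d_i$ and $q_i$ both lie in the hyperplane through $w_i$ perpendicular to $st$, we have $\proj(d_i)=q_i$, and likewise $\proj(b_i)=v_i$, $\proj(a_{i+1})=u_{i+1}$, $\proj(c_{i+1})=p_{i+1}$. By \Cref{projection1} (together with the orientation of $\rho_{s,t}$, which makes $c_i$ precede $d_i$) the points $a_i,c_i,d_i,b_i$ occur in this order on segment $a_ib_i$, and symmetrically on $a_{i+1}b_{i+1}$; moreover, by the construction of $Q$ the $st$–projections of distinct edges of $Q$ are pairwise disjoint. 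Hence $q_i\le v_i\le u_{i+1}\le p_{i+1}$ all lie on $st$, so $\|q_ip_{i+1}\|=\|q_iv_i\|+\|v_iu_{i+1}\|+\|u_{i+1}p_{i+1}\|$, and it suffices to bound the three nonnegative ``excesses'' $T_1=\|d_ib_i\|-\|q_iv_i\|$, $T_2=\|b_ia_{i+1}\|-\|v_iu_{i+1}\|$, $T_3=\|a_{i+1}c_{i+1}\|-\|u_{i+1}p_{i+1}\|$, for which I would repeatedly use the identity $\|xy\|-\|\proj(x)\proj(y)\|=\|\proj_{\perp}(\overrightarrow{xy})\|^{2}\big/\big(\|xy\|+\|\proj(x)\proj(y)\|\big)$.

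\textbf{Outer terms.} Since $d_i,b_i$ lie on line $a_ib_i$, we have $\angle(\overrightarrow{d_ib_i},st)=\angle(a_ib_i,st)\le 4\sqrt{10\eps}$: indeed $\angle(a_ib_i,z_iw_i)\le 2\sqrt{10\eps}$ by \Cref{clm:mid-charge-edge}, while $\angle(z_iw_i,st)\le 2\sqrt{10\eps}$ because $z_iw_i$ is one of the edges counted in $E(\rho_{s,t}[p,q],st,2\sqrt{10\eps})$. Therefore $T_1=\|q_iv_i\|\big(\sec(4\sqrt{10\eps})-1\big)\le 81\eps\,\|q_iv_i\|\le 81\eps\,\|q_ip_{i+1}\|$ for $\eps$ below the usual threshold, and symmetrically $T_3\le 81\eps\,\|q_ip_{i+1}\|$.

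\textbf{The middle term, which is the crux.} I would decompose $\overrightarrow{b_ia_{i+1}}=\overrightarrow{b_id_i}+\overrightarrow{d_iw_i}+\overrightarrow{w_iz_{i+1}}+\overrightarrow{z_{i+1}c_{i+1}}+\overrightarrow{c_{i+1}a_{i+1}}$. Only the middle summand is ``long'': its perpendicular component has squared length $\|w_iz_{i+1}\|^{2}-\|q_ip_{i+1}\|^{2}$. Each of the other four summands has perpendicular component $O(\sqrt{\eps})\cdot(\|a_ib_i\|+\|a_{i+1}b_{i+1}\|)=:\xi$: one has $\|\proj_{\perp}(\overrightarrow{b_id_i})\|=\|b_id_i\|\sin\angle(a_ib_i,st)\le 4\sqrt{10\eps}\,\|a_ib_i\|$; the vector $\overrightarrow{d_iw_i}$ is itself perpendicular to $st$, with $\|d_iw_i\|=O(\sqrt{\eps})\,\|a_ib_i\|$ because $w_i$ lies in the ellipsoid with foci $a_i,b_i$ and parameter $(1+\eps)\|a_ib_i\|$ (so $w_i$ is within $\sqrt{\eps}\|a_ib_i\|$ of line $a_ib_i$, and $d_i$ is within $O(\sqrt{\eps})\|a_ib_i\|$ of the corresponding foot, measured along line $a_ib_i$); and the two summands at $a_{i+1}$ are handled symmetrically. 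Thus $\|\proj_{\perp}(\overrightarrow{b_ia_{i+1}})\|\le\sqrt{\|w_iz_{i+1}\|^{2}-\|q_ip_{i+1}\|^{2}}+\xi$. For the denominator of $T_2$, the buffer rule used while building $Q$ forces $\|v_iu_{i+1}\|>2\beta\max\{\|a_ib_i\|,\|a_{i+1}b_{i+1}\|\}\ge\|a_ib_i\|+\|a_{i+1}b_{i+1}\|$; combined with $\|q_iv_i\|\le\|a_ib_i\|$ and $\|u_{i+1}p_{i+1}\|\le\|a_{i+1}b_{i+1}\|$ this gives $\|v_iu_{i+1}\|\ge\frac12\|q_ip_{i+1}\|$, hence $\|q_ip_{i+1}\|\le 2\|v_iu_{i+1}\|\le 2\|b_ia_{i+1}\|$, and, combining with the previous bound, $\|w_iz_{i+1}\|\le 2\|b_ia_{i+1}\|+\xi$, so $\|w_iz_{i+1}\|+\|q_ip_{i+1}\|\le 4\|b_ia_{i+1}\|+\xi$ while $\xi=O(\sqrt{\eps})\,\|b_ia_{i+1}\|$. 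Plugging these into $T_2=\|\proj_{\perp}(\overrightarrow{b_ia_{i+1}})\|^{2}\big/\big(\|b_ia_{i+1}\|+\|v_iu_{i+1}\|\big)$, expanding the square, handling the cross term $\|\proj_{\perp}(\overrightarrow{w_iz_{i+1}})\|\cdot\xi$ by an AM--GM split tuned so that its contribution to the coefficient of $\|w_iz_{i+1}\|-\|q_ip_{i+1}\|$ is $o(1)$ and its additive contribution is $O(\eps)\|q_ip_{i+1}\|$, and using $\|q_ip_{i+1}\|\ge\|a_ib_i\|+\|a_{i+1}b_{i+1}\|$ for the $\xi^{2}$ term, one should obtain $T_2\le 4\big(\|w_iz_{i+1}\|-\|q_ip_{i+1}\|\big)+O(\eps)\|q_ip_{i+1}\|$. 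Adding $T_1+T_2+T_3$ and checking that the accumulated $O(\eps)$ constants stay below $195$ finishes the proof.

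\textbf{Main obstacle.} The hard part will be precisely this last estimate for $T_2$: squeezing the multiplicative factor down to exactly $4$ (rather than a larger absolute constant, and in particular avoiding any blow--up in the degenerate regime where $\|v_iu_{i+1}\|$ is only a constant multiple of $\|a_ib_i\|$) relies on using the $Q$--buffer separation $\|v_iu_{i+1}\|>2\beta\max\{\|a_ib_i\|,\|a_{i+1}b_{i+1}\|\}$ sharply, together with the fact that the $O(\sqrt{\eps})$ perpendicular corrections $\xi$ are themselves $O(\sqrt{\eps})\,\|b_ia_{i+1}\|$, so that products such as $\xi\cdot\|\proj_{\perp}(\overrightarrow{w_iz_{i+1}})\|$ and $\xi^{2}$ collapse into the $O(\eps)\|q_ip_{i+1}\|$ budget; the outer terms $T_1,T_3$ and the reduction to one dimension are routine with the ellipsoid and angle bounds already in hand.
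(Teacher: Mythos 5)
Your overall skeleton is the same as the paper's: project everything onto the line $st$, split $\|q_ip_{i+1}\|$ into $\|q_iv_i\|+\|v_iu_{i+1}\|+\|u_{i+1}p_{i+1}\|$, bound the two outer excesses by a secant estimate with the angle bound $\angle(a_ib_i,st)\le 4\sqrt{10\eps}$, and handle the middle excess by comparing the component of $\overrightarrow{b_ia_{i+1}}$ orthogonal to $st$ (your $h_1$) with that of $\overrightarrow{w_iz_{i+1}}$ (your $h_2$), using the buffer separation $\|v_iu_{i+1}\|\ge \|a_ib_i\|+\|a_{i+1}b_{i+1}\|$, so that $\|v_iu_{i+1}\|\ge\tfrac12\|q_ip_{i+1}\|$, together with the identity $\sqrt{x}-\sqrt{y}=\frac{x-y}{\sqrt{x}+\sqrt{y}}$. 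Up to constants (your $81\eps$ versus the paper's $41\eps$, obtained by also using $\|q_iv_i\|\le\tfrac12\|q_ip_{i+1}\|$), the outer terms are fine.

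The gap is exactly at the step you flag as the crux, and the plan you describe for it would fail. You propose to expand $(h_2+\xi)^2$ and dispose of the cross term $2h_2\xi$ by an AM--GM split whose contribution to the multiplicative coefficient is $o(1)$ while its additive contribution is $O(\eps)\|q_ip_{i+1}\|$; these two demands are incompatible. Writing $2h_2\xi\le \tau h_2^2+\tau^{-1}\xi^2$ with $\xi=\Theta(\sqrt{\eps})\|v_iu_{i+1}\|$ and $h_2=\Theta(\sqrt{\eps})\|v_iu_{i+1}\|$ (a perfectly possible regime), a choice $\tau=o(1)$ makes the additive part $\tau^{-1}\xi^2/\brac{\|b_ia_{i+1}\|+\|v_iu_{i+1}\|}=\omega(\eps)\|q_ip_{i+1}\|$, while any $\tau=\Omega(1)$ contributes $\Omega(1)$, not $o(1)$, to the coefficient; and a coefficient $4+\Omega(\sqrt{\eps})$ cannot be absorbed into the $195\eps\|q_ip_{i+1}\|$ term, since $\|w_iz_{i+1}\|-\|q_ip_{i+1}\|$ can be far larger than $\sqrt{\eps}\,\|q_ip_{i+1}\|$ (there is no lower bound on $\|q_ip_{i+1}\|$ relative to $\sqrt{\eps}\|st\|$, whereas $h_2$ can be as large as the ellipsoid width). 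The paper resolves this differently: it accepts a clean factor $2$ from the split $h_1^2\le 2h_2^2+2(h_1-h_2)^2$, keeps the $h_2^2$ term \emph{inside} the square root of the denominator, and then converts via $\|v_iu_{i+1}\|\ge\tfrac12\|q_ip_{i+1}\|$ to $\frac{4h_2^2}{\sqrt{\|q_ip_{i+1}\|^2+h_2^2}+\|q_ip_{i+1}\|}=4\brac{\|w_iz_{i+1}\|-\|q_ip_{i+1}\|}$ exactly, so the bound holds with constant $4$ even when $h_2\gg\|q_ip_{i+1}\|$; the leftover $2(h_1-h_2)^2$ yields only $O(\eps)\|v_iu_{i+1}\|$. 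Two smaller issues in your write-up: the inequality ``$\|w_iz_{i+1}\|\le 2\|b_ia_{i+1}\|+\xi$'' does not follow from the bound you cite (you need the reverse direction $h_2\le h_1+\xi$, which you never state), and the buffer rule only gives $\|v_iu_{i+1}\|>2\max\{\|a_ib_i\|,\|a_{i+1}b_{i+1}\|\}$, not $2\beta\max\{\cdot\}$ (harmless here, but unjustified as written); finally, with your $81\eps+81\eps$ outer terms the remaining additive budget is $33\eps\|q_ip_{i+1}\|$, which is tighter than what any correct treatment of the cross and $\xi^2$ terms along these lines plausibly delivers (the paper needs about $113\eps$ there and fits only because its outer terms cost $41\eps$ each).
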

\begin{proof}
    Without loss of generality, assume that $\|a_ib_i\|\geq \|a_{i+1}b_{i+1}\|$. Then, by the design of buffer regions, we know that $\|v_iu_{i+1}\|\geq 2\|a_ib_i\|$, and $\|v_iu_{i+1}\|\leq \|q_ip_{i+1}\| < 2\|v_iu_{i+1}\|$.
	
	Since $\angle(a_ib_i, st)$ and $\angle(a_{i+1}b_{i+1}, st)$ are bounded by $4\sqrt{10\epsilon}$, it follows that
\begin{align}\label{stitch1-inequ1}
   \|d_ib_i\| - \|q_iv_i\|
   & \leq \brac{\frac{1}{\cos(4\sqrt{10\eps})}-1}\cdot \|q_iv_i\|\leq \brac{80\eps + O(\eps^2)}\cdot \|q_iv_i\|\leq 41\eps \|q_ip_{i+1}\| ,\\
\|a_{i+1}c_{i+1}\| - \|u_{i+1}p_{i+1}\|&\leq \brac{\frac{1}{\cos(4\sqrt{10\eps})}-1}\cdot\|u_{i+1}p_{i+1}\|\nonumber\\
        &\leq \brac{80\eps + O(\eps^2)}\cdot\|u_{i+1}p_{i+1}\|< 41\eps \|q_ip_{i+1}\|.  \label{stitch1-inequ2}
\end{align}
   It remains to compare $\|b_ia_{i+1}\| - \|v_iu_{i+1}\|$ against $\|w_iz_{i+1}\| - \|q_ip_{i+1}\|$. Let $M$ be the matrix of the orthogonal projection onto the hyperplane perpendicular to line $st$. Define $h_1 = \|M\cdot \overrightarrow{b_ia_{i+1}}\|$ and $h_2 = \|M\cdot \overrightarrow{w_iz_{i+1}}\|$. First, we bound the difference between $h_1$ and $h_2$ using the triangle inequality
\begin{align*}
		|h_1 - h_2| &\leq \left|M\cdot (\overrightarrow{b_ia_{i+1}} - \overrightarrow{w_iz_{i+1}})\right|\\
		&= \left|M\cdot (\overrightarrow{w_id_i} + \overrightarrow{d_ib_i} + \overrightarrow{a_{i+1}c_{i+1}} + \overrightarrow{c_{i+1z_{i+1}}})\right|\\
		&\leq \sqrt{\epsilon}\cdot \|a_ib_i\|+ 4\sqrt{10\epsilon}\cdot \|a_ib_i\| + 4\sqrt{10\epsilon}\cdot \|a_{i+1}b_{i+1}\| + \sqrt{\epsilon}\cdot \|a_{i+1}b_{i+1}\|\\
		&< 15\sqrt{\epsilon}\|v_iu_{i+1}\| .
\end{align*}
    Next, to compare $\|b_ia_{i+1}\| - \|v_iu_{i+1}\|$ against $\|w_iz_{i+1}\| - \|q_ip_{i+1}\|$, we use the Pythagorean theorem combined with the identity $\sqrt{x}-\sqrt{y}=\frac{x-y}{\sqrt{x}+\sqrt{y}}$ and the definition of $h_1$ and $h_2$:
    \begin{align}
		\|b_ia_{i+1}\| - \|v_iu_{i+1}\| &= \sqrt{\|v_iu_{i+1}\|^2 + h_1^2} - \|v_iu_{i+1}\|\nonumber\\
        &\leq \sqrt{\|v_iu_{i+1}\|^2 + 2h_2^2 + 2(h_1-h_2)^2} - \|v_iu_{i+1}\|\nonumber\\
            &=  \frac{2h_2^2 + 2(h_1-h_2)^2}{\sqrt{\|v_iu_{i+1}\|^2 + 2h_2^2 + 2(h_1-h_2)^2} + \|v_iu_{i+1}\|}\nonumber\\
		&\leq  \frac{2h_2^2 + 225\epsilon\cdot\|v_iu_{i+1}\|^2}{\sqrt{\|v_iu_{i+1}\|^2 + 2h_2^2 + 225\epsilon\cdot\|v_iu_{i+1}\|^2} + \|v_iu_{i+1}\|}\nonumber\\
		&< \frac{2h_2^2}{\sqrt{\frac{1}{4}\|q_ip_{i+1}\|^2 + h_2^2} + \frac{1}{2}\|q_ip_{i+1}\|} + 112.5\epsilon\cdot\|v_i u_{i+1}\|\nonumber\\
		&< \frac{4h_2^2}{\sqrt{\|q_ip_{i+1}\|^2 + h_2^2} + \|q_ip_{i+1}\|} + 112.5\epsilon\cdot\|v_i u_{i+1}\|\nonumber\\
		&< 4\cdot (\|w_iz_{i+1}\| - \|q_ip_{i+1}\|) + 113\cdot\eps\|v_iu_{i+1}\|. \label{stitch1-inequ3}
	\end{align}
Taking a summation of \Cref{stitch1-inequ1,stitch1-inequ2,stitch1-inequ3}, we can conclude the proof; see \Cref{stitch} for an illustration.
\end{proof}

Similarly, we can prove an upper bound on the lengths of the prefix and suffix of the path $\gamma$.
\begin{claim}\label{stitch2}
	The following inequalities hold:
	$$\|sa_1\circ a_1c_1\| \leq (1+113\epsilon)\|sp_1\|,$$
	$$\|d_\ell b_\ell\circ b_\ell t\| \leq (1+113\epsilon)\|q_lt\|.$$
\end{claim}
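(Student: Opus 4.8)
The plan is to prove the first (prefix) inequality $\|sa_1\circ a_1c_1\|\le(1+113\epsilon)\|sp_1\|$; the suffix inequality then follows verbatim after reflecting $s\leftrightarrow t$, so that $b_\ell,a_\ell,d_\ell,c_\ell,w_\ell,z_\ell,q_\ell,p_\ell$ assume the roles of $a_1,b_1,c_1,d_1,z_1,w_1,p_1,q_1$ and $\|sa_1\circ a_1c_1\|\le(1+113\epsilon)\|sp_1\|$ turns into $\|d_\ell b_\ell\circ b_\ell t\|\le(1+113\epsilon)\|q_\ell t\|$. Throughout I would fix the labeling convention --- already implicit in \Cref{projection1,projection2,stitch1} --- that $a_1$ denotes the endpoint of $\chi(e_1)=a_1b_1$ whose orthogonal projection onto the line $st$ is the one closer to $s$; since each of those claims is invariant under the simultaneous relabeling $a_i\leftrightarrow b_i,\ c_i\leftrightarrow d_i,\ z_i\leftrightarrow w_i,\ p_i\leftrightarrow q_i$, this is without loss of generality.

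First I would project the two-segment path $s\to a_1\to c_1$ onto the line $st$. With $u_1=\proj_{st}(a_1)$ as in \Cref{projection2}, and since $c_1$ lies in the hyperplane through $z_1$ orthogonal to $st$, the projection of $c_1$ onto $st$ is exactly $p_1=\proj_{st}(z_1)$; thus $s\to a_1\to c_1$ projects onto $s\to u_1\to p_1$. The key structural point I would verify next is that $u_1$ lies between $s$ and $p_1$ on the segment $st$: by \Cref{projection2} both $u_1$ and $v_1=\proj_{st}(b_1)$ lie on $st$, and by our convention $u_1$ is the endpoint of $\proj_{st}(a_1b_1)=[u_1,v_1]$ closer to $s$; since $c_1$ lies on the segment $a_1b_1$ by \Cref{projection1}, its projection $p_1$ lies on $[u_1,v_1]$, so $u_1$ is between $s$ and $p_1$ and hence $\|su_1\|+\|u_1p_1\|=\|sp_1\|$.

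Then I would bound each Euclidean leg against its projected length, in the spirit of \Cref{stitch1}. Since $a_1u_1\perp st$, the Pythagorean identity and $\sqrt{x^2+y^2}\le x+y^2/(2x)$ give $\|sa_1\|\le\|su_1\|+\|a_1u_1\|^2/(2\|su_1\|)$, and the bounds $\|a_1u_1\|\le 5\sqrt\epsilon\,\|st\|$ and $\|su_1\|>\frac13\|st\|$ from \Cref{projection2} bound the error term $\|a_1u_1\|^2/(2\|su_1\|)$ by $\frac{75}{2}\epsilon\|st\|$. For the second leg, $c_1$ is on the line $a_1b_1$, so $\angle(a_1c_1,st)=\angle(a_1b_1,st)\le 4\sqrt{10\epsilon}$ --- obtained by chaining $\angle(a_1b_1,z_1w_1)\le 2\sqrt{10\epsilon}$ (\Cref{clm:mid-charge-edge}, since $a_1b_1=\chi(e_1)$) with $\angle(z_1w_1,st)\le 2\sqrt{10\epsilon}$ (as $z_1w_1\in S\subseteq F$ and $st$ is a type-(b) edge, as in the proof of \Cref{shortcuts}). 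Hence $\|a_1c_1\|=\|u_1p_1\|/\cos\angle(a_1b_1,st)\le(1+100\epsilon)\|u_1p_1\|$, so $\|a_1c_1\|-\|u_1p_1\|\le 100\epsilon\|u_1p_1\|\le\frac{100}{\kappa}\epsilon\|st\|$, where I use $\|u_1p_1\|\le\|a_1c_1\|\le\|a_1b_1\|=\|\chi(e_1)\|\le\frac1\kappa\|st\|$ from \Cref{clm:mid-charge-edge}. Adding the two estimates, substituting $\|su_1\|+\|u_1p_1\|=\|sp_1\|$, and using $\|st\|<3\|sp_1\|$ --- which holds since $\|sp_1\|\ge(\frac38-\frac1{50})\|st\|>\frac13\|st\|$ by design of $\Psi_1$, as in the proof of \Cref{projection2} --- yields $\|sa_1\|+\|a_1c_1\|\le\|sp_1\|+\bigl(\frac{75}{2}+\frac{100}{\kappa}\bigr)\epsilon\|st\|\le(1+113\epsilon)\|sp_1\|$.

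The step that needs care is the second paragraph, not any of the estimates: the inequality is genuinely false without the orientation convention on $a_1b_1$ (if $a_1b_1$ pointed "backwards" relative to $st$, then $\|sa_1\|\ge\|su_1\|$ would exceed $\|sp_1\|$ by an amount of order $\|a_1b_1\|$, which dwarfs $\epsilon\|st\|$). Thus the genuine content is to make the labeling convention precise and to check that it is consistent with the construction of $Q$ and with \Cref{projection1,projection2,stitch0,stitch1}; once this is done, the rest is the same projection bookkeeping as in \Cref{stitch1}, only simpler because the incoming connector is the single straight segment $sa_1$ rather than a stitched subpath.
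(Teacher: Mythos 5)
Your proof is correct and follows essentially the same route as the paper's: bound $\|sa_1\|-\|su_1\|$ via the perpendicular offset $\|a_1u_1\|\le 5\sqrt{\eps}\|st\|$ from \Cref{projection2}, bound $\|a_1c_1\|-\|u_1p_1\|$ via $\angle(a_1b_1,st)\le 4\sqrt{10\eps}$, and add, using that $s,u_1,p_1$ appear in this order on $st$. The only difference is that you make the ordering convention ($u_1$ between $s$ and $p_1$) explicit, which the paper leaves implicit when summing its two displayed estimates.
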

\begin{proof}
	Let us focus on the first inequality; the second one can be proved in a symmetric manner. Since $\angle(a_1b_1, st)\leq 4\sqrt{10\epsilon}$, then we can show that
\begin{equation}\label{eq:end1}
\|a_1c_1\| - \|u_1p_1\|
\leq\brac{\frac{1}{\cos(4\sqrt{10\eps})}-1}\cdot\|u_1p_1\|\leq \brac{80\eps + O(\eps^2)}\|u_1p_1\|<100\epsilon\cdot \|u_1p_1\|.
\end{equation}
	According to \Cref{projection2}, we have $\|a_1u_1\|\leq 5\sqrt{\epsilon}\cdot \|st\|\leq 15\sqrt{\epsilon}\|su_1\|$. Therefore, we obtain
\begin{equation}\label{eq:end2}
	\|sa_1\| - \|su_1\| < 113\epsilon\cdot \|su_1\|.
 \end{equation}
Adding \Cref{eq:end1,eq:end2} finishes the proof.
\end{proof}

Let $\theta$ be the polygonal path defined as:
$$\theta = sa_1\circ a_1b_1\circ b_1a_2\circ a_2b_2\circ \cdots \circ b_{\ell-1} a_\ell\circ a_\ell b_\ell\circ b_\ell t $$
from $s$ to $t$, passing through all edges in $Q=\{a_1b_1,\ldots , a_\ell b_\ell\}$. Taking a summation over all indices $i\in \{1,\ldots ,\ell\}$ and using \Cref{stitch0}, \Cref{stitch1} and \Cref{stitch2}, we have
$$\|\theta\| - \|st\|\leq 4(\|\pi_{s, t}\| - \|st\|) + 195\epsilon\cdot \|st\| < 200\epsilon\cdot \|st\|,$$
or equivalently, 
\begin{equation}\label{eq:theta}
    \|\theta\|\leq (1 + 200\epsilon)\cdot \|st\|. 
\end{equation}
To compare the weight of paths $\theta$ and $\gamma$, we need to argue that each edge of $\theta$ is closely approximated by a shortest path in the current graph $H_2$. (Note that we cannot directly apply \Cref{spanner-stretch} here, because we are analyzing the state of $H_2$ during the execution of the second pruning phase, not at the end.)
\begin{claim}\label{H2-stretch}
    Any edge $e$ of $\theta$ from $\{sa_1, a_1b_1, \cdots, a_\ell b_\ell, b_\ell t\}$ has length at most 
    $$\|e\|\leq 0.9\,\|st\| < \frac{1}{1+\Delta(\kappa, \delta)}\cdot \|st\|.$$
\end{claim}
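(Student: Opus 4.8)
The plan is to bound every edge of the stitched path $\theta = sa_1\circ a_1b_1\circ b_1a_2\circ\cdots\circ a_\ell b_\ell\circ b_\ell t$ by a constant strictly below $0.9$ times $\|st\|$, treating three kinds of edges separately, after which the displayed numerical inequality is immediate. For the ``shortcut'' edges $a_ib_i$ I would invoke \Cref{clm:mid-charge-edge}: since $a_ib_i=\chi(e_i)$ for some $e_i\in S\subseteq E_\light^Y$, we have $\|a_ib_i\|\le\frac1\kappa\|st\|$, which is far below $0.9\,\|st\|$ because $\kappa$ is a large constant.

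For the ``connecting'' edges $b_ia_{i+1}$ I would rely on \Cref{projection2}, which puts $u_i:=\proj_{st}(a_i)$ and $v_i:=\proj_{st}(b_i)$ on the segment $st$ with $\|su_i\|,\|v_it\|>\frac13\|st\|$ and $\|a_iu_i\|,\|b_iv_i\|\le 5\sqrt{\epsilon}\,\|st\|$. Since projections do not increase length, $\|u_iv_i\|\le\|a_ib_i\|\le\frac1\kappa\|st\|$, so each of $u_i,v_i$ lies on the portion of $st$ at distance between $\frac13\|st\|$ and $(\frac23+\frac1\kappa)\|st\|$ from $s$; in particular $\|v_iu_{i+1}\|\le(\frac13+\frac1\kappa)\|st\|<0.34\,\|st\|$. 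The triangle inequality through $v_i$ and $u_{i+1}$ then gives $\|b_ia_{i+1}\|\le\|b_iv_i\|+\|v_iu_{i+1}\|+\|u_{i+1}a_{i+1}\|\le(\frac13+\frac1\kappa+10\sqrt\epsilon)\,\|st\|$, which is below $0.9\,\|st\|$ for $\epsilon$ as small as required by \Cref{eq:relationeps}. The same estimates cover the two end edges: $\|su_1\|=\|st\|-\|u_1t\|\le\|st\|-\|v_1t\|+\|u_1v_1\|<(\frac23+\frac1\kappa)\|st\|$, hence $\|sa_1\|\le\|su_1\|+\|a_1u_1\|<(\frac23+\frac1\kappa+5\sqrt\epsilon)\,\|st\|<0.9\,\|st\|$, and symmetrically $\|b_\ell t\|<0.9\,\|st\|$.

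Finally, for the numerical claim $0.9\,\|st\|<\frac{1}{1+\Delta(\kappa,\delta)}\|st\|$ I would use \Cref{small-delta}, which gives $\Delta(\kappa,\delta)<\kappa^{-5}$, so $\frac{1}{1+\Delta(\kappa,\delta)}>1-\kappa^{-5}>0.9$. I do not anticipate a real obstacle here, as every bound follows from facts already proved in \Cref{clm:mid-charge-edge} and \Cref{projection2}. The only point deserving a little care is the ``band'' step for the connecting edges: one must remember that \Cref{projection2} simultaneously keeps each $a_i,b_i$ within $O(\sqrt\epsilon)\|st\|$ of the line $st$ \emph{and} (via $\|su_i\|,\|v_it\|>\frac13\|st\|$ together with $\|u_iv_i\|$ being tiny) confines all the projections to a window of length $<\frac13\|st\|$ along $st$, which is precisely what forces $\|v_iu_{i+1}\|$, and hence every edge of $\theta$, to stay comfortably below $0.9\,\|st\|$.
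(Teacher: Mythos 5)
Your proof is correct, but it takes a genuinely different route from the paper. The paper gets the bound almost for free from the global estimate already established for the stitched path: by \Cref{eq:theta} we have $\|\theta\|\leq (1+200\epsilon)\|st\|$, and the charging scheme forces the two terminal edges to be long, $\|sa_1\|,\|b_\ell t\|>0.2\,\|st\|$ (via $\|sp_1\|,\|tq_\ell\|>0.3\,\|st\|$ and $\|a_ib_i\|\leq \kappa^{-1}\|st\|$); since any edge $e$ of $\theta$ omits at least one terminal edge, $\|e\|\leq \|\theta\|-0.2\,\|st\|<(0.8+200\epsilon)\|st\|<0.9\,\|st\|$. You instead bound each edge of $\theta$ individually from \Cref{clm:mid-charge-edge} and \Cref{projection2}: shortcut edges $a_ib_i$ are at most $\kappa^{-1}\|st\|$, connector edges $b_ia_{i+1}$ at most roughly $0.35\,\|st\|$ via the projection window, and the two end edges at most roughly $0.67\,\|st\|$. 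This is more self-contained (it never invokes \Cref{eq:theta}, hence does not rely on \Cref{stitch0,stitch1,stitch2}) and yields sharper per-edge bounds for the interior edges, at the cost of a slightly longer case analysis; the paper's argument is a two-line corollary of machinery it needs anyway. One bookkeeping nit: \Cref{projection2} gives $\|sv_i\|<\tfrac23\|st\|$ directly, but the lower bound on $\|sv_i\|$ is only $\bigl(\tfrac13-\tfrac1\kappa\bigr)\|st\|$ (obtained from $\|su_i\|>\tfrac13\|st\|$ and $\|u_iv_i\|\leq\kappa^{-1}\|st\|$), so your window bound for the connectors should read $\|v_iu_{i+1}\|\leq\bigl(\tfrac13+\tfrac2\kappa\bigr)\|st\|$ rather than $\bigl(\tfrac13+\tfrac1\kappa\bigr)\|st\|$; since $\kappa=10^4$ and the target is $0.9$, this is immaterial.
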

\begin{proof}
    By the design of our charging schemes, $\Psi_1$ and $\Psi_2$, the projections of all $p_i$ and $q_i$ are on the segment $st$; and more importantly
    $$\|sp_i\|, \|tq_i\|
    \geq \brac{\frac{3}{8} - \frac{1}{50}}\cdot \|st\| 
    > 0.3\,\|st\|.$$
    Therefore, we have
    $$\|sa_1\|\geq \|sp_1\| - \|u_1p_i\|
    \geq \|sp_1\| - \|a_1b_1\| 
    \geq 0.3\,\|st\| - \frac{1}{\kappa}\|st\| 
    > 0.2\,\|st\|.$$
    Similarly, we also have $\|b_\ell t\| > 0.2\,\|st\|$. Since $\|\theta\| < (1+200\eps)\|st\|$ by \Cref{eq:theta}, the length of any edge $e\in \{sa_1, a_1b_1, \ldots, a_\ell b_\ell, b_\ell t\}$ is bounded by 
    $$\|e\| < (1+200\eps - 0.2)\|st\| < 0.9\,\|st\|
    < \frac{1}{1+\Delta(\kappa, \delta)}\cdot \|st\|\qquad,
    $$
    as claimed.
\end{proof}

According to \Cref{spanner-stretch}, all edges in the path $\theta$ will be preserved up to a stretch factor of $(1+\delta)(1+\kappa\delta)(1+\kappa^2\delta) = 1+\Delta(\kappa, \delta)$ in $H_2$ when the second pruning phase finishes. Since the second phase adds edges to $E_2$ in non-decreasing order of
length and by \Cref{H2-stretch}, each edge in $\theta$ is already preserved in $H_2$ at the time when the algorithm examines edge $st$. 

Therefore, using \Cref{shortcuts} and recalling the definition of $\gamma$ (cf.~\Cref{stitch-path}), we have
\begin{align*}
	\|\gamma\|  &= \|\theta\| + \big(\dist_{H_2}(s, a_1) - \|sa_1\|\big) + \big(\dist_{H_2}(t, b_\ell) - \|tb_\ell\| \big)
    + \sum_{i=1}^{\ell-1}\big(\dist_{H_2}(b_i, a_{i+1}) - \|b_ia_{i+1}\|\big)\\
	&\leq \|\theta\| + \Delta(\kappa, \delta)\cdot \brac{\|sa_1\| + \|tb_\ell\| + \sum_{i=1}^{\ell-1}\|b_ia_{i+1}\|}\\
	&= \|\theta\| + \Delta(\kappa, \delta) \cdot \brac{\|\theta\| - \sum_{i=1}^\ell\|a_ib_i\|}\\
	&\leq \|\theta\| + \Delta(\kappa, \delta)\cdot \brac{\|\theta\| - \frac{1}{320}\|st\|}\\
	&\leq \brac{(1+200\epsilon)(1+\Delta(\kappa, \delta)) - \frac{\Delta(\kappa, \delta)}{320}}\|st\|\\
        &= \brac{1 + 200\eps + \brac{1 + 200\eps - \frac{1}{320}}\cdot \Delta(\kappa, \delta)} \|st\|\\
        &\leq \brac{1 + \brac{1 + 200\eps +\frac{200}{(\kappa+1)^2}- \frac{1}{320}}\cdot (\kappa+1)^2\delta} \|st\|\\
	&< (1+\kappa^2\delta)\|st\|.
\end{align*}
Here we have used $\kappa = 10^4$, \Cref{small-delta}, \Cref{eq:theta}, and the inequalities $\eps\leq \delta < \kappa^{-5}$ and $\Delta(\kappa, \delta) < (\kappa+1)^2\delta$. This concludes the proof of \Cref{lightness}. 
\section{Fast Implementation} \label{sec:fast}

In this section, we provide a fast implementation of our algorithm. For the input spanner, we use the construction of Gudmundson et al.~\cite{GLN02}. We follow the original two pruning phases with some modifications. In the first pruning phase, we only add edges in the net-tree spanner of Chan et al.~\cite{CGMZ16}. In the second pruning phase, we modify the technique of Das and Narasimhan in \cite{DN97} when checking each type-(\romannumeral2) edge in $E_1$ and use the hierarchical net structure in \cite{CGMZ16} to find a helper edge.

\subsection{Net-Tree Spanner}

An \EMPH{$r$-net} of a metric $(X,\delta)$ is a subset $N$ of $X$ such that the distance between any two points in $N$ is greater than $r$ and for every point in $X$, its distance to the closest point in $N$ is at most $r$. Let $\Phi$ be the aspect ratio of $X$ and $r_i = 2^i$ with $i$ being any positive integer. A hierarchy $X = N_0 \supseteq N_1 \supseteq N_2 \supseteq \cdots \supseteq N_{\log \Phi}$ is a \emph{hierarchical net} of $X$ if $N_{i + 1}$ is a $r_{i + 1}$-net of $N_{i}$. 

Consider a hierarchical net $X = N_0 \supseteq N_1 \supseteq N_2 \supseteq \cdots \supseteq N_{\log \Phi}$ of $X$. A \EMPH{net tree} $T$ of $X$ is a tree that connects each point in $N_i$ to its closest net point in $N_{i + 1}$. If a point appears in multiple nets, we treat each appearance as a copy. For each point $u \in N_i$, we use the notation $(u, i)$ for the node of the net tree corresponding to $u$ at level $i$. 

From the net-tree $T$ of $X$, Chan et al.~\cite{CGMZ16} construct a $(1+\eps)$-spanner for $X$ by connecting all pairs of net points in $N_i$ at distance at most $\left(\frac{4}{\eps} + 32\right)r_i$. We call such pairs \EMPH{cross edges}. We refer to the spanner construction in \cite{CGMZ16} as a $(1 + \eps)$ net-tree spanner $G$ of $X$. We use edges in $G$ to guide our construction. The key to our construction is the \EMPH{approximate edge} of each pair.

\begin{definition}
    For each $uv \in \binom{X}{2}$, let $i_{uv}$ be the lowest level such that there exists a cross edge $u'v'$ such that $(u', i_{uv})$ and $(v', i_{uv})$ are the ancestors of $(u, 0)$ and $(v, 0)$, respectively. The edge $u'v'$ is called 
    the \EMPH{approximate edge} of $uv$ and $i_{uv}$ is the \EMPH{approximate level} of $uv$.
\end{definition}

We show that the approximate edge of $uv$ has both endpoints close to those of $uv$. For any graph $F$, let $V(F)$ and $E(F)$ denote the vertex set and the edge set of $F$, respectively. For any point $x\in \mathbb{R}^d$ and $r > 0$, let $\ball(x, r)$ be the Euclidean ball centered at $x$ with radius $r$.

\begin{claim}
    \label{clm:approx-edge}
    For an edge $uv$ of length $r$, let $u'v' \in E(G)$ be the approximate edge of $uv$. Then, $u' \in \ball(u, \eps r)$ and $v' \in \ball(v, \eps r)$.
\end{claim}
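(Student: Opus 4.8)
The plan is to follow the ancestors of $u$ and $v$ up the net tree and combine a triangle-inequality estimate with the cross-edge radius $\left(\frac{4}{\eps}+32\right)r_i$ to bound how small the approximate level $i_{uv}$ can be; once $i_{uv}$ is controlled, the displacement of these ancestors from $u$ and $v$ is controlled by a geometric series.

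\textbf{Step 1 (ancestor displacement).} For each $i$, let $u_i\in N_i$ denote the ancestor of $(u,0)$ at level $i$ in the net tree, and similarly $v_i$ for $v$. Since $N_{j}$ is an $r_j$-net of $N_{j-1}$, the node $(u_{j-1},j-1)$ is joined in $T$ to its closest point of $N_j$, so $\|u_{j-1}-u_j\|\le r_j=2^j$. Telescoping gives $\|u-u_i\|\le\sum_{j=1}^{i}2^j=2^{i+1}-2<2r_i$, and the same bound holds for $v$.

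\textbf{Step 2 (a cross edge appears once $r_i$ is not too small).} By the triangle inequality, $\|u_i-v_i\|\le\|u-v\|+\|u-u_i\|+\|v-v_i\|<r+4r_i$. Hence if $r_i\ge r/(4/\eps+28)$, then $\|u_i-v_i\|<r+4r_i\le\left(\frac{4}{\eps}+32\right)r_i$; moreover in this regime $\|u_i-v_i\|\ge r-4r_i>0$ for $\eps\le 1$, so $u_i\neq v_i$ and $u_iv_i$ is a genuine cross edge at level $i$. Let $i^{*}$ be the smallest index with $r_{i^{*}}\ge r/(4/\eps+28)$; such an $i^{*}\le\log\Phi$ exists after normalizing the minimum interpoint distance to $1$, since then $r\le\diam(X)=r_{\log\Phi}$. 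It follows that $i_{uv}\le i^{*}$, and minimality of $i^{*}$ yields $r_{i^{*}}=2r_{i^{*}-1}<2r/(4/\eps+28)$.

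\textbf{Step 3 (conclusion).} By definition the approximate edge $u'v'$ lives at level $i_{uv}$ and its endpoints are exactly the level-$i_{uv}$ ancestors of $u$ and $v$, i.e.\ $u'=u_{i_{uv}}$ and $v'=v_{i_{uv}}$. Combining Steps 1 and 2,
\[
\|u-u'\|<2r_{i_{uv}}\le 2r_{i^{*}}<\frac{4r}{4/\eps+28}=\frac{\eps r}{1+7\eps}<\eps r,
\]
and symmetrically $\|v-v'\|<\eps r$, so $u'\in\ball(u,\eps r)$ and $v'\in\ball(v,\eps r)$, as claimed. I do not expect a real obstacle here: the argument is essentially bookkeeping with the cross-edge radius and the geometric bound on ancestor drift. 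The only points needing a moment's care are checking that $i^{*}$ lies inside the net hierarchy (handled by the diameter bound above) and that the level-$i^{*}$ ancestors are distinct so that an actual cross edge exists.
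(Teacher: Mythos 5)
Your proof is correct, and it rests on the same two ingredients as the paper's: the geometric-series bound $\|u-u_i\|\le 2r_i$ on ancestor drift in the net tree, and a triangle-inequality comparison of $\|uv\|$ with the cross-edge radius $\left(\frac{4}{\eps}+32\right)r_i$. The only difference is the direction of the level argument: you introduce an auxiliary threshold level $i^*$ (the smallest level with $r_{i^*}\ge r/(4/\eps+28)$), show that a cross edge must join the level-$i^*$ ancestors, and conclude $i_{uv}\le i^*$ together with $r_{i^*}<2r/(4/\eps+28)$; the paper instead argues directly from the minimality of $i_{uv}$: since no cross edge joins the level-$(i_{uv}-1)$ ancestors, their distance exceeds $\left(\frac{4}{\eps}+32\right)r_{i_{uv}-1}$, whence $\|uv\|\ge \frac{2}{\eps}r_{i_{uv}}$ and $2r_{i_{uv}}\le\eps\|uv\|$. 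The two arguments are mirror images and give the same bound, with yours costing a little extra bookkeeping. Two small points to tidy, neither a real gap: your assertion $\|u_i-v_i\|\ge r-4r_i>0$ is not true for every level in the regime $r_i\ge r/(4/\eps+28)$, only where you actually need it, at $i=i^*$, and there it needs $\eps$ a bit smaller than $1$ (say $\eps<1/4$, harmless under the paper's standing smallness assumption on $\eps$); and the corner case $i^*=0$ escapes your minimality step $r_{i^*}=2r_{i^*-1}$, but then $i_{uv}=0$, $u'=u$, $v'=v$, and the claim is trivial.
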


\begin{proof}
    Let $i$ be the approximate level of $uv$ and let $(u'', i - 1), (v'', i- 1)$ be the ancestors of $(u, 0)$, $(v, 0)$ in the net-tree at level $i - 1$. The net-tree spanner construction yields $u'v' \in E(G)$. Using the triangle inequality, we get
    \begin{equation}
        \begin{split}
            \|uv\| &\geq \|u''v''\| - \|uu''\| - \|vv''\| \geq \left(\frac{4}{\eps} + 32\right)r_{i - 1} - 4\cdot r_{i - 1} \geq \frac{2}{\eps} \cdot r_i .
        \end{split} 
    \end{equation}
    Using geometric series, we obtain $\|uu'\| \leq 2\cdot r_i \leq \eps\|uv\|$, consequently $u' \in \ball(u, \eps r)$. Similarly, we can show that $v' \in \ball(v, \eps r)$.
\end{proof}

\subsection{Fast Implementation}
We are given a set $X\subset \mathbb{R}^d$ be a set of $n$ points, and a sufficiently small $\eps>0$ satisfying \Cref{eq:relationeps}. We start with a spanner with sparsity $\eps^{-O(d)}$ and lightness $\eps^{-O(d)}$. Such a spanner can be constructed in $O_{\eps,d}(n\log n)$ time \cite{GLN02}. Our construction consists of $k \leq \log^*{(d/\eps}) + O(1)$ iterations of two pruning phases.
Assume that at the beginning of the current iteration, we have a $(1 + \delta)$-spanner $H = (X, E)$. We use the same notation for $H_1$, $H_2$, $E_1$ and $E_2$ as in the original algorithm.

\paragraph{Classification of edges in $E$.} For each edge $(s, t)$ in $E$, we define the approximate set $A'_{s, t}$ of $A_{s, t}$ as follow: Let $h$ be the approximate level of $(s, t)$ and $N(A_{s, t})$ be the set containing all net points $w\in N_h$ at level $h$ such that $\ball(w, 2r_h) \cap A_{s, t} \neq \emptyset$. Let $A'_{s, t} = \bigcup_{w\in N(A_{s, t})}\ball(w, 2r_h)$. Since $r_h \leq \frac{\eps}{2} \|st\|$, then for every $x \in A'_{s, t}$, we obtain $\frac{\|s - \proj_{st}(x)\| }{\|st\| } \in \left[\frac{3}{8} - \frac{1}{50} - \eps, \frac{3}{8} + \frac{1}{50} + \eps\right]$ and $\frac{\|t - \proj_{st}(x)\| }{\|st\| } \in \left[\frac{5}{8} - \frac{1}{50} - \eps, \frac{5}{8} + \frac{1}{50} + \eps\right]$. Similarly, we define $B'_{s, t}$ based on the region $B_{s, t}$. An edge $(s, t)$ is type-(\romannumeral1) if  $A'_{s, t}$ or $B'_{s, t}$ is empty. Otherwise, $(s, t)$ is a type-(\romannumeral2) edge. We also denote the set of type-(\romannumeral1) and type-(\romannumeral2) edges by $E^{(\romannumeral1)}$ and $E^{(\romannumeral2)}$, respectively.

\paragraph{First pruning phase. } Recall that for every pair $\{x, y\}\subset X$, if $|P_{x, y}| \geq \frac{\alpha}{2^i\kappa}$ (for each sub-iteration $i$), we delete every edge in $P_{x, y}$. However, since the total number of pairs we need to check is $\Theta(n^2)$, this step would lead to a quadratic running time for each iteration. Thus, instead of considering all pairs of vertices $\{x, y\}\subset X$, we only consider the edges in the net-tree spanner $G$. Assuming that $\|xy\| \geq \beta^j/25$, we define an approximate set 
\begin{equation}
    P'_{x, y} = \left\{st \in L_j \cap E_1 \cap E^{(\romannumeral1)} : \|sx\| + \|xy\| + \|yt\| \leq (1 + 5\eps)\|st\|\right\}.
\end{equation}

The difference between $P_{x, y}$ and $P'_{x, y}$ is the distortion ($1 + \eps$ versus $1 + 5\eps$). If $|P'_{x, y}| \geq \frac{\alpha}{2^j\kappa}$, then add $xy$ to $E_1$ as a new edge, and remove all type-(\romannumeral1) edges in $P'_{x, y}$ from $E_1$. 

\paragraph{Second pruning phase. } For each type-(\romannumeral2) edge in $E_1$, we check whether the distance between two endpoints is already preserved by edges in $E_2$. This can be done in nearly linear time using the technique in \cite{DN97}. Our method is similar to \cite{DN97}, except we only consider each edge for $\Theta(\log{n})$ levels before contracting it. 

Let $E_{1, i} = \{st \in E_1 : 2^i \leq ||st|| \leq 2^{i + 1}\}$ and $E_{2, i} = \{st \in E_2: 2^i \leq ||st|| \leq 2^{i + 1}\}$ for $i \in [0, \log \Phi + 1]$, $E_{2, < i}$ be the union of $E_{2, 0}, E_{2, 1}, \ldots E_{2, i - 1}$ and $H_{2, < i} = (X, E_{2, <i})$. For each $i$, we build a cluster graph $F_i$ with vertex set $X$. We greedily construct the set of balls with radius $\eps 2^i$ in $H_{2, <i}$ (each ball is a cluster) such that the distance between any two centers in $H_{2, <i}$ is at least $\eps 2^i$ and the set of balls covers $X$.
Let $\{C_1, C_2, \ldots\}$ be the set of clusters. Note that one point might belong to multiple clusters.
There are two types of edges in $F_i$: inter-cluster and intra-cluster. Intra-cluster edges are edges between a point $u$ and a center $v$ of a cluster $C$ such that $u \in C$. The weight of $uv$ is $\dist_{H_2}(u, v)$. We add all intra-cluster edges to $F_i$. An inter-cluster edge is an edge between two cluster centers. There is an edge between two cluster centers $v_j$ and $v_{j'}$ if and only if $\dist_{H_2}(v_j, v_{j'}) \leq 2^i$ or there exists an edge in $E_{2, < i}$ from a point in $C_j$ to a point in $C_{j'}$, where $C_j$ and $C_{j'}$ are the clusters corresponding to the centers $v_j$ and $v_{j'}$. The inter-cluster edge $v_jv_{j'}$ has weight $\dist_{H_2}(v_j, v_{j'})$ in the first case and $\min_{u \in C_j, w \in C_{j'}, uw \in E_2}(\dist_{H_2}(u, v_j) + \dist_{H_2}(w, v_{j'}) + \|uw\|)$ in the second case. We use Dijkstra's algorithm to compute all the intra-cluster and inter-cluster edges. 

Our implementation of the second pruning phase
runs in $\log{\Phi} + 1$ iterations. At iteration $i$, we first build the cluster graph $F_i$ from $H_{2, <i}$. We consider edges in $E_{1, i}$ in increasing order of length. For each $st \in E_{1, i}$, if $st$ is not type-(\romannumeral2), we simply add $st$ to $E_2$. Otherwise, we check all paths from $s$ to $t$ within $O(1)$ hops in $F_i$ that contains at most $2$ intra-cluster edges. If the length of the shortest of those paths is less than or equal to $(1 + \kappa^2\delta)(1 + \eps)\|st\|$, skip that edge. Otherwise, we add $st$ and its helper edge to $E_2$. 
After an edge $st$ is added to $E_2$, we add inter-cluster edges between the centers of all clusters containing $u$ to the centers of all clusters containing $v$. For any center $c_s$ of a cluster containing $s$ and any center $c_t$ of a cluster containing $t$, we add an edge $c_sc_t$ of weight $\dist_{F_i}(c_s, s) + \|st\| + \dist_{F_i}(t, c_t)$.

However, computing all intra-cluster edges for all $F_i$ is expensive since one will have to run a single source shortest path (SSSP) algorithm for all cluster centers. Instead, we contract all edges of lengths less than or equal to $\frac{2^i\eps^{2}}{n}$. The total weight of all those edges in any path in $F_i$ is $2^i\eps^2$. Hence, for every path of length at least $2^i$, the total weight of contracted edges is significantly smaller compared to the total length. We then construct the graph $F'_i$ similar to $F_i$. Let $i' = i - \log{(n\eps^{-2})}$ and $E'_{2, < i} = \bigcup_{k = i'}^{i - 1}E_{2, k}$, we compute the graph $F'_i$ as follow: first, contract all edges in $E_{2, < i'}$ and second, compute $F'_i$ from $E'_{2, < i}$ similar to $F_i$. Instead of checking $\dist_{E_2}(u, v) \leq (1 + \kappa^2\delta)\|uv\|$ as in the original algorithm, we check whether $\dist_{F'_i}(s, t) + \eps^22^i \leq (1 + \eps)(1 + \kappa^2\delta)\|st\|$. If true, we skip the edge $st$. Otherwise, we keep $st$ and add its approximate helper edge to $E_2$. The approximate helper edge is any edge between a net point in $N(A_{s, t})$ to a net point in $N(B_{s, t})$.

After the iteration (including two pruning phases), we update  $\delta \leftarrow \Delta'(\kappa, \delta) = (1 + \delta)(1 + \kappa\delta)(1 + 5\eps)(1 + \kappa^2\delta)(1 + \eps) - 1$ and $\alpha \leftarrow  \log \alpha$.
\subsection{Analysis}

\subsubsection{Stretch} 

We prove that \Cref{edge-stretch} still holds in our implementation. After the first pruning phase, for any edge $st \in E$, we are guaranteed that $\dist_{H_1} \leq (1 + \kappa\delta)(1 + 5\eps) \cdot ||st||$. Throughout the execution of the second phase, for any edge $st \in E_1 \cap E$, we have $\dist_{H_2}(s, t) \leq (1 + \kappa^2\delta)(1 + \eps) \cdot \|st\|$.

\begin{claim}
    \label{clm:impl-edge-stretch}
    Throughout the execution of the first pruning phase, for any edge $st\in E$, we are guaranteed that $\dist_{H_1}(s, t)\leq (1+\kappa\delta)(1 + 5\eps)\cdot \|st\|$. Also, throughout the execution of the second phase, for any edge $st\in E_1\cap E$, we have $\dist_{H_2}(s, t)\leq (1+\kappa^2\delta)(1 + \eps)\cdot \|st\| $.
\end{claim}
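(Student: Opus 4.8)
The statement is the fast-implementation counterpart of \Cref{edge-stretch}, so the plan is to transcribe that proof while carrying along the two extra sources of slack introduced by the implementation: the relaxed distortion $1+5\eps$ in the sets $P'_{x,y}$ used for pruning in the first phase, and the $(1+\eps)$ loss incurred in the second phase both by testing reachability against the cluster graph $F'_i$ (rather than $H_2$ itself) and by using approximate helper edges. Both are harmless because, by \Cref{eq:relationeps} and \Cref{small-delta}, $\eps$ is minuscule compared with $\kappa^{-1}$ and with the running parameter $\delta\ge\eps$ (recall $\kappa=10^4$, $\delta<\kappa^{-5}$); in particular every inequality that closes the computations in \Cref{edge-stretch}, such as $\eps(1+\kappa\delta)<\tfrac{\kappa\delta}{25\beta}$, still holds with the constant $5$ inserted and with a $(1+5\eps)$-factor already built into the target bound, leaving strictly more room than in the original argument.

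\textbf{First phase.} I would argue exactly as in \Cref{edge-stretch}, by induction on the level index $j$ applied to the \emph{final} edge set $E_1$ of the first phase, over all edges $st\in E\cap L_j$; only edges of $E^{(\romannumeral1)}$ (in the implementation's sense) are ever deleted in the first phase, so it suffices to certify the stretch of such edges. For $j=0$ no type-(\romannumeral1) edge is pruned, since any witness $xy$ would need $\|xy\|\ge\beta^0/25$ while all pairwise distances are $\ge 1$, forcing $\|sx\|+\|xy\|+\|yt\|>(1+5\eps)\|st\|$. For the inductive step, if $st\in L_j$ is pruned then a net-tree approximate edge $xy$ was added permanently to $E_1$ with $\|xy\|\ge\beta^j/25$ and $\|sx\|+\|xy\|+\|yt\|\le(1+5\eps)\|st\|$, whence $\max\{\|sx\|,\|yt\|\}<\|st\|/\beta^2$; replacing $sx$ and $yt$ by $(1+\delta)$-spanner paths of $(X,E)$, every edge of which lies in some $L_k$ with $k<j$ and therefore satisfies $\dist_{H_1}(\cdot)\le(1+\kappa\delta)(1+5\eps)\|\cdot\|$ (by induction if it was itself pruned, trivially if it survived), and concatenating through $xy$, the same chain of inequalities as in \Cref{edge-stretch} — now using $\|xy\|\ge\|st\|/(25\beta)$, $\eps\le\delta<\kappa^{-5}$, $\kappa=10^4$ — yields $\dist_{H_1}(s,t)\le(1+\kappa\delta)(1+5\eps)\|st\|$.

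\textbf{Second phase.} Here I would use the monotonicity argument of \Cref{edge-stretch}. Every type-(\romannumeral1) edge of $E_1\cap E$ enters $E_2$ and keeps its first-phase bound, which is below $(1+\kappa^2\delta)(1+\eps)$ since $(1+\kappa\delta)(1+5\eps)<(1+\kappa^2\delta)(1+\eps)$. For a type-(\romannumeral2) edge $st\in E_1\cap E$ with $st\in E_{1,i}$: when the algorithm examines it, either $st$ is kept (then $\dist_{H_2}(s,t)=\|st\|$ from that point on) or it is skipped, which happens only when $\dist_{F'_i}(s,t)+\eps^2 2^i\le(1+\eps)(1+\kappa^2\delta)\|st\|$. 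The step that needs care is the inequality $\dist_{H_2}(s,t)\le\dist_{F'_i}(s,t)+\eps^2 2^i$ for the graph $H_2$ current at that moment: every $s$-$t$ path in $F'_i$ expands to a walk in $H_2$, because each inter-cluster edge carries a genuine $\dist_{H_2}$-distance between cluster representatives (including the inter-cluster edges inserted dynamically after each addition to $E_{2,i}$), each intra-cluster edge to or from a cluster center is realized by an $H_2$-path, and the contracted edges of $E_{2,<i'}$ each have length $<2^i\eps^2/n$, of which a simple path uses fewer than $n$ — contributing at most $\eps^2 2^i$ in total; one also checks, via the Das and Narasimhan cluster-graph analysis \cite{DN97}, that restricting the test to $O(1)$-hop cluster paths with at most two intra-cluster edges loses nothing in this direction. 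Thus the skip is sound, giving $\dist_{H_2}(s,t)\le(1+\eps)(1+\kappa^2\delta)\|st\|$ at the time $st$ is processed, and since $E_2$ only grows the bound persists to the end of the phase.

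\textbf{Main obstacle.} The first phase is routine once the factor $5$ is carried through; the only mild point there is that the fattened region $A'_{s,t}$ (resp.\ $B'_{s,t}$) is empty precisely when $A_{s,t}$ (resp.\ $B_{s,t}$) is, because $A_{s,t}\subseteq A'_{s,t}$ — every point of $A_{s,t}$ lies within $2r_h$ of some $N_h$-net point, which then belongs to $N(A_{s,t})$ — and $A'_{s,t}$ is itself a union of balls around net points of $X$; hence the edges eligible for first-phase pruning are still exactly $E^{(\romannumeral1)}$. The real work is the second-phase inequality $\dist_{H_2}(s,t)\le\dist_{F'_i}(s,t)+\eps^2 2^i$: one must follow how $F_i$ and $F'_i$ evolve as edges enter $E_{2,i}$ during iteration $i$, confirm that the $\min$-over-representatives weights on inter-cluster edges never undercut the true $H_2$-detour lengths, and verify that the contraction threshold $2^i\eps^2/n$ indeed bounds the cumulative weight of contracted edges on a shortest cluster path — this is exactly where the near-linear-time machinery (the hierarchical nets of \cite{CGMZ16} and the cluster graphs of \cite{DN97}) has to be reconciled with the stretch guarantee, so it is the part that needs the most care.
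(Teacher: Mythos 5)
Correct, and essentially the paper's own argument: your first phase repeats the induction of \Cref{edge-stretch} verbatim with the $(1+5\eps)$ slack carried through, and your second phase is exactly the reduction the paper packages into \Cref{clm:good-cluster-path}, \Cref{obs:approx-contract-dist} and \Cref{clm:good-cluster-path-2} (cluster-graph edge weights dominate genuine $H_2$-distances, and un-contracting costs at most $n\cdot \eps^2 2^i/n=\eps^2 2^i$), which you unpack directly instead of citing \cite{DN97}, correctly noting that only this soundness direction is needed since restricting to $O(1)$-hop cluster paths can only make the skip test harder. One side remark overstates things: $A'_{s,t}$ empty implies $A_{s,t}$ empty but not conversely (the fattened region is a union of balls merely \emph{intersecting} $A_{s,t}$, so it can be nonempty while $A_{s,t}\cap X=\emptyset$), yet the edge classification plays no role in the stretch bound, so this does not affect the validity of your proof.
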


The bound on the stretch after the first pruning phase can be proven using the same argument as in \Cref{edge-stretch}. We then focus on the second pruning phase. \Cref{clm:good-cluster-path} shows that if there is a good approximation path from $s$ to $t$ in the cluster graph $F_i$, there is a good approximation path from $s$ to $t$ in the current greedy spanner of $E_2$.

\begin{claim}[Lemma 3.2 and Lemma 3.3 \cite{DN97}]
    \label{clm:good-cluster-path} Let $st$ be a type-(\romannumeral2) edge in $E_1$ such that $2^i \leq \|st\| \leq 2^{i + 1}$ for some $i$ and $F_i$ be the cluster graph constructed above. Let $P$ be the shortest path from $s$ to $t$ in $H_{2}$ and $P_C$ be the shortest path from $s$ to $t$ in $F_h$. Then, $1 \leq \frac{\|P_C\|}{\|P\|} \leq 1 + \Theta(\eps)$. 
\end{claim}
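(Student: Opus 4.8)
The plan is to establish the two inequalities separately, following the cluster-graph argument of Das and Narasimhan~\cite{DN97} specialized to the graph $F_i$ constructed above; recall that $2^i\le\|st\|\le 2^{i+1}$, so the cluster radius $\eps 2^i$ is $\Theta(\eps\|st\|)$ (and $F_h$ in the statement stands for $F_i$).

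For the left inequality $\|P\|\le\|P_C\|$ (soundness of the test) I would simply unfold $F_i$ back into $H_2$. By construction every edge of $F_i$ certifies a walk in $H_2$ of exactly its weight: an intra-cluster edge $uv$ of weight $\dist_{H_2}(u,v)$ unfolds to a shortest $u$-$v$ path of $H_2$; a first-type inter-cluster edge $v_jv_{j'}$ of weight $\dist_{H_2}(v_j,v_{j'})$ does the same; and a second-type inter-cluster edge $v_jv_{j'}$ of weight $\dist_{H_2}(u,v_j)+\|uw\|+\dist_{H_2}(w,v_{j'})$ unfolds to a shortest $u$-$v_j$ path, then the edge $uw\in E_2$, then a shortest $w$-$v_{j'}$ path. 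Concatenating the unfoldings of the edges of an $s$-$t$ path in $F_i$ yields an $s$-$t$ walk in $H_2$ of the same total weight, hence $\dist_{H_2}(s,t)=\|P\|\le\|P_C\|$; in particular $F_i$ never under-reports the $H_2$-distance, which is what makes the pruning decision safe.

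For the right inequality $\|P_C\|\le(1+\Theta(\eps))\|P\|$ (completeness) I would exhibit a short $F_i$-path tracking a shortest $s$-$t$ path $P=(s=x_0,\ldots,x_m=t)$ of $H_2$. If $\|P\|=\dist_{H_2}(s,t)$ already exceeds $(1+\kappa^2\delta)(1+\eps)\|st\|$, the edge $st$ is kept no matter what the test returns and there is nothing to prove; so assume $\|P\|=\Theta(\|st\|)=\Theta(2^i)$, in which case $P$ uses only edges of length $O(2^i)$. Cut $P$ into $O(1)$ consecutive pieces, each either an all-short run of edges of total length at most $2^i$ or a single edge of length $\ge 2^i$ (there are $O(1)$ of the latter). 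At each cut vertex $y$ pick a cluster containing $y$ with center $c(y)$, so that $\dist_{H_2}(y,c(y))\le\dist_{H_{2,<i}}(y,c(y))\le\eps 2^i$. For consecutive cut vertices $y,y'$ the centers $c(y),c(y')$ are joined in $F_i$ by an inter-cluster edge of weight at most $\dist_{H_2}(y,y')+2\eps 2^i$ --- a first-type edge when $\dist_{H_2}(c(y),c(y'))\le\dist_{H_2}(y,y')+2\eps 2^i\le(1+O(\eps))2^i$ (absorbing the snapping slack into a slightly larger threshold constant), and a second-type edge when the piece is a single $E_2$-edge bridging the two clusters. Prepending the intra-cluster edge $s\to c(s)$ and appending $c(t)\to t$ (each of weight $\le\eps 2^i$) produces an $F_i$-walk on $O(1)$ hops with exactly two intra-cluster edges --- matching the ``$O(1)$ hops, at most $2$ intra-cluster edges'' search --- of total weight at most $\sum\|yy'\|+O(1)\cdot 2\eps 2^i\le\|P\|+O(\eps 2^i)=(1+\Theta(\eps))\|P\|$, using $\|P\|=\Omega(2^i)$. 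A shortest $F_i$-path is then no longer, which gives the upper bound.

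The main obstacle is making the three requirements of the completeness direction coexist: the per-transition snapping error must stay a genuinely \emph{additive} $O(\eps 2^i)$ (rather than a multiplicative $1+\Theta(\eps)$ that would compound over the pieces), the inter-cluster edge needed at each cut must be guaranteed to be present in $F_i$ (this is exactly why $F_i$ uses the threshold $2^i$ and the $E_2$-bridging rule, and it forces one to leave the right amount of slack between the cluster radius $\eps 2^i$ and that threshold), and the number of pieces --- equivalently, the hop count of the exhibited $F_i$-path --- must be $O(1)$, which rests on the $\eps^{-O(d)}$ packing bound in $\mathbb{R}^d$ limiting how many cluster centers lie near $s$ and $t$. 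All three are handled by Lemmas~3.2 and~3.3 of~\cite{DN97}; in the write-up I would verify that $F_i$ satisfies their hypotheses with thresholds rescaled to the present constants, and then invoke those lemmas rather than redo the estimate from scratch.
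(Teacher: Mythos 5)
The paper gives no proof of this claim at all---it is imported verbatim as Lemmas~3.2 and~3.3 of \cite{DN97}---and your reconstruction (exact unfolding of $F_i$-edges into $H_2$-walks of equal weight for the lower bound; snapping a short $H_2$-path to cluster centers with additive $O(\eps 2^i)$ slack and $O(1)$ hops for the upper bound, then deferring to \cite{DN97} for the threshold constants) is essentially that same argument. Your restriction of the completeness direction to the regime $\dist_{H_2}(s,t)=O(\|st\|)$ is consistent with how the claim is actually used by the algorithm's test, so it does not create a gap.
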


We observe that the distance in $F'_i$ is approximately the distance in $F_i$.

\begin{observation}
    \label{obs:approx-contract-dist}
    For all $x,y\in X$, we have
    $\dist_{F'_i}(s, t) \leq \dist_{F_i}(s, t) \leq \dist_{F'_i}(s, t) + 2^i\eps^2$. 
\end{observation}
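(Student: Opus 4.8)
The plan is to derive both inequalities from the single structural fact distinguishing $F'_i$ from $F_i$: the graph $F'_i$ is assembled \emph{after} contracting every edge of $E_{2,<i'}$, and each such edge has length at most $2^{i'}=2^i\eps^2/n$ (because $i'=i-\log(n\eps^{-2})$ and $E_{2,<i'}=\bigcup_{k<i'}E_{2,k}$ contains only edges of length $\le 2^{i'}$). I would first prove the sandwich at the level of the underlying graphs, comparing $\dist_{H_{2,<i}}$ with the distance $\dist^{c}$ in the contracted graph $H_{2,<i}/E_{2,<i'}$, and then push it through the (identical) cluster-graph construction to obtain the statement for $F_i$ and $F'_i$.

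For the underlying graphs, the lower bound $\dist^{c}(s,t)\le\dist_{H_{2,<i}}(s,t)$ is immediate, since contracting edges (equivalently, zeroing their weights) can only shrink shortest-path distances. For the upper bound, fix a shortest $s$–$t$ path $\pi$ in $H_{2,<i}/E_{2,<i'}$; as all weights are nonnegative it may be taken simple, so it uses at most $n-1$ edges and traverses each visited super-vertex once. Reinstating the original weights of the contracted edges along $\pi$ (and expanding each super-vertex traversal into the corresponding path of $E_{2,<i'}$-edges) turns $\pi$ into an $s$–$t$ walk of $H_{2,<i}$ whose length exceeds $\mathrm{len}(\pi)$ by at most the total length of the reinstated $E_{2,<i'}$-edges. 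Along a simple path that total is at most $(n-1)\cdot 2^i\eps^2/n<2^i\eps^2$, so $\dist_{H_{2,<i}}(s,t)\le\dist^{c}(s,t)+2^i\eps^2$.

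Finally, I would transfer this to the cluster graphs: $F_i$ is the Das--Narasimhan cluster graph built on $H_{2,<i}$ and $F'_i$ is the same construction run on $H_{2,<i}/E_{2,<i'}$, and by \Cref{clm:good-cluster-path} such a cluster graph faithfully represents the pairwise distances of the graph it is built from. Since the construction is the identical operation applied to the two graphs, both the monotonicity inequality and the additive inequality above carry over to $\dist_{F'_i}$ and $\dist_{F_i}$; the term $\eps^2 2^i$ that the algorithm adds to its distance test is exactly this additive gap. The main obstacle is precisely this last reconciliation: the two cluster graphs have greedily chosen centers and inter-/intra-cluster edge sets that need not match, so one cannot lift a path of $F'_i$ edge-by-edge into $F_i$; instead the comparison has to be routed through the common base graph $H_{2,<i}$, and one must verify that the (multiplicative) slack of \Cref{clm:good-cluster-path} is the same on both sides, so that it cancels and leaves only the clean additive error $2^i\eps^2$.
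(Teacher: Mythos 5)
Your base-graph argument is exactly the justification the paper has in mind (the paper offers no formal proof of this observation beyond the remark that every contracted edge has weight at most $2^i\eps^2/n$, so along a simple path with at most $n$ edges the total contracted weight is at most $2^i\eps^2$; contraction can only shrink distances). So the first two steps of your plan are fine and match the intended argument.

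The gap is the transfer step. Routing the comparison through the common base graph via \Cref{clm:good-cluster-path} cannot give the stated inequalities: that claim only provides a one-sided multiplicative guarantee $1\leq \|P_C\|/\|P\|\leq 1+\Theta(\eps)$ relating cluster-graph distances to base-graph distances, so combining it for $F_i$ (over $H_{2,<i}$) and for $F'_i$ (over the contracted graph) leaves an uncontrolled slack of order $\eps\cdot 2^i$, which is far larger than the claimed additive error $2^i\eps^2$, and there is no reason for the two slacks to ``cancel''---they are inequalities, not identities, and the two cluster graphs may realize them very differently. This route also cannot produce the exact monotonicity $\dist_{F'_i}(s,t)\leq \dist_{F_i}(s,t)$ at all. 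The observation should instead be proved by a direct comparison of the two cluster graphs: $F'_i$ is, by construction, the same cluster-graph construction with the edges of $E_{2,<i'}$ contracted, i.e.\ every intra- and inter-cluster weight of $F'_i$ is the contracted-metric counterpart of the corresponding $F_i$ weight. Monotonicity is then immediate edge by edge, and for the upper bound one expands the realizing paths of the edges along an $F_i$-shortest (respectively $F'_i$-shortest) path into a simple path of $H_{2,<i}$ and reinstates the contracted edges, paying at most $n\cdot 2^i\eps^2/n=2^i\eps^2$ in total---precisely your base-graph computation, applied directly at the level of the cluster graphs and with no appeal to \Cref{clm:good-cluster-path}.
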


Now \Cref{clm:good-cluster-path}, combined with \Cref{obs:approx-contract-dist},  implies the following.
\begin{claim}
    \label{clm:good-cluster-path-2} Let $s$ and $t$ be two points such that $2^i \leq \|st\| \leq 2^{i + 1}$, and Let $F'_i$ be the cluster graph constructed above. Let $P$ the shortest path from $s$ to $t$ in $H_{2} = (X, E_{2})$ and $P'_C$ be the shortest path from $s$ to $t$ in $F'_i$. Then $1 \leq \frac{\|P'_C\| + 2^i\eps^2}{\|P\|} \leq 1 + \eps$.
\end{claim}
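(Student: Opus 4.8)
The plan is to obtain the statement as a direct bookkeeping consequence of the two facts cited just above it, namely \Cref{clm:good-cluster-path} and \Cref{obs:approx-contract-dist}. Throughout I identify $\|P\|=\dist_{H_2}(s,t)$, $\|P_C\|=\dist_{F_i}(s,t)$, and $\|P'_C\|=\dist_{F'_i}(s,t)$. As a preliminary remark I would note that $F'_i$ is obtained from the construction underlying $F_i$ by contracting an additional set of short edges (those in $E_{2,<i'}$ with $i'=i-\log(n\eps^{-2})$), so $F'_i$ still connects $s$ and $t$ whenever $H_2$ does; since $H_2$ is connected on $X$ (it is a spanner), $P'_C$ is well-defined.

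For the lower bound on the ratio, I would combine the right inequality of \Cref{obs:approx-contract-dist}, $\dist_{F_i}(s,t)\le \dist_{F'_i}(s,t)+2^i\eps^2$, with the left inequality of \Cref{clm:good-cluster-path}, $\dist_{F_i}(s,t)\ge \dist_{H_2}(s,t)=\|P\|$. Chaining these yields $\|P'_C\|+2^i\eps^2\ge \dist_{F_i}(s,t)\ge \|P\|$, i.e.\ $\frac{\|P'_C\|+2^i\eps^2}{\|P\|}\ge 1$. For the upper bound, the left inequality of \Cref{obs:approx-contract-dist} gives $\|P'_C\|\le \dist_{F_i}(s,t)$, and the right inequality of \Cref{clm:good-cluster-path} gives $\dist_{F_i}(s,t)\le(1+\Theta(\eps))\|P\|$, hence $\|P'_C\|+2^i\eps^2\le(1+\Theta(\eps))\|P\|+2^i\eps^2$. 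The last step is to absorb the additive term: since $P$ is an $s$--$t$ path, $\|P\|\ge\|st\|\ge 2^i$, so $2^i\eps^2\le\eps^2\|P\|$, giving $\|P'_C\|+2^i\eps^2\le(1+\Theta(\eps)+\eps^2)\|P\|$. Applying the standard constant-factor rescaling of $\eps$ uniformly in the net-tree construction of \Cref{sec:fast} (so that the $\Theta(\eps)$ of \Cref{clm:good-cluster-path} becomes at most $\eps/2$) turns this into $\le(1+\eps)\|P\|$.

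I do not expect a genuine obstacle: the claim is a corollary of two already-established facts. The only point requiring care is that the additive slack $2^i\eps^2$ created by the extra contraction must be charged against $\|P\|$ — which is legitimate precisely because $\|P\|\ge\|st\|\ge 2^i$, not merely because $\|st\|\le 2^{i+1}$. A secondary, purely cosmetic issue is reconciling the $\Theta(\eps)$ of \Cref{clm:good-cluster-path} with the clean bound $1+\eps$ stated here, which the uniform rescaling of $\eps$ handles.
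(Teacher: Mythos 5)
Your proposal is correct and follows essentially the same route as the paper: chain \Cref{obs:approx-contract-dist} with \Cref{clm:good-cluster-path}, absorb the additive slack $2^i\eps^2$ against $\|P\|\geq\|st\|\geq 2^i$, and remove the resulting $\Theta(\eps)$ by a constant-factor adjustment (the paper phrases this as scaling the cluster radius, which is the same fix as your rescaling of $\eps$). You merely spell out the chaining and the absorption step more explicitly than the paper does.
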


\begin{proof}
    Let $P_C$ be the shortest path from $s$ to $t$ in $F_i$. By \Cref{obs:approx-contract-dist}, we have $\|P'_C\| \leq \|P_C\| \leq \|P'_C\| + 2^i\eps^2$. Combined with \Cref{clm:good-cluster-path}, we obtain $1 \leq \frac{\|P'_C\| + 2^i\eps^2}{\|P\|} \leq 1 + \Theta(\eps)$. The last term $\Theta(\eps)$ can be reduced to $\eps$ by scaling the cluster radius by a suitable constant. 
\end{proof}

We have the following corollary:

\begin{corollary}
    Let $st$ be a type-(\romannumeral2) edge in $E_1$ such that $2^i \leq \|st\| \leq 2^{i + 1}$. If $\dist_{H_{2}}(s, t) \leq (1 + \kappa^2\delta)\|st\|$, then $st$ is not added to $E_2$. 
\end{corollary}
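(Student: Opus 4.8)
The plan is to show that, under the hypothesis $\dist_{H_2}(s,t)\le (1+\kappa^2\delta)\|st\|$, the skip test carried out by the fast second pruning phase succeeds, so that $st$ is discarded rather than added to $E_2$. Recall that while processing $st\in E_{1,i}$ (with $2^i\le\|st\|\le 2^{i+1}$) the algorithm builds the contracted cluster graph $F'_i$ and keeps $st$ \emph{only if} $\dist_{F'_i}(s,t)+\eps^2 2^i>(1+\eps)(1+\kappa^2\delta)\|st\|$; hence it suffices to verify the reverse inequality.

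First I would invoke \Cref{clm:good-cluster-path-2} with $P$ a shortest $s$–$t$ path in the current graph $H_2=(X,E_2)$ and $P'_C$ a shortest $s$–$t$ path in $F'_i$. It yields $\|P'_C\|+2^i\eps^2\le (1+\eps)\,\|P\|=(1+\eps)\,\dist_{H_2}(s,t)$. Substituting the hypothesis $\dist_{H_2}(s,t)\le(1+\kappa^2\delta)\|st\|$ gives
\[
\dist_{F'_i}(s,t)+\eps^2 2^i=\|P'_C\|+2^i\eps^2\le (1+\eps)(1+\kappa^2\delta)\|st\|,
\]
which is precisely the condition under which the algorithm refrains from adding $st$ to $E_2$. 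Although the algorithm inspects only bounded-hop paths of $F_i$ (and $F'_i$) carrying at most two intra-cluster edges, \Cref{clm:good-cluster-path} (Lemmas~3.2–3.3 of \cite{DN97}) guarantees that a near-shortest cluster path has this structure, so the path the algorithm actually finds is no longer than $\dist_{F'_i}(s,t)$ up to the additive slack already absorbed into the displayed inequality.

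The chain of inequalities itself is routine; the only point requiring care is the consistency between the dynamic state maintained by the implementation and the objects appearing in \Cref{clm:good-cluster-path-2}. Concretely, one must check that the cluster graph $F'_i$ held by the algorithm at the moment it examines $st$ is exactly the graph obtained from $E_{2,<i'}$ (after contraction) and $E'_{2,<i}$, and that contracting all edges of length at most $2^i\eps^2/n$ distorts any $s$–$t$ distance by at most the additive $2^i\eps^2$ recorded in \Cref{obs:approx-contract-dist} (using $2^i\le\|st\|$, so these contracted edges are genuinely negligible along any $s$–$t$ path). Once this bookkeeping is in place, the corollary follows immediately from \Cref{clm:good-cluster-path-2}, with no new geometric argument needed.
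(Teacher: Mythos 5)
Your proposal is correct and follows essentially the same route as the paper: apply \Cref{clm:good-cluster-path-2} to get $\dist_{F'_i}(s,t)+\eps^2 2^i\le(1+\eps)\dist_{H_2}(s,t)$, substitute the hypothesis $\dist_{H_2}(s,t)\le(1+\kappa^2\delta)\|st\|$, and conclude that the algorithm's skip test succeeds so $st$ is not added to $E_2$. Your additional remark about the bounded-hop, at-most-two-intra-cluster-edge search is the one point the paper handles via \Cref{clm:bounded-hop-diam} rather than \Cref{clm:good-cluster-path}, but this attribution detail does not change the argument.
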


\begin{proof}
    By \Cref{clm:good-cluster-path-2}, the path from $s$ to $t$ in $F'_i$ has length at most $(1 + \eps)\dist_{H_2}(s, t) - \eps^22^i \leq (1 + \kappa^2\delta)(1 + \eps)\|st\| - \eps^22^i$. Hence, $st$ is not added to $E_2$ by our algorithm. 
\end{proof}

Then, for any edge $st$ not added to $E_2$, the distance from $s$ to $t$ in the spanner is at most $(1 + \kappa^2\delta)(1 + \eps)\|st\| + \eps^22^i$. By proper scale of the cluster radius, we got $\dist_{H_2}(s, t) \leq (1 + \kappa^2\delta)(1 + \eps)\|st\|$. This completes the proof of \Cref{clm:impl-edge-stretch}.

Recall that after each iteration, we update $\delta = \Delta'(\kappa, \delta)$ with  $\Delta'(\kappa, \delta) = (1 + \delta)(1 + \kappa\delta)(1 + 5\eps)(1 + \kappa^2\delta)(1 + \eps) - 1$. Observe that
\begin{equation}
    \begin{split}
        \Delta'(\kappa, \delta) &\leq (1 + \eps)(1 + 5\eps)(1 + \delta)(1 + \kappa\delta)(1 + \kappa^2\delta) - 1\\
        &\leq (1 + 7\eps)(1 + \delta)(1 + \kappa\delta)(1 + \kappa^2\delta) - 1 \leq (\kappa + 1)^2\delta
    \end{split}
\end{equation}
 for sufficiently small $\eps$. Therefore, after the update at the end of the $i$-th iteration, we have $\delta \leq (\kappa + 1)^{2i}\eps$. After $O(\log^*(d/\eps))$ iterations, the stretch is bounded by $(\kappa + 1)^{O(\log^*(d/\eps))}\eps \leq \kappa^{-5}$. 

\subsubsection{Sparsity}
The charging scheme for sparsity is similar to $\Psi_0$. The only change we make for a fast implementation, compared to the original $\Psi_0$, is that we use $A'_{s, t}$ and $B'_{s, t}$ instead of $A_{s, t}$ and $B_{s, t}$. For each edge $st \in E$, let $\pi_{s, t}$ be a spanner path in $E_\sparse$ between $s, t$.  If $st$ is a type-(\romannumeral1) edge, then $A'_{s, t}$ or $B'_{s, t}$ is empty, implying that $A_{s, t}$ or $B_{s, t}$ is empty since $A'_{s, t}$ and $B'_{s, t}$ contain $A_{s, t}$ and $B_{s, t}$, respectively. Therefore, there must be a single edge $e$ in $\pi_{s, t}$ that crosses the region $A_{s, t}$ or $B_{s, t}$. We charge the edge $st$ to $e$. If $st$ is a type-(\romannumeral2) edge, there are two cases:

\begin{enumerate}[leftmargin=*]
    \item If either $\pi_{s, t} \cap A'_{s, t}$ or $\pi_{s, t} \cap B'_{s, t}$ is empty. In that case, there is an edge $e = s't'$ on $\pi_{s, t}$ that crosses either $A'_{s, t}$ or $B'_{s, t}$ (and hence cross either $A_{s, t}$ or $B_{s, t}$). If $e$ only crosses one of the two regions (say $A_{s, t}$), then we have
	$$\|\proj_{st}(s) - \proj_{st}(s')\| < \brac{\frac{3}{8} - \frac{1}{50}}\cdot \|st\|,$$
	$$\brac{\frac{3}{8} - \frac{1}{50}}\cdot \|st\| < \|\proj_{st}(t') - \proj_{st}(t)\| < \brac{\frac{5}{8} - \frac{1}{50}}\cdot \|st\|.$$
	Recall that for each edge $e \in E_\sparse$, we divide $e$ evenly into $\kappa$ sub-segments by adding at most $\kappa - 1$ Steiner points on $e$ and $Y$ is the union of $X$ and the set of added Steiner points. Let $z \in Y\cap e$ be the Steiner points in $A_{s, t}$ on segment $e$ that is closest to $s'$; such a point $z$ must exist since each sub-segment of $e$ has length at most $\frac{\|e\|}{\kappa} < \frac{\|st\|}{25}$. Then, charge $st$ to segment $zt'$ which has length at least $\brac{\frac{1}{25} - \frac{1}{\kappa}}\|st\| > \frac{\|st\|}{26}$.
	
	If $e$ crosses both regions $A_{s, t}, B_{s, t}$, then let $z_1\in Y\cap e$ be the Steiner point in $A_{s, t}$ which is the closest one from $s'$, and let $z_2\in Y\cap e$ be the Steiner point in $B_{s, t}$ which is the closest one from $t'$. Then, charge $st$ to segment $z_1z_2$ which has length at least $\brac{\frac{1}{4} + \frac{1}{25} - \frac{2}{\kappa}}\|st\| > \frac{\|st\|}{4}$.

    \item Otherwise, we charge $st$ fractionally to set of edges in $E_\sparse$. Move along $\pi_{s, t}$ from $s$ to $t$ and let $p$ be the last vertex in $A'_{s, t}$ and let $q$ be the first vertex in $B'_{s, t}$. As $\|\pi_{s, t}\|\leq (1+\epsilon)\cdot \|st\|$, we know that
    \begin{align*}
            \|\pi_{s, t}[p, q]\| &\leq \|\proj_{st}(p)-\proj_{st}(q)\| + \epsilon\cdot \|st\|\\ 
            &\leq \|\proj_{st}(p)-\proj_{st}(q)\| + \eps \cdot \frac{1}{\left(\frac{5}{8} - \frac{1}{50} - \eps\right) - \left(\frac{3}{8} + \frac{1}{50} + \eps\right)} \cdot \|\proj_{st}(p)-\proj_{st}(q)\|\\
            &\leq (1+10\epsilon)\cdot \|\proj_{st}(p)-\proj_{st}(q)\| .
    \end{align*}
	for sufficiently small $\eps$. Therefore,  \Cref{angle-bound}
	yields
	$$\|E(\pi_{s, t}[p, q], st, 2\sqrt{10\epsilon})\|> 0.5\cdot \|\proj_{st}(p)-\proj_{st}(q)\| .$$
	Then, for each edge $e\in E(\pi_{s, t}[p, q], st, 2\sqrt{10\epsilon})$, charge a fraction of $\frac{2\|e\| }{\|\proj_{st}(p)-\proj_{st}(q)\| }$ of edge $st$ to edge $e$.
\end{enumerate}

We then have our modified \Cref{angle}. 
\begin{claim}
    \label{clm:implement-angle}
    If a type-(\romannumeral2) edge $st$ charges to an edge $xy\in E^Y_\sparse$, then angle $\angle(st, xy)$ is at most $15\sqrt{\epsilon}$. Plus, the projection $\proj_{st}(z)$ of $z$ onto line $st$ lies in the segment $st$ and satisfies $\frac{\|s - \proj_{st}(z)\|}{\|st\|}\in \left[\frac{3}{8} -\frac{1}{50} - \eps, \frac{5}{8} +\frac{1}{50} + \eps\right]$ for all $z\in \{x, y\}$
\end{claim}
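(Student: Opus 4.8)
The plan is to replay the proof of \Cref{angle} almost verbatim, with the windows $A_{s,t},B_{s,t}$ replaced throughout by the net-ball covers $A'_{s,t},B'_{s,t}$; the $\eps$-fattening of these regions is precisely what produces the extra $\eps$ slack in the claimed interval. The fact to lean on is the remark recorded right after the definition of $A'_{s,t},B'_{s,t}$: every $x\in A'_{s,t}$ satisfies $\frac{\|s-\proj_{st}(x)\|}{\|st\|}\in[\frac38-\frac1{50}-\eps,\frac38+\frac1{50}+\eps]$, and symmetrically for $B'_{s,t}$ with $\frac58$ in place of $\frac38$; both windows sit inside $[\frac38-\frac1{50}-\eps,\frac58+\frac1{50}+\eps]$.

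For the projection assertion I would argue, as in \Cref{angle}, that it holds by the design of the charging scheme. In the first case (one of $\pi_{s,t}\cap A'_{s,t}$, $\pi_{s,t}\cap B'_{s,t}$ empty) the charged segment has its Steiner endpoint chosen inside a \emph{genuine} region $A_{s,t}$ or $B_{s,t}$ — so that endpoint projects into the corresponding $\frac1{50}$-window — while its other endpoint $t'$ (resp.\ $s'$) is pinned by the displayed inequalities of that case to project into $(\frac38+\frac1{50},\frac58+\frac1{50})\|st\|$; all of these projections land on $st$ and inside the claimed window. In the fractional case every charged edge is an edge of the subpath $\pi_{s,t}[p,q]$, where $p$ is the last vertex of $\pi_{s,t}$ in $A'_{s,t}$ and $q$ the first in $B'_{s,t}$. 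Since $\pi_{s,t}$ lies in $\Gamma_{s,t}$, I would run the same ellipsoid/slab argument as in \Cref{angle}: any vertex of this subpath whose projection fell inside the un-fattened $A_{s,t}$-window would be a point of $X\cap\Gamma_{s,t}$, hence (being within $2r_h$ of a net point of $N(A_{s,t})$) would lie in $A_{s,t}\cap X\subseteq A'_{s,t}$, contradicting the maximality of $p$; symmetrically for $q$. This confines every endpoint of the charged edges to $[\frac38-\frac1{50}-\eps,\frac58+\frac1{50}+\eps]\|st\|$ and to the segment $st$.

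The real work is the angle bound $\theta:=\angle(st,xy)\le 15\sqrt\eps$. In the fractional case this is free, since the charged edges lie in $E(\pi_{s,t}[p,q],st,2\sqrt{10\eps})$ and $2\sqrt{10\eps}<15\sqrt\eps$. In the other case the charged segment is a sub-segment of an edge $e=s't'$ of $\pi_{s,t}$ crossing $A_{s,t}$ or $B_{s,t}$, hence parallel to $e$, so $\theta=\angle(st,e)$; as in \Cref{angle} the charged sub-segment, and therefore $e$ itself, has length $>(\frac1{25}-\frac1\kappa)\|st\|>\frac{\|st\|}{26}$ by the design of the scheme, and then, taking WLOG $s'$ before $t'$ on $\pi_{s,t}$,
\begin{align*}
\eps\|st\| \ \ge\ \|\pi_{s,t}\|-\|st\| \ &\ge\ \|ss'\|+\|s't'\|+\|t't\|-\|st\| \\
&\ge\ \|s't'\|-\|\proj_{st}(s't')\| \ =\ (1-\cos\theta)\,\|s't'\| \ \ge\ \frac{\theta^2}{8}\,\|e\| \ >\ \frac{\theta^2}{208}\,\|st\|,
\end{align*}
using $1-\cos\theta=2\sin^2(\theta/2)>\theta^2/8$, whence $\theta\le\sqrt{208\,\eps}<15\sqrt\eps$. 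The one place I expect to need genuine care is checking that coarsening $A_{s,t},B_{s,t}$ to $A'_{s,t},B'_{s,t}$ does not disturb the two quantitative inputs of \Cref{angle}: the $\Theta(\|st\|)$ lower bound on $\|e\|$ in the integral case — which survives because the charged Steiner point is still chosen in the \emph{genuine} region $A_{s,t}$, sitting within one sub-segment of a Steiner point outside it — and the $2\sqrt{10\eps}$ angle cap in the fractional case, which is literally the angle threshold defining $E(\pi_{s,t}[p,q],st,2\sqrt{10\eps})$. Everything else is bookkeeping of the two extra $\eps$'s, which is exactly what widens the window in the statement.
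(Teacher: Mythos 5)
Your proposal is correct and is essentially the paper's own argument: the paper states \Cref{clm:implement-angle} without a separate proof, remarking only that it follows from \Cref{angle} with the projection window widened by $\eps$ because $A_{s,t},B_{s,t}$ are replaced by the fattened regions $A'_{s,t},B'_{s,t}$, which is precisely the replay you carry out (case-(b) angle bound free from the $2\sqrt{10\eps}$ threshold, case-(a) via the ellipsoid slack and the $\|st\|/26$ length lower bound, projections "by design" of the modified charging scheme). Your write-up in fact supplies more detail than the paper does, and any residual looseness (e.g., confining intermediate vertices of $\pi_{s,t}[p,q]$ only up to $O(\eps)$ rather than exactly $\eps$) is already present in the paper's "by design" treatment of \Cref{angle} and is harmless for all downstream constants.
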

The difference between \Cref{angle} and \Cref{clm:implement-angle} is the region of the projection of each edge that $st$ charges to. The region changes from $\left[\frac{3}{8} -\frac{1}{50}, \frac{5}{8} +\frac{1}{50}\right]$ to $\left[\frac{3}{8} -\frac{1}{50} - \eps, \frac{5}{8} +\frac{1}{50} + \eps\right]$. This is due to the design of our algorithm.

For sufficiently small $\eps$, \Cref{clm:fully-charge} still holds for our algorithm. We now show that \Cref{sparsity-phase1} also holds.

\begin{claim}
    During the first pruning phase, the number of new edges added to $E_1$ is at most $O(|E_\sparse|\log\alpha)$. After the first pruning phase, the number of type-(\romannumeral1) edges in $E_1$ is at most $O(|E_\sparse|)$.
\end{claim}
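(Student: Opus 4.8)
The plan is to re-run the proof of \Cref{sparsity-phase1} essentially verbatim, inserting \Cref{clm:approx-edge} at one point to cope with the fact that the fast first pruning phase may only insert net-tree cross edges of $G$ rather than arbitrary pairs. As before, I would prove by induction on the sub-iteration index $i$ that at the start of sub-iteration $i$ we have $|E_1\cap E^{(\romannumeral1)}|\le |E_\sparse|\,\alpha/2^{i-1}$; the base case $i=1$ is immediate from $\alpha\ge |E|/|E_\sparse|$. Granting the inductive invariant, the two counting statements follow exactly as in \Cref{sparsity-phase1}: during sub-iteration $i$ each newly inserted cross edge removes at least $\alpha/(2^i\kappa)$ type-(\romannumeral1) edges while at most $|E_\sparse|\,\alpha/2^{i-1}$ such edges are present, hence $O(|E_\sparse|)$ new edges per sub-iteration and $O(|E_\sparse|\log\alpha)$ in total, and after the last of the $O(\log\alpha)$ sub-iterations only $O(|E_\sparse|)$ type-(\romannumeral1) edges survive.

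For the inductive step I would suppose, toward a contradiction, that more than $|E_\sparse|\,\alpha/2^{i-1}$ type-(\romannumeral1) edges survive sub-iteration $i$. Since the charging scheme used by the fast implementation agrees with $\Psi_0$ on type-(\romannumeral1) edges --- each such $st$ is charged to the unique $e\in E_\sparse$ whose $E_\sparse$-path for $st$ crosses $A_{s,t}$ or $B_{s,t}$ --- the pigeonhole principle produces a set $F$ of more than $\alpha/2^{i-1}$ surviving type-(\romannumeral1) edges all charged to one common $xy\in E_\sparse$. Exactly as in \Cref{sparsity-phase1}, the slab geometry gives $\tfrac{1}{25}\|st\|\le\|xy\|\le (1+\eps)\|st\|$ for every $st\in F$, so these edges occupy only $O(1)$ consecutive length scales, and a second pigeonhole yields an index $j$ with $|F\cap L_j|>\alpha/(2^i\kappa)$.

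The only new ingredient is that the fast algorithm cannot insert $xy$ itself, only cross edges of $G$. I would therefore let $x'y'\in E(G)$ be the \emph{approximate edge} of the pair $\{x,y\}$ and apply \Cref{clm:approx-edge} to get $\|xx'\|,\|yy'\|\le\eps\|xy\|$ (the constant being controlled by the net scale, which I would fix small enough that the resulting perturbation is absorbed below). I then check the two conditions under which the fast first pruning phase would have acted on $x'y'$ while processing $L_j$ in sub-iteration $i$: first, $\|x'y'\|\ge(1-2\eps)\|xy\|=\Omega(\beta^j)$, so $x'y'$ clears the length threshold for cross edges at that level (after the harmless adjustment of the threshold constant discussed below); second, for every $st\in F\cap L_j$ the triangle inequality together with \Cref{clm:approx-edge} gives $\|sx'\|+\|x'y'\|+\|y't\|\le \|sx\|+\|xy\|+\|yt\|+O(\eps)\|xy\|\le (1+5\eps)\|st\|$, i.e.\ $st\in P'_{x',y'}$. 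Hence $F\cap L_j\subseteq P'_{x',y'}$ and $|P'_{x',y'}|>\alpha/(2^i\kappa)$, so the algorithm would have inserted $x'y'$ and deleted all of $P'_{x',y'}\supseteq F\cap L_j$ --- contradicting the survival of those edges. As in \Cref{sparsity-phase1}, this is applied to the final edge set: only old edges are ever pruned, so survival of $F\cap L_j$ means no qualifying cross edge was ever triggered for it, which is the contradiction.

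The step I expect to demand the most care is the use of \Cref{clm:approx-edge} above: ensuring that the $O(\eps)$ error created by snapping the ideal pair $\{x,y\}$ to the net-tree cross edge $x'y'$ is genuinely swallowed both by the relaxed distortion $(1+5\eps)$ in the definition of $P'_{x,y}$ (versus $(1+\eps)$ in the exact algorithm) and by the constant in the cross-edge length threshold $\beta^j/25$. Making this precise forces a commitment to the net-tree parameters --- i.e.\ to the absolute constant hidden in \Cref{clm:approx-edge} --- chosen small enough relative to those two slacks; everything else, including the treatment of removal order and the final arithmetic, is identical to the proof of \Cref{sparsity-phase1}.
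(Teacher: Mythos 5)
Your proposal matches the paper's own proof of this claim almost line for line: the same induction on sub-iterations, the same pigeonhole argument producing the set $F$ and the level index $j$, and the same use of \Cref{clm:approx-edge} together with the triangle inequality to show $F\cap L_j\subseteq P'_{x',y'}$ under the relaxed $(1+5\eps)$ slack, yielding the contradiction. Your extra care about whether $x'y'$ clears the $\beta^j/25$ length threshold (via $\|x'y'\|\geq(1-2\eps)\|xy\|$ and a harmless adjustment of constants) is a detail the paper's proof passes over silently, but it does not change the approach.
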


\begin{proof}
    We show by induction that during the first pruning phase, by the beginning of the $i$-th sub-iteration, $|E_1 \cap E^{(\romannumeral1)}|$ is at most $|E_\sparse|\alpha/2^{i - 1}$. Suppose otherwise for the sake of contradiction. Using the same argument as in the proof of \Cref{sparsity-phase1}, we obtain that there exists a set $F$ of more than $\alpha/2^{i - 1}$ edges charged to the same edge $e = xy$ in $E_\sparse$ and there is an index $j$ such that at least $\frac{\alpha}{2^i\kappa}$ edges in $L_j$ have been charged to $e$. Let $x'y'$ be the approximate edge of $xy$. We show that $P'_{x', y'} \supseteq F \cap L_j$. For any edge $st \in F \cap L_j$, by the charging scheme, we get
    \begin{equation*}
        \|sx\| + \|xy\| + \|yt\| \leq (1 + \eps)\|st\|.
    \end{equation*}
    By \Cref{clm:approx-edge}, we have $\|xx'\|, \|yy'\| \leq \eps \|xy\|$. Then, using the triangle inequality, we have
    \begin{align*}
            \|sx'\| + \|x'y'\| + \|y't\| &\leq \|sx\| + \|xy\| + \|yt\| + 2\|xx'\| + 2\|yy'\| \\ 
            &\leq (1 + \eps)\|st\| + 4\eps \|xy\| \leq (1 + 5\eps)\|st\| .
    \end{align*}
    Thus, the edge $x'y'$ must be added to our spanner and hence, all edges in $F \cap L_j$ are removed, which is a contradiction. 
\end{proof}

\Cref{helper} also holds in our construction. We then prove an analogue of \Cref{one-edge-per-level}.
\begin{lemma}
    \label{lm:impl-one-edge-per-level}
    Fix any edge $e\in E^Y_\sparse$ and level index $j\geq 0$. Then, after the second pruning phase, there is at most one type-(\romannumeral2) edge in $E_2\cap L_j$ that is charged to $e$.
\end{lemma}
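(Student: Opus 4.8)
The plan is to follow the proof of \Cref{one-edge-per-level} almost verbatim, with three adjustments forced by the fast implementation. Suppose for contradiction that two distinct type-(\romannumeral2) edges $s_1t_1, s_2t_2 \in E_2\cap L_j$ both charge to $e$ under the implementation charging scheme, and assume WLOG $\|s_1t_1\|\le\|s_2t_2\|$, so $s_1t_1$ is processed first; since \Cref{helper} holds unchanged, $s_1t_1$ is added to $E_2$ together with an \emph{approximate} helper edge $ab$, where $a\in N(A_{s_1t_1})$ and $b\in N(B_{s_1t_1})$. The first adjustment is that $a$ and $b$ need not lie in $A_{s_1t_1}$ and $B_{s_1t_1}$; but by the definition of $N(\cdot)$ together with $r_h\le\tfrac{\eps}{2}\|s_1t_1\|$ there are true points $a^\star\in A_{s_1t_1}$, $b^\star\in B_{s_1t_1}$ with $\|aa^\star\|,\|bb^\star\|\le\eps\|s_1t_1\|$. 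The second adjustment is that \Cref{clm:implement-angle} only gives the slightly wider projection windows $[\tfrac38-\tfrac1{50}-\eps,\tfrac58+\tfrac1{50}+\eps]$ for the endpoints of $e$ on lines $s_1t_1$ and $s_2t_2$ (still with $\angle(s_1t_1,e),\angle(s_2t_2,e)\le15\sqrt\eps$). The third adjustment is that the stretch guarantee we may invoke is \Cref{clm:impl-edge-stretch} together with the $\Delta'$-analogue of \Cref{spanner-stretch} (whose proof goes through unchanged with $\Delta$ replaced by $\Delta'$), i.e. an extra factor $(1+\eps)$, and that the test which keeps $s_2t_2$ out of $E_2$ is the one certified by the corollary to \Cref{clm:good-cluster-path-2}: it suffices to show $\dist_{H_2}(s_2,t_2)\le(1+\kappa^2\delta)\|s_2t_2\|$ at the time $s_2t_2$ is examined.

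With $D=\|s_1t_1\|$, I would run the projection computation of \Cref{one-edge-per-level}: let $f,h$ be the orthogonal projections of $a,b$ onto line $s_2t_2$, let $c,w$ be the projections of $a,r$ onto line $s_1t_1$ (where $r$ is an endpoint of $e$), and so on. Replacing $a,b$ by $a^\star,b^\star$ in the bounds on $\|cw\|\le 0.29D$, on $\angle(s_1t_1,s_2t_2)\le 30\sqrt\eps$, and in the triangle inequalities controlling $\|s_2p\|$, $\|s_2f\|$, $\|af\|$, $\|bh\|$, $\|t_2h\|$ costs only an additive $O(\eps D)$ in each inequality, which is dominated by the $\Theta(\sqrt\eps\,D)$ terms already present; the widened windows shift numerical constants by $O(\eps)$ only. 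Hence the conclusions survive unchanged in form: $f,h$ land strictly inside segment $s_2t_2$, $\|s_2f\|,\|t_2h\|>0.05D$, $\|af\|,\|bh\|\le 15\sqrt\eps\,D$, whence $\|s_2a\|-\|s_2f\|,\ \|bt_2\|-\|t_2h\|\le O(\eps D)$ exactly as in \Cref{eq217a,eq217b} and $\|ab\|\le\|fh\|+O(\eps D)$. Since $s_1t_1,s_2t_2$ lie in the same level $L_j$, we still get $\|ab\|\ge(0.21-O(\eps))\|s_1t_1\|>0.2\|s_2t_2\|$.

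Finally I would assemble the substitute path $(s_2\rightsquigarrow a)\circ ab\circ(b\rightsquigarrow t_2)$ in the current $H_2$. Because $\|s_2a\|\le\|s_2f\|+\|af\|\le\|s_2t_2\|-\|t_2f\|+O(\sqrt\eps D)<\tfrac{1}{1+\Delta'(\kappa,\delta)}\|s_2t_2\|$ (and symmetrically for $\|bt_2\|$), the shortest $H_2$-paths realizing the $\Delta'$-analogue of \Cref{spanner-stretch} for the pairs $(s_2,a)$ and $(b,t_2)$ have total length $<\|s_2t_2\|$, hence consist only of edges already present when $s_2t_2$ is examined (phase-1 edges, and phase-2 edges of length $<\|s_2t_2\|$). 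Plugging $\dist_{H_2}(s_2,a)\le(1+(\kappa+1)^2\delta)(1+\eps)\|s_2a\|$, $\dist_{H_2}(b,t_2)\le(1+(\kappa+1)^2\delta)(1+\eps)\|bt_2\|$, then $\|ab\|\le\|fh\|+O(\eps D)\le\|s_2t_2\|-\|s_2f\|-\|t_2h\|+O(\eps D)$ and $\|ab\|>0.2\|s_2t_2\|$, and using $\eps\le\delta<\kappa^{-5}$, $\Delta'(\kappa,\delta)<(\kappa+1)^2\delta$ and $\kappa=10^4$ exactly as at the end of \Cref{one-edge-per-level} (with every intermediate bound inflated by the harmless $(1+\eps)$), yields $\dist_{H_2}(s_2,t_2)<(1+\kappa^2\delta)\|s_2t_2\|$. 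By the corollary to \Cref{clm:good-cluster-path-2}, the algorithm would then skip $s_2t_2$, contradicting $s_2t_2\in E_2$.

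I expect the only delicate bookkeeping to be tracking the two sources of extra error — the $O(\eps D)$ from the approximate helper edge and the widened projection windows, and the multiplicative $(1+\eps)$ from the implementation's weaker stretch guarantee — and verifying that the final strict inequality $\dist_{H_2}(s_2,t_2)<(1+\kappa^2\delta)\|s_2t_2\|$ still holds. This is comfortable because the decisive negative term $-(\kappa+1)^2\delta\|ab\|<-0.2(\kappa+1)^2\delta\,\|s_2t_2\|$ is a constant times $\delta\|s_2t_2\|$, while all perturbations are $O(\eps\|s_2t_2\|)=O(\delta\|s_2t_2\|)$ with a much smaller constant; the only global fact needed is that $\delta<\kappa^{-5}$ persists across all $\log^*(d/\eps)+O(1)$ iterations, which is exactly the $\Delta'$-analogue of \Cref{small-delta} established in the implementation's stretch subsection.
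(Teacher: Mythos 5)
Your proposal is correct and follows essentially the same route as the paper: assume two edges of $L_j$ charge to $e$, take the approximate helper edge $ab$ added when the earlier edge was processed, rerun the projection estimates of \Cref{one-edge-per-level} with the $O(\eps D)$ slack coming from the net-point helper endpoints and the widened windows of \Cref{clm:implement-angle}, and conclude $\dist_{H_2}(s_2,t_2)\le(1+\kappa^2\delta)\|s_2t_2\|$, so the implementation's distance test (via the corollary to \Cref{clm:good-cluster-path-2}) would have skipped $s_2t_2$. The paper's own proof is just a terser version of this argument (it cites the computation of \Cref{one-edge-per-level} with adjusted constants such as $\|cw\|\le(0.29+2\eps)D$), so your explicit bookkeeping of the extra $(1+\eps)$ stretch factor and the $O(\eps D)$ perturbations only spells out what the paper leaves implicit.
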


\begin{proof}
    Assume that two type-(\romannumeral2) edges $s_1t_1, s_2t_2 \in E_2 \cap L_j$ are charged to the same edge $e \in E^Y_\sparse$; and w.l.o.g.\ $\|s_1t_1\| \leq \|s_2t_2\|$. We prove that $s_2t_2$ could not have been added to $E_2$ since there is already a good approximation path from $s_2$ to $t_2$ at the time we consider $s_2t_2$. Let $r$ be an endpoint of $e$, and let $c, w$ be the projections of $a, r$ on $s_1t_1$, and let $f, g, h$ be the projections of $a, r, b$ on $s_2t_2$ and $p, q$ be the projections of $c, w$ on $s_2t_2$. Let $D$ be the length of $s_1t_1$, by the design of our algorithm, we still have $\|cd\| \leq (0.29 + 2\eps)D$ and $\angle(s_1t_1, s_2t_2) \leq 30\sqrt{\eps}$. Since $s_2 t_2$ also charges to $e$, we have $\|s_2g\| /\|s_2t_2\| \in \left[\frac{3}{8} - \frac{1}{50} - \eps, \frac{5}{8} + \frac{1}{50} + \eps\right]$. 
    Using the same argument as in the proof of \Cref{one-edge-per-level}, we have that $\|s_2a\| \leq \|s_2f\| + 2170\eps D$, $\|bt_2\| \leq \|ht_2\| + 2170\eps D$ and $\|ab\| \leq \|fh\| + 120\eps D$. Then, we also have:
    \begin{equation*}
        \dist_{H_2}(s_2, a) + ||ab|| + \dist_{H_2}(b, t_2) < (1 + \kappa^2\delta)\|s_2t_2\|.
    \end{equation*}
    Therefore, $(s_2t_2)$ cannot be added to $E_2$ due to the design of our algorithm.
\end{proof} 

\Cref{cor:total-type-ii} is then carries over. We conclude that the total number of edges of our spanner after a single iteration is $O(|E_\sparse|\log{\alpha})$. Therefore, after $k$ iterations, the output spanner has $O(\log^{(k)}(\eps^{-O(d)})|E_{\sparse}|) = O((\log^{(k)}(1/\eps) + \log^{(k - 1)}(d))|E_\sparse|)$ edges.

\subsubsection{Lightness}
For the lightness, we keep the same charging schemes, $\Psi_1$ and $\Psi_2$, only changing regions $A_{s, t}$ and $B_{s, t}$ to $A'_{s, t}$ and $B'_{s, t}$, respectively. Using the same arguments as in the proof of sparsity, \Cref{const-charge}, \Cref{clm:E1-log-size} and \Cref{one-edge-per-level-lightness} still hold. It remains to prove that each edge $e\in E_\light^Y$ 
receives $O(\|e\|)$ charges under the charging scheme $\Psi_2$. For any single edge $st \in E$, we also let $F \subseteq E_\light^Y$ be the set of edges that $st$ charged to under the charging scheme $\Psi_1$.

\begin{claim}
    $\|F\| > \frac{1}{10}\|st\|.$
\end{claim}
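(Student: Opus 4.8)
The plan is to replay the proof of \Cref{proj-len} almost verbatim, tracking only the harmless $O(\eps)$ slack caused by using the inflated slabs $A'_{s,t}$ and $B'_{s,t}$ in place of $A_{s,t}$ and $B_{s,t}$. Recall that $F=\{e\in E_\light^Y:\Psi_1(st,e)>0\}$ and that, since $st$ is a type-(b) edge, the charging scheme $\Psi_1$ distributes the weight $\|st\|$ exactly over the set $E(\rho_{s,t}[p,q],st,2\sqrt{10\eps})$, where $p$ is the last vertex of $\rho_{s,t}$ lying in $A'_{s,t}$ and $q$ is the first vertex of $\rho_{s,t}$ lying in $B'_{s,t}$. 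Both $p$ and $q$ exist because $st$ falls into case-(b), so $\rho_{s,t}$ meets both $A'_{s,t}$ and $B'_{s,t}$; and $p$ precedes $q$ along $\rho_{s,t}$, since otherwise a point of $st$ strictly between the two slabs would be covered at least three times by $\proj_{st}(\rho_{s,t})$, violating property~(2) of $\rho_{s,t}$.

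First I would lower bound the projection gap $\|\proj_{st}(p)-\proj_{st}(q)\|$. By the definition of the inflated slabs, $p\in A'_{s,t}$ gives $\|s-\proj_{st}(p)\|\le(\tfrac38+\tfrac1{50}+\eps)\,\|st\|$ and $q\in B'_{s,t}$ gives $\|s-\proj_{st}(q)\|\ge(\tfrac58-\tfrac1{50}-\eps)\,\|st\|$, with both projections on the segment $st$; subtracting yields $\|\proj_{st}(p)-\proj_{st}(q)\|\ge(\tfrac14-\tfrac1{25}-2\eps)\,\|st\|\ge 0.2\,\|st\|$ once $\eps$ is small, which holds by \Cref{eq:relationeps}.

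Next I would apply \Cref{angle-bound}. Since $\rho_{s,t}[p,q]$ lies inside the ellipsoid $\Gamma_{s,t}$, its projection onto line $st$ is a sub-segment of $st$ containing $\proj_{st}(p)$ and $\proj_{st}(q)$, hence of length at least $\|\proj_{st}(p)-\proj_{st}(q)\|$; and since $\|\pi_{s,t}\|\le(1+\eps)\|st\|$ we get $\|\rho_{s,t}[p,q]\|\le\|\proj_{st}(p)-\proj_{st}(q)\|+\eps\,\|st\|\le(1+10\eps)\,\|\proj_{st}(p)-\proj_{st}(q)\|$, where the final inequality absorbs $\eps\,\|st\|$ using the bound $\|\proj_{st}(p)-\proj_{st}(q)\|\ge 0.2\,\|st\|$ from the previous step. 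Invoking \Cref{angle-bound} with parameter $10\eps$ then gives
$$\|F\|=\|E(\rho_{s,t}[p,q],st,2\sqrt{10\eps})\|>\tfrac12\,\|\proj_{st}(p)-\proj_{st}(q)\|\ge\tfrac12\Big(\tfrac14-\tfrac1{25}-2\eps\Big)\|st\|>\tfrac1{10}\,\|st\|,$$
where the last step holds because $\tfrac12(\tfrac14-\tfrac1{25})=0.105$, which leaves constant slack to absorb the $-\eps$ term for $\eps$ small enough.

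I do not expect any real obstacle: the argument is structurally identical to \Cref{proj-len}, and the only new observation is that replacing the exact slabs by their $\eps$-inflations costs merely an additive $2\eps\|st\|$ in the projection gap, comfortably swallowed by the gap between $0.105$ and $0.1$. The one point requiring a line of care is verifying that $\proj_{st}(\rho_{s,t}[p,q])$ is a genuine segment containing both endpoint-projections, so that \Cref{angle-bound} is applicable --- but this is precisely what properties~(1)--(2) of the construction of $\rho_{s,t}$ guarantee.
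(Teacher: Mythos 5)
Your proof is correct and takes essentially the same approach as the paper: the paper's own (one-line) proof likewise bounds the projection gap between $A'_{s,t}$ and $B'_{s,t}$ by $\brac{\frac{1}{4}-\frac{1}{25}-2\eps}\|st\|$ and concludes $\|F\|>\frac{1}{2}\brac{\frac{1}{4}-\frac{1}{25}-2\eps}\|st\|>\frac{1}{10}\|st\|$, exactly mirroring the argument of \Cref{proj-len}. The additional details you spell out (the applicability of \Cref{angle-bound} to $\rho_{s,t}[p,q]$ and that $p$ precedes $q$) are just what the paper leaves implicit, so there is no gap.
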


\begin{proof}
    The projection of $F$ onto line $st$ is $\proj_{st}(F)$, whose length is at least $\frac{1}{2}\left(\frac{1}{4} - \frac{1}{25} - 2\eps\right)\|st\| > \frac{1}{10}\|st\|$.
\end{proof}

Let $S$ be the subset of edges that are already charged by $\lceil\frac{\alpha}{2^i}\rceil$ times by edges before $st$. If $\|S\| \leq 0.5 \|F\|$, we re-distribute the charge similar to the original proof. We now consider the case when $\|S\| > 0.5\|F\|$. We now prove \Cref{lightness} for our implementation. Using the same proof as \Cref{clm:mid-charge-edge}, for each edge $e$ in $S$, there exists an edge $\chi(e)$ charging to $e$ such that $\frac{1}{\kappa}\|st\| \geq \|\chi(e)\| \geq \kappa\|e\|$. We define the set $P = \{\chi(e) : e \in S\}$ and construct the set $Q = \{a_1b_1, a_2b_2, \ldots, a_\ell b_\ell\}$ similar to the proof of \Cref{lightness}. We obtain $\|Q\| \geq \frac{1}{320}\|st\|$ by using the same argument as in the proof of \Cref{shortcuts}. 

For each edge $a_ib_i$, let $e_i = z_iw_i$ be the corresponding edge in $S$ that $a_ib_i$ charged to and $\chi(e_i) = a_ib_i$. Let $p_i$ ($q_i$) and $c_i$ ($d_i$) be the intersection of the hyperplane perpendicular to $st$ through $z_i$ ($w_i$) with $st$ and $a_ib_i$. We then prove \Cref{projection1}.
\begin{claim}
    \label{clm:implement-projection1}
    Both points $c_i$ and $d_i$ lie on the segment $a_ib_i$; furthermore, $\min\{\|a_ic_i\|, \|d_ib_i\|\}\geq 0.25\|a_ib_i\|$.
\end{claim}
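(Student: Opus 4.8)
The plan is to mirror the proof of \Cref{projection1}, carrying the extra additive $\eps$-slack that arises because the fast implementation works with the inflated regions $A'_{s,t}$ and $B'_{s,t}$ rather than the exact regions $A_{s,t}$ and $B_{s,t}$. By symmetry it suffices to prove $\|a_ic_i\| \ge 0.25\,\|a_ib_i\|$, so I will only treat the endpoint $c_i$ (the point $z_i$); the bound for $d_i$ (the point $w_i$) is identical after swapping the roles of $a_i,b_i$ and of $s,t$.

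First I would introduce the auxiliary point $c_i' = \proj_{a_ib_i}(z_i)$, the orthogonal projection of $z_i$ onto the \emph{line} through $a_i,b_i$. Since $a_ib_i = \chi(e_i)$ charges to $z_iw_i$ under $\Psi_1$ (in the type-(\romannumeral2), case-(b) regime, as $\|\chi(e_i)\|\ge \kappa\|e_i\|$), the modified \Cref{clm:implement-angle} gives that $\proj_{a_ib_i}(z_i)$ lies on the segment $a_ib_i$ with $\|a_ic_i'\|/\|a_ib_i\| \in [3/8 - 1/50 - \eps,\, 5/8 + 1/50 + \eps] \subseteq [0.35, 0.67]$ for sufficiently small $\eps$. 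Next, because $z_ic_i$ lies in the hyperplane perpendicular to $st$ while $\angle(a_ib_i, st)\le 4\sqrt{10\eps}$ (this angle bound carries over unchanged: $\angle(a_ib_i,e_i)\le 2\sqrt{10\eps}$ from $\Psi_1$ and $\angle(e_i,st)\le 2\sqrt{10\eps}$ from the charging geometry), projecting the vector $\overrightarrow{c_iz_i}$ onto the direction $\overrightarrow{a_ib_i}$ shows $\|c_ic_i'\| \le \sin(4\sqrt{10\eps})\cdot\|c_iz_i\| < 4\sqrt{10\eps}\cdot\|c_iz_i\|$, and since $z_i$ lies in the ellipsoid $\{z : \|a_iz\|+\|b_iz\|\le (1+\eps)\|a_ib_i\|\}$ we have $\|z_ic_i'\| < 2\sqrt{\eps}\,\|a_ib_i\|$, hence $\|c_iz_i\| \le \|c_ic_i'\| + \|c_i'z_i\|$ is $O(\sqrt{\eps})\|a_ib_i\|$ and therefore $\|c_ic_i'\| < 20\eps\,\|a_ib_i\|$ for $\eps$ below a fixed constant threshold.

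Combining these by the triangle inequality along the segment $a_ib_i$,
\begin{align*}
\|a_ic_i\| \;\ge\; \|a_ic_i'\| - \|c_ic_i'\| \;>\; (0.35)\,\|a_ib_i\| - 20\eps\,\|a_ib_i\| \;>\; 0.25\,\|a_ib_i\|,
\end{align*}
where the last inequality uses the standing assumption $\eps\cdot 2^{O(\log^*(d/\eps))} < \kappa^{-5}$ (so in particular $\eps$ is tiny). Running the symmetric argument for $w_i$ and $d_i$ yields $\|d_ib_i\| \ge 0.25\,\|a_ib_i\|$, and in passing it also establishes that both $c_i$ and $d_i$ land on the segment $a_ib_i$ (the projections $c_i', d_i'$ do, and the perpendicular displacements $\|c_ic_i'\|, \|d_id_i'\|$ are $O(\eps)\|a_ib_i\|$, far too small to push the foot of the perpendicular off the segment given the $[0.35,0.67]$ margin). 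The only real point of care — the "main obstacle," such as it is — is bookkeeping the $\pm\eps$ widening of the projection interval from \Cref{clm:implement-angle}: one must verify it still leaves a comfortable gap above $0.25$, which it does since $0.35 - 0.25 = 0.1 \gg 20\eps$. Everything else is a verbatim adaptation of the proof of \Cref{projection1}.
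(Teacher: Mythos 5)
Your proposal is correct and matches the paper's proof essentially verbatim: both introduce $c_i'=\proj_{a_ib_i}(z_i)$, locate it via the $\eps$-widened projection interval from \Cref{clm:implement-angle}, bound $\|c_ic_i'\|$ by $O(\eps)\|a_ib_i\|$ using the small angle between $a_ib_i$ and $st$ together with the ellipsoid constraint on $z_i$, and finish with the triangle inequality, exactly as in the adaptation of \Cref{projection1}. The only differences are immaterial constant bookkeeping (e.g.\ $20\eps$ vs.\ the paper's $10\eps$), which the $0.355-0.25$ margin absorbs either way.
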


\begin{proof}
    Let $c'_i$ be the projection of $z_i$ on $a_ib_i$. Since $\angle(a_ib_i, st) \leq 2\sqrt{10\eps}$, we also obtain $\|c_ic'_i\| < 10\eps\|a_ib_i\|$ as in the proof of \Cref{projection1}. Since $\|a_ic_i'\|/\|a_ib_i\| \in [0.355 - \eps, 0.655 + \eps]$, we have
    \begin{equation*}
        \|a_ic_i\| \geq \|a_ic'_i\| - \|c_ic'_i\| > (0.355 -\eps)\|a_ib_i\| - 10\eps\|a_ib_i\| > 0.25\|a_ib_i\|.
    \end{equation*}
    Similarly, $\|d_ib_i\| \geq 0.25\|a_ib_i\|$.
\end{proof}

We then prove an analogue of \Cref{projection2}.

\begin{claim}
    \label{clm:implement-projection2}
	Let $u_i = \proj_{st}(a_i), v_i = \proj_{st}(b_i)$. Then, both $u_i$ and $v_i$ lie on segment $st$. Furthermore, $\|su_i\|, \|v_it\| > \frac{1}{3}\|st\|$, and $\|a_iu_i\|, \|b_iv_i\|\leq 5\sqrt{\epsilon}\cdot\|st\|$.
\end{claim}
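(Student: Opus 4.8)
The plan is to transcribe the proof of \Cref{projection2} almost verbatim, carrying along the extra additive $O(\eps)$ slack that enters because the fast implementation works with the inflated windows $A'_{s,t}, B'_{s,t}$ in place of $A_{s,t}, B_{s,t}$. Every error term that shows up is either $O(\eps)$ or $O(1/\kappa)$, and since the governing constants ($\tfrac38-\tfrac1{50}\approx 0.355$, etc.) have ample room, all the bounds survive once $\eps$ is small enough — which is exactly \Cref{eq:relationeps}.

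First I would record the geometric facts about $z_i$ (the facts about $w_i$ being symmetric). Since $e_i=z_iw_i$ is one of the edges to which $\Psi_1$ charged $st$ in case-(b), the point $z_i$ is a vertex of $\pi_{s,t}\subseteq\Gamma_{s,t}$, so $\|z_ip_i\|\le\sqrt{\eps}\,\|st\|$ for $p_i=\proj_{st}(z_i)$; and since $a_ib_i=\chi(e_i)$ charges to $e_i$ in case-(b), we have $z_i\in\Gamma_{a_i,b_i}$, so $z_i$ lies within $\sqrt{\eps}\,\|a_ib_i\|$ of the line $a_ib_i$. By \Cref{clm:implement-angle}, applied to the pairs $st$--$e_i$ and $e_i$--$a_ib_i$, we get $\angle(a_ib_i,st)\le 4\sqrt{10\eps}$ and $\|sp_i\|\in\bigl[(\tfrac38-\tfrac1{50}-\eps)\|st\|,\ (\tfrac38+\tfrac1{50}+\eps)\|st\|\bigr]$; by \Cref{clm:implement-projection1}, $c_i$ lies on the segment $a_ib_i$; and by the fast-implementation analogue of \Cref{clm:mid-charge-edge}, $\|a_ib_i\|=\|\chi(e_i)\|\le\tfrac1\kappa\|st\|$.

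Now $\|u_ip_i\|=\|\proj_{st}(a_i)-\proj_{st}(c_i)\|\le\|a_ic_i\|\le\|a_ib_i\|\le\tfrac1\kappa\|st\|$, hence
\[
\|su_i\|\ \ge\ \|sp_i\|-\|u_ip_i\|\ \ge\ \Bigl(\tfrac38-\tfrac1{50}-\eps-\tfrac1\kappa\Bigr)\|st\|\ >\ \tfrac13\,\|st\|
\]
for $\kappa=10^4$ and $\eps$ small; the symmetric estimate gives $\|v_it\|>\tfrac13\|st\|$, and the matching upper bounds $\|su_i\|,\|v_it\|<\tfrac12\|st\|$ (coming from the upper ends of the $A'/B'$ windows and again $\|u_ip_i\|,\|v_iq_i\|\le\tfrac1\kappa\|st\|$) place both $u_i$ and $v_i$ strictly inside the segment $st$. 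For the perpendicular components I would decompose, with $M$ the orthogonal projection onto the hyperplane normal to $st$,
\[
\|a_iu_i\|\ =\ \bigl\|M\bigl(\overrightarrow{a_ic_i}+\overrightarrow{c_iz_i}+\overrightarrow{z_ip_i}+\overrightarrow{p_iu_i}\bigr)\bigr\|\ \le\ \|M\overrightarrow{a_ic_i}\|+\|c_iz_i\|+\|z_ip_i\|,
\]
using $M\overrightarrow{p_iu_i}=0$, and then bound $\|M\overrightarrow{a_ic_i}\|\le\sin(\angle(a_ib_i,st))\|a_ib_i\|\le 4\sqrt{10\eps}\,\|a_ib_i\|$, $\|c_iz_i\|\le\sqrt{\eps}\,\|a_ib_i\|$ (combining $z_i\in\Gamma_{a_i,b_i}$ with the $O(\eps)\|a_ib_i\|$ bound on $\|c_ic_i'\|$ established in the proof of \Cref{clm:implement-projection1}), and $\|z_ip_i\|\le\sqrt{\eps}\,\|st\|$ from $z_i\in\Gamma_{s,t}$; since $\|a_ib_i\|\le\tfrac1\kappa\|st\|$ these three terms sum to at most $5\sqrt{\eps}\,\|st\|$, and $\|b_iv_i\|$ is handled identically from $w_i$.

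I do not anticipate a genuine obstacle here — this step is a routine transcription of \Cref{projection2}. The only care needed is the bookkeeping of the $\eps$- and $1/\kappa$-slack, so that $\tfrac38-\tfrac1{50}-\eps-\tfrac1\kappa$ stays safely above $\tfrac13$ and the perpendicular estimates stay $O(\sqrt\eps)\,\|st\|$; both hold for $\eps$ satisfying \Cref{eq:relationeps}, which is the standing assumption.
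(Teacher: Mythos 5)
Your proposal is correct and follows essentially the same route as the paper: the lower bound $\|su_i\|\geq\|sp_i\|-\|a_ic_i\|>(\tfrac38-\tfrac1{50}-\eps-\tfrac1\kappa)\|st\|>\tfrac13\|st\|$ with the $\eps$-relaxed window of \Cref{clm:implement-angle}, and the same decomposition $\|a_iu_i\|\leq\|M\overrightarrow{a_ic_i}\|+\|c_iz_i\|+\|z_ip_i\|\leq 5\sqrt{\eps}\,\|st\|$ that the paper imports verbatim from \Cref{projection2}. One tiny bookkeeping slip: the upper end of the window gives $\|su_i\|\leq(\tfrac58+\tfrac1{50}+\eps+\tfrac1\kappa)\|st\|$, not $<\tfrac12\|st\|$, but this still places $u_i$ (and symmetrically $v_i$) on the segment $st$, so the conclusion is unaffected.
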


\begin{proof}
    We have $\|sp_i\| \geq \left(\frac{3}{8} - \frac{1}{50} - \eps\right)\|st\|$. Therefore,
    \begin{equation*}
        \|su_i\| \geq \|sp_i\| - \|a_ic_i\| > \left(\frac{3}{8} - \frac{1}{50} - \eps\right)\|st\| - \frac{1}{\kappa}\|st\| > \frac{1}{3}\|st\|.
    \end{equation*}
    Similarly, $\|v_ib\| > \frac{1}{3}\|st\|$, implying that $u_i$ and $v_i$ lie on segment $st$. Using the same argument as in the proof of \Cref{projection2}, we obtain $\|a_iu_i\| \leq 5\sqrt{\eps} \cdot \|st\|$.
\end{proof}

The proof of \Cref{stitch0}, \Cref{stitch1} and \Cref{stitch2} carry over. Using the proof of \Cref{lightness}, we obtain $\dist_{H_2}(s, t) \leq (1 + \kappa^2\delta)$, implying that $st$ is not added to $E_2$, a contradiction.

Therefore, the total weight of the spanner after a single iteration is $O(\|E_\light\|\log{\alpha})$. After $k$ iterations, the total weight is $O(\log^{(k)}(\eps^{-O(d)}))\cdot \|E_{\light}\| = O(\log^{(k)}(1/\eps) + \log^{(k - 1)}(d))\cdot \|E_\light\|$.

\subsubsection{Running Time}

We first show that for each edge $st$, we can determine whether $st$ is type-(\romannumeral1) or type-(\romannumeral2) in constant time.

\begin{claim}
    \label{clm:type-deter}
    For every edge $st$, determining whether $st$ is type-(\romannumeral1) or type-(\romannumeral2) can be done in $\eps^{-O(d)}$ time. 
\end{claim}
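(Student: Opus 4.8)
The plan is to spell out what a type-test amounts to in the fast implementation and then show each ingredient is cheap. Recall that $st$ is type-(\romannumeral1) exactly when $A'_{s,t}$ or $B'_{s,t}$ is empty, and $A'_{s,t}=\bigcup_{w\in N(A_{s,t})}\ball(w,2r_h)$, so $A'_{s,t}$ is empty iff $N(A_{s,t})=\emptyset$; hence it suffices to decide whether $N(A_{s,t})=\emptyset$ and whether $N(B_{s,t})=\emptyset$, where $h$ is the approximate level of $st$ and $r_h=2^h$. The first step is to pin down the scale: the approximate-level definition together with \Cref{clm:approx-edge} gives $r_h\le\frac{\eps}{2}\|st\|$, and in the reverse direction, using that the level-$h$ cross edge $u'v'$ satisfies $\|u'v'\|\le(\tfrac4\eps+32)r_h$ while the net-tree ancestors obey $\|su'\|,\|tv'\|\le 2r_h$ (geometric-series bound), the triangle inequality yields $\|st\|\le(\tfrac4\eps+36)r_h$. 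Thus $r_h=\Theta(\eps\|st\|)$.

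Next comes the locality argument. The region $A_{s,t}$ lies inside $\Gamma_{s,t}$ and inside the slab $\langle x-s,\overrightarrow{st}\rangle/\|st\|\in[\tfrac38-\tfrac1{50},\tfrac38+\tfrac1{50}]$ (the two slab conditions defining $A_{s,t}$ coincide in this regime), so $A_{s,t}$ always contains the point $m=s+\tfrac38\overrightarrow{st}$ and has diameter $O(\|st\|)$; consequently every $w\in N_h$ with $\ball(w,2r_h)\cap A_{s,t}\neq\emptyset$ lies in $\ball(m,O(\|st\|))=\ball(m,O(r_h/\eps))$, using $r_h=\Theta(\eps\|st\|)$ and $2r_h\le\eps\|st\|$. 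Since $N_h$ is $r_h$-separated, the standard packing bound in $\mathbb{R}^d$ bounds the number of such net points by $(O(1/\eps))^{d}=\eps^{-O(d)}$, and the same holds for $B_{s,t}$ around $m'=s+\tfrac58\overrightarrow{st}$. To enumerate them in $\eps^{-O(d)}$ time I would use that the approximate-edge computation already hands us the level-$h$ node $(u',h)$ with $\|su'\|\le\eps\|st\|$, so one can navigate the net tree locally outward from $(u',h)$ through the $\eps^{-O(d)}$ level-$h$ net cells meeting $\ball(m,O(r_h/\eps))$, rather than performing a fresh point location; the one-time net-tree preprocessing is $O_{\eps,d}(n\log n)$.

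For each candidate $w$ I would decide whether $\ball(w,2r_h)\cap A_{s,t}\neq\emptyset$ by exploiting that $A_{s,t}$ is rotationally symmetric about the line through $s$ and $t$. Writing $w$ in the two coordinates $\tau_w=\langle w-s,\overrightarrow{st}\rangle/\|st\|\cdot\|st\|$ (signed position along $\overrightarrow{st}$) and $\rho_w=\dist(w,\text{line }st)$ — an $O(d)$-time computation — the test becomes whether the planar disk of radius $2r_h$ centered at $(\tau_w,\rho_w)$ meets the planar region carved out of the half-plane $\rho\ge 0$ by the half-ellipse $\sqrt{\tau^2+\rho^2}+\sqrt{(\|st\|-\tau)^2+\rho^2}\le(1+\eps)\|st\|$ and the strip $\tau\in[\tfrac38-\tfrac1{50},\tfrac38+\tfrac1{50}]\|st\|$. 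This is a feasibility question over an intersection of $O(1)$ planar convex sets bounded by conics, decidable in $O(1)$ time. Then $N(A_{s,t})\neq\emptyset$ iff some candidate passes, and symmetrically for $B_{s,t}$; so $st$ is type-(\romannumeral2) iff both sets are non-empty, and type-(\romannumeral1) otherwise. The total cost is $\eps^{-O(d)}$ for the enumeration plus $\eps^{-O(d)}\cdot O(d)$ for the tests, i.e.\ $\eps^{-O(d)}$.

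The main obstacle is not the geometry, which is routine, but the enumeration step: one must argue that the $\eps^{-O(d)}$ relevant level-$h$ net points are reported in $\eps^{-O(d)}$ time rather than $\eps^{-O(d)}+O(\log n)$, and this hinges precisely on already possessing a level-$h$ net node adjacent to the region (the endpoint of the approximate edge), so the net-tree can be navigated purely locally. The secondary point of care is the chain $2r_h\le\eps\|st\|$, $r_h=\Omega(\eps\|st\|)$, $\operatorname{diam}(A_{s,t})=O(\|st\|)$ that keeps all candidates inside a ball of radius $O(r_h/\eps)$, which is what makes the packing bound collapse to the clean $\eps^{-O(d)}$ count.
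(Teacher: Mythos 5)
Your proposal is correct and follows essentially the same route as the paper's proof: identify the approximate level $h$ with $r_h=\Theta(\eps\|st\|)$, use the packing bound to show only $\eps^{-O(d)}$ level-$h$ net points can lie near the segment (the paper encloses everything in $\ball(s',2\|st\|)$ around the level-$h$ ancestor of $s$, you center a ball of radius $O(r_h/\eps)$ on a point of $A_{s,t}$ — the same argument), and test each candidate against $A_{s,t}$ and $B_{s,t}$ in constant time. Your explicit lower bound $r_h=\Omega(\eps\|st\|)$ and the reduction of the ball–region test to a planar conic feasibility check merely make explicit steps the paper leaves implicit, so no substantive difference.
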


\begin{proof}
    Let $h$ be the approximate level of $st$. Consider the net $N_h$ at level $h$. Recall that $A'_{s, t} = \bigcup_{w \in N(A_{s, t})}\ball(w, r_h)$, where $N(A_{s, t})$ is the set of net points $w \in N_h$ such that $\ball(w, r_h) \cap A_{s, t} \neq \emptyset$. Then, to determine whether an edge is type-(\romannumeral1) or type-(\romannumeral2), we find all the net point $w \in N_h$ such that $\ball(w, r_h) \cap A_{s, t} \neq \emptyset$ or $\ball(w, r_h) \cap B_{s, t} \neq \emptyset$. To find them, we first find the ancestor $(s', h)$ of $(s, 0)$. Then, observe that the ellipsoid $\Gamma_{s, t}$ is in $\ball(s', 2\|st\|)$. By the packing bound, there are $\eps^{-O(d)}$ net points in $\ball(s', 2\|st\|)$. For each net point $w$, checking whether $\ball(w, r_h) \cap A_{s, t} \neq \emptyset$ or $\ball(w, r_h) \cap B_{s, t} \neq \emptyset$ takes constant time. Thus, the total running time of determining the type of an edge is $\eps^{-O(d)}$.
\end{proof}

The net-tree spanner $G$ can be constructed in $O_{\eps, d}(n\log{n})$ time \cite{HPM06}. The first pruning phase can be implemented in $\eps^{-O(d)}n$ time as follows: Start with a spanner that has sparsity $\eps^{-O(d)}$ and lightness $\eps^{-O(d)}$, for each edge $xy$ in $G$ with $\|xy\| \geq \beta^i/25$, we find all type-(\romannumeral1) edges $st$ in $L_j$ such that $\|sx\| + \|xy\| + \|yt\| \leq (1 + 5\eps)\|st\|$. We compute $|P'_{x, y}|$ as follows: For each type-(\romannumeral1) edge $st$, we find all edge $x'y'$ in $G$ such that $\|x'y'\| \geq \beta^j/25$ and $st \in P'_{x', y'}$.
Let $\gamma = \log{\beta^j/25}$, and observe that any cross edge $x'y' \in E(G)$ satisfying $st \in P'_{x', y'}$ can only belong to $O(\log{\eps^{-1}})$ levels of the net-tree, between levels $\gamma - \log{\eps^{-1}}$ and $\gamma + \log{\eps^{-1}}$. Furthermore, any of those cross edges must have two endpoints in the ellipsoid with foci $s$ and $t$ and focal distance $(1 + 5\eps)\|st\|$. By the packing bound, for each level from $\gamma - \log{\eps^{-1}}$ to $\gamma + \log{\eps^{-1}}$, there are $\eps^{-O(d)}$ net points inside such an ellipsoid. Thus, the total number of edges we need to check is at most $\eps^{-O(d)}$. The overall running time of the first pruning phases over the entire algorithm is $\eps^{-O(d)}n \cdot \log^*{(d/\eps)}$.

Then, we prove that the second pruning phase can be implemented in $O_{\epsilon, d}(n\log^2{n})$ time. Assume that we are at the $i$-th checking iteration.
All intra-cluster and inter-cluster edges in $F'_i$ are computed by running Dijkstra's algorithm. For each cluster center $v$, we run Dijkstra's algorithm to find all shortest paths from $v$ with length at most $2^i$. Let $H'_{2, <i}$ be the graph $H_{2, <i}$ after contracting all edges of length at most $\frac{2^i\eps^2}{n}$. For any two vertices $u$ and $v$ for which $uv \in E(F'_i)$, let $\weight(u, v)$ be the weight of $uv$ in $F'_i$.
Then, we bound the time required to compute those single source shortest path trees. We show that each point belongs to at most $\eps^{-O(d)}$ trees of maximum distance $O(2^i)$.

\begin{lemma}
    \label{lm:point-constant-clusters}
    Each point in $F'_i$ belongs to at most $\eps^{-O(d)}$ single source shortest path trees.
\end{lemma}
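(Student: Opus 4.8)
The plan is to reduce the statement to a Euclidean packing estimate. Fix a point $x$ that survives in $F'_i$ and let $C_x$ be the set of cluster centers $v$ whose single source shortest path tree (the region Dijkstra explores from $v$) contains $x$. I will show $|C_x| = \eps^{-O(d)}$ via two facts: (i) every two centers in $C_x$ are at Euclidean distance $\Omega(\eps 2^i)$ from each other, and (ii) all of $C_x$ lies in a Euclidean ball $\ball(x, O(2^i))$. Given (i) and (ii), the standard packing bound in $\mathbb{R}^d$ — a ball of radius $R$ contains at most $(O(R/\rho))^{d}$ points that are pairwise $\rho$-separated — yields $|C_x| \le (O(1)/\eps)^{d} = \eps^{-O(d)}$, as claimed.

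For fact (i): by construction, any two cluster centers $v,v'$ satisfy $\dist_{H_{2,<i}}(v,v') \ge \eps 2^i$. I claim this forces $\|vv'\| = \Omega(\eps 2^i)$. If $\|vv'\| \ge 2^{i-1}$ this is trivial, so suppose $\|vv'\| < 2^{i-1}$. Since $H_2$ is a $(1+\Delta(\kappa,\delta))$-spanner with $\Delta(\kappa,\delta)$ a tiny constant (by \Cref{small-delta} and \Cref{spanner-stretch}), there is a $v$--$v'$ path in $H_2$ of length at most $(1+\Delta(\kappa,\delta))\|vv'\| < 2^i$; every edge of this path then has length below $2^i$, hence the path lies entirely inside $E_{2,<i}$, so $\dist_{H_{2,<i}}(v,v') \le (1+\Delta(\kappa,\delta))\|vv'\|$. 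Combining with $\dist_{H_{2,<i}}(v,v') \ge \eps 2^i$ gives $\|vv'\| \ge \eps 2^i / (1+\Delta(\kappa,\delta)) = \Omega(\eps 2^i)$.

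For fact (ii): if $v \in C_x$, then $x$ is reached from $v$ by a Dijkstra search of radius $O(2^i)$ in $F'_i$ (the search radius $2^i$, and the path-checking radius $(1+\kappa^2\delta)(1+\eps)\|st\| = O(2^i)$, both collapse to $O(2^i)$), so $\dist_{F'_i}(v,x) = O(2^i)$. Every edge weight of $F'_i$ — intra-cluster weights $\dist_{H_2}(u,v) \ge \|uv\|$, and both forms of inter-cluster weights, which dominate $\|v_jv_{j'}\|$ by the triangle inequality — is at least the Euclidean distance between its endpoints, so any $F'_i$-path dominates the Euclidean distance between its endpoints; and contracting edges of length at most $2^i\eps^2/n$ displaces the representative of $x$ by at most an additive $\eps^2 2^i$ (at most $n$ such edges, each of length $\le 2^i\eps^2/n$). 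Hence $\|vx\| = O(2^i) + \eps^2 2^i = O(2^i)$, so $v \in \ball(x, O(2^i))$.

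The main obstacle is the bookkeeping in fact (ii): pinning down that the cluster-graph metric $\dist_{F'_i}$ both over-estimates Euclidean distances and does not inflate the ``radius $O(2^i)$'' guarantee of the Dijkstra searches. Here I would lean on the comparisons already established in \Cref{clm:good-cluster-path}, \Cref{obs:approx-contract-dist} and \Cref{clm:good-cluster-path-2} ($F_i$ $(1+\Theta(\eps))$-approximates distances in $H_{2,<i}$, and passing to $F'_i$ adds only an additive $\eps^2 2^i$), so that all radii involved are $\Theta(2^i)$ up to constants; once that is fixed, facts (i) and (ii) plug directly into the Euclidean packing bound to finish the proof.
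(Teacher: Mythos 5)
Your proposal is correct and follows essentially the same route as the paper's proof: a pairwise Euclidean separation of $\Omega(\eps 2^i)$ between cluster centers (derived, exactly as in \Cref{clm:bound-cluster-dist}, from the cluster-construction guarantee in the contracted graph plus the spanner stretch bound forcing short Euclidean pairs to be connected by short edges of $E_{2,<i}$), a containment of all relevant centers in a ball of radius $O(2^i)$ around the point (accounting for the additive $\eps^2 2^i$ contraction error, as the paper does via $M(\cdot)$), and then the standard packing bound in $\mathbb{R}^d$.
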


\begin{proof}
    Recall that $E_{2, < i}$ is the set of edges with weight within $[1, 2^i)$, $E'_{2, < i}$ is the set of edges with weight within $[\frac{\eps^22^i}{n}, 2^i)$ and $H_{2, < i} = (X, E_{2, < i})$. For any vertex $u \in V(H'_{2, <i})$, let $M(u)$ be the set of points in $X$ that are contracted to $u$. For any point $u$, let $N(u)$ be the set of cluster centers $s$ such that $\dist_{H'_{2, <i}}(s, u) \leq 2^i$. We prove that $|N(u)| = \eps^{-O(d)}$ for every $u \in V(H'_{2, <i})$.
    
    Let $s$ be an arbitrary vertex in  $N(u)$. We prove that for every $u' \in M(u)$ and $s' \in M(s)$, $\dist_{H'_{2, <i}}{(s', u')} \leq 2^{i + 1}$. Observe that there exist some $u_0 \in M(u)$ and $s_0 \in M(s)$ such that $\dist_{H_{2, < i}}(u_0, v_0) \leq 2^i$. Furthermore, since the path from $u'$ to $u_0$ has at most $n$ contracted edges, $\dist_{H_{2, <i}}(u', u_0) \leq n \cdot \frac{\eps^22^i}{n} = \eps^22^i$. Similarly, $\dist_{H_{2, <i}}(s', s_0) \leq \eps^22^i$. Thus, $\dist_{H_{2, <i}}(u', s') \leq 2^i + 2\eps^22^i \leq 2^{i + 1}$. Since $\dist_{H_{2, <i}}(u', s') \geq \|u's'\|$, we have $\|u's'\| \leq 2^{i + 1}$.

    We continue with bounding the distance between two points in $N(u)$. 
    \begin{claim}
        \label{clm:bound-cluster-dist}
        Let $s_1$ and $s_2$ be two arbitrary vertices in $N(u)$. For any $s'_1 \in M(s_1)$ and $s'_2 \in M(s_2)$, $\|s_1's_2'\| \geq \frac{\eps 2^i}{1 + (\kappa + 1)^2\delta}$.
    \end{claim}

    \begin{proof}
        From our construction of clusters, we have $\dist_{H_{2, <i}}(s_1', s_2') \geq \dist_{H'_{2, <i}}(s_1, s_2) \geq \eps 2^i$. Suppose, to the contrary, that $\|s_1's_2'\| < \frac{\eps 2^i}{1 + (\kappa + 1)^2\delta}$. By the stretch argument, we have $\dist_{H_2}(s_1', s_2') \leq \frac{\eps 2^i}{1 + (\kappa + 1)^2\delta} \cdot (1 + (\kappa + 1)^2\delta) = \eps 2^i$. Hence, the shortest path from $s_1'$ to $s_2'$ contains only edges of length at most $\eps2^i$, implying that $\dist_{H_2}(s_1', s_2') = \dist_{H_{2, <i}}(s_1', s_2')$. Since the distance in the contract graph $H'_{2, i}$ is dominated by the distance in $H_{2, < i}$, we have $\dist_{H'_{2, <i}}{(s_1, s_2)} \leq \dist_{H_{2, <i}}(s_1', s_2') \leq \eps 2^i$. However, by our construction, $\dist_{H'_{2, <i}}{(s_1, s_2)} > \eps 2^i$, a contradiction.
    \end{proof}
    
    Let $N(u) = \{s_1, s_2, \ldots ,s_l\}$ and $s'_1, s'_2, \ldots ,s'_l$ be arbitrary vertices in $M(s_1), M(s_2), \ldots ,M(s_l)$, respectively. By \Cref{clm:bound-cluster-dist}, $\|s'_{h_1}s'_{h_2}\| \geq \frac{\eps 2^i}{1 + (\kappa + 1)^2\delta}$ for any $h_1 \neq h_2$. On the other hand, $\|us_h\| \leq 2^{i + 1}$ for any $h \in [1, l]$. Therefore, by the packing bound, we have 
    $$|N(u)| = l \leq \left(\frac{2^{i + 1}}{{\eps 2^i}/{(1 + (\kappa + 1)^2\delta)}}\right)^{-O(d)} = \eps^{-O(d)}.$$
This completes the proof of \Cref{lm:point-constant-clusters}.
\end{proof}
\Cref{lm:point-constant-clusters} implies that each edge is considered at most $\eps^{-O(d)}$ times in the construction of all single source shortest path trees. Since the running time of Dijkstra's algorithm for a connected graph of $m$ edges is bounded by $O(m\log{m})$, the total construction time of all graphs $F'_i$ is $\eps^{-O(d)}n\log(n/\eps^2)\log{n} = \eps^{-O(d)}n\log^2{n}$.

For each query $st$, we prove that the shortest path from $s$ to $t$ in $F'_i$ contains a constant number of edges. Here, we abuse the notation of $s$ and $t$ for vertices in $F'_i$.

\begin{claim}
    \label{clm:bounded-hop-diam}
    For any two points $s$ and $t$ in $X$ such that $2^i \leq \|st\| < 2^{i + 1}$, if $\dist_{F'_i}(s, t) \leq (1 + \eps)(1 + \kappa^2\delta)\|st\|$, then there exists a shortest path $P'_C$ from $s$ to $t$ in $F'_i$ comprising a constant number of edges. Furthermore, only the first and the last edge in $P'_C$ are intra-cluster.
\end{claim}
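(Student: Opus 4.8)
The plan is to show that among the shortest paths from $s$ to $t$ in $F'_i$ there is one whose only intra‑cluster edges are its first and last edge, and whose every internal vertex is a cluster center lying in the Euclidean ball $\ball(s,O(2^i))$; the packing bound in $\real^d$, applied to these pairwise well‑separated centers, then caps the number of internal vertices (hence edges) by $\eps^{-O(d)}$, which is the ``constant'' meant in the statement (constant for fixed $\eps,d$, as in the $O_{\eps,d}$ running time).

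\emph{Normal form of a shortest path.} Among all shortest $s$--$t$ paths in $F'_i$, pick $P'_C$ minimizing the number of intra‑cluster edges (it is then simple). Recall an intra‑cluster edge joins a non‑center point $u$ to the center $v$ of a cluster containing $u$, and has weight the (contracted) distance $\dist(u,v)$, while a cluster of level $i$ has $H'_{2,<i}$‑radius $\eps 2^i$. Suppose $P'_C$ had an internal non‑center vertex $u$, i.e.\ a subpath $v_1\to u\to v_2$ with $v_1,v_2$ centers; then $u$ lies in the clusters of both, so $\dist_{H_2}(v_1,v_2)\le \dist_{H_2}(v_1,u)+\dist_{H_2}(u,v_2)\le 2\eps 2^i+O(\eps^2 2^i)<2^i$, and hence $v_1v_2$ is an inter‑cluster edge of $F'_i$ (case~1) whose weight is at most the sum of the two intra‑cluster edge weights up to the negligible additive contraction slack $O(\eps^2 2^i)$. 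Splicing $v_1v_2$ in place of $v_1\to u\to v_2$ and re‑simplifying produces a path of essentially the same length with strictly fewer intra‑cluster edges, contradicting the choice of $P'_C$. Thus $P'_C$ has no internal non‑center vertices, so its only intra‑cluster edges are the first and last (there may be fewer, if $s$ or $t$ is itself a center); this is the second assertion of the claim.

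\emph{Confinement to a ball and packing.} Each edge of $F'_i$ is, by the very definition of its weight, the contraction (for intra‑cluster and case‑1 inter‑cluster edges) or the crossing detour $\dist_{H_2}(u,v_j)+\|uw\|+\dist_{H_2}(w,v_{j'})$ (for case‑2 edges) of a walk in the geometric graph $H_2=(X,E_2)$ of the same weight up to the additive slack $O(\eps^2 2^i)$ absorbing the contracted edges (each of length $\le \eps^2 2^i/n$). Hence the prefix of $P'_C$ up to any cluster center $c_j$ it visits lifts to a walk in $H_2$ from $s$ to (a representative of) $c_j$ of length at most $\|P'_C\|+O(\eps^2 2^i)\le (1+\eps)(1+\kappa^2\delta)\|st\|+O(\eps^2 2^i)=O(2^i)$; this is exactly the cluster‑graph approximation bookkeeping already established in \Cref{clm:good-cluster-path,obs:approx-contract-dist,clm:good-cluster-path-2} and \cite{DN97}. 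Since $H_2$ is a Euclidean graph, every such center lies in $\ball(s,O(2^i))$. On the other hand, \Cref{clm:bound-cluster-dist} (whose proof in fact applies to \emph{every} pair of distinct cluster centers, not only pairs inside some $N(u)$) shows that any two distinct centers are at Euclidean distance at least $\tfrac{\eps 2^i}{1+(\kappa+1)^2\delta}=\Omega(\eps 2^i)$, a separation unchanged up to $O(\eps^2 2^i)$ by contraction. Representatives of the centers visited by $P'_C$ therefore form an $\Omega(\eps 2^i)$‑separated subset of $\ball(s,O(2^i))$, so by the standard packing bound in $\real^d$ there are at most $\eps^{-O(d)}$ of them; as $P'_C$ is simple and uses a single (inter‑cluster) edge between consecutive centers plus at most two end intra‑cluster edges, it has at most $\eps^{-O(d)}+2=O_{\eps,d}(1)$ edges.

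\emph{Main obstacle.} The delicate point is handling the additive slack $O(\eps^2 2^i)$ coming from the contracted edges uniformly: a priori it could accumulate once per hop, and the number of hops is what we are trying to bound, creating a circularity. The proposal breaks this by deriving the $O(2^i)$ confinement of $P'_C$ from the honest underlying metric graph $H_2$ (via the already‑proved cluster‑graph approximation lemmas, which bound the \emph{total} contraction loss along a path by $\eps^2 2^i$ independently of its length), and only afterward counting hops by a packing argument that needs no further control of per‑hop error.
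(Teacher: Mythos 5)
Your ``normal form'' step is essentially the paper's own argument: take a shortest $s$--$t$ path in $F'_i$ with the fewest intra-cluster edges, and splice out any internal non-center vertex using the inter-cluster edge that must exist between the two centers sharing it. That part is fine and establishes the second assertion of \Cref{clm:bounded-hop-diam}.

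The gap is in the hop count. You bound the number of edges by $\eps^{-O(d)}+2$ via packing and then declare this to be the ``constant'' of the statement, but the claim (and the way it is used) requires an \emph{absolute} constant, independent of $\eps$ and $d$: the second pruning phase explicitly examines only paths ``within $O(1)$ hops'' in the cluster graph, and the $\eps^{-O(d)}$ per-edge query time comes from raising the $\eps^{-O(d)}$ bound on inter-cluster degree to a constant power equal to the hop bound. With a hop bound of $\eps^{-O(d)}$ neither the correctness of the $O(1)$-hop check nor the running time follows. The missing idea is not a packing argument at all but a weight argument on consecutive inter-cluster edges: if $v_h,v_{h+1},v_{h+2}$ are three consecutive centers on the inter-cluster portion of $P'_C$ and $\weight(v_h,v_{h+1})+\weight(v_{h+1},v_{h+2})\leq 2^i$, then by the construction of $F'_i$ the direct inter-cluster edge $v_hv_{h+2}$ already exists (the condition $\dist\leq 2^i$ is met), contradicting the choice of $P'_C$; hence every two consecutive inter-cluster edges have total weight more than $2^i$. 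Combined with $\|P'_C\|\leq(1+\eps)(1+\kappa^2\delta)\|st\|\leq 8\cdot 2^i$, this yields an absolute bound on the number of edges (the paper obtains $l\leq 19$), which is what the algorithm and the time analysis actually rely on.
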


\begin{proof}
    Let $P'_C = \langle s = v_1, v_2, \ldots v_l = s \rangle$ be a shortest path from $s$ to $t$ in $F'_i$ with the least number of intra-cluster edges. Observe that $P'_C$ contains at most $2$ intra-cluster edges. Otherwise, if there is an intra-cluster edge $v_hv_{h + 1}$ such that $1 < h < l$, then $v_h$ or $v_{h + 1}$ is not a cluster center. Assume that $v_{h + 1}$ is not a cluster center. Hence, $v_{h + 1}v_{h + 2}$ is also an intra-cluster edge and $v_{h + 2}$ is a cluster center. However, by our construction, there is an edge from $v_h$ to $v_{h + 2}$ with weight at most $\dist_{H_{2, <i}}(v_h, v_{h + 2}) \leq \dist_{H_{2, <i}}(v_h, v_{h + 1}) + \dist_{H_{2, <i}}(v_h, v_{h + 1}) = \weight(v_h, v_{h + 1}) + \weight(v_{h + 1}, v_{h + 2})$. Replacing $v_hv_{h + 1}$ and $v_{h+1}v_{h + 2}$ by $v_{h}v_{h + 2}$ in $P'_C$, we obtain another shortest path from $s$ to $t$ with fewer intra-cluster edges, a contradiction.
    
    Furthermore, the intra-cluster edges can only be $v_1v_2$ and/or $v_{l - 1}v_l$. Consider the sub-path $Q = \langle v_2, v_3, \ldots v_{l - 1}\rangle$. We have $\weight(v_h, v_{h + 1}) + \weight(v_{h + 1}, v_{h + 2}) > 2^i$ for any $1 \leq h \leq l - 3$, since otherwise there is an edge from $v_h$ to $v_{h + 2}$ by our construction. Then, we have:
    \begin{equation*}
        \|Q\| \geq \sum_{h = 1}^{\lfloor l / 2 \rfloor - 1} \big(\weight(v_{2h},v_{2h + 1}) + \weight(v_{2h + 1}v_{2h + 2})\big) \geq \left(\lfloor l / 2 \rfloor - 1\right) \cdot 2^i.
    \end{equation*}
    On the other hand, since $\|P'_C\| \leq (1 + \eps)(1 + \kappa^2\delta)\|st\| \leq 8 \cdot 2^i$, we get that $(\lfloor l / 2 \rfloor - 1) \cdot 2^i \leq \|Q\| \leq 8 \cdot 2^i$. Thus, $l \leq 19$. 
\end{proof}

For each vertex $u$, let $\mathrm{Clusters}(u)$ be the set of clusters containing $u$. From \Cref{clm:bounded-hop-diam}, for any edge $st \in E^{(\romannumeral1)}_{i}$, among all shortest paths from $s$ to $t$ in $F'_i$, there exists a path $P'_C$ such that all edges that are incident to neither $s$ nor $t$ in $P'_C$ are inter-cluster.
For each pair $(s, t)$, we only need to check the shortest paths from the centers of all clusters in $\mathrm{Clusters}(s)$ to the center of all clusters in $\mathrm{Clusters}(t)$ containing a constant number of inter-cluster edges. By the packing bound, the  total number of inter-cluster edges incident to a vertex is  bounded by $\eps^{-O(d)}$. Thus, the total time complexity of checking each edge $s t$ is also $\eps^{-O(d)}$.

To find a helper edge, we find two net points: one in $N(A_{s, t})$ and one in $N(B_{s, t})$. (Recall that $N(A_{s, t})$ contains all net points $w \in N_h$ such that $\ball(w, \eps r_h) \cap A_{s, t} \neq \emptyset$ with $h$ being the approximate level of $st$.) Then, we add the edge between those net-points to $E_2$. 
Since the number of net points in $N(A_{s, t})$ (resp., $N(B_{s, t})$) is $\eps^{-O(d)}$, one can find a helper edge in $\eps^{-O(d)}$ time per edge.

Therefore, the total time complexity of a single run of the second pruning phase is $\eps^{-O(d)} n\log^2{n}$. This bound absorbs the time complexity of a single run of the first pruning phases. 
After summation over all $\log^*{(d/\eps)}$ iterations,
the construction time of our spanner is $\log^*{(d/\eps)}\cdot \eps^{-O(d)} n\log^2{n} =\eps^{-O(d)} n\log^2{n}$. 
\section{Lower Bounds for the Greedy Spanner}\label{sec:LB-greedy}

In this section, we construct point sets in the plane for which the lightness and sparsity of the greedy spanner far exceed the instance-optimal lightness and sparsity, respectively. In \Cref{ssec:LB-sparsity}, we first construct hard examples against the greedy $(1+\eps)$-spanner and $(1+1.2\,\eps)$-spanner (\Cref{thm:sparsityLB}), and then generalize the ideas to work against the greedy $(1+x\eps)$-spanner for all $x$, $1\leq x\leq o(\eps^{-1/3})$, by refining both the design of the point set and the of the greedy algorithm (\Cref{thm:sparsityLB+}). Since both the greedy and the sparsest spanner use edges of comparable weight, this construction already establishes the same lower bound for lightness (\Cref{cor:weightLB+}). In \Cref{ssec:LB-lightness}, we lower bounds for lightness with a stronger dependence on $\eps$, and a weaker dependence on $x$: The points sets in these constructions are uniformly distributed along a suitable circular arc (\Cref{thm:weightLB,thm:weightLB+}).

\subsection{Sparsity Lower Bounds}
\label{ssec:LB-sparsity}

We begin with lower bound construction in Euclidean plane against the greedy $(1+\eps)$-spanner. The same construction also works well against the greedy $(1+x\eps)$-spanner when $1\leq x\leq 1.2$. 
\begin{theorem}\label{thm:sparsityLB}
For every sufficiently small $\eps>0$, there exists a finite set $S\subset \mathbb{R}^2$ such that 
\[
|E_{\rm gr}| = |E_{{\rm gr}(1.2)}| \geq \Omega(\eps^{-1/2})\cdot |E_{\sparse}|,
\]
where $E_{\rm gr}$ is the edge set of the greedy $(1+\eps)$-spanner, $E_{{\rm gr}(1.2)}$ is the edge set of the greedy $(1+1.2\, \eps)$-spanner, and $E_{\sparse}$ is the edge set of a sparsest $(1+\eps)$-spanner for $S$.
\end{theorem}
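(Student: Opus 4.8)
The plan is to take $S$ to be two ``clusters'' of $m=\Theta(\eps^{-1/2})$ points placed at equal vertical spacing $\delta$ along two parallel vertical segments, $\ell_L=\{0\}\times[0,h]$ and $\ell_R=\{1\}\times[0,h]$, so that the horizontal gap is $1$ and the vertical extent is $h=c_0\sqrt{\eps}$ for a suitably small absolute constant $c_0$ (hence $\delta=h/m=\Theta(\eps)$); write $x_1,\dots,x_m$ for the points on $\ell_L$ and $y_1,\dots,y_m$ for those on $\ell_R$, with $x_i$ and $y_i$ at the same height. To this I add a small set $M$ of ``middle'' points near the line $x=\tfrac12$, placed at a carefully chosen, bounded collection of heights; these are the ``via'' points that an optimal spanner exploits but that the greedy process squanders. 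The instance is parametrized so that three additive quantities are all $\Theta(\eps)$ with explicit constants leaving the needed gaps: (i) the slack of a via-route for a cross pair $(x_i,y_j)$, (ii) the slack that greedy's nearest-first connections to $M$ accumulate, and (iii) the cost of one-hop line detours between cross edges.

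First I would show that the sparsest $(1+\eps)$-spanner has only $O(\eps^{-1/2})$ edges, by exhibiting a concrete $(1+\eps)$-spanner $G_\sparse$ of $S$ on $O(m)=O(\eps^{-1/2})$ edges: the two path graphs along $\ell_L$ and $\ell_R$ (consecutive points), plus $O(m)$ edges joining each $x_i$ and $y_j$ to an appropriate point of $M$ and $O(1)$ edges inside $M$. The only non-trivial demands are the pairs $(x_i,y_j)$; for each there is a $z\in M$ at essentially the averaged height $\tfrac{i+j}{2}\delta$, and $\|x_iz\|+\|zy_j\|=\|x_iy_j\|$ up to relative error $O(h^2)=O(c_0^2\eps)\le\eps$ by the standard second-order expansion of the two hypotenuses (the ``free via point'' phenomenon underlying the Steiner-spanner constructions of~\cite{bhore2022euclidean,le2022truly}). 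Routing $x_i\rightsquigarrow z\rightsquigarrow y_j$ in $G_\sparse$ then costs at most $(1+\eps)\|x_iy_j\|$, so $|E_\sparse|\le|E(G_\sparse)|=O(\eps^{-1/2})$ (while $|E_\sparse|=\Omega(m)$ trivially).

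Next I would trace the path-greedy algorithm on $S$ as it sweeps candidate pairs in non-decreasing length order. It first inserts the $2(m-1)$ consecutive within-cluster edges (length $\delta$), forming the two line-paths; longer within-cluster pairs are collinear, hence redundant. Then come the edges incident to $M$ (length $\approx\tfrac12$); here the crucial point is that greedy joins a middle point only to a \emph{coarse net} of cluster points — once $z$ is joined to $x_{i^\ast}$, the detour $x_i\rightsquigarrow x_{i^\ast}\to z$ already $(1+\eps)$-approximates $\|x_iz\|$ for nearby $x_i$ — and although this net preserves cluster-to-$M$ distances, composing two coarse-net detours together with the mismatch between the fixed relay heights in $M$ and the required averaged height makes the via-route strictly longer than $(1+\eps)\|x_iy_j\|$ for a constant fraction of the $m^2$ cross pairs (the ``bad'' pairs). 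Finally, when greedy reaches a bad pair $x_iy_j$ (length in $[1,1+O(\eps)]$), I argue no $(1+\eps)$-path exists in the current graph: any path either (a) uses the coarse via-structure (just excluded), or (b) uses at most one earlier cross edge and otherwise runs along the lines, incurring an additive detour $\gtrsim|i-j|\delta\gtrsim\eps$, or (c) uses at least two cross edges, of total weight $\ge 2>(1+\eps)(1+O(\eps))$. Hence greedy inserts every bad cross edge, so $|E_{{\rm gr}}|\ge\Omega(m^2)=\Omega(\eps^{-1})=\Omega(\eps^{-1/2})\cdot|E_\sparse|$. For the $(1+1.2\,\eps)$-version, every detour lower bound above is $\Omega(\eps)$ with an absolute constant that can be pushed past $1.2\,\eps$ by shrinking $c_0$ (hence $\delta$ and $h$) by a fixed factor, and since $\eps$ enters only through comparisons with $\Theta(\eps)$-sized quantities, the same instance (with adjusted constants) works for all sufficiently small $\eps$, giving $|E_{{\rm gr}(1.2)}|=\Omega(\eps^{-1/2})\cdot|E_\sparse|$.

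The main obstacle is the greedy analysis in the previous paragraph: one must maintain a $(1\pm O(\eps))$-accurate description of all pairwise distances realized by the partially built greedy graph throughout the sweep, and prove that greedy's nearest-first connections to $M$ genuinely fail to yield $(1+\eps)$-shortcuts for a constant fraction of the cross pairs while simultaneously the line-only and single-cross-edge detours remain too long. This requires careful second-order geometric estimates (Taylor expansions of Euclidean lengths and of the induced bends and detours) together with a precise accounting of exactly which edges are present when each cross edge is examined; the remaining steps are routine.
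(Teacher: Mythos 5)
Your high-level template (two near-parallel clusters at unit distance with $\Theta(\sqrt{\eps})$ vertical extent and $\Theta(\eps)$ spacing, middle ``via'' points that the optimal spanner exploits with $O(|S|)$ edges, and greedy forced into the $\Omega(\eps^{-1})$ bi-clique) is the same as the paper's, and your optimal-spanner half is fine. The gap is in the mechanism you propose for fooling greedy, and I believe it is not just unproven but quantitatively false for your layout. With $M$ on the midline $x=\tfrac12$, every edge from a cluster point to $z\in M$ has length $\approx\tfrac12$, so when greedy processes these edges it skips $x_iz$ only if a detour through an already-connected neighbor is within $(1+\eps)\|x_iz\|\approx\tfrac12+\tfrac{\eps}{2}$; hence the set of cluster points greedy connects to $z$ forms a net of vertical resolution $\approx\eps/2$, and the additive detour needed to reach $z$ from an arbitrary $x_i$ is at most about $\eps/4$ (half a gap), and similarly on the $y$-side. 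For a cross pair $(x_i,y_j)$ the budget is $\eps\|x_iy_j\|\geq\eps$, while the greedy via route through $z$ costs at most $\|x_iy_j\|+\tfrac{(u+v-2\eta)^2}{2}+\brac{\text{two end detours}}+O(\sqrt{\eps}\cdot\eps)\leq \|x_iy_j\|+\tfrac{c_0^2\eps}{2}+\tfrac{\eps}{2}+o(\eps)<(1+\eps)\|x_iy_j\|$ for small $c_0$ (and if your spacing constant makes greedy add \emph{all} $M$-edges, the situation is even better for greedy). So for essentially all cross pairs greedy already has a $(1+\eps)$-path through $M$ and never adds the bi-clique; the claim that a constant fraction of the $m^2$ pairs is ``bad'' is the crux of your argument and it fails. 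The structural reason is that your disadvantage for greedy (net detours) is governed by $\eps\cdot\tfrac12$ per side, which sums to exactly the relative budget $\eps\cdot 1$ of a cross pair and in fact falls short of it, and no choice of heights in $M$ helps, since OPT and greedy route through the same via points and pay the same height-mismatch term. (Your case (b) is also misstated: the detour to reuse an earlier cross edge is governed by the index distance to that edge's endpoints, not by $|i-j|$, and with $\delta=\Theta(\eps)$ of unspecified constant this only loses a constant factor per reused edge -- but that would be tolerable if case (a) worked.)

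The paper's construction supplies exactly the missing idea: besides the center $c$ (the single via point that the optimal spanner connects to all of $A\cup B$, using that $\|a_ic\|+\|cb_j\|\leq 1+\eps\leq(1+\eps)\|a_ib_j\|$), it places two relay points $p$ and $q$ at roughly quarter positions, with $\|a_1p\|=\|pc\|=\|cq\|=\|qb_1\|\approx\tfrac14(1+\eps)$ and the angles chosen so that $a_ip\circ pc$ is already within $(1+\eps)\|a_ic\|$. Consequently greedy connects $A$ to $p$ and $B$ to $q$ (here the $2\eps$ minimum spacing forces \emph{all} such edges) but never adds any edge incident to $c$; its only route across is then $a_ip\circ pc\circ cq\circ qb_j$, whose length compounds to $\approx(1+\eps)^2\geq 1+\tfrac{5\eps}{4}$, exceeding even $(1+1.2\eps)\|a_ib_j\|\leq(1+1.2\eps)(1+\tfrac{\eps}{100}+O(\eps^2))$, so greedy must add all $k^2=\Omega(\eps^{-1})$ cross edges. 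It is this compounding of the stretch penalty over two hops -- engineered by greedy's early, locally optimal connections to $p$ and $q$ poisoning its later options -- that your midline-$M$ design lacks; adjusting $c_0$ only shrinks the height-mismatch term and simultaneously weakens your exclusion of routes through previously added cross edges, so it does not repair the argument for either the $(1+\eps)$ or the $(1+1.2\eps)$ greedy spanner.
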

\begin{proof} Let $\eps>0$ be given. We construct a point set $S$ as follows; refer to Fig.~\ref{fig:LB}. All points are in an axis-aligned rectangle $R$ of width 1 and height $\tan \alpha$, where $\alpha$ is determined by the equation 
	\begin{equation}\label{eq:LB}
		\frac{1}{\cos \alpha} = 1+\eps.
	\end{equation}
This means, in particular, that the diagonal of $R$ is exactly $1+\eps$.Using the Taylor estimate $\frac{1}{\cos x}= 1+\frac{x^2}{2}+O(x^4)$, we obtain  $\alpha=\sqrt{2\eps}+O(\eps)$. Using the Taylor estimate $\tan x = x+O(x^3)$, this implies that the height of $R$ is $\tan\alpha =\Theta(\alpha)=\Theta(\sqrt{\eps})$.
	
	\begin{figure}[htbp]
		\begin{center}
			\includegraphics[width=0.9\textwidth]{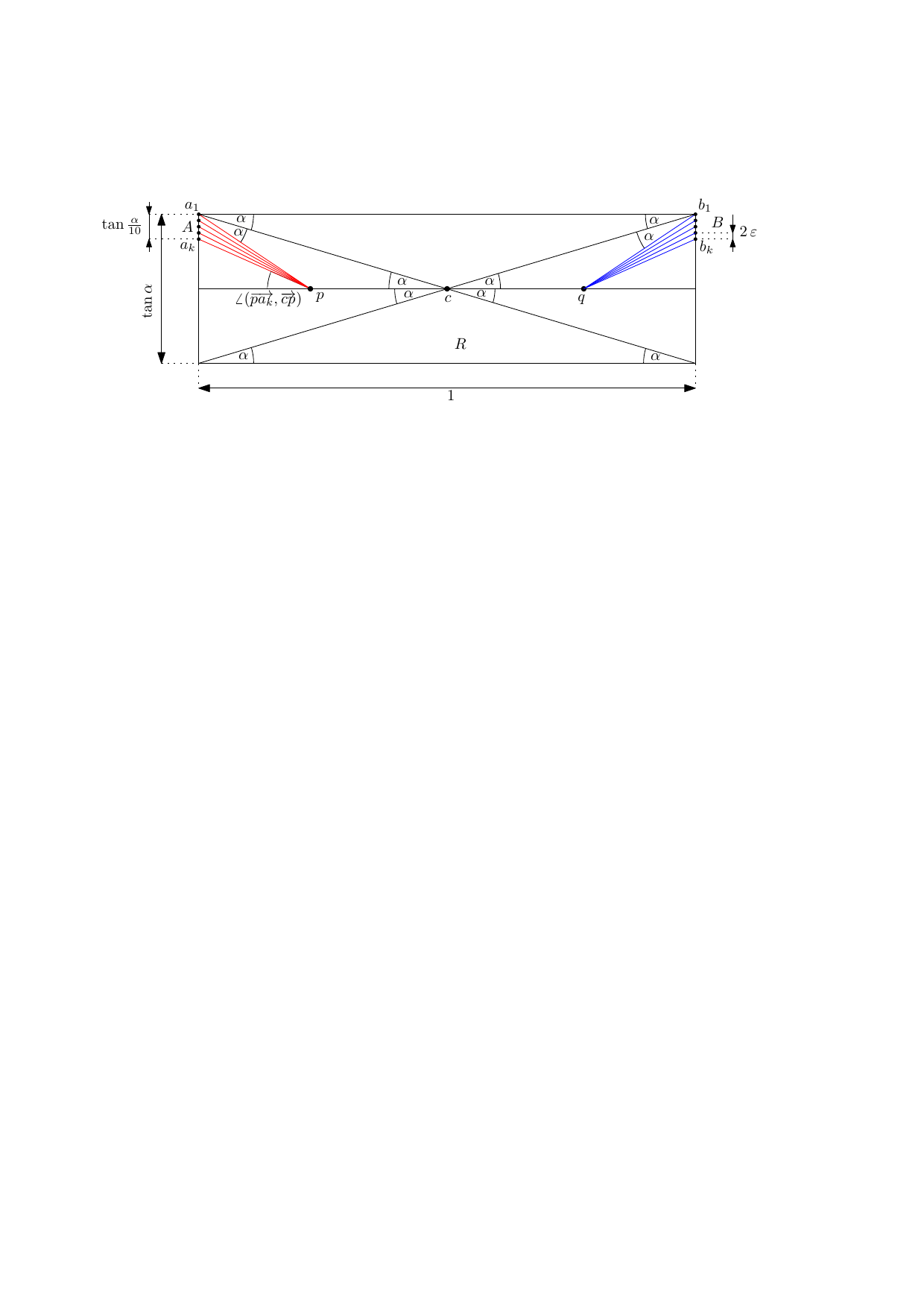}
		\end{center}
		\caption{Lower bound construction for the sparsity of the greedy algorithm. Greedy adds all red and blue edges, but it does not add any edges between $c$ and the points in the left and right sides of $R$. However, it adds all edges between $A$ and $B$.}
		\label{fig:LB}
	\end{figure}

Now we can describe the point set $S$. Along the left and right edges of $R$, resp., place points $A=\{a_1,\ldots , a_k\}$ and $B=\{b_1,\ldots , b_k\}$ such that $a_1$ and $b_1$ are the upper-left and upper-right corners of $R$, the distance between any two points is at least $2\eps$, and $\diam(A)=\diam(B)=\tan(\alpha/10)$. The set $S$ comprises $A\cup B$ and three additional points: the center $c$ of $R$, and points $p$ and $q$ in the interior of $R$ such that $p,c,q$ lie on a horizontal line and $\angle ca_1p=\alpha$ and $\angle cb_1q=\alpha$. This completes the description of $S$.  Note that $|S|=2k+3$. We have $k=\Theta(\frac{\tan(\alpha/10)}{\eps})=\Theta(\alpha/\eps)=\Theta(\eps^{-1/2})$, consequently
	$|S|=\Theta(\eps^{-1/2})$.

	\paragraph{Optimal sparsity.}
	We construct a $(1+\eps)$-spanner $H$ for $S$ with $\Theta(|S|)$ edges (i.e., sparsity $O(1)$). Let $H$ contain the vertical paths $\langle a_1,a_2,\ldots ,a_k\rangle$ and $\langle b_1,b_2,\ldots , b_k\rangle$, the horizontal path $pc\circ cq$, and all edges between $\{p ,c,q\}$ and $A\cup B$. Clearly, $H$ has $\Theta(|S|)$ edges. To show that $H$ is a $(1+\eps)$-spanner, consider a pair of points that are not adjacent in $H$. If both points are in $A$ (or both are in $B$), then they are on a vertical path in $H$. The point pair $\{p,q\}$ is connected by the horizontal path $pc\circ cq$. It remains to consider the pairs $\{a_i,b_j\}$ for $a_i\in A$ and $b_j\in B$. We show that the path $a_ic\circ cb_j$ has weight at most $(1+\eps)\|a_ib_j\|$. On one hand, $A$ and $B$ lie on two parallel lines at distance $1$ apart, hence $\|a_i b_j\|\geq 1$. On the other hand, $\|a_ic\|\leq \|a_1 c\|=\frac{1}{2\, \cos\alpha} = \frac{1+\eps}{2}$ and similarly, $\|cb_j\|\leq \|c b_1\|=\frac{1}{2\, \cos\alpha}=\frac{1+\eps}{2}$. Consequently, $\|a_ic\|+\|b_jc\|\leq \|a_1c\|+\|b_1c\|\leq 1+\eps \leq (1+\eps)\|a_ib_j\|$, as required. 
	
   \paragraph{Greedy sparsity.}
   Now let us consider the greedy algorithm on the point set $S$.
   The same analysis works for the greedy $(1+\eps)$-spanner and the greedy $(1+1.2\, \eps)$-spanner (for short, the greedy spanner). 
    First observe that the greedy spanner contains all edges of the vertical paths $\langle a_1,a_2,\ldots ,a_k\rangle$ and $\langle b_1,b_2,\ldots , b_k\rangle$, and the horizontal path $pc\circ cq$. It also contains all edges between $p$ and $A$, and all edges between $q$ and $B$ (since the distance between any two points in $A$ and any two points in $B$ is at least $2\eps$). However, the greedy algorithm does not add any of the edges between $c$ and $A\cup B$, because for every $a_i\in A$, the path $a_ip\circ pc$ has weight less than $(1+\eps)\|a_i c\|$. Indeed, both segments $a_i p$ and $cp$ make an angle at most $\alpha$ with $a_ic$. Combined with \eqref{eq:LB}, this already implies $\|a_ip\|+\|pc\|\leq (1+\eps)\|a_i c\|$. Similarly for every $b_j\in B$, the path $b_jq\circ qc$ has weight at most $(1+\eps)\|b_j c\|$.
	
    Finally, we show that the greedy algorithm will add all $|A|\cdot |B|=\Omega(\eps^{-1})$ edges between $A$ and $B$. On the one hand, for any $a_i\in A$ and any $b_j\in B$, we have 
	\begin{equation}\label{eq:below}
    \|a_ib_j\|
    \leq \frac{1}{\cos(\alpha/10)}
    =1+\frac{\alpha^2}{200}+O(\alpha^4)
    =1+\frac{\eps}{100}+O(\eps^2).
	\end{equation}
 On the other hand, when the greedy algorithm considers adding edge $a_ib_j$, the shortest $a_ib_j$-path in the current greedy spanner is $a_ip\circ pc\circ cq\circ qb_j$. For $a_i=a_1$ and $b=b_1$, the length of this path is $(1+\eps)^2=1+2\eps+\eps^2>(1+\eps)\|a_1b_1\|$. In general, for $i,j\in \{1,\ldots , k\}$, the length of this $a_ib_j$-path is minimized for $a_i=a_k$ and $b_j=b_k$, that is,
	\begin{equation}\label{eq:path}
		\|a_ip\|+\|pq\|+\|q b_j\|\geq \|a_k p\|+\|pq\|+\|q b_k\|.
	\end{equation}
 
	By construction, $\Delta a_1b_1c$, $\Delta a_1cp$, and $\Delta b_1cq$ are isosceles triangles, and so $\|a_1p\|=\|pc\|=\|cq\|=\|b_1q\|$. Due to the choice of $\alpha$, we have $\|a_1c\|+\|b_1c\|=1+\eps$ and $\|a_1p\|+\|pc\|+\|cq\|+\|b_1q\|=(1+\eps)^2=1+2\eps +\eps^2$. This implies $\|a_1p\|=\|pc\|=\|cq\|=\|b_1q\|=\frac14+\frac{\eps}{2}+O(\eps^2)$.
	
	Note that $a_1p$ and $a_k p$ have the same orthogonal projection to the $x$-axis,	and so 
	\begin{equation}\label{eq:angle}
		\|a_1p\|\cos (2\alpha) = \|a_k p\| \cos\angle (\overrightarrow{pa_k},\overrightarrow{cp}).
    \end{equation}
	Since $\|a_1a_k\|\leq \tan\frac{\alpha}{10} < \frac18\cdot \tan \alpha$ for all sufficiently small $\eps>0$, then $\angle( \overrightarrow{pa_k},\overrightarrow{cp}) <\frac{3\alpha}{2}$. This, combined with \eqref{eq:angle}, the Taylor estimate $\cos x= 1-\frac{x^2}{2}+O(x^4)$, and $\alpha=\sqrt{2\eps}+O(\eps)$, yields
	\begin{align*}
		\|a_k p\| &> \|a_1p\| \frac{\cos (2\alpha)}{\cos (3\alpha/2)}\\
		&=\|a_1p\| \left (1-2\alpha^2+O(\alpha^4)\right) \left(1+\frac{9\alpha^2}{8}+O(\alpha^4)\right)\\
		&=\left(\frac14+\frac{\eps}{2}+O(\eps^2)\right)
		\left (1-4\eps+O(\eps^2)\right) \left(1+\frac{9\eps}{4}\eps+O(\eps^2)\right)\\
		&=\frac14+\frac{\eps}{16}+O(\eps^2).
	\end{align*}
	Overall, using \eqref{eq:below} and \eqref{eq:path}, we obtain 
	\begin{align*}
		\|a_ip\|+\|pq\|+\|q b_j\| 
		&\geq \|a_k p\|+\|pq\|+\|q b_k\|\\
		&=2\|a_k p\| +2 \|pc\|\\
		&> 2\left(\frac14+\frac{\eps}{16}+O(\eps^2)\right) +
		2\left (\frac14+\frac{\eps}{2}+O(\eps^2)\right)\\
		&= 1+\frac{5\eps}{4}+O(\eps^2)\\
		& > (1+1.2\,\eps)\left(1+\eps/100+O(\eps^2)\right)\\
		&\geq (1+1.2\,\eps)\|a_ib_j\|
	\end{align*}
    for sufficiently small $\eps>0$. This shows that the greedy algorithm must add edge $a_ib_j$ for all $a_i\in A$ and $b_j\in B$. Consequently, the greedy $(1+\eps)$-spanner and the greedy $(1+1.2\, \eps)$-spanner both have at least $|A'|\cdot |B'| =\Omega(\eps^{-1})$ edges. 
\end{proof}

The construction above implies a lower bound for the lightness ratio, as well. We obtain a better bound in Section~\ref{ssec:LB-lightness}.

\begin{corollary}\label{cor:weightLB}
For every sufficiently small $\eps>0$, there exists a finite set $S\subset \mathbb{R}^2$ such that 
\[
\|E_{\rm gr}\| = \|E_{{\rm gr}(1.2)}\| \geq \Omega(\eps^{-1/2})\cdot \|E_{\light}\|,
\]
where $E_{\rm gr}$ is the edge set of the greedy $(1+\eps)$-spanner, $E_{{\rm gr}(1.2)}$ is the edge set of the greedy $(1+1.2\cdot \eps)$-spanner, and $E_{\light}$ is the edge set of a minimum-weight spanner for $S$.
\end{corollary}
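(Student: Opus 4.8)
The plan is to recycle the point set $S$ and essentially all of the analysis from the proof of \Cref{thm:sparsityLB}; the corollary requires only a weight comparison in place of the edge-count comparison, and both directions are immediate from facts already established there. Concretely, I would (i) bound $\|E_\light\|$ from above by the weight of the explicit spanner $H$ constructed in \Cref{thm:sparsityLB}, and (ii) bound the weight of the greedy spanner from below by the weight of the complete bipartite graph between $A$ and $B$, which \Cref{thm:sparsityLB} already shows is contained in both the greedy $(1+\eps)$-spanner and the greedy $(1+1.2\,\eps)$-spanner.

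For the upper bound on $\|E_\light\|$: since $E_\light$ is by definition a minimum-weight $(1+\eps)$-spanner, it suffices to bound $\|H\|$, where $H$ is the $(1+\eps)$-spanner exhibited in the proof of \Cref{thm:sparsityLB}. All $|S|=2k+3$ points lie in the rectangle $R$ of width $1$ and height $\tan\alpha = \Theta(\sqrt\eps)$, so every pairwise distance is at most $\sqrt{1+O(\eps)}=O(1)$. The spanner $H$ consists of the two vertical paths $\langle a_1,\ldots,a_k\rangle$, $\langle b_1,\ldots,b_k\rangle$, whose total weights are $\diam(A)=\diam(B)=\tan(\alpha/10)=\Theta(\sqrt\eps)$; the horizontal path $pc\circ cq$ of weight $O(1)$; and the $\Theta(k)$ "spoke" edges between $\{p,c,q\}$ and $A\cup B$, each of weight $O(1)$. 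Hence $\|H\|=O(k)=O(\eps^{-1/2})$, and therefore $\|E_\light\|\le \|H\|=O(\eps^{-1/2})$.

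For the lower bound on the greedy weight: the proof of \Cref{thm:sparsityLB} establishes that the greedy $(1+\eps)$-spanner and the greedy $(1+1.2\,\eps)$-spanner each contain the edge $a_ib_j$ for every $a_i\in A$ and $b_j\in B$, i.e. all $|A|\cdot|B|=k^2=\Omega(\eps^{-1})$ bipartite edges. Since $A$ and $B$ lie on the two vertical sides of $R$, which are at horizontal distance exactly $1$, each such edge has weight $\|a_ib_j\|\ge 1$. Thus $\|E_{\rm gr}\|\ge k^2\cdot 1=\Omega(\eps^{-1})$, and the same bound holds for $\|E_{{\rm gr}(1.2)}\|$. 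Combining with the previous paragraph gives $\|E_{\rm gr}\|=\|E_{{\rm gr}(1.2)}\|=\Omega(\eps^{-1})=\Omega(\eps^{-1/2})\cdot O(\eps^{-1/2})\ge \Omega(\eps^{-1/2})\cdot\|E_\light\|$, as claimed.

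There is essentially no obstacle: the only points needing a line of care are that the $\Theta(\sqrt\eps)$ weight of the vertical paths and the $O(1)$ weight of $pc\circ cq$ are dominated by the $\Theta(\eps^{-1/2})$ spoke edges in $H$ (immediate), and that one may freely replace $E_\light$ by $H$ in the denominator (immediate from minimality of $E_\light$). No attempt at optimizing the $\eps$-dependence is made here, since a stronger lightness lower bound is obtained by a different construction in \Cref{ssec:LB-lightness}.
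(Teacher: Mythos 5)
Your proposal is correct and is essentially the paper's own proof: both reuse the point set of \Cref{thm:sparsityLB}, bound $\|E_\light\|$ by the weight of the explicit spanner $H$ (which is $O(\eps^{-1/2})$ since it has $O(\eps^{-1/2})$ edges of weight $O(1)$), and lower-bound the greedy weight by the $\Omega(\eps^{-1})$ bipartite edges of weight $\Omega(1)$ already shown to lie in both greedy spanners. Your extra accounting of the path weights is harmless but not needed.
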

\begin{proof}
	For $\eps>0$, consider the point set $S$ constructed in the proof of Theorem~\ref{thm:sparsityLB}. We have shown that $S$ admits a $(1+\eps)$-spanner $H$ with $O(\eps^{-1/2})$ edges, each of weight $O(1)$. Consequently, $\|E_{\light}\|\leq \|E(H)\| =O(\eps^{-1/2})$. 
	
	We have also shown that the greedy $(1+\eps)$- and $(1+1.2\eps)$-spanner each contain $\Omega(|S|^2)=\Omega(\eps^{-1})$ edges, each of weight $\Omega(1)$. Consequently, their total weight is $\Omega(\eps^{-1})$, which is $\Omega(\eps^{-1/2}) \|E_{\light}\|$. 
\end{proof}

The lower bound construction generalizes to the case when the stretch is $(1+x\eps)$ for some $1\leq x\leq o(\eps^{-1/2})$, and we compare the sparsest $(1+\eps)$-spanner with the greedy $(1+x \eps)$-spanner.

\lbsparse*
\begin{proof}
Let $\eps>0$ and $1\leq x\leq o(\eps^{-1/3})$ be given. We construct a point set $S$ as follows; refer to Fig.~\ref{fig:LB+}. Let $R$ be an axis-aligned rectangle $R$ of width 1 and height $\tan \alpha$, where $\alpha$ is determined by the equation
	\begin{equation}\label{eq:LB+}
		\frac{1}{\cos \alpha} = 1+\eps.
	\end{equation}
	This means, in particular, that the diagonals of $R$ have length precisely $1+\eps$. Using the Taylor estimate $\frac{1}{\cos x}= 1+\frac{x^2}{2}+O(x^4)$, we obtain  $\alpha=\sqrt{2\eps}+O(\eps)$.
	Using the Taylor estimate $\tan x = x+O(x^3)$, this implies that the height of $R$ is $\tan\alpha =\sqrt{2\eps}+O(\eps)=\Theta(\sqrt{\eps})$.
	
	\begin{figure}[htbp]
		\begin{center}
			\includegraphics[width=0.9\textwidth]{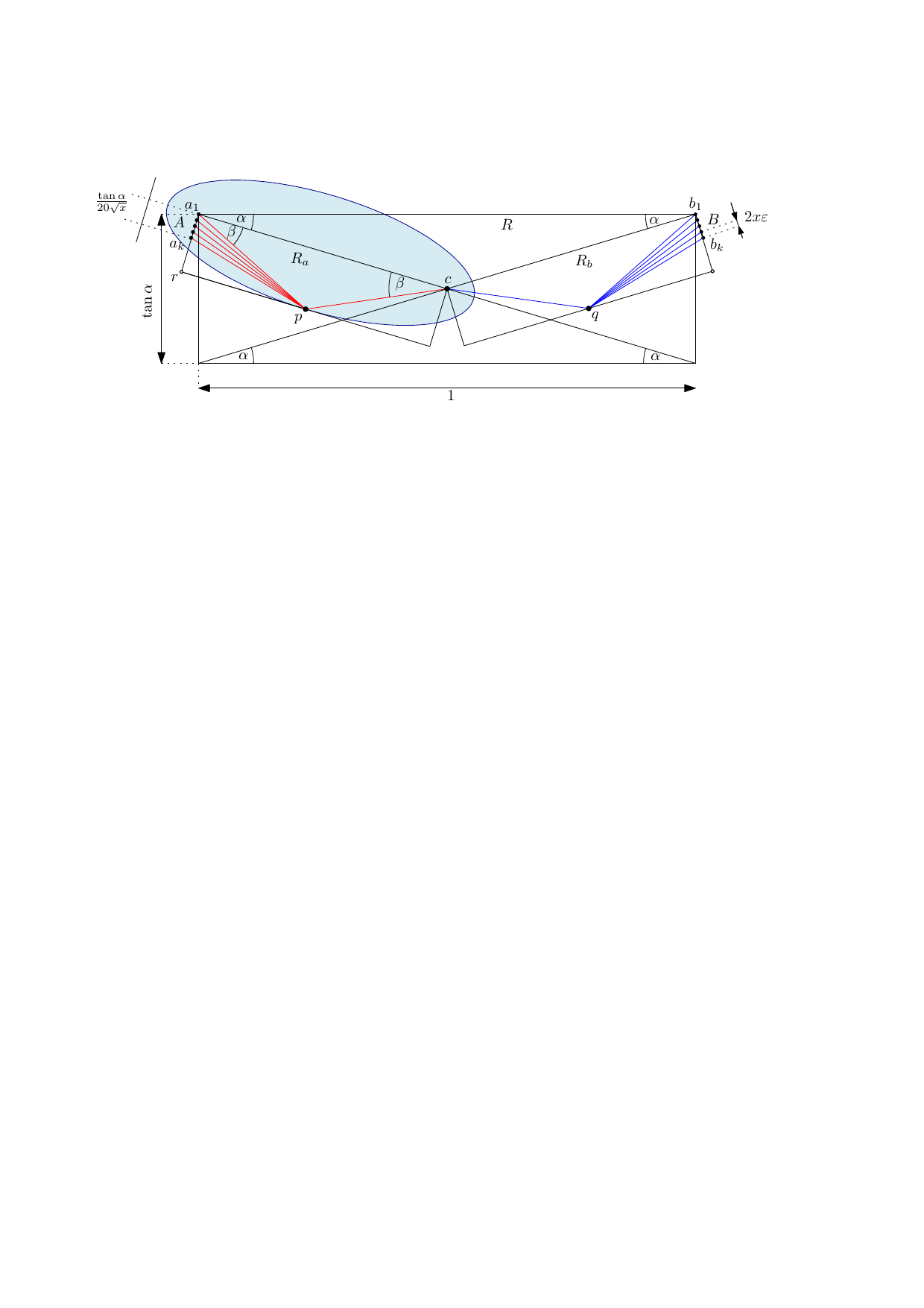}
		\end{center}
		\caption{Lower bound construction for the sparsity of the $(1+x\eps)$-greedy spanner. The locus of all points $p$ such that $\|a_1p\|+\|pc\|=(1+x\eps)\|a_1c\|$ is an ellipse with foci $a_1$ and $c$. The greedy $(1+x\eps)$-spanner contains all red and blue edges, as well as all edges between $A$ and $B$.}
		\label{fig:LB+}
	\end{figure}

Denote by $a_1$ and $b_1$ the upper-left and upper-right corners of $R$, respectively, and let $c$ be the center of $R$. Next, place points $p$ and $q$ below the segments $a_1c$ and $b_1c$, resp., such that
\[
 \|a_1p\|= \|pc\|=\|cq\|=\|qb_1\|=\frac12(1+x\eps) \|a_1c\|=\frac14(1+x\eps)(1+\eps)=\frac14\Big(1+(x+1)\eps+O(\eps^2)\Big).
\]
Note that $\Delta a_1cp$ and $\Delta b_1cq$ are isosceles triangles,
where $\beta:=\angle ca_1 p = \angle pca_1 = \angle b_1cq = \angle qb_1c$ and $\beta=\sqrt{2x\eps}+O(x\eps)$.
Let $R_a$ be the rectangle with one side $a_1c$ and the opposite side containing $p$; and similarly let $R_b$ be the rectangle with one side $b_1c$ and the opposite side containing $q$.

Now we can describe the point set $S$. Along the left side of $R_a$ and the right side of $R_b$, resp., place equally spaced points $A=\{a_1,\ldots , a_k\}$ and $B=\{b_1,\ldots , b_k\}$ such that the distance between any two points is at least $2x\eps$, and $\diam(A)=\diam(B)=\frac{1}{20\sqrt{x}}\tan\alpha$. Our point set is $S=A\cup B\cup \{c,p,q\}$.
Note that $|S|=2k+3$. We have $k=\Theta(\frac{\tan\alpha}{\sqrt{x}\cdot x\eps})=\Theta(\frac{\alpha}{x^{3/2}\eps})=\Theta(\eps^{-1/2}/x^{3/2})$, consequently $|S|=\Theta(\eps^{-1/2}/x^{3/2})$.

    \paragraph{Optimal sparsity.}
    We construct a $(1+\eps)$-spanner $H$ for $S$ with $\Theta(|S|)$ edges (i.e., sparsity $O(1)$). Let $H$ contain the paths $\langle a_1,a_2,\ldots ,a_k\rangle$ and $\langle b_1,b_2,\ldots , b_k\rangle$, the edges of the triangle $\Delta cpq$, and all edges between $\{p,c,q\}$ and $A\cup B$. Clearly, $H$ has $\Theta(|S|)$ edges.

    To show that $H$ is a $(1+\eps)$-spanner, consider a pair of points that are not adjacent in $H$. Any pair of points in $A$ (resp., $B$), are connected by a path of collinear edges.  Consider a point pair $\{a_i,b_j\}$, where $a_i\in A$ and $b_j\in B$. We show that the path $a_ic\circ cb_j$ has weight at most $(1+\eps)\|a_ib_j\|$. This clearly holds for $i=j=1$, where
    $\|a_1c\|+\|cb_1\|=(1+\eps)=(1+\eps)\|a_1b_1\|$ by the choice of $\alpha$. In particular, $c\in \mathcal{E}(a_1,b_1)$, where $\mathcal{E}(f_1,f_2)$ denotes the ellipse with foci $f_1$ and $f_2$ and great axis $(1+\eps)\|f_1f_2\|$. For all $i\in \{1,\ldots ,k\}$, the segment $a_ib_i$ is horizontal and $\|a_ib_i\|\geq \|a_1b_1\|$, therefore $\mathcal{E}(a_i,b_i)$ is obtained from $\mathcal{E}(a_1,b_1)$ by a vertical translation and scaling; hence $c\in \mathcal{E}(a_i,b_i)$. Finally, for all $j\geq i$, we have  $\|a_ib_j\| \geq \|a_ib_i\|$, and $\mathcal{E}(a_i,b_j)$ is obtained from $\mathcal{E}(a_i,b_j)$ by a rotation about $a_i$ and scaling; and so $c\in \mathcal{E}(a_i,b_j)$.
    Overall, we have $c\in \mathcal{E}(a_i,b_j)$ for all $i,j\in\{1,\ldots , k\}$, which implies $\|a_ic\|+\|b_jc\|\leq 1+\eps \leq (1+\eps)\|a_ib_j\|$, as required.
	
   \paragraph{Greedy sparsity.}
   Now let us consider the greedy $(1+x\eps)$-spanner $G_{{\rm gr}(x)} = (S, E_{{\rm gr}(x)})$ on the point set $S$. The greedy algorithm sorts the point pairs in $S$ by weight: It adds all edges of the paths $(a_1,\ldots , a_k)$ and $(b_1,\ldots , b_k)$ to $G_{{\rm gr}(x)}$. It then considers the edges between $p$ and $A$ (resp., $q$ and $B$) by increasing weight: We claim that it adds all these edges to $G_{{\rm gr}(x)}$. By symmetry, it is enough to show that if the greedy algorithm has already added $a_ip$ for some $1<i\leq k/2$, then it also adds $a_{i-1}p$. That is,
   $\|a_{i-1}a_i\| + \|a_i p\|\geq (1+x\eps)\,\|a_{i-1} p\|$.
   Since $\|a_{i-1}a_i\|\geq 2x\eps$ by construction, then $a_{i-1}$ lies outside of the ellipse with foci $a_i$ and $p$, and great axis $(1+x\eps)\|a_ip\|$. Thus the greedy spanner contains all edges between $A$ and $p$, and between $B$ and $q$.

   Next, we show that the greedy algorithm does not add any edge between $A$ and $\{c,q\}$. We claim that for any point $a_i\in A$, the paths $a_ip\circ pc$ and $a_ip\circ pc\circ cq$ have stretch at most $1+x\eps$. This clearly holds for the path $a_1p\circ pc$ by the definition of $p$. An easy calculation shows that it holds for all other point pairs in $A\times \{c,q\}$. Symmetrically, the greedy $(1+x\eps)$-spanner has no edges between $B$ and $\{p,c\}$.

   Finally, the greedy algorithm considers pairs $\{a_i,b_j\}$ for $a_i\in A$ and $b_j\in B$ sorted by weight. We show that for all such pairs, we have $\|a_ip\circ pc\circ cq\circ qb_j\|\geq (1+x\eps)\|a_i b_j\|$, and the greedy algorithm must add the edge $a_ib_j$ to \textbf{$G_{{\rm gr}(x)}$}. This is clear for the pair $\{a_1,b_1\}$, where the definition of $p$ and $q$ gives
\begin{equation}\label{eq:11}
    \|a_1p\circ pc\circ cq\circ qb_1\|
    =(1+x\eps)(1+\eps)
    = 1+(x+1)\eps+O(x\eps^2)
    >(1+x\eps)\|a_1b_1\|.
\end{equation}
   In general, consider an arbitrary point pair $(a_i,b_j)\in A\times B$. The distance between $a_i$ and $b_j$ is maximized for $a_i=a_k$ and $b_j=b_k$, where 
\[\|a_ib_j\|
   \leq \|a_k b_k\|
   = 1+ 2\cdot \frac{\tan\alpha}{20\sqrt{x}} \sin\alpha
   \leq 1+\frac{\sqrt{2\eps}}{10}\cdot \sqrt{2\eps}+O(\eps^{3/2})
   =1+\frac{\eps}{5}+O(\eps^{3/2}).
\]
We give a lower bound for $\|a_ip\circ pc\circ cq\circ qb_j\|$ using \Cref{eq:11} and the differences $\|a_1p\|-\|a_ip\|$ and $\|b_1q\|-\|b_kq\|$. By symmetry and monotonicity, it is enough to give an upper bound for $\|a_1p\|-\|a_kp\|$.
Let $r$ denote the bottom-right vertex of $R_a$; see \Cref{fig:LB+}.
The Pythagorean theorem for the right triangles $\Delta a_1pr$ and $\Delta a_k pr$, combined with the identity $\sqrt{y}-\sqrt{z}=\frac{y-z}{\sqrt{y}+\sqrt{z}}$, yields
\begin{align*}
    \|a_1p\| - \|a_kp\|
    & =\sqrt{ \|a_1r\|^2 + \|pr\|^2} - \sqrt{\|a_kr\|^2 + \|pr\|^2}\\
    & = \frac{ \|a_1r\|^2 - \|a_kr\|^2}{\sqrt{ \|a_1r\|^2 + \|pr\|^2} + \sqrt{\|a_kr\|^2 + \|pr\|^2}}\\
    &<  \frac{ (\|a_1a_k\|+\|a_kr\|)^2 - \|a_kr\|^2}{\|a_1p\|}\\
    &= \frac{\|a_1a_k\|^2 + 2\cdot \|a_1a_k\|\cdot \|a_kr\|}{\|a_1p\|}\\
    &< \frac{2\cdot \|a_1a_k\|\cdot \|a_1r\|}{\|a_1p\|}\\
    &= \frac{2\cdot\frac{\tan\alpha}{20\sqrt{x}}\cdot  \frac14(1+\eps)\tan\beta}{\frac14(1+\eps)(1+x\eps)}\\
    &= \frac{\tan\alpha\cdot \tan\beta}{10\sqrt{x}(1+x\eps)}\\
    &= \frac{1}{10\sqrt{x}} \left(\sqrt{2\eps}+O(\eps)\right)
  \left( \sqrt{2x\eps}+O(x\eps)\right)\Big(1-x\eps+O(x^2\eps^2)\Big) \\
    &\leq \frac{\eps}{5} +O(\eps^{3/2}).
\end{align*}

Overall, for all $a_i\in A$ and $b_j\in B$, we have
\begin{align*}
 \|a_ip\circ pc\circ cq\circ qb_j\|
  &\geq \|a_1p\circ pc\circ cq\circ qb_1\| - \Big(\|a_1p\|-\|a_ip\|\Big)  - \Big(\|b_1q\|-\|b_jq\|\Big) \\
  &\geq \|a_1p\circ pc\circ cq\circ qb_1\| - 2\cdot \Big(\|a_1p\|-\|a_kp\|\Big) \\
  &\geq \Big(1+(x+1)\eps+O(x\eps^2)\Big) - 2\cdot \left(\frac{\eps}{5}+O(\eps^{3/2})\right)\\
  &\geq 1+\left(x+\frac35\right)\eps+O(\eps^{3/2})\\
&> (1+x\eps) \left(1+\frac{\eps}{5}+O(\eps^{3/2})\right)\\
&\geq (1+x\eps)\|a_ib_j\|
\end{align*}
for all sufficiently small $\eps>0$ and $x=o(\eps^{-1/3})$. This shows that the greedy $(1+x\eps)$-spanner contains edges $a_ib_j$ for all $a_i\in A$ and $b_j\in B$. In particular, we have  $|E_{\sparse}|=O(|S|)$ and $|E_{{\rm gr}(x)}|=\Omega(|S|^2)$. Consequently, $|E_{{\rm gr}(x)}|\geq \Omega(|S|)\cdot |E_{\sparse}| \geq \Omega(\eps^{-1/2}/x^{3/2})\cdot |E_{\sparse}|$.

The construction above has $|S|=O(\eps^{-1/2}/x^{3/2})$ points. However, we can obtain arbitrarily large point sets with the same property as a disjoint union of translated copies of the construction above: Both $G_{\sparse}$ and $G_{{\rm gr}(x)}$ would have only one extra edge per copy, so the bound $|E_{{\rm gr}(x)}|\geq \Omega(\eps^{-1/2}/x^{3/2})\cdot |E_{\sparse}|$ carries over.
   \end{proof}

\begin{corollary}\label{cor:weightLB+}
For every sufficiently small $\eps>0$ and $1\leq x\leq o(\eps^{-1/3})$, there exists a finite set $S\subset \mathbb{R}^2$ such that 
\[
\|E_{{\rm gr}(x)}\| \geq 
\Omega\left(\frac{\eps^{-1/2}}{x^{3/2}}\right)\cdot \|E_{\light}\|,
\]
where $E_{{\rm gr}(x)}$ is the edge set of the greedy $(1+x\eps)$-spanner, and $E_{\light}$ is the edge set of a minimum-weight $(1+\eps)$-spanner for $S$.
\end{corollary}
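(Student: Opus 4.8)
The plan is to obtain Corollary~\ref{cor:weightLB+} as a direct consequence of Theorem~\ref{thm:sparsityLB+}, exactly as Corollary~\ref{cor:weightLB} is obtained from Theorem~\ref{thm:sparsityLB}. Fix $\eps>0$ sufficiently small and $1\le x\le o(\eps^{-1/3})$, and let $S=A\cup B\cup\{c,p,q\}$ be the point set constructed in the proof of Theorem~\ref{thm:sparsityLB+}. From that proof we already know $|S|=\Theta(\eps^{-1/2}/x^{3/2})$ (and the hypothesis $x=o(\eps^{-1/3})$ is precisely what guarantees $k\ge 1$, hence a nonempty construction). Moreover, all of $S$ lies in the axis-aligned rectangle $R$ of width $1$ and height $\tan\alpha=\Theta(\sqrt\eps)$, together with the small sets $A$ and $B$ of diameter $\tfrac{1}{20\sqrt x}\tan\alpha$ near its top corners, so $S$ has Euclidean diameter $O(1)$.

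First I would upper-bound $\|E_{\light}\|$. The proof of Theorem~\ref{thm:sparsityLB+} exhibits an explicit $(1+\eps)$-spanner $H$ for $S$, namely the two paths $\langle a_1,\dots ,a_k\rangle$ and $\langle b_1,\dots ,b_k\rangle$, the triangle $\Delta cpq$, and all edges between $\{p,c,q\}$ and $A\cup B$; this graph has $\Theta(|S|)$ edges. Since $S$ has diameter $O(1)$, every edge of $H$ has weight $O(1)$, so $\|E_{\light}\|\le\|E(H)\|=O(|S|)=O(\eps^{-1/2}/x^{3/2})$.

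Next I would lower-bound $\|E_{{\rm gr}(x)}\|$. Theorem~\ref{thm:sparsityLB+} establishes that the greedy $(1+x\eps)$-spanner contains the complete bipartite graph between $A$ and $B$, i.e.\ all $|A|\cdot|B|=\Omega(|S|^2)$ edges $a_ib_j$. Since $\|a_ib_j\|\ge\|a_1b_1\|=1$ (the width of $R$) for all $i,j$, each such edge has weight at least $1$, hence $\|E_{{\rm gr}(x)}\|\ge|A|\cdot|B|=\Omega(|S|^2)$. Combining the two estimates yields $\|E_{{\rm gr}(x)}\|/\|E_{\light}\|=\Omega(|S|^2)/O(|S|)=\Omega(|S|)=\Omega(\eps^{-1/2}/x^{3/2})$, which is the claimed bound.

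I do not expect a real obstacle: the substantive work is Theorem~\ref{thm:sparsityLB+} itself, and given it, this is a two-line counting argument, the only points needing a moment's care being the "weight $O(1)$'' bound for the edges of $H$ and the "weight $\ge 1$'' bound for the $A$–$B$ edges, both immediate from the geometry of $R$. If one wishes $S$ of arbitrarily large cardinality, one takes a disjoint union of translated, tightly packed copies of the construction, as in the proof of Theorem~\ref{thm:sparsityLB+}: each of $G_{\light}$ and $G_{{\rm gr}(x)}$ picks up only one short bridge edge per copy, of negligible total weight, so the ratio carries over verbatim.
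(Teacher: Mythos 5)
Your proposal is correct and follows essentially the same route as the paper's own proof: reuse the point set of \Cref{thm:sparsityLB+}, bound $\|E_{\light}\|$ by the explicit spanner with $O(|S|)$ edges of weight $O(1)$, and bound $\|E_{{\rm gr}(x)}\|$ from below by the $\Omega(|S|^2)$ edges between $A$ and $B$, each of weight $\Omega(1)$. The extra remarks (the $\|a_ib_j\|\geq 1$ observation and the disjoint-union trick for large $n$) are fine and consistent with the paper.
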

\begin{proof}
    For $\eps>0$, consider the point set $S$ constructed in the proof of Theorem~\ref{thm:sparsityLB+}. We have shown that $S$ admits a $(1+\eps)$-spanner $H$ with $O(|S|)=O(\eps^{-1/2}/x^{3/2})$ edges, each of weight $O(1)$.
	
    We have also shown that the greedy $(1+x\eps)$-spanner has  $\Omega(|S|^2)$ edges, each of weight $\Omega(1)$. Consequently, $\|E_{{\rm gr}(x)}\| /  \|E_{\light}\|\geq \Omega(|S|) \geq \Omega(\eps^{-1/2}/x^{3/2})$. 
\end{proof}

\subsection{Lightness Lower Bounds}
\label{ssec:LB-lightness}

\begin{theorem}\label{thm:weightLB}
For every sufficiently small $\eps>0$, there exists a finite set $S\subset \mathbb{R}^2$ such that 
\[
\|E_{\rm gr}\| \geq \|E_{\rm gr}'\| \geq \Omega(\eps^{-1})\cdot \|E_{\light}\|,
\]
where $E_{\rm gr}$ is  the edge set of a greedy $(1+\eps)$-spanner, $E_{\rm gr}'$ is the edge set of the greedy $(1+1.01\, \eps)$-spanner, and $E_{\light}$ is the edge set of a minimum-weight $(1+\eps)$-spanner for $S$.
\end{theorem}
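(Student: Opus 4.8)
The plan is to construct the point set $S$ indicated in the technical overview and then analyze the greedy spanner and the optimal (lightest) spanner on it separately. For the construction, I would place $n$ equally spaced points $v_0,\dots,v_{n-1}$ on a circular arc of radius $\rho$ and total turning angle $2\phi$ — crucially a proper sub-arc, not the full circle — and tune the three parameters against $\eps$: take $\phi=\Theta(\sqrt\eps)$ just large enough that the polygonal path $\pi_0=\langle v_0,\dots,v_{n-1}\rangle$ of consecutive chords, which is the Euclidean MST of $S$, already fails to $(1+1.01\eps)$-preserve the distance between its endpoints, while keeping this overshoot a bounded multiple of $\eps$; scale $\rho$ so that $\|\mathrm{MST}(S)\|=\Theta(1)$; and pick the spacing $\delta=2\phi/(n-1)$ — equivalently the size $n=\Theta(\eps^{-1})$ — large enough that the "backtrack" detours considered below are too long. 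Arbitrarily large instances then come from disjoint, well-separated translated copies of $S$, which change $\|E_{\rm gr}(x)\|$ and $\|E_{\light}\|$ by one edge per copy, so all ratios carry over.

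For the upper bound I would exhibit an explicit $(1+\eps)$-spanner of weight $O(\|\mathrm{MST}(S)\|)$, consisting of $\pi_0$ together with a small number of "shortcut" chords (the analogue, in this setting, of the helper edges of the main construction). The stretch is verified by a case analysis on a query pair $v_iv_j$ according to its index gap $k=j-i$: for $k$ up to the threshold $\hat k$ — the largest index gap for which a sub-path of $\pi_0$ still has stretch $\le 1+\eps$, determined by the Taylor estimate $\tfrac{k\sin(\delta/2)}{\sin(k\delta/2)}=1+\Theta\big((k\delta)^2\big)$ — the sub-path of $\pi_0$ already works, and for larger gaps the pair is routed through a shortcut. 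Since each shortcut has weight $O(\|v_0v_{n-1}\|)=O(\|\mathrm{MST}(S)\|)$, this yields $\|E_{\light}\|=O(\|\mathrm{MST}(S)\|)$, and hence $\|E_{\light}\|=\Theta(\|\mathrm{MST}(S)\|)$ because the MST is a universal lower bound.

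The heart of the argument is the lower bound on the greedy $(1+x\eps)$-spanner, proved uniformly for $x\in\{1,1.01\}$. Greedy first inserts $\pi_0$, then skips all chords $v_iv_{i+k}$ with $k\le\hat k_x$ because $\pi_0$ already provides a $(1+x\eps)$-path, where $\hat k_x=\Theta(\sqrt\eps/\delta)$ is the threshold above. The crux is to show that greedy is then \emph{forced} to insert $\Omega(\eps^{-1})$ "almost diametric" chords of essentially equal weight: for the congruent chords $v_iv_{i+k}$ with $k$ just past $\hat k_x$, I would show that $\pi_0$ is too long by the choice of $\phi$, that the detour routing through an already-inserted parallel chord of the same gap via one or two consecutive edges has stretch $1+\Theta(\delta/\sin(\hat k_x\delta/2))=1+\Theta(1/\hat k_x)$, which the lower bound on $\delta$ (equivalently $\hat k_x=O(\eps^{-1})$) makes strictly larger than $1+x\eps$, and that the remaining cheap routes — a shorter chord followed by an arc segment — are excluded by a cubic stretch estimate of the form $\tfrac{\lambda^3-\nu^3}{6\lambda}$. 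Since each such chord has weight $\Theta(\|v_0v_{n-1}\|)=\Theta(\|\mathrm{MST}(S)\|)$, summing gives $\|E_{\rm gr}(x)\|=\Omega(\eps^{-1})\cdot\|\mathrm{MST}(S)\|=\Omega(\eps^{-1})\cdot\|E_{\light}\|$ for $x\in\{1,1.01\}$, and the inequality $\|E_{\rm gr}\|\ge\|E_{\rm gr}'\|$ follows by comparing the two greedy runs on this symmetric instance, the stricter stretch bound only triggering at least as many and at least as heavy insertions.

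I expect the main obstacle to be the simultaneous calibration of $(\rho,\phi,n)$: greedy must be pushed to add a near-linear (in $n$) collection of heavy chords, which requires the spacing $\delta$ not too small, while the same chords must remain spannable — cheaply — by the handful of shortcut edges available to the optimal spanner, which bounds how curved the arc may be; getting both to hold at once, rather than settling for the weaker $\Omega(\eps^{-1/2})$ bound that already follows from the sparsity construction, is the technical crux. A secondary complication is the tie-breaking among congruent chords of a given gap, which I would neutralize either by a tiny perturbation of the points off the exact circle or by arguing that the bound holds under every tie-breaking rule.
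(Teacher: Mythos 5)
Your construction is in essence the paper's: points on a circular sub-arc of angle $\Theta(\sqrt\eps)$, greedy forced to buy $\Omega(\eps^{-1})$ near-threshold chords of weight $\Theta(\sqrt\eps)$ because the arc path is too long and detours through already-inserted parallel chords cost an extra $\Theta(\delta)$ per shift, while the optimum gets away with the path plus cheap shortcuts. Two points, however, are genuine gaps rather than routine details. First, your calibration of $\phi$ ("just large enough that $\pi_0$ fails to $(1+1.01\eps)$-preserve its endpoints") kills the bound for the middle term $\|E_{\rm gr}'\|$: if the full-path stretch only barely exceeds $1+1.01\eps$, then the critical gap $\hat k_{1.01}$ is $n-O(1)$, so only $O(1)$ pairs lie beyond it and your argument forces only $O(1)$ heavy chords into the $(1+1.01\eps)$-greedy spanner. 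Your fallback, $\|E_{\rm gr}\|\ge\|E_{\rm gr}'\|$ "by comparing the two runs," is not a proof — greedy weight is not obviously monotone in the stretch parameter, and the theorem needs a lower bound on $\|E_{\rm gr}'\|$ itself. You must make the full-path excess a constant factor larger than $1.01\,\eps$ (the paper's choice of $\beta$ does this implicitly), so that $\hat k_{1.01}\le(1-c)n$ and $\Theta(n)$ congruent pairs sit at the critical gap for \emph{both} stretch parameters.

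Second, the lightness upper bound $\|E_{\light}\|=O(\|\mathrm{MST}(S)\|)$ is asserted ("a small number of shortcut chords") but never constructed or verified, and on a plain uniform arc it is exactly the delicate part: once the overshoot constant is bumped up as required above, the pairs needing shortcuts have spans ranging over a constant-factor interval and arbitrary positions, so no single chord serves them all; you need a constant number of scales of chords, placed so that every long pair contains a nested chord whose cubic saving $\nu^3/24$ covers the pair's excess $\lambda^3/24-\eps\lambda$ without backtracking — this is precisely the hierarchy the paper builds for \Cref{thm:weightLB+}, and it must be checked. The paper's proof of \Cref{thm:weightLB} sidesteps this entirely by structuring the arc into sections of lengths $0.1\beta$, $0.9\beta$, $0.1\beta$ with the last section a rotated copy of the first: all heavy pairs are then congruent of length $\|p_1p_3\|$ and are served by the single chord $p_2p_3$ via the inequality $1.1^3-0.9^3<1$, while the greedy count becomes a covering argument (every arc interval of length $2\eps\beta$ must contain an endpoint of an added heavy edge) that is independent of the point density, unlike your count, which needs the spacing bounded below (i.e., $n=\Theta(\eps^{-1})$ with the right constant). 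Your plan is salvageable along the lines of \Cref{thm:weightLB+} with $x=O(1)$, but as written both the $(1+1.01\eps)$ lower bound and the optimal-spanner upper bound are unproven, and they are the crux you yourself identify.
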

\begin{proof}
    Let $\eps>0$ be given. We construct a point set $S$ as follows; refer to Fig.~\ref{fig:weightLB}. The points in $S$ lie on a circular arc $C$, which is an arc of length $\alpha+\beta$ of a circle of radius $1$ centered at the origin $o$, for angles $\alpha<\beta$ to be specified later. For any $s,t\in C$, let $C(s,t)$ denote the subarc of $C$ between $s$ and $t$, and let $\mathrm{arc}(s,t)=\|C(s,t)\|$ denote the length of the circular arc $C(s,t)$. First, we place four points $p_1,\ldots ,p_4\in C$ such that $\mathrm{arc}(p_1,p_2)=\alpha$,  $\mathrm{arc}(p_2,p_3)=\beta-\alpha$, and $\mathrm{arc}(p_3,p_4)=\alpha$.
    Now we construct the point set $S$ as follows:
     Place a large number of equally spaced points along $C(p_1,p_2)$ and $C(p_2,p_3)$, and then populate $C(p_3,p_4)$ with the rotated copy of the points in $C(p_1,p_2)$. Note that, by construction, $S$ contains a large number of point pairs $\{s,t\}\subset S$ with $\|st\|=\|p_1p_3\|$.
	       	
	\begin{figure}[htbp]
		\begin{center}
			\includegraphics[width=0.9\textwidth]{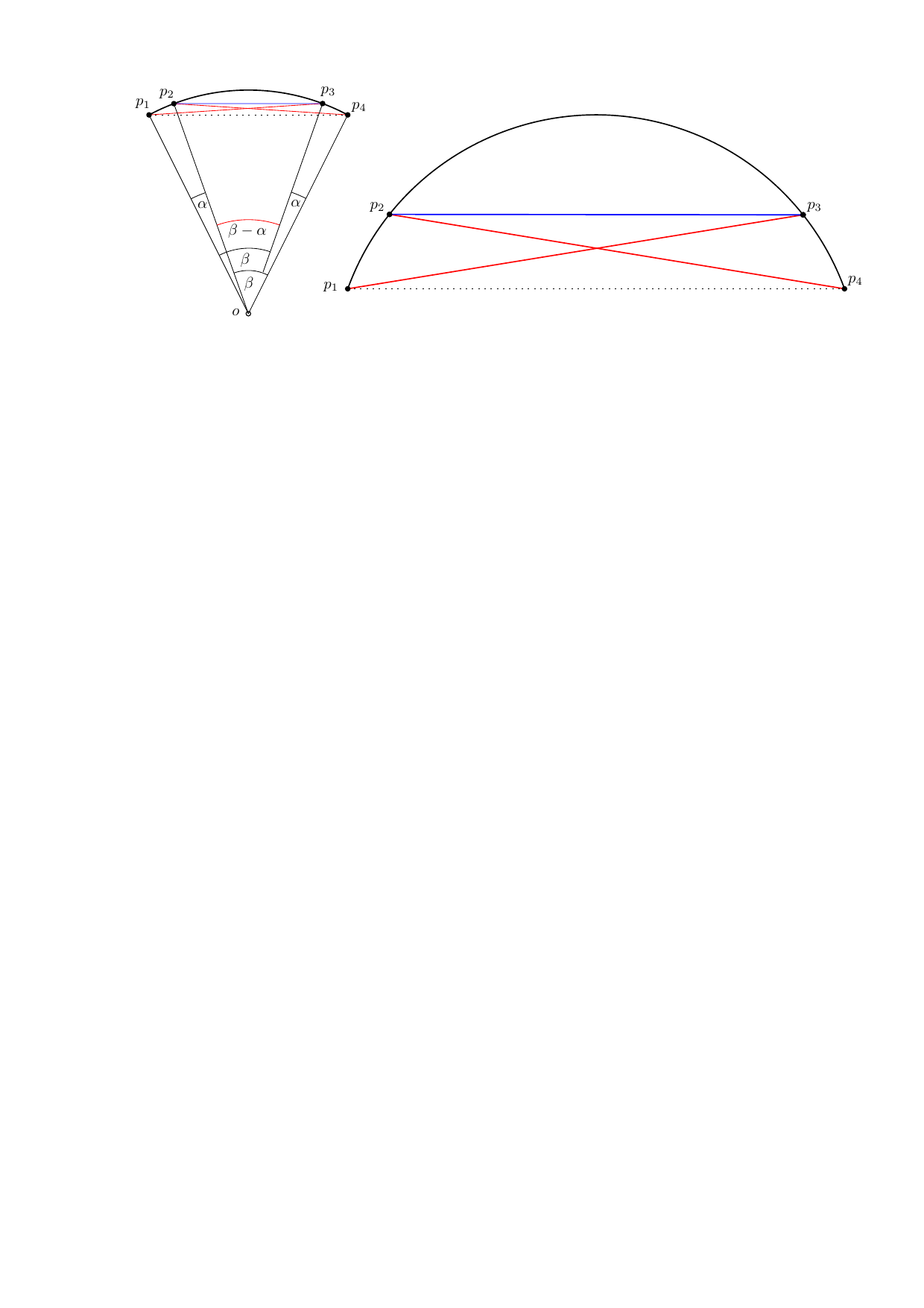}
		\end{center}
		\caption{Lower bound construction for the weight of the greedy algorithm.}
		\label{fig:weightLB}
	\end{figure}
	
    Let $P$ be the spanning path of $S$ that connects pairs of consecutive points along the circle. For $s,t\in S$, let $P(s,t)$ denote the subpath of $P$ between $s$ and $t$. If the points in $S$ are sufficiently dense along $C$, the length of $P(s,t)$ will be arbitrarily close to $\mathrm{arc}(s,t)$.  In the remainder of this proof, we assume that $\|P(s,t)\|=\|\mathrm{arc}(s,t)\|+O(\eps^2)$ for all $s,t\in S$.
    
    Using the Taylor estimates for sine and the bound $\mathrm{arc}(s,t)\leq 2\beta =O(\eps^{1/2})$, the difference between $\mathrm{arc}(s,t)$ and the length of the line segment $st$ is 
\begin{equation}\label{eq:saving}
    \|P(s,t)\|-\|st\|
    =\mathrm{arc}(s,t)-2\sin\left(\frac12 \, \mathrm{arc}(s,t)\right) +O(\eps^2) 
   =\frac{(\mathrm{arc}(s,t))^3}{48} +O(\eps^2).
\end{equation}

  To complete the construction, we specify $\alpha$ and $\beta$.
    We set $\alpha=\frac{\beta}{10}$ and choose $\beta$ such that 
    \begin{equation}\label{eq:beta}
    \|P(p_1,p_3)\|>(1+\eps)\|p_1 p_3\|,
    \end{equation}
    but $\|P(s,t)\|\leq (1+\eps)\|st\|$, for all $s,t\in C$ with $\|P(s,t)\|<\|P(p_1,p_3)\|$. Recall that $\|P(p_1,p_3)\|=\mathrm{arc}(p_1,p_3)+O(\eps^2)=\beta+O(\eps^2)$. The line segment $p_1p_3$ has length $\|p_1p_3\|=2\sin\frac{\beta}{2}$. Using the Taylor estimate $x-\frac{x^3}{6}\leq \sin x\leq x-\frac{x^3}{6}+O(x^5)$, we have $\beta(1-\frac{\beta^2}{48})\leq \|p_1p_3\|\leq \beta(1-\frac{\beta^2}{48}+O(\beta^4))$. Now $\|P(p_1,p_3)\|= (1+\eps)\|p_1 p_3\|+O(\eps^2)$ is attained for $\eps=\beta^2/48+\Theta(\beta^4)$ or $\beta=\sqrt{48\eps}+\Theta(\eps)$.

   \paragraph{Optimal weight.}
   We show that the minimum weight of an $(1+\eps)$-spanner for $S$ is $\|E_{\light}\|=O(\eps^{1/2})$. We claim that the graph $H$ comprised of the path $P$ and the edge $p_2p_3$ is a $(1+\eps)$-spanner for $S$. Consequently, $\|E_\light\|\leq \|E(H)\|=\|P\|+\|p_2p_3\|\leq (\alpha+\beta)+(\beta-\alpha)=2\beta =O(\eps^{1/2})$. 

   To prove the claim, note first that for any pair $s,t\in S$ with $\|st\|<\|p_1p_3\|$, the path $P(s,t)\subset H$ has weight at most $(1+\eps)\|st\|$ due to the choice of $\beta$ (cf., \Cref{eq:beta}). Consider now a pair $s,t\in S$ with $\|st\|\geq \|p_1p_3\|$. Then the points $s$ and $t$ lie in two distinct arcs $C(p_1,p_2)$ and $C(p_3,p_4)$. We may assume w.l.o.g.\ that $s\in C(p_1,p_2)$ and $t\in C(p_3,p_4)$. Then $H$ contains the $st$-path $P(s,p_2)\circ p_2p_3 \circ P(p_3,t)$. The weight of this $st$-path is bounded by:
   
   {\allowdisplaybreaks
   \begin{align}
   \|P(s,p_2)\circ p_2p_3\circ P(p_3,t)\| 
   &\leq  \mathrm{arc}(s,p_2)+\|p_2p_3\|+\mathrm{arc}(p_3,t) +O(\eps^2)\nonumber\\
       &\leq \Big(\mathrm{arc}(s,p_2)+\mathrm{arc}(p_2,p_3)+\mathrm{arc}(p_3,t)\Big) +\big(\|p_2p_3\|-\mathrm{arc}(p_2,p_3)\big)+O(\eps^2)\nonumber\\
       & =\mathrm{arc}(s,t) + \Big(\|p_2p_3\|-\mathrm{arc}(p_2,p_3)\Big)+O(\eps^2)\nonumber\\
       &=\|st\|+\Big(\mathrm{arc}(s,t) - \|st\|\Big) + \Big(\|p_2p_3\|-\mathrm{arc}(p_2,p_3)\Big)+O(\eps^2)\nonumber\\
       &=\|st\| + \frac{(\mathrm{arc}(s,t))^3-(\mathrm{arc}(p_2,p_3))^3}{48}+ O(\eps^{2})\nonumber\\
       &\leq \|st\| + \frac{(\mathrm{arc}(p_1,p_4))^3-(\mathrm{arc}(p_2,p_3))^3}{48}+ O(\eps^{2})\nonumber\\
      &=\|st\| + \frac{(\alpha+\beta)^2-(\beta-\alpha)^3}{48} +(\eps^{2})\nonumber\\
      &=\|st\|+\frac{(1.1^3-0.9^3)\beta^3}{48} +O(\eps^{2})\nonumber\\
      &<\|st\|+\frac{0.7\cdot (\sqrt{48\eps}+\Theta(\eps))^3}{48} +O(\eps^{2})\nonumber\\
      &=\|st\|+0.7\cdot \sqrt{48\eps}\cdot \eps + O(\eps^{2})\nonumber\\
      &=\|st\|+0.7\cdot\beta\cdot \eps + O(\eps^{2})\nonumber\\
      &<\|st\|+0.7\cdot\left(\|p_1p_2\|+\frac{\beta^3}{48}+O(\eps^{2})\right)\cdot \eps + O(\eps^{2})\nonumber\\
     & =\Big(1+0.7\cdot \eps\Big)\|st\|+ O(\eps^{2})\nonumber\\
      &<(1+\eps)\|st\|, \label{eq:shortcut}
   \end{align}
   }
   
   for a sufficiently small $\eps>0$. This confirms that $H$ is a $(1+\eps)$-spanner for $S$, as claimed. 

    \paragraph{Greedy weight.}
    We show that the greedy $(1+\eps)$-spanner and the greedy $(1+1.01\, \eps)$-spanner for $S$ both have weight $\Omega(\eps^{-1/2})$. We argue about the greedy $(1+\eps)$-spanner (greedy spanner, for short), but essentially the same argument holds for the greedy $(1+1.01\, \eps)$-spanner, as well. Let $G_{\rm gr} = (S, E_{\rm gr})$.
    The greedy algorithm adds the entire path $P=P(p_1,p_4)$ to $G_{\rm gr}$, and then considers point pairs sorted by increasing weight. For all point pairs $\{s,t\}\subset S$ with $\|st\|< \|p_1p_3\|$, we have $\|P(s,t)\|< (1+\eps)\|st\|$, and so none of these edges is added to $G_{\rm gr}$. By construction, $S$ contains a large number of point pairs $\{s,t\}$ with $\|st\|=\|p_1p_3\|$. We show that the greedy algorithm adds $\Omega(\alpha/\eps)$ such pairs to $G_{\rm gr}$.  
	
	Specifically, we claim that every circular arc of length $2\eps\beta$ of $C(p_1,p_2)$ contains an endpoint of some edge of weight $\|p_1p_3\|$ in $G_{\rm gr}$. Suppose, for the sake of contradiction, that this is not the case. Then there is a point pair $\{s,t\}$ with $\|st\|=\|p_1p_3\|$ such that $G_{\rm gr}$ does not contain any edge of this weight whose endpoints are within arc distance $\eps\,\beta$ from $s$ or $t$. This means that when the greedy algorithm considers the point pair $\{s,t\}$, all $st$-paths have length more than $(1+\eps)\|st\|$: The path $P(s,t)$ has weight more than $(1+\eps)\|st\|$ due to \eqref{eq:beta}. Furthermore, any $st$-path that goes thought a previously added edge of weight $\|p_1p_3\|$ must include subpaths of length at least $\eps\,\beta$ in the neighborhood of $s$ and $t$, resp., and so its total length is at least $2\eps\beta+\|p_1p_2\|>(1+\eps)\|st\|$. Consequently, the greedy algorithm must add $st$ to the spanner, which is a contradiction. 
    This completes the proof of the claim. 
	
    Since $\mathrm{arc}(p_1,p_2)=\alpha$, then $G_{\rm gr}$ contains $\Omega(\alpha/(\eps\beta))=\Omega(\eps^{-1})$ edges of weight $\|p_1p_3\|=\Theta(\beta)=\Theta(\eps^{1/2})$, and so the weight of the greedy spanner is $\|E_{\rm gr}\|\geq \Omega(\eps^{-1}\cdot \eps^{1/2})=\Omega(\eps^{-1/2})$.
\end{proof}

The lower bound generalizes to the case where we relax the stretch to $a+x\eps$ for $2\leq x\leq O(\eps^{-1/2})$, and compare the lightest $(1+\eps)$-spanner with the greedy $(1+x \eps)$-spanner.

\lblight*

\begin{proof}
We use the point set in the proof of \Cref{thm:weightLB}, with $x\eps$ in place of $\eps$. That is, $S$ is a set of points on a circular arc of radius 1 and angle $\alpha_x+\beta_x$, where $\alpha_x=\beta_x/10$ and $\beta_x=\sqrt{48x \eps}+O(x\eps)$. In the proof of \Cref{thm:weightLB}, we saw that $\|G_{{\rm gr}(x)}\| = \Omega((x\eps)^{-1/2})$. 

\paragraph{Optimal weight.}
We show that $S$ admits a $(1+\eps)$-spanner of weight $O(\eps^{1/2}\cdot x^2\log x)$. Specifically, we construct a graph $H$ on $S$ that comprises the path $P$ and a set of chords. For $i=-2,-1,0,1,\ldots , 2\,\lceil 10x \log \sqrt{x}\rceil$, we augment $H$ with a maximal collection of chords, each of length $\sqrt{48\eps}\cdot \left(1+\frac{1}{10x}\right)^{i/2}$, such that the arc distance between the left endpoints of any two chords is $\frac{1}{20}\cdot \sqrt{48\eps}\cdot \left(1+\frac{1}{10x}\right)^i$. 

We claim that $H$ is a $(1+\eps)$-spanner for $S$. Consider a point pair $s,t\in S$. If $\mathrm{arc}(s,t)\leq \sqrt{48\eps}$, then $\|P(s,t)\|\leq (1+\eps)\|st\|$. Otherwise, there exists a chord $ab\in H$ such that $a,b\in C(s,t)$ and 
\[
    \left(1+\frac{1}{10x}\right)^{-1}\mathrm{arc}(s,t)\leq\mathrm{arc}(a,b)\leq \left(1+\frac{1}{10x}\right)^{-1/2}\mathrm{arc}(s,t).
\]
\Cref{eq:shortcut} now shows (substituting $\beta_x=\sqrt{48 x\eps}+O(x\eps)$ instead of $\beta =\sqrt{48 \eps}+O(\eps)$) 
that $\|P(sa)\|+\|ab\|+\|P(bt)\|\leq (1+\eps)\|st\|$. 

It remains to bound the weight of $H$. The weight of the path $P$ is $\|P\|=O(\beta)=O(\sqrt{x\eps})$. For every $i\in \{-2,-1,\ldots ,2\,\lceil 10x \log \sqrt{x}\rceil \}$, the total weight of the chords of length $\sqrt{48\eps}\cdot \left(1+\frac{1}{10x}\right)^i$ is $O(20\cdot \beta_x)=O(\sqrt{x\eps})$. Consequently, $\|E(H)\|= O(\sqrt{x\eps}\cdot x \log x) = O(\eps^{1/2}\cdot x^{3/2}\log x)$.
\end{proof}

\paragraph{Acknowledgements.~}
The second-named author thanks Mike Dinitz for interesting discussions.  We thank anonymous reviewers of FOCS 2024 for thorough comments, especially on approximating spanners of general graphs. Hung Le and Cuong Than are supported by the NSF CAREER award CCF-2237288, the NSF grant CCF-2121952, and a Google Research Scholar Award.
Research by Csaba D.\ T\'oth was supported by the NSF award DMS-2154347.
Shay Solomon and Tianyi Zhang are funded by the European Union (ERC, DynOpt, 101043159).  Views and opinions expressed are however those of the author(s) only and do not necessarily reflect those of the European Union or the European Research Council.  Neither the European Union nor the granting authority can be held responsible for them.  Shay Solomon is also supported by the Israel Science Foundation (ISF) grant No.1991/1 and by a grant from the United States-Israel Binational Science Foundation (BSF), Jerusalem, Israel, and the United States National Science Foundation (NSF).
\vspace{5mm}
\bibliographystyle{alphaurl}
\bibliography{ref}

\end{document}